\documentclass[reqno,twoside,a4paper,11pt]{article}
\usepackage[english]{babel}
\usepackage{color}
\usepackage{tocloft}
\usepackage{enumitem}

\usepackage{titlesec}

\usepackage{moresize}

\titleformat{\part}
  {\sf\huge\bfseries}{\thepart}{1em}{}

\parskip=3pt

\usepackage[
  hmarginratio={1:1},     
  vmarginratio={1:1},     
  textwidth=16cm,        
  textheight=24cm,
  heightrounded,          
]{geometry}

\usepackage{amsmath,amsthm,amssymb,accents}
\usepackage{mathrsfs, mathtools}
\usepackage{dsfont}


\usepackage[all]{xy}
\input xy 
\xyoption{all}
\xyoption{2cell} 
\xyoption{v2}
\SelectTips{cm}{11} 

\usepackage{tikz-cd}


\usepackage[hidelinks]{hyperref}

\usepackage[numbers,sort&compress]{natbib}

\usepackage[mathscr]{euscript}


%



\newcommand{\ev}{\mathrm{ev}}
\newcommand{\av}{\mathrm{av}}
\newcommand{\pr}{\mathrm{pr}}
\newcommand{\rank}{\mathrm{rk}}
\newcommand{\inv}{\mathrm{inv}}

\newcommand{\ori}{\mathrm{or}}
\newcommand{\cl}{\mathrm{cl}}

\newcommand{\supp}{\mathrm{supp}}
\newcommand{\gr}{\mathrm{gr}}

\newcommand{\rmH}{\mathrm{H}}
\newcommand{\rmQ}{\mathrm{Q}}
\newcommand{\rmK}{\mathrm{K}}

\newcommand{\rmD}{\mathrm{D}}

\newcommand{\rmI}{\mathrm{I}}

\newcommand{\dd}{\mathrm{d}}
\newcommand{\End}{\mathrm{End}}
\newcommand{\trans}{\mathrm{t}}
\newcommand{\dR}{\mathrm{dR}}
\newcommand{\qm}{\mathrm{qu}}

\newcommand{\rmU}{\mathrm{U}}
\newcommand{\rmC}{\mathrm{C}}
\newcommand{\rmM}{\mathrm{M}}

\newcommand{\rmL}{\mathrm{L}}
\newcommand{\rmS}{\mathrm{S}}
\newcommand{\rmF}{\mathrm{F}}
\newcommand{\rmt}{\mathrm{t}}
\newcommand{\tr}{\mathrm{tr}}
\newcommand{\rms}{\mathrm{s}}
\newcommand{\intern}{\mathrm{int}}
\newcommand{\pfaff}{\mathrm{pfaff}}
\newcommand{\vol}{\mathrm{vol}}


\newcommand{\scG}{\mathscr{G}}

\newcommand{\scH}{\mathscr{H}}
\newcommand{\scT}{\mathscr{T}}

\newcommand{\scC}{\mathscr{C}}

\newcommand{\scD}{\mathscr{D}}
\newcommand{\scF}{\mathscr{F}}


\newcommand{\CG}{\mathcal{G}}
\newcommand{\CI}{\mathcal{I}}
\newcommand{\CH}{\mathcal{H}}
\newcommand{\CA}{\mathcal{A}}
\newcommand{\CB}{\mathcal{B}}

\newcommand{\CL}{\mathcal{L}}

\newcommand{\CS}{\mathcal{S}}

\newcommand{\CO}{\mathcal{O}}
\newcommand{\CV}{\mathcal{V}}


\newcommand{\sfG}{\mathsf{G}}
\newcommand{\sfH}{\mathsf{H}}
\newcommand{\sfP}{\mathsf{P}}

\newcommand{\sfN}{\mathsf{N}}
\newcommand{\sfu}{\mathsf{u}}

\newcommand{\sfm}{\mathsf{m}}
\newcommand{\sfR}{\mathsf{R}}
\newcommand{\sfU}{\mathsf{U}}
\newcommand{\sfE}{\mathsf{E}}

\newcommand{\sfS}{\mathsf{S}}
\newcommand{\sfA}{\mathsf{A}}
\newcommand{\hol}{\mathsf{hol}}

\newcommand{\sfB}{\mathsf{B}}

\newcommand{\sfc}{\mathsf{c}}
\newcommand{\sft}{\mathsf{t}}
\newcommand{\Aut}{\mathsf{Aut}}
\newcommand{\Aff}{\mathsf{Aff}}


\newcommand{\fru}{\mathfrak{u}}


\newcommand{\FC}{\mathbb{C}}
\newcommand{\FR}{\mathbb{R}}
\newcommand{\bbK}{\mathbb{K}}
\newcommand{\RZ}{\mathbb{Z}}
\newcommand{\NN}{\mathbb{N}}
\newcommand{\ZZ}{\mathbb{Z}_2}

\newcommand{\TT}{\mathbb{T}}
\newcommand{\PP}{\mathbb{P}}
\newcommand{\One}{\mathds{1}}


\newcommand{\HLBdl}{\mathscr{HL}\hspace{-0.02cm}\mathscr{B}\mathrm{dl}}

\newcommand{\BGrb}{\mathscr{B}\hspace{-.025cm}\mathscr{G}\mathrm{rb}}

\newcommand{\HVBdl}{\mathscr{HVB}\mathrm{dl}}
\newcommand{\QVBdl}{\mathscr{QVB}\mathrm{dl}}
\newcommand{\Mfd}{\mathscr{M}\mathrm{fd}}
\newcommand{\TopGrp}{\mathscr{T}\mathrm{op}\mathscr{G}\mathrm{rp}}
\newcommand{\TopAb}{\mathscr{T}\mathrm{op}\mathscr{A}\mathrm{b}}
\newcommand{\GapSys}{\mathscr{G}\mathrm{ap}\mathscr{S}\mathrm{ys}}
\newcommand{\TP}{\mathscr{TP}}
\newcommand{\RTP}{\mathscr{RTP}}


\newcommand{\arisom}{\overset{\cong}{\longrightarrow}}

\newcommand{\dslash}{/\hspace{-0.1cm}/}

\newcommand{\iu}{\mathrm{i}}

\newcommand{\qen}{\hfill$\triangleleft$}

\newcommand{\<}{\langle}
\renewcommand{\>}{\rangle}

\newcommand{\brtimes}{{\times}}

\newenvironment{myenumerate}{\begin{enumerate}[itemsep=-0.1cm, leftmargin=*, topsep=0cm, label=(\arabic*)]}{\end{enumerate}}

\newcommand{\textint}{{\textstyle{\int} \hspace{-0.1cm}}}


\newtheoremstyle{thm} 								
{0.25cm}   					
{0.2cm}   	 				
{\itshape} 	
{}         				
{\bfseries}				
{}        				
{0.2cm} 					
{\thmname{#1}~\thmnumber{#2}\thmnote{ (#3)}}%

\newtheoremstyle{rmk} 								
{0.25cm}   					
{0.2cm}   	 				
{} 	
{}         				
{\bfseries}				
{}        				
{0.2cm} 					
{}         				

\theoremstyle{thm}
\newtheorem{theorem}[equation]{Theorem}

\newtheorem{corollary}[equation]{Corollary}

\newtheorem{lemma}[equation]{Lemma}
\newtheorem{proposition}[equation]{Proposition}
\newtheorem{definition}[equation]{Definition}

\theoremstyle{rmk}
\newtheorem{example}[equation]{Example}
\newtheorem{remark}[equation]{Remark}

\numberwithin{equation}{section}

\setcounter{tocdepth}{2}



\setlength{\cftbeforesecskip}{0.15cm}
\setlength{\cftbeforepartskip}{0.4cm}





\begin{document}

\begin{flushright}
\small
Hamburger Beitr\"age zur Mathematik Nr. 713\\
ZMP--HH/17--30\\
EMPG--17--22
\end{flushright}

\vskip 1cm

%
%


\begin{center}
	\textbf{\LARGE{Topological Insulators and the Kane-Mele Invariant: \\[0.3cm] 
            Obstruction and Localisation Theory}}\\
	
	\vspace{1.5cm}
	
	{\Large Severin Bunk}

\vspace{0.2cm}

\noindent{\em Fachbereich Mathematik, Bereich Algebra und Zahlentheorie\\ Universit\"{a}t Hamburg\\ Bundesstra\ss e 55, 20146 Hamburg, Germany}\\
Email: \ {\tt Severin.Bunk@uni-hamburg.de}

\vspace{0.75cm}

{\Large Richard J. Szabo}

\vspace{0.2cm}

\noindent{\em Department of Mathematics, Heriot-Watt
  University\\
Colin Maclaurin Building, Riccarton, Edinburgh EH14 4AS, UK\\ 
Maxwell Institute for Mathematical Sciences, Edinburgh, UK\\
The Higgs Centre for Theoretical Physics, Edinburgh, UK}\\
Email: \ {\tt
  R.J.Szabo@hw.ac.uk}

\vspace{2cm}

\begin{abstract}
\noindent
We present homotopy theoretic and geometric interpretations of the
Kane-Mele invariant for gapped fermionic quantum systems in three
dimensions with time-reversal symmetry.
We show that the invariant is related to a certain $4$-equivalence which lends it an interpretation as an obstruction to a block decomposition of the sewing matrix up to nonequivariant homotopy.
We prove a Mayer-Vietoris Theorem for manifolds with $\ZZ$-actions
which intertwines Real and $\ZZ$-equivariant de~Rham cohomology
groups, and apply it to derive a new localisation formula for the
Kane-Mele invariant. This provides a unified cohomological explanation
for the equivalence between the discrete Pfaffian formula and the
known local geometric computations of the index for periodic lattice systems.
We build on the relation between the Kane-Mele invariant and the theory of bundle
gerbes with $\ZZ$-actions to obtain geometric refinements
of this obstruction and localisation technique. In the preliminary part we review the Freed-Moore theory of band insulators on Galilean
spacetimes with emphasis on geometric constructions, and present a bottom-up approach to time-reversal symmetric topological phases.
\end{abstract}

\end{center}

\newpage

\tableofcontents

\newpage

\part{Overview}

In this partly expository paper we shall derive some novel topological and geometric interpretations
of the Kane-Mele invariant of time-reversal symmetric topological
insulators in three dimensions, and in particular offer unified geometric explanations for the equivalence between discrete and integral representations of the invariant by deriving new
localisation formulas for it over invariant periodic crystal
momenta. We set the background and motivation behind our analysis in
Section~\ref{sect:Introduction}, and also briefly outline the
structure of this paper.
In Section~\ref{sect:Definitions and Results} we provide an extended overview of the main definitions and results of this work.
Our intent in this first part is to enable the reader to access the main results of
the present paper without recourse to the technical details which are provided in the later sections.

\section{Introduction}
\label{sect:Introduction}

The relevance of topology to condensed matter physics was first
realised in theoretical investigations of the integer quantum
Hall effect, whereby the quantised Hall conductance is determined mathematically
by the first Chern class of a line bundle over the $2$-dimensional Brillouin torus in momentum space~\cite{TKNN} and
represents a topological invariant measuring the obstruction to
defining Bloch wavefunctions of the system globally.
Only recently has the full scope of applications of mathematical tools to condensed matter physics started to emerge and to receive considerable attention.
The appreciation and impact of the field of research whose foundations
were laid down in~\cite{TKNN}, currently referred to as \emph{topological phases of matter}, is reflected in the award of the 2016
Nobel Prize in physics.
The general classification of topological phases and their physical
features in electron band theory is at present largely an open question of widespread interest in
both physics and mathematics, and many different approaches and
techniques are currently under investigation; a sample partial list of
references is
e.g.~\cite{AiKongZheng,Freed-Moore--TEM,Freed-Hopkins:Reflection_positivity,Freed:SRE_and_Invertible_TFTs,KWZ:Braided_fusion_and_top_orders,BCR:Noncommutative_framework_for_TIs,Mathai2017,Prodan-Schulz-Baldes:Bulk_and_bdry_invariants}.

One of the most prominent topological phases of quantum matter are
topological insulators, see e.g.~\cite{Hasan:2010xy,Qi:2011zya} for
reviews. Loosely speaking, these are materials with an energy gap in
their bulk electronic band structure. The integer quantum Hall state
is a topological insulator without additional symmetries which is
topologically protected by its Hall conductance. More generally, a
topological insulator is a gapped quantum system which gives rise to
invariants which are topologically protected by the presence of
additional symmetries, such as time-reversal; their boundary may not
have the symmetries of the bulk, which can lead to conducting surface
states of electrons closing the gap between bulk energy bands.
The features of band insulators most promising for applications, for
instance in materials science and quantum computing, become apparent when these systems are considered not individually but altogether.
In particular, it is an important question whether at a continuous transition between two given band insulators the spectral gap of their Hamiltonians, i.e. their band gap, closes.
Such transitions are easily realised by joining two band insulators
along their boundary surface.
Due to electron displacement caused by differing energy levels, a continuous transitional region forms near their interface.
If the band gap closes, a thin conducting layer emerges at the
interface allowing electrons occupying valence bands to jump into the
empty conduction bands.
Many qualitative physical properties of a band insulator,
such as this conductance property, depend only on the existence of a band gap and on how it splits the spectrum of the Hamiltonian.
Consequently these properties are stable under small deformations or
perturbations of the systems, such as those given by
introducing impurities or defects in the material.
Stability features of this type are particularly relevant to practical
uses since at least small impurities or defects are usually unavoidable in
real-world materials, even under laboratory conditions. In this paper
we deal only with band theory for systems of noninteracting fermions on
infinite crystals.

A big advancement in the field of topological phases was Kitaev's
classification of topological insulators in terms of topological K-theory in his famous periodic table~\cite{Kitaev:Periodic_table}, with the K-theory generated by Bloch bundles over the Brillouin zone.
This classification was subsequently refined and generalised in the
Freed-Moore theory~\cite{Freed-Moore--TEM} to include additional
symmetries, such as time-reversal and particle-hole (charge conjugation) symmetries, by
using twisted equivariant K-theory, where the equivariance accounts
for chiral (sublattice) symmetries while the twisting accounts for lattice symmetries and the crystallographic group of the Bravais lattice. At the same time this theory develops a mathematical foundation for gapped topological phases which emerge from a crystal in any space dimension.

While the classification of topological phases is a powerful tool,
some observable properties of band insulators are governed by
weaker topological invariants, such as the $\RZ$-valued Chern class giving the transverse Hall conductivity.
In the presence of additional $\ZZ$-symmetries, such as time-reversal
which makes the first Chern class of the Bloch bundle vanish, an
interesting invariant was discovered
in~\cite{Kane-Mele:Z2_top_order_and_QSHE,Fu-Kane:Time-reversal_polarisation,FKM}
and is usually refered to as the \emph{Kane-Mele invariant}; it is a
$\ZZ$-valued index indicating the presence or absence of a topological
insulating phase. The Kane-Mele invariant is defined by a discrete
Pfaffian formula which measures whether there exists a set of
compatible Kramers pairs over the whole Brillouin torus, and is interpreted as an obstruction to defining a global smooth and periodic time-reversal symmetric frame for the Bloch bundle, i.e. to defining global time-reversal symmetric Bloch wavefunctions for the occupied valence states. This invariant has attracted much attention, especially in two and three space dimensions.
In the seminal $2$-dimensional example of the quantum spin Hall
effect~\cite{spinHall}, it is related to the existence of
spin-oriented edge currents, such that each spin component has nonzero
conductance while the total current is zero, which are linked to topological invariants of Bloch bundles over the Brillouin zone.
Kane-Mele invariants for semimetals have also been discussed in~\cite{ThiangSatoGomi2017}.
A variety of mathematical approaches to the Kane-Mele invariant are known: they
range from algebraic topology
perspectives~\cite{Freed-Moore--TEM,Monaco-Tauber:Berry_WZW_FKM} over
index theory~\cite{Ertem,Li:Bott-Kitaev_table_and_index} and $C^*$-algebras~\cite{MathaiThiang2015a,MathaiThiang2015b} to geometric approaches~\cite{Gawedzki--2DFKM_and_WZW,Gawedzki:BGrbs_for_TIs,Gawedzki:2017whj}.
An overview relating various different treatments of the Kane-Mele
invariant is provided by~\cite{KLWK}.

In this paper we concentrate on the Kane-Mele invariant in three dimensions.
Starting from the local formula for the Kane-Mele
invariant in terms of an integral of a $3$-form over the Brillouin
torus, we present homotopy theoretic and higher geometric interpretations of the invariant.
These new perspectives shed further light on both the geometric meaning and the physical content of
the Kane-Mele invariant as certain obstructions.
We moreover develop equivariant localisation techniques for the
$3$-dimensional integral formulas for the Kane-Mele invariant that allow one to
compute it from data in all lower dimensions, eventually arriving at
new localisation formulas for the Kane-Mele invariant over the
discrete time-reversal invariant periodic momenta (or fixed points) in the Brillouin torus. This naturally explains, from a unified and purely geometric perspective, the equivalence between the integral invariant and the original discrete Pfaffian formula for the Kane-Mele invariant~\cite{WQZ}.
Along the way, we develop new techniques for computing equivariant de
Rham cohomology and bundle gerbe holonomy in the presence of
$\ZZ$-actions. Our considerations in fact apply more generally to a large class of
$3$-manifolds $X$ with $\ZZ$-actions and so are applicable to the
classification of protected topological phases of other time-reversal
symmetric gapped electron systems, described by a family of
Hamiltonians parameterised by $X$, which are not necessarily related
to lattice models for
topological insulators. Our main examples involve Bloch electrons, in
periodic quantum systems for
which the Brillouin zone is a torus, and free fermions in continuous
systems such as topological superconductors for which the
first Brillouin zone is a sphere, regarded as the momentum space of the continuous limit of a Bravais lattice. As such, some of our mathematical developments are of independent interest.

This paper is organised into three parts which can be read
independently, but which follow a mutual line of investigation. In the
remainder of this first part we shall provide an extended summary of
our main definitions and results, particularly those alluded to above,
in a more informal physics-oriented language which we hope is palatable to a wider audience, defering all technical details to the remainder of the paper.
In Part~\ref{part:TPMs} we give an exposition of the Freed-Moore theory of topological
phases of matter described by band insulators~\cite{Freed-Moore--TEM} as
tailored to our needs; this part
does not contain original results unless explicitly stated, but we elucidate several geometric
constructions and we deviate at points from the
line of argument in~\cite{Freed-Moore--TEM} to streamline the presentation. The main intent here is
to present a bottom-up approach to time-reversal symmetric topological phases, endeavouring to provide a
satisfying amount of detail and self-containment in our presentation
relevant to this paper. In Part~\ref{part:KM and localisation} we develop
three new perspectives on the $3$-dimensional Kane-Mele invariant
based on homotopy theory, $\ZZ$-equivariant de Rham cohomology, and the theory of bundle
gerbes, and we propose interpretations of the Kane-Mele invariant
based on all three approaches.

\subsection*{Acknowledgments}

We thank Alan Carey, Lukas M\"uller and Alexander Schenkel for helpful discussions.
This work was supported in part by the COST Action MP1405 ``Quantum
Structure of Spacetime''. 
S.B. is partially supported by the RTG~1670 ``Mathematics Inspired by String Theory and Quantum Field Theory''.
The work of R.J.S.\ is supported in part by the STFC Consolidated Grant
ST/P000363/1 ``Particle Theory at the Higgs Centre''. 

\section{Main definitions and results}
\label{sect:Definitions and Results}

\subsection{Crystals in Galilean spacetimes and quantum symmetries}

In Part~\ref{part:TPMs} we build up the general framework for topological
insulators from the basic symmetry principles of Galilean relativity
and quantum theory. The microscopic properties of condensed matter systems are described
by nonrelativistic quantum mechanics; in this paper we are interested
in systems of noninteracting electrons in an atomic crystal. The
description of symmetries in this setting is the topic of
Section~\ref{sect:symmetries of Galiliean spacetimes and QM}.

The background geometry is a \emph{Galilean spacetime}.
This can be thought of as an affine space $M$ over a real vector space
$W$, together with a choice of space translations, i.e. a
codimension one subspace $V \subset W$, and measures of Euclidean space
distances and angles as well as time intervals; these are given by choosing inner products $g_V$ and $g_{W/V}$ on $V$ and $W/V$, respectively.
The orbits of $V$ are the `time slices', i.e. all events which occur
simultaneously in $M$.

If we would also like to specify reference frames at rest in $M$
relative to an observer, we need to single out the time translations in $W$.
This is achieved by choosing a splitting $W \cong U \oplus V$ of the short exact sequence $V \to W \to W/V$.
Such a choice of splitting is called a \emph{time direction} in $M$.
A time direction induces a notion of rest frames, but it does not
determine in which way time flows on $M$.
A \emph{time orientation} is additional data given by the choice of an orientation on~$U$.

We denote the data of a Galilean spacetime with a time direction by $(M,W,V,g_V, g_{W/V}, U) = (M, \Gamma, U)$.
The \emph{Galilean transformations} $\Aut(M,\Gamma,U)$ of a Galilean spacetime with a time direction are those affine transformations of $M$ which preserve Euclidean space distances, time intervals and the splitting. There is a short exact sequence
\begin{equation}
	\xymatrix{
		1 \ar@{->}[r] & W \ar@{->}[r] & \Aut(M,\Gamma,U)
                \ar@{->}[r]^-{\pi} & \Aut(W,\Gamma,U) \ar@{->}[r] & 1\
                ,
	}
\end{equation}
where $\Aut(W,\Gamma,U)$ is the group of automorphisms of $W$ of the form
\begin{equation}
	g =
	\begin{pmatrix}
		\pm\, 1 & 0 \\
		 0 & A
	\end{pmatrix}
\end{equation}
for $A$ an orthogonal transformation of $(V, g_V)$.
The group $\Aut(W,\Gamma,U)$ carries two $\ZZ$-gradings, $t,p \colon \Aut(W,\Gamma,U) \to \ZZ$, where $t(g)$ is the sign of the first block of $g$ while $p(g) = \det(A)$.
In other words, $t$ measures whether $g$ is time-reversing and $p$ measures whether $g$ reverses the space orientation.
Any symmetry group which acts on $M$ preserving the Galilean structure will inherit these two $\ZZ$-gradings from $\Aut(W,\Gamma,U)$.

If $(M,\Gamma,U)$ is a Galilean spacetime with a time direction, a
\emph{crystal} in $(M, \Gamma, U)$ consists of a subset $C \subset M$,
the worldlines of the sites of the crystal, such that the subset $\Pi \subset V$ of those spatial translations that preserve the subset $C$ forms a lattice in $V$ of full rank.
For a crystal $(C, \Pi)$, we introduce the \emph{group of crystal symmetries} $\sfG(C)$ consisting of all Galilean transformations in $\Aut(M,\Gamma,U)$ that preserve the subset $C \subset M$.
It naturally gives rise to the \emph{crystallographic group} $\sfG(C)/U$ and the \emph{magnetic point group} $\widehat{\sfP}(C)=\sfG(C)/(U\oplus\Pi)$ of $(C,\Pi)$.

Let us now proceed to treat the quantum mechanics of systems defined
classically on Galilean spacetimes. 
One of the main results of~\cite{Freed-Moore--TEM} is the realisation of how much of the framework and properties of topological insulators follows entirely from the careful and rigorous combination of the classical notion of symmetry on Galilean spacetimes with the notion of symmetry on quantum systems.
In quantum theory states are the elements of a projective Hilbert space $\PP\scH$, and it is this projective space that symmetries act on.
We write $\Aut_\qm(\PP\scH)$ for the automorphisms of $\PP\scH$ which
leave transition probabilities invariant, and $\Aut_\qm(\scH)$ for the
group of unitary and antiunitary operators on $\scH$.
By a famous theorem of Wigner~\cite{Wigner--Group_theory}, 
there is a short exact sequence
\begin{equation}
\label{eq:Wigner intro}
	\xymatrixcolsep{1cm}
	\xymatrix{
		1 \ar@{->}[r] & \sfU(1) \ar@{->}[r]^-{(-) \cdot \One}
                & \Aut_{\qm}(\scH) \ar@{->}[r]^-{q} &
                \Aut_{\qm}(\PP\scH) \ar@{->}[r] & 1\ ,
	}
\end{equation}
where $\sfU(1)$ is included into $\Aut_{\qm}(\scH)$ as scalar multiples of the identity.
The group $\Aut_{\qm}(\PP\scH)$ is naturally $\ZZ$-graded by setting $\phi_\scH([T]) = 1$ if $[T] \in \Aut_{\qm}(\PP\scH)$ is the class of a unitary operator $T$ and $\phi_\scH([T]) = -1$ if $T$ is antiunitary.

If a quantum system comes endowed with a specified Hamiltonian $H$ on $\scH$, we would like to restrict to the subgroup of symmetries $\Aut_{\qm}(\scH,H) \subset \Aut_{\qm}(\scH)$ which projectively commute with the Hamiltonian.
That is, we demand that there is a map $\sfc \colon \Aut_{\qm}(\scH,H) \times \scH \to \sfU(1)$ such that $T\, H \psi = \sfc(T, \psi)\, H\, T \psi$ for all $\psi \in \scH$ and $T \in \Aut_{\qm}(\scH,H)$.
By refining the treatment in~\cite{Freed-Moore--TEM}, we show
in Section~\ref{sect:QM_symmetries} directly from this setup that the only consistent choices of such maps are given by continuous $\ZZ$-gradings $\sfc \colon \Aut_{\qm}(\scH,H) \to \ZZ$.
Therefore $\Aut_{\qm}(\scH,H)$ and $\Aut_{\qm}(\PP\scH,H) =
\Aut_{\qm}(\scH,H)/\sfU(1)$ become $\ZZ$-bigraded topological groups,
with gradings given by $\phi_\scH$ and $\sfc$, and
$\Aut_{\qm}(\scH,H)$ consists of those unitary and antiunitary
operators on $\scH$ that either commute or anticommute with the
Hamiltonian $H$. 
Symmetries $T$ with $\sfc(T)=-1$ are often called \emph{chiral symmetries}.
As a consequence, we have 

\begin{proposition}
Let $(\scH,H)$ be a quantum system.
\begin{myenumerate}
\item
If $(\scH,H)$ admits a chiral symmetry $T \in \Aut_{\qm}(\scH,H)$ then
the spectrum $\sigma(H)\subset\FR$ of $H$ is symmetric, i.e. $\sigma(H) = - \sigma(H)$.
\item
A symmetry $T$ is \emph{time-reversing} if and only if $\sft(T) = \sfc(T)\, \phi_\scH(T) = -1$.
\item
If $\sigma(H)$ is not symmetric (for example if $H$ is bounded from
below but not from above), then there are no chiral symmetries $T$ in $\Aut_{\qm}(\scH,H)$.
However, there can still be antiunitary time-reversing symmetries.
\end{myenumerate}
\end{proposition}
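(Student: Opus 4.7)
The plan is to first promote the projective (anti)commutation between any $T\in\Aut_{\qm}(\scH,H)$ and $H$ to an honest operator identity: since $\sfc$ is a continuous $\ZZ$-grading, $\sfc(T)=+1$ means $TH=HT$ while $\sfc(T)=-1$ means $TH=-HT$, as operators on $\scH$. All three parts then reduce to careful sign bookkeeping.

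For part~(1), take $\sfc(T)=-1$ and any eigenvalue $\lambda\in\sigma(H)$ with eigenvector $\psi$. If $T$ is unitary then $HT\psi=-TH\psi=-\lambda\,T\psi$; if $T$ is antiunitary then $HT\psi=-TH\psi=-T(\lambda\psi)=-\bar\lambda\,T\psi=-\lambda\,T\psi$ since $\sigma(H)\subset\FR$. The same manipulation lifted through the Borel functional calculus gives $T\,f(H)\,T^{-1}=f(-H)$ for real Borel $f$, so the spectral projection $P_B(H)$ is nonzero if and only if $P_{-B}(H)$ is nonzero for every Borel set $B\subset\FR$; hence $\sigma(H)=-\sigma(H)$.

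Part~(2) is the heart of the statement, and where the main work lies. The plan is to impose covariance of the Schr\"odinger equation $\iu\,\partial_t\psi=H\psi$ under the combined action $(T\cdot\psi)(t)=T_\scH\,\psi(\epsilon\,t)$, where $T_\scH$ is the Hilbert space part of $T$ and $\epsilon=\sft(T)\in\{\pm1\}$. Requiring the transformed trajectory to solve the same Schr\"odinger equation, and substituting $\partial_t\psi=-\iu\,H\psi$ on the right-hand side, yields an (anti)commutation relation between $T_\scH$ and $H$ whose sign depends on whether $T_\scH$ is unitary or antiunitary via the identity $T_\scH\,\iu=\phi_\scH(T)\,\iu\,T_\scH$. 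The four cases $(\phi_\scH(T),\epsilon)\in\{\pm1\}^2$ collapse as follows: unitary time-preserving and antiunitary time-reversing $T$ give $[T_\scH,H]=0$, i.e.\ $\sfc(T)=+1$, while unitary time-reversing and antiunitary time-preserving $T$ give $\{T_\scH,H\}=0$, i.e.\ $\sfc(T)=-1$. In every case one reads off $\sfc(T)\,\phi_\scH(T)=\epsilon=\sft(T)$. The subtle point, and the main obstacle, is the careful tracking of the antilinear commutation $T_\scH\,\iu=-\iu\,T_\scH$ through each case; with this in hand the remainder is immediate algebra.

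Part~(3) then follows directly. The contrapositive of~(1) says that if $\sigma(H)\neq-\sigma(H)$, then $\Aut_{\qm}(\scH,H)$ contains no $T$ with $\sfc(T)=-1$, i.e.\ no chiral symmetries. Conversely, by~(2) an antiunitary time-reversing $T$ has $\phi_\scH(T)=-1$ and $\sft(T)=-1$, hence $\sfc(T)=+1$, so $T$ merely commutes with $H$; because $T$ is antilinear and $\sigma(H)$ is real, this commutation imposes no relation between $\lambda$ and $-\lambda$, and such symmetries are fully compatible with an arbitrary, in particular semibounded and asymmetric, spectrum.
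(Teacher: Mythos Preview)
Your proof is correct and essentially the same as the paper's. The only cosmetic difference is in part~(2): the paper computes the identity $T\,\exp\big(\tfrac{\iu}{\hbar}\,t\,H\big)=\exp\big(\tfrac{\iu}{\hbar}\,\sfc(T)\,\phi_\scH(T)\,t\,H\big)\,T$ directly by expanding the exponential, whereas you argue with the Schr\"odinger equation; these are the integrated and infinitesimal forms of the same computation, and your extra care with the Borel functional calculus in part~(1) to cover continuous spectrum is a welcome addition over the paper's brief ``easily deduced.''
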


Any group of symmetries $\sfG$ acts on the quantum system $(\scH,H)$ via a group homomorphism $\PP(\rho) \colon \sfG \to \Aut_\qm(\PP\scH, H)$.
By pulling back the short exact sequence~\eqref{eq:Wigner intro} along
$\PP(\rho)$, the group $\sfG$ inherits a $\sfU(1)$-extension $\sfG_\qm
\to \sfG$ as well as a $\ZZ$-bigrading and a continuous homomorphism $\rho_\qm \colon \sfG_\qm \to \Aut_\qm(\scH,H)$ compatible with both gradings.

\begin{definition}
A \emph{quantum extension} of a $\ZZ$-bigraded topological group $(\sfG, \phi_\sfG, \sfc_\sfG) $ is a triple $(\sfG_\qm, \iota_\sfG, q_\sfG)$ giving rise to a short exact sequence of topological groups
\begin{equation}
	\xymatrix{
		1 \ar@{->}[r] & \sfU(1) \ar@{->}[r]^-{\iota_\sfG} &
                \sfG_\qm \ar@{->}[r]^-{q_\sfG} & \sfG \ar@{->}[r] & 1
	}
\end{equation}
satisfying the relation $g\, \iota_\sfG(\lambda) =
\iota_\sfG(\lambda)^{\phi_\sfG(g)}\, g$ for all $g \in \sfG_\qm$ and $\lambda\in\sfU(1)$.
The extension $\sfG_\qm$ naturally becomes a $\ZZ$-bigraded
topological group with the pullbacks along $q_\sfG$ of the gradings on~$\sfG$.
\end{definition}

This allows us to give a precise definition of a gapped quantum system with a fixed action of symmetries.

\begin{definition}
Let $(\sfG, \phi_\sfG, \sfc_\sfG)$ be a $\ZZ$-bigraded topological group and let $(\sfG_\qm, \iota_\sfG, q_\sfG)$ be a quantum extension of $\sfG$.
A \emph{gapped quantum system with symmetry class $(\sfG_\qm,
  \iota_\sfG, q_\sfG)$} is a gapped quantum system $(\scH,H)$ together
with a homomorphism $\rho_\qm : \sfG_\qm \to\Aut_\qm(\scH,H)$ of
$\ZZ$-bigraded topological groups such that
$\rho_\qm\circ\iota_\sfG(\lambda)=\lambda\,\One$ for all $\lambda\in\sfU(1)$.
\end{definition}

The gap in the energy spectrum $\sigma(H)$ between the ground state and the first excited state represents the \emph{Fermi energy}, which can always be shifted to zero. 

Bringing these two notions of symmetries together is the topic of Section~\ref{sect:Gapped topological phases}.
Let $C \subset M$ be a crystal in a Galilean spacetime $(M,\Gamma, U)$ with a time direction.
We consider symmetry groups $\sfH$ acting on $M$ in a way that preserves $C$, i.e. via certain group homomorphisms $\sfH \to \sfG(C)$.
Among other technical conditions we assume that $U \oplus \Pi$ is contained as a normal subgroup in the image of $\sfH$.
The quantum Hilbert space of wavefunctions is associated to the space manifold $M_\rms \coloneqq M/U$ underlying the Galilean spacetime.
More precisely $\scH = \rmL^2(M_\rms, V_\intern)$ where $V_\intern$ is a vector space of internal degrees of freedom such as spin.
Thus we divide by time translations to obtain the space symmetry group $\sfG$, which sits in a commutative diagram
\begin{equation}
	\xymatrix{
		1 \ar@{->}[r] & \Pi \ar@{->}[r]^-{j} \ar@{=}[d] & \sfG \ar@{->}[r]^-{\pi} \ar@{->}[d]^-{\gamma} & \sfG'' \ar@{->}[r] \ar@{->}[d]^-{\gamma''} & 1\\
		1 \ar@{->}[r] & \Pi \ar@{->}[r] & \sfG(C)/U \ar@{->}[r] & \widehat{\sfP}(C) \ar@{->}[r] & 1
	}
\end{equation}
Any action of $\sfG$ on $\scH$ as quantum symmetries induces a $\sfU(1)$ extension $\sfG_\qm \to \sfG$ and a group homomorphism $\rho_\qm \colon \sfG_\qm \to \Aut_\qm(\scH,H)$.
The Hamiltonian $H$ is usually of Schr\"{o}dinger type, e.g. $H=-\frac{\hbar^2}{2m}\, \triangle+{\rm W}_\Pi$ for a single electron of mass $m$, where $\triangle$ is the Laplace operator on Euclidean space and ${\rm W}_\Pi$ is a periodic potential energy function invariant under lattice translations by $\Pi\subset V$. If $H$ is gapped we thus arrive at a gapped quantum system with symmetry class $(\sfG_\qm, \iota_\sfG, q_\sfG)$.

\subsection{Band theory, topological phases and time-reversal symmetry}

The standard Bloch theory of crystals in condensed matter physics exploits the translational symmetry of a crystal to describe electronic
states in terms of their crystal momenta which are defined in a
periodic Brillouin zone.
Its connection to the present formalism comes about as follows.
We assume that the quantum extension $(\sfG_\qm, \iota_\sfG,
q_\sfG)$ is trivial over over the Bravais lattice of the crystal $\Pi \subset \sfG$ (which is the case for
Schr\"odinger wavefunctions on $M_\rms$).
In this case the representation $\rho_\qm$ restricts to a continuous unitary representation of $\Pi \subset \sfG_{\qm}$.
A continuous unitary representation of the discrete abelian group $\Pi$ on a quantum Hilbert space $\scH$ has a spectral decomposition in terms of a projection-valued measure on the Pontryagin dual group $X_{\Pi}$ of $\Pi$.
The space $X_\Pi$ is the
\emph{Brillouin torus} in condensed matter physics and it parameterises the periodic crystal momenta.

The projection-valued measure decomposes $\scH$ into a sheaf of
Hilbert spaces over $X_\Pi$. We assume that this sheaf arises as the space of $\rmL^2$-sections of a smooth bundle of Hilbert spaces $E \to X_\Pi$ in the sense that there is a unitary isomorphism $\rmU \colon \scH \to \rmL^2(X_\Pi, E)$ which intertwines the representation $\rho_\qm$ of $\sfG_\qm$ on $\scH$ and the representation of $\sfG_\qm$ on $\rmL^2(X_\Pi, E)$ induced by the adjoint action of $\sfG_\qm$ on its normal subgroup $\Pi$.
The Hamiltonian $H$ is an endomorphism of this bundle which acts fibrewise on $E$ since it commutes with $\rho_{\qm|\Pi}$. Its restrictions to the fibres $E_{|\lambda}$ over $\lambda\in X_\Pi$ are
assumed to be gapped, and these are the \emph{Bloch Hamiltonians}. 

We write $E^\pm$ for the subbundles of $E$ consisting of the
eigenspaces for positive and negative energy eigenvalues of the Bloch
Hamiltonians, respectively. In an insulator, the energy levels above
the Fermi energy are the empty \emph{conduction bands} (holes),
whereas the energy levels below the Fermi energy are the filled
\emph{valence bands} (particles). Since only a finite number of
occupied levels typically exist, we can assume that $E^-$ has finite
rank $\rank(E^-)$; the vector bundle $E^- \rightarrow X_\Pi$ is called the \emph{Bloch bundle}. The number of empty states can however be infinite. 

\begin{definition}
Let $(M,\Gamma,U)$ be a Galilean spacetime with a time direction.
A \emph{band insulator} in $(M,\Gamma,U)$ consists of the following data:
\begin{myenumerate}
	\item A crystal $(C,\Pi)$ in $(M,\Gamma,U)$.
	
	\item A Galilean symmetry group $\sfH$ preserving the crystal and defining a space symmetry group $\sfG$ with quotient group $\sfG'' \coloneqq \sfG/\Pi$.
	
	\item A quantum extension $(\sfG''_\qm,\iota_{\sfG''}, q_{\sfG''})$ of $(\sfG'', \phi_{\sfG''}, \sfc_{\sfG''})$ which pulls back to a quantum extension $(\sfG_\qm, \iota_\sfG, q_\sfG)$ of $\sfG$ that is trivial over $\Pi \subset \sfG$.
	
	\item A gapped quantum system $(\scH,H)$ with symmetry class $(\sfG_\qm,\iota_\sfG, q_\sfG)$.
\end{myenumerate}
A band insulator is of \emph{type I} if $E^+$ has infinite rank and of \emph{type F} if $E^+$ has finite rank.\\
\end{definition}

Thus far we have mostly described the action of lattice
translations. What becomes of the rest of the symmetry group $\sfG$?
In Section~\ref{sect:Decomp_reps} we observe that there is an action
$R:X_\Pi\times \sfG''\to X_\Pi$ of $\sfG''$ on the Brillouin torus
$X_\Pi$. However, this action does not lift to $E$ because the
representation $\rho_\qm$ is not defined on $\sfG''$. The obstruction
is a Hermitean \emph{twisting line bundle} $\hat{L} \to X_\Pi \times
\sfG^{\prime \prime}$, whose construction we give explicitly and which is multiplicative in a sense made precise in Proposition~\ref{st:Brillouin_gerbe}.
We content ourselves here in saying that there is a bundle isomorphism 
\begin{equation}
	\hat{\mu}_{|(\lambda, g_1'', g_2''\,)} \colon \hat{L}_{|(\lambda, g''_1\,)}^{[\phi_\sfG(g_2''\,)]} \otimes \hat{L}_{|(R_{g_1''} (\lambda), g''_2\,)} \arisom \hat{L}_{|(\lambda, g''_1\, g_2''\,)}
\end{equation}
for all $\lambda\in X_\Pi$ and $g_1'', g_2'' \in \sfG''$; the exponent of the first factor indicates that the complex conjugate or dual line bundle is used instead whenever $\phi_\sfG(g_2''\,) = -1$.
The bundle $E$ on the Brillouin torus is not $\sfG''$-equivariant in
general, but rather there is a bundle isomorphism
\begin{equation}
\label{eq:alpha intro}
	\hat{\alpha}_{|(\lambda,g''\,)} \colon \big( \hat{L}_{|(\lambda,g''\,)} \otimes E_{|\lambda} \big)^{[\phi_\sfG(g''\,)]} \arisom E_{|R_{g''}(\lambda)}
\end{equation}
for all $\lambda\in X_\Pi$ and $g'' \in \sfG''$, which is compatible with $\hat{\mu}$.
The vector bundle $E = E^+ \oplus E^-$ is naturally $\ZZ$-graded.
Hermitean vector bundles of this type are called \emph{$\hat{L}$-twisted $\sfG''$-equivariant $\ZZ$-graded Hermitean vector bundles} on $X_\Pi$.

Recalling from Section~\ref{sect:Introduction} that we are interested in deformation classes, we introduce an equivalence
relation on band insulators with fixed symmetry class $(\sfG_\qm,
\iota_\sfG, q_\sfG)$ by saying that two topological insulators are equivalent whenever they are linked
by a string of unitary transformations of Hilbert spaces respecting the
symmetry, together with continuous deformations of the Hamiltonian $H$
that do not close the energy gap in $\sigma(H)$ and which are
compatible with the symmetry.
Let $\TP(\sfG_\qm, \iota_\sfG, q_\sfG)$ denote the set of equivalence classes.
The direct sum of Hilbert spaces turns this into a commutative monoid, and we let $\RTP(\sfG_\qm, \iota_\sfG, q_\sfG)$ denote its Grothendieck completion.

\begin{definition}
A \emph{topological phase} with symmetry class $(\sfG_\qm, \iota_\sfG,
q_\sfG)$ is an element of the commutative monoid $\TP(\sfG_\qm, \iota_\sfG, q_\sfG)$.
Elements of the abelian group $\RTP(\sfG_\qm, \iota_\sfG, q_\sfG)$ are \emph{reduced topological phases} with symmetry class $(\sfG_\qm, \iota_\sfG, q_\sfG)$.
\end{definition}

In Section~\ref{sect:Brillouin_gerbe_and_KT_classification} we
describe $\RTP(\sfG_\qm, \iota_\sfG, q_\sfG)$ geometrically using the Freed-Moore extension of Kitaev's K-theory classification of topological phases~\cite{Kitaev:Periodic_table,Freed-Moore--TEM}.
Defining the \emph{$\hat{L}$-twisted $\sfG''$-equivariant
  K-theory group $\rmK^{\hat{L}, \sfG''}(X_\Pi)$} of $X_\Pi$ to be the
Grothendieck completion of the commutative monoid of isomorphism
classes of twisted equivariant vector bundles on $X_\Pi$ from above,\footnote{A $C^*$-algebraic perspective of this twisted K-theory is described in~\cite{Thiang2014}.} one obtains

\begin{theorem}
\label{st:KT classification intro}
Let $\RTP_\rmF(\sfG_\qm, \iota_\sfG, q_\sfG)$ denote the Grothendieck
group of reduced topological phases obtained from band insulators of type $\rmF$.
Sending a topological phase of this type to the K-theory class of the underlying Hermitean vector bundles $E$ induces an isomorphism
\begin{equation}
	\RTP_\rmF\big(\sfG_\qm,\iota_\sfG, q_\sfG\big) \arisom
        \rmK^{\hat{L},\sfG''}\big(X_\Pi\big)\ .
\end{equation}
\end{theorem}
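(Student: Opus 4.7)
The plan is to produce a natural map $\Phi$ sending the equivalence class of a type F band insulator to the K-theory class of its associated twisted equivariant Bloch bundle, and then to invert $\Phi$ by means of the Bloch-Floquet transform. Concretely, given a band insulator $(\scH,H)$ of type F with symmetry class $(\sfG_\qm,\iota_\sfG,q_\sfG)$, the Brillouin decomposition produces a $\ZZ$-graded $\hat{L}$-twisted $\sfG''$-equivariant Hermitean vector bundle $E = E^+ \oplus E^-$ over $X_\Pi$, both summands of which have finite rank. I would set
\begin{equation}
\Phi\big([\scH,H]\big) := [E^+] - [E^-] \in \rmK^{\hat{L},\sfG''}\big(X_\Pi\big).
\end{equation}

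To see that $\Phi$ is well defined on $\RTP_\rmF$, I would check invariance under the two generators of the equivalence relation on band insulators. A $\sfG_\qm$-equivariant unitary isomorphism intertwining gapped Hamiltonians restricts fibrewise to a unitary intertwiner of Bloch Hamiltonians, yielding isomorphisms $E^\pm_1 \cong E^\pm_2$ of $\hat{L}$-twisted $\sfG''$-equivariant bundles. For a continuous $\sfG_\qm$-equivariant deformation $H_t$ keeping the Fermi gap open, the spectral projectors $P^\pm_t$ depend continuously on $t$ and assemble $E^\pm_t$ into continuous families of twisted equivariant bundles over the compact base $X_\Pi$; homotopy invariance of K-theory then gives the required equality of classes. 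Direct sums of insulators correspond to direct sums of bundles, so $\Phi$ is a monoid homomorphism and extends to a group homomorphism on the Grothendieck completion $\RTP_\rmF$.

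For surjectivity, given $\kappa = [V] - [W]$ with $V,W$ finite-rank $\hat{L}$-twisted $\sfG''$-equivariant Hermitean bundles on $X_\Pi$, I would produce an explicit insulator by taking $E := V \oplus W$ with the $\ZZ$-grading placing $V$ in degree $+$ and $W$ in degree $-$, setting $\scH := \rmL^2(X_\Pi,E)$, and declaring $H$ to act fibrewise as $+1$ on $V$ and $-1$ on $W$. The representation $\rho_\qm$ of $\sfG_\qm$ on $\scH$ is assembled from the twisting isomorphism $\hat\alpha$ of~\eqref{eq:alpha intro} together with the multiplicativity of $\hat{L}$, which guarantees that the resulting projective representation of $\sfG$ carries exactly the $\sfU(1)$-cocycle classifying $\sfG_\qm$. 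For injectivity, I would flatten the spectrum: the homotopy $H_t := (1-t)\, H + t\, H/|H|$ is $\sfG_\qm$-equivariant and well defined since the gap excludes $0$ from $\sigma(H)$, and it connects $H$ to a flat Hamiltonian with spectrum $\{\pm 1\}$ whose full physical data is recovered from the $\ZZ$-graded bundle $E$. Consequently $\Phi([\scH_1,H_1]) = \Phi([\scH_2,H_2])$ forces a stable isomorphism $E_1^+ \oplus E_2^- \oplus F \cong E_1^- \oplus E_2^+ \oplus F$ of twisted equivariant bundles, and adding the auxiliary summand $F$ is realised physically by stabilising the Hilbert spaces with a trivially gapped symmetric system.

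The main obstacle I anticipate lies in the surjectivity step, specifically in ensuring that the bundle-theoretic construction returns a band insulator in the strict sense of Part~\ref{part:TPMs}. One must verify that the projective $\sfG$-representation assembled from $\hat\alpha$ and the multiplicativity of $\hat{L}$ integrates to an honest continuous representation $\rho_\qm$ of the full quantum extension $\sfG_\qm$, not merely of its image in $\Aut_\qm(\PP\scH,H)$, and that the reconstructed $\sfU(1)$-extension coincides with the prescribed one $(\sfG_\qm,\iota_\sfG,q_\sfG)$ as a $\ZZ$-bigraded topological group. This amounts to inverting the Bloch-Floquet transform while tracking the full cocycle data, and is where the bulk of the technical work has to be done.
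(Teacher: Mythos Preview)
The paper does not prove this theorem independently; it is stated in the overview as Theorem~\ref{st:KT classification intro} and then again in Section~\ref{sect:Brillouin_gerbe_and_KT_classification} as Theorem~\ref{st:KT_classification}, where it is simply imported from Freed--Moore~\cite[Theorem~10.15]{Freed-Moore--TEM}. The only argument the paper supplies is the sentence that the monoid-level isomorphism of~\cite[Theorem~10.11]{Freed-Moore--TEM} (their equation~\eqref{eq:TP classification}) passes to Grothendieck completions by functoriality of $\mathrm{Gr}$.

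Your sketch follows the same route that underlies the Freed--Moore argument: Bloch decomposition produces the map, spectral flattening shows that the phase depends only on the $\ZZ$-graded bundle, and the inverse is built by hand from bundle data equipped with the sign Hamiltonian. You also correctly locate the genuine technical work in the surjectivity step, namely checking that the $\hat L$-twisted $\sfG''$-action on $E$ integrates to a representation of the prescribed quantum extension $(\sfG_\qm,\iota_\sfG,q_\sfG)$; this is indeed handled in~\cite[Section~10]{Freed-Moore--TEM}.

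One point where your organisation goes wrong: your injectivity paragraph conflates two levels. Spectral flattening already proves the stronger statement that the \emph{monoid} map $\TP_\rmF \to \pi_0\,\HVBdl^{\hat L,\sfG''}_{\ZZ\,\sim}$ is injective, since after flattening the phase is entirely determined by the graded bundle. Combined with your surjectivity construction this gives a monoid isomorphism, and $\mathrm{Gr}$ then does the rest --- exactly as the paper says. Your attempt to argue directly with the stable isomorphism $E_1^+ \oplus E_2^- \oplus F \cong E_1^- \oplus E_2^+ \oplus F$ is not quite right as stated: this is an \emph{ungraded} isomorphism mixing positive eigenspaces of one system with negative eigenspaces of the other, and ``stabilising with a trivially gapped system'' adds a summand $F$ with a fixed grading, which does not by itself convert that ungraded isomorphism into an equivalence of phases. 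The fix is simply to drop the K-theory-level manoeuvre and instead deduce the result from the monoid isomorphism you have already established via spectral flattening.
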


In Section~\ref{sect:Bloch_and_Berry_explicit} we specialise to the explicit Hilbert space $\scH = \rmL^2(M_\rms, V_\intern)$ of wavefunctions described earlier.
We work out the details of how to construct the Hilbert bundle $E$ over the Brillouin torus and the $\hat{L}$-twisted $\sfG''$-action on $E$.
This is contained in Theorem~\ref{st:Equivariance_of_E_and_PT}, together with the canonical construction of the \emph{Berry
  connection} $\nabla^{E^-}$ on the Bloch bundle from the projection
$E \to E^-$, which is equivalent to its usual definition as the
canonical Grassmann connection on $E^-$. State vectors
$\psi_{|\lambda}\in E_{|\lambda}$ for $\lambda\in X_\Pi$ are the
quasiperiodic \emph{Bloch wavefunctions} on space $M_\rms$ with $\rho_\qm(\xi)(\psi_{|\lambda}) = \lambda(\xi)\, \psi_{|\lambda}$ for $\xi\in\Pi$.

Sections~\ref{sect:splittings_and_equivariance}
and~\ref{sect:pure_time_reversal_in_TEM} are then devoted to specialising to one of the simplest cases with additional symmetries:
crystalline systems with time-reversal symmetry.
This is in essence already contained in the general treatment
of~\cite{Freed-Moore--TEM}, but we take a different route and
explicitly provide all details.
The key new input here is the assumption that the projection map
$\sfG_\qm \to \sfG''$ admits a global section $\hat{s} \colon \sfG''
\to \sfG_\qm$, which we do not assume to be a group homomorphism.
We show in Lemma~\ref{st:sections 2-cocycles and trivialisation} that such sections induce trivialisations of the twisting line bundle $\hat{L}$, and that the obstruction to $\hat{s}$ being a group homomorphism is a $2$-cocycle on $\sfG''$ with values in $\Pi \times \sfU(1)$.
Plugging such a trivialisation into~\eqref{eq:alpha intro} leaves a bundle isomorphism
\begin{equation}
	\hat{\alpha}_{|(\lambda,g''\,)} \colon E_{|\lambda}^{[\phi_\sfG(g''\,)]} \arisom E_{|R_{g''}(\lambda)}
\end{equation}
for all $\lambda\in X_\Pi$ and $g'' \in \sfG''$. Iterating
these maps, the $2$-cocycle on $\sfG''$ enters through the trivialisation of $\hat{L}$ so that $E$ still fails to be $\sfG''$-equivariant: the deviation from an equivariant structure is now controlled by a similar $2$-cocycle on $\sfG''$.
The details can be found in Proposition~\ref{st:Sections_of_Gqm_and_Bloch_bdl}.

In Section~\ref{sect:pure_time_reversal_in_TEM} we fix $\sfG'' = \ZZ$ and $M = \FR^{d+1}$, i.e. we work on the affine spacetime in $d$ space dimensions.
Therefore the Bravais lattice is $\Pi = \RZ^d$, the Brillouin torus is
$X_\Pi \cong \sfU(1)^d \cong
\TT^d$, and $-1 \in \ZZ$ acts trivially on $\FR^d$ and on the
$d$-torus $\TT^d$ by inversion $\tau:=R_{-1}:\lambda \mapsto
\lambda^{-1}$. Many of our considerations apply equal well when the
Brillouin zone in momentum space is the $d$-sphere $S^d$, regarded as the one-point compactification of the
Pontryagin dual of the continuous limit of $\RZ^d$, i.e. $\FR^d$. These continuum models are effective long wavelength theories for the lattice systems above and their quantum systems involve low energy effective Hamiltonians
for the corresponding topological phases; they
will serve as further examples illustrating aspects of our general
formalism.

The full symmetry group is $\sfG=\RZ^d\times\ZZ$, and assuming $\phi_\sfG(-1)=-1$ and $\sfc_\sfG(-1)=1$, the lifts of $-1$ to the quantum extensions $\sfG_\qm = \big( \RZ^d \times \sfU(1) \big)\times (\ZZ)_\qm$ are then time-reversing, where $(\ZZ)_\qm$ is a quantum extension of $\ZZ$;
the latter extensions are classified by whether any lift of $-1 \in
\ZZ$ squares to plus or minus the identity by Lemma~\ref{st:Quantum_extensions_of_ZZ}.
Consequently there is only one nontrivial quantum extension of $\ZZ$.
In this case, any choice of a lift of $-1 \in \ZZ$ to $(\ZZ)_\qm$ determines a section $\hat{s} \colon \sfG'' \to \sfG_\qm$ and we work out the resulting $2$-cocycle.
Defining a \emph{Real} (respectively \emph{Quaternionic}) vector bundle on a space with a
$\ZZ$-action $\tau$ to be a Hermitean vector bundle with an antilinear
lift $\hat\tau$
of the involution to the total space that squares to plus (respectively minus) the identity (see Definition~\ref{def:Quaternionic_vector_bundle}), we derive

\begin{proposition}
\label{st:Bloch bdls for time-reversal intro}
Let $\sfG_\qm = \big(\RZ^d \times  \sfU(1)\big) \times (\ZZ)_\qm$.
\begin{myenumerate}
	\item
	If $(\ZZ)_\qm$ is the trivial quantum extension of $\ZZ$, then
        $(E^-,\nabla^{E^-})$ has the structure of a Real vector bundle
        with connection over $X_\Pi$.
	
	\item
	If $(\ZZ)_\qm$ is the nontrivial quantum extension of $\ZZ$,
        then $E^-$ has the structure of a Quaternionic vector bundle
        over $X_\Pi$.
	The Berry connection $\nabla^{E^-}$ is generally not
        compatible with this structure, but there always exist
        connections $\nabla$ on $E^-$ such that $(E^-,\nabla)$ is a
        Quaternionic vector bundle with connection on $X_\Pi$.
\end{myenumerate}
\end{proposition}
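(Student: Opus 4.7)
The plan is to exploit the section $\hat{s}\colon\sfG''\to\sfG_\qm$ determined by a choice of lift $T:=\hat{s}(-1)\in\sfG_\qm$ of the generator $-1\in\sfG''=\ZZ$, which by Lemma~\ref{st:sections 2-cocycles and trivialisation} produces a global trivialisation of the twisting line bundle $\hat{L}$ restricted to $X_\Pi\times\{-1\}$. Substituting this trivialisation into the twisted equivariance isomorphism~\eqref{eq:alpha intro} collapses it to an antilinear fibrewise isomorphism $\hat{\tau}_\lambda\colon E_{|\lambda}\to E_{|\tau(\lambda)}$ with $\tau(\lambda)=\lambda^{-1}$, the antilinearity being forced by $\phi_\sfG(-1)=-1$. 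Since $\sfc_\sfG(-1)=+1$, the operator $T$ commutes with the Hamiltonian rather than anticommuting with it, so $\hat\tau$ intertwines the Bloch Hamiltonians fibrewise and therefore preserves the negative spectral subspaces, restricting to an antilinear isomorphism $\hat\tau\colon E^-_{|\lambda}\to E^-_{|\tau(\lambda)}$. In addition, $\hat\tau$ is fibrewise antiunitary, so it preserves the orthogonal projection $E\to E^-$.

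Next I would compute $\hat{\tau}^2$ using Proposition~\ref{st:Sections_of_Gqm_and_Bloch_bdl}, which expresses the failure of $\hat\tau$ to be a $\sfG''$-equivariant structure in terms of the $2$-cocycle $\omega\colon\sfG''\times\sfG''\to\Pi\times\sfU(1)$ obstructing $\hat{s}$ from being a group homomorphism. Since $\sfG=\RZ^d\rtimes\ZZ$ is a semidirect product, the $\Pi$-component of $\omega$ vanishes, leaving only $\omega(-1,-1)\in\sfU(1)$, which by construction equals $T^2\in\sfU(1)\subset\sfG_\qm$. By Lemma~\ref{st:Quantum_extensions_of_ZZ}, $T^2=+\One$ when $(\ZZ)_\qm$ is the trivial quantum extension and $T^2=-\One$ in the nontrivial case. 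Hence $\hat\tau^2=+\mathrm{id}$ in case~(1), exhibiting $(E^-,\hat\tau)$ as a Real vector bundle over $(X_\Pi,\tau)$ in the sense of Definition~\ref{def:Quaternionic_vector_bundle}, while $\hat\tau^2=-\mathrm{id}$ in case~(2), giving the Quaternionic structure.

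For the connection compatibility in case~(1), recall from Theorem~\ref{st:Equivariance_of_E_and_PT} that the Berry connection $\nabla^{E^-}$ is the unique Hermitian connection obtained by composing the trivial Hilbert-space connection on the ambient bundle $E$ with the orthogonal projection onto $E^-$. Since $\hat\tau$ preserves both the Hermitian metric on $E$ and the subbundle $E^-$, and acts on $E$ through the isometric, antilinear Bloch-Floquet realisation of a symmetry, it commutes with this construction, whence $\hat\tau^*\nabla^{E^-}=\nabla^{E^-}$ in the antilinear sense appropriate for Real bundles. In case~(2), the same intertwining holds at the level of the underlying Hermitian connection, but the stricter compatibility required for a genuinely Quaternionic connection (which effectively reduces the structure group to $\mathsf{Sp}$) is not automatic; nevertheless, a compatible connection always exists by averaging: starting from any Hermitian connection $\nabla'$ on $E^-$, for instance $\nabla'=\nabla^{E^-}$, the symmetrised connection $\nabla:=\tfrac{1}{2}\bigl(\nabla'+\hat\tau^*\nabla'\bigr)$ is again Hermitian and satisfies $\hat\tau^*\nabla=\nabla$ by construction.

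The main obstacle I anticipate is the careful bookkeeping required when passing from the general twisted equivariance data $(\hat L,\hat\mu,\hat\alpha)$ to the trivialised form after substituting $\hat{s}$: one must verify that the $\sfU(1)$-valued part of the $2$-cocycle is captured by $T^2$ with the correct sign convention and that no additional contributions arise from the $\RZ^d$-semidirect factor of $\sfG$, since it is precisely this sign that distinguishes the Real from the Quaternionic case.
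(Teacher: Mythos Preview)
Your overall strategy matches the paper's argument in Section~\ref{sect:pure_time_reversal_in_TEM} almost exactly: pick a lift $\hat{T}$ of $-1\in\ZZ$ to define the section $\hat{s}$, compute the resulting $2$-cocycle via Lemma~\ref{st:sections 2-cocycles and trivialisation} (trivial $\Pi$-part, $\sfU(1)$-part equal to $\hat{T}^2=\pm 1$), apply Proposition~\ref{st:Sections_of_Gqm_and_Bloch_bdl} to get $\hat{\tau}\circ\overline{\tau^*\hat{\tau}}=\pm\One$, restrict to $E^-$, and invoke averaging (the paper uses Proposition~\ref{st:averaged_connections}) for the existence of a compatible connection. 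One small correction: here $\sfG=\RZ^d\times\ZZ$ is a \emph{direct} product, not semidirect, since time-reversal acts trivially on space; your conclusion about the $\Pi$-component is nonetheless correct, as the paper's explicit cocycle~\eqref{eq:2-cocycle_for_time-reversal} confirms.

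Your treatment of connection compatibility, however, contains a genuine confusion. There is no ``stricter compatibility'' for a Quaternionic connection: by Definition~\ref{def:Quaternionic_vector_bundle} a Quaternionic vector bundle with connection requires exactly that $\hat{\tau}\colon\overline{\tau^*E^-}\to E^-$ be connection-preserving, the same condition as in the Real case; the two structures differ only in the sign of $\hat{\tau}\circ\overline{\tau^*\hat{\tau}}$. Your argument for case~(1)---that $\hat{\tau}$ preserves the ambient connection $\nabla^E$ and the projection onto $E^-$, hence preserves $\nabla^{E^-}$---nowhere uses $\hat{T}^2=+1$ and would apply verbatim to case~(2). The paper does not prove the claimed asymmetry between the two cases either; it simply asserts that the Berry connection is compatible in the trivial case and ``generally not compatible'' in the nontrivial one, then appeals to Proposition~\ref{st:averaged_connections} for existence. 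You should not try to rescue the asymmetry by inventing a structure-group reduction condition that the definition does not impose.
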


As we are primarily interested in electronic systems in three dimensions, from here on we fix the symmetry group $\sfG = \RZ^3 \times  \ZZ$ as
well as the nontrivial quantum extension of $\ZZ$; this corresponds to
topological insulators of type AII in the Altland-Zirnbauer
classification~\cite{AltlandZirnbauer}.

\subsection{The Kane-Mele invariant and obstruction theory}

Bloch bundles of class AII topological insulators have several
powerful special properties with deep physical implications, whose mathematical formulations are
the subject of Part~\ref{part:KM and localisation} together with their relation to topological phases.
We write $X^\tau$ for the set of fixed points of an involutive space
$(X, \tau)$, i.e. a space $X$ with a $\ZZ$-action $\tau:X\to X$.
If $X^\tau \neq \varnothing$, the fibres of a Quaternionic vector
bundle on $X^\tau$ are quaternionic vector spaces and hence of even complex dimension.
In the application to topological insulators this means that, over the
time-reversal invariant periodic momenta, we can always find a basis of
\emph{Kramers pairs} consisting of band states and their images
under the lift $\hat{\tau}$ of the involution to the Bloch bundle over
the Brillouin torus.
In~\cite{Freed-Moore--TEM,Gawedzki:BGrbs_for_TIs} it is shown that any
Quaternionic vector bundle on $\TT^3$ has trivial first Chern class,
and we may invoke a result of
Panati~\cite{Panati:Triviality_of_Bloch_and_Dirac} which gives

\begin{proposition}
Let $X$ be a connected manifold of dimension $d \leq 3$, and let $m \in 2\,\NN$ be an even positive integer.
Then the first Chern class induces an isomorphism from the pointed set
of isomorphism classes of Hermitean vector bundles of rank $m$ on $X$
to the second integer cohomology group $\rmH^2(X,\RZ)$ regarded as a pointed set.
\end{proposition}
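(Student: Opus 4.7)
The plan is to recognise this as a direct application of obstruction theory, using the fact that the low homotopy groups of $B\mathrm{U}(m)$ are completely controlled by the vanishing of $\pi_2$ of the unitary group. First, for any paracompact space $X$, isomorphism classes of rank-$m$ Hermitean vector bundles on $X$ are in natural pointed bijection with homotopy classes of maps $[X, B\mathrm{U}(m)]$, with basepoint the trivial bundle corresponding to the constant map. The first Chern class is realised by postcomposition with a fixed representative $\kappa \colon B\mathrm{U}(m) \to K(\RZ, 2)$ of the generator of $H^2(B\mathrm{U}(m), \RZ) \cong \RZ$; under the canonical identification $[X, K(\RZ, 2)] \cong H^2(X, \RZ)$, this postcomposition agrees with $c_1$. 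The problem is thus reduced to showing that $\kappa_\ast \colon [X, B\mathrm{U}(m)] \to [X, K(\RZ, 2)]$ is a bijection whenever $\dim X \leq 3$.

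Next, I would compute the low-dimensional homotopy groups of $B\mathrm{U}(m)$ via $\pi_k(B\mathrm{U}(m)) \cong \pi_{k-1}(\mathrm{U}(m))$: one has $\pi_1(B\mathrm{U}(m)) = 0$, $\pi_2(B\mathrm{U}(m)) \cong \RZ$ generated by the determinant, and $\pi_3(B\mathrm{U}(m)) = \pi_2(\mathrm{U}(m)) = 0$. The last vanishing is the key Lie-theoretic input: $\pi_2$ of any Lie group is trivial, which follows for instance from passing to the universal cover of $\mathrm{U}(m)$ and invoking Hopf's theorem that $H_2$ of a compact simply-connected Lie group vanishes, combined with the Hurewicz theorem. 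Since $\kappa$ represents a generator of $H^2(B\mathrm{U}(m), \RZ)$, its induced map on $\pi_2$ is an isomorphism, and its induced map on $\pi_3$ is the trivial map $0 \to 0$. Together with surjectivity on $\pi_4$ (the target being zero), this means $\kappa$ is a $4$-equivalence, or equivalently that $K(\RZ, 2)$ is the third Postnikov stage of $B\mathrm{U}(m)$.

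Finally, by the standard Whitehead-type theorem, a $4$-equivalence $Y \to Z$ induces a bijection $[X, Y] \to [X, Z]$ for every CW complex $X$ of dimension $\leq 3$: the obstructions and difference classes for $\kappa$-lifts all lie in cohomology groups $H^i(X, \pi_{i-1}(F))$ or $H^i(X, \pi_i(F))$ of the homotopy fibre $F$ of $\kappa$, which is $3$-connected by the long exact sequence, so the relevant coefficient groups vanish for $i \leq 3$ while $H^i(X, -) = 0$ for $i \geq 4$. Since every smooth manifold of dimension $\leq 3$ admits a triangulation and hence has the homotopy type of a CW complex of dimension $\leq 3$, the bijection descends. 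The basepoints match because the trivial bundle has vanishing first Chern class, so $\kappa_\ast$ is a bijection of pointed sets as claimed. The only real obstacle is the input $\pi_2(\mathrm{U}(m)) = 0$; once this is granted, the obstruction-theoretic conclusion is forced by the truncated Postnikov tower of $B\mathrm{U}(m)$.
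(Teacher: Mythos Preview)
Your proof is correct and follows essentially the same strategy as the paper: both establish that the classifying map for $c_1$ gives a $4$-equivalence $\sfB\sfU(m) \to K(\RZ,2) \simeq \sfB\sfU(1)$ and then invoke the standard Whitehead-type result (the paper cites \cite[Theorem~6.31]{Switzer:AT}). The only difference is one of explicitness: the paper identifies the $4$-equivalence concretely as $\sfB\det$ and verifies the $\pi_2$-isomorphism by a clutching argument on $S^2$ showing $\pi_1(\det)\colon \pi_1(\sfU(m)) \to \pi_1(\sfU(1))$ is an isomorphism, whereas you argue more abstractly via Hurewicz that any representative of the generator of $\rmH^2(\sfB\sfU(m),\RZ)$ must induce an isomorphism on $\pi_2$.
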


The key to this statement is showing that there exists a
$4$-equivalence of classifying spaces $\sfB\sfU(m) \to \sfB\sfU(1) \simeq K(\RZ,2)$, where $ K(\RZ,2)$ is an Eilenberg-MacLane space.
In other words, there exists a canonical isomorphism $\rmH^2(X,\RZ)
\cong [X, \sfB\sfU(1)]$ between the second integer cohomology group of $X$ and homotopy classes of maps from $X$ to $\sfB\sfU(1)$.
A $p$-equivalence from a space $Y$ to a space $Z$ is a continuous map
$f \colon Y \to Z$ which induces isomorphisms on the $k$-th homotopy
groups for $k = 0,1, \ldots, p$ and a surjection for $k=p$.
The statement then follows from the property of $p$-equivalences~\cite[Theorem 6.31]{Switzer:AT} that for every CW-complex $X$ of dimension $d < p$, composition with $f$ induces an isomorphism on homotopy classes of maps
\begin{equation}
	f_* \colon [X,Y] \arisom [X,Z]\ .
\end{equation}
We provide the full proof in Appendix~\ref{app:proof_of_1st_Chern_knows_everything}, where we show that the map establishing the $4$-equivalence can be made more explicit than in~\cite{Panati:Triviality_of_Bloch_and_Dirac}. Writing $\QVBdl(\TT^3,\tau)$ for the category of Quaternionic vector bundles on $(\TT^3,\tau)$, we then have

\begin{proposition}
Any Quaternionic vector bundle $(E^-, \hat{\tau})\in \QVBdl(\TT^3,\tau)$ has an underlying Hermitean vector bundle $E^-$ which is trivialisable.
\end{proposition}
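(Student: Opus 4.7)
The plan is to combine the two immediately preceding statements: the fact that every Quaternionic vector bundle on $\TT^3$ has vanishing first Chern class, and Panati's proposition which for $d \leq 3$ and even rank $m$ identifies Hermitean vector bundles up to isomorphism with their first Chern class in $\rmH^2(X,\RZ)$. The whole statement then follows as a one-line corollary, and the real content lies in verifying that the hypotheses of Panati's proposition are met.

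First I would check that the complex rank of $E^-$ is even. The fixed point set of the involution $\tau \colon \lambda \mapsto \lambda^{-1}$ on $\TT^3 = \sfU(1)^3$ consists of the eight invariant momenta with coordinates in $\{\pm 1\}$, so $(\TT^3)^\tau \neq \varnothing$. By the discussion preceding the proposition, each fibre of the Quaternionic bundle $(E^-,\hat\tau)$ over a fixed point is a quaternionic vector space, hence of even complex dimension; since $\TT^3$ is connected, this forces $m := \rank(E^-) \in 2\NN$. Second, I would invoke the cited result of~\cite{Freed-Moore--TEM,Gawedzki:BGrbs_for_TIs} to conclude that $c_1(E^-) = 0$ in $\rmH^2(\TT^3,\RZ)$. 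With both conditions verified, I would apply Panati's proposition to $X = \TT^3$ and this even integer $m$: the first Chern class induces a pointed-set bijection from rank $m$ Hermitean bundles on $\TT^3$ to $\rmH^2(\TT^3,\RZ)$, so vanishing of $c_1(E^-)$ exhibits $E^-$ as isomorphic, as a Hermitean vector bundle, to the trivial rank $m$ bundle.

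There is no serious obstacle; the only point worth emphasising is that Panati's classification genuinely requires $m$ to be even, because the underlying $4$-equivalence $\sfB\sfU(m)\to \sfB\sfU(1)\simeq K(\RZ,2)$ breaks down in the odd-rank case (where extra Stiefel--Whitney-type obstructions appear alongside the determinant line). It is precisely the Quaternionic structure on $E^-$, through its effect on the fibres over the time-reversal invariant momenta, that supplies this parity; beyond this and the vanishing of $c_1$, the structure $\hat\tau$ plays no further role in the argument, which is what makes the underlying Hermitean bundle unconditionally trivialisable despite the fact that $(E^-,\hat\tau)$ itself may carry nontrivial Quaternionic invariants such as the Kane-Mele index.
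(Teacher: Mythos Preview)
Your proof is correct and follows exactly the paper's route: the statement appears there as an immediate corollary of the proposition giving even rank and vanishing $c_1$ for Quaternionic bundles on $\TT^3$, combined with Panati's classification in dimensions $d\leq 3$. One minor inaccuracy in your closing commentary: the $4$-equivalence $\sfB\det\colon \sfB\sfU(m)\to \sfB\sfU(1)$ does not in fact break down for odd $m$ --- it holds for all $m\geq 2$, as the paper's appendix shows --- so the evenness hypothesis in Panati's proposition is not essential to that step, though this does not affect your argument.
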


Let $(X,\tau)$ be a $3$-dimensional compact involutive manifold and let $(E^-, \hat{\tau}) \in \QVBdl_U(X,\tau)$, where $\QVBdl_U(X,\tau)$ is the category of Quaternionic vector bundles on $(X,\tau)$ whose underlying vector bundle is trivialisable.
Assume that $\tau$ has fixed points so that $\rank(E^-) = m$ is even.
Any smooth trivialisation $\varphi$ of $E^- \to X$ induces a smooth map
$w_\varphi \colon X \to \sfU(m)$, sometimes called a \emph{sewing
  matrix} (see e.g.~\cite{Gawedzki--2DFKM_and_WZW}), given by the
commutative diagram
\begin{equation}
\label{eq:w_varphi intro}
	\xymatrixrowsep{1.2cm}
	\xymatrixcolsep{1.2cm}
	\xymatrix{
		E^- \ar@{->}[r]^-{\hat{\tau}^{-1}} \ar@{->}[d]_-{\varphi}	&	\overline{\tau^* E^-} \ar@{->}[d]^-{\overline{\tau^*\varphi}}\\
		X \times  \FC^m \ar@{-->}[r]_-{w_\varphi \circ \theta}	&	\overline{X \times  \FC^m}
	}
\end{equation}
where $\theta$ denotes complex conjugation.
The map $w_\varphi$ is $\ZZ$-equivariant with respect to the
involution $\tau$ on $X$ and the involution $g
\mapsto -g^\trans$ on the unitary group $\sfU(m)$ sending a matrix $g
\in \sfU(m)$ to minus its transpose. The elements of the sewing matrix
can be written in an orthonormal basis of global sections $\{\psi^a\}_{a=1,\dots,m}$ as 
\begin{equation}
w_\varphi^{ab}(\lambda)= h_{E^-}\big(\psi_{|\tau(\lambda)}^a ,\hat \tau(\psi_{|\lambda}^b)\big)
\end{equation}
where $h_{E^-}$ denotes the Hermitean metric on $E^-$. This matrix is
skewadjoint at the fixed points of $\tau$. On the other hand, a
trivial topological phase is characterised by the existence of a
global frame of smooth Bloch wavefunctions such that
$\psi^a_{|\tau(\lambda)}=\sum_{b=1}^m\, w_0^{ab}\,
\hat\tau(\psi^b_{|\lambda})$ for the unitary skewadjoint matrix
$w_0\in\sfU(m)$ given by
\begin{equation}
w_0=\begin{pmatrix}
0 & \One \\
-\One & 0
\end{pmatrix} \ ,
\end{equation}
 which decomposes
the sewing matrix globally into the block offdiagonal form $w_\varphi=w_0$.

Let $H_{\sfU(m)} \in \Omega^3(\sfU(m))$ denote the canonical $3$-form on $\sfU(m)$.

\begin{definition}
Let $(X, \tau)$ be a closed and oriented involutive manifold of dimension $d = 3$ with orientation-reversing involution $\tau$ such that $X^\tau \neq \varnothing$.
If $(E^-, \hat{\tau}) \in \QVBdl_U(X, \tau)$ and $\varphi$ is any trivialisation of $E^-$, the \emph{Kane-Mele invariant} of $(E^-, \hat{\tau})$ is
\begin{equation}
	\rmK\rmM(E^-, \hat{\tau}) \coloneqq \exp \bigg( \pi\, \iu\,
        \int_{X}\, w_\varphi^*H_{\sfU(m)} \bigg)\ \in\ \ZZ\ .
\end{equation}
Given a time-reversal symmetric band insulator on a $3$-dimensional
Bravais lattice with Bloch bundle $(E^-, \hat{\tau}) \in \QVBdl(\TT^3, \tau)$,
its \emph{Kane-Mele invariant} is the mod~$2$ index $\rmK\rmM(E^-, \hat{\tau}) \in \ZZ$.
\end{definition}

This perspective on the Kane-Mele invariant, which defines it via the Wess-Zumino-Witten action functional of the map $w_\varphi:X\to\sfU(m)$, was introduced in~\cite{WQZ,Gawedzki--2DFKM_and_WZW}; the language of Quaternionic vector bundles is particularly emphasised in~\cite{DeNittis-Gomi:Classification_of_quaternionic_Bloch_bdls,DeNittis-Gomi:FKMM_in_low_dimension}.
It follows from Theorem~\ref{st:KT classification intro} and Proposition~\ref{st:Bloch bdls for time-reversal intro} that reduced topological phases with time-reversal symmetry as in part~(2) of Proposition~\ref{st:Bloch bdls for time-reversal intro} are in bijection with the Quaternionic K-theory group $\rmK\rmQ(\TT^3, \tau)$ of $(\TT^3, \tau)$.
The abelian group $\rmK\rmQ(\TT^3, \tau)\cong\RZ\oplus\ZZ^4$ is the Grothendieck completion of the commutative monoid of isomorphism classes of Quaternionic vector bundles on $(\TT^3, \tau)$ with the direct sum.
The Kane-Mele invariant contains information about this group:

\begin{proposition}
The Kane-Mele invariant induces a homomorphism of abelian groups
\begin{equation}
	\rmK\rmM \colon \rmK\rmQ(\TT^3, \tau) \longrightarrow \ZZ\ .
\end{equation}
\end{proposition}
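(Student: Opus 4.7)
The plan is to verify that $\rmK\rmM$ descends to a group homomorphism on $\rmK\rmQ(\TT^3,\tau)$ by establishing three facts and then invoking the universal property of the Grothendieck completion: (i) the value $\rmK\rmM(E^-,\hat\tau)$ is independent of the auxiliary trivialisation $\varphi$; (ii) it depends only on the isomorphism class of $(E^-,\hat\tau)$ as a Quaternionic vector bundle; and (iii) it is multiplicative under direct sums, so that the assignment $[E^-,\hat\tau]\mapsto \rmK\rmM(E^-,\hat\tau)$ is a monoid homomorphism into the abelian group $\ZZ\simeq\{\pm 1\}$.

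For (i), two smooth trivialisations $\varphi,\varphi'$ of $E^-$ differ by a map $g\colon X\to\sfU(m)$, and a direct chase of the diagram defining the sewing matrix yields an explicit formula expressing $w_{\varphi'}$ in terms of $w_\varphi$, $g$ and $\tau^* g$. Applying the Polyakov-Wiegmann formula for the canonical WZW $3$-form $H_{\sfU(m)}$, the variation $\int_X w_{\varphi'}^* H_{\sfU(m)} - \int_X w_\varphi^* H_{\sfU(m)}$ splits into bulk WZW terms in $g$ and in $\tau^*g$, plus exact corrections which vanish since $X$ is closed. Because $\tau$ is orientation-reversing and $H_{\sfU(m)}$ is invariant under the involution $g\mapsto -g^\trans$ up to a controllable sign, the contributions from $g$ and from $\tau^*g$ combine into twice an integer, so $\exp(\pi\iu\cdot(\text{difference}))=1$. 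This is the decisive step, where the $\ZZ$-equivariance of sewing matrices refines the usual $\RZ$-valued WZW ambiguity down to a $2\RZ$-valued one.

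For (ii), an isomorphism $\Phi\colon(E^-,\hat\tau)\arisom((E')^-,\hat\tau'\,)$ of Quaternionic vector bundles satisfies $\Phi\circ\hat\tau=\hat\tau'\circ\Phi$; given a trivialisation $\varphi'$ of $(E')^-$ the composition $\varphi'\circ\Phi$ trivialises $E^-$, and a direct check using the defining diagram shows $w_{\varphi'\circ\Phi}=w_{\varphi'}$, so the two Kane-Mele invariants coincide. For (iii), given trivialisations $\varphi_i$ of $(E_i^-,\hat\tau_i)$ for $i=1,2$, the direct-sum trivialisation $\varphi_1\oplus\varphi_2$ has block-diagonal sewing matrix $w_{\varphi_1}\oplus w_{\varphi_2}$ valued in the subgroup $\sfU(m_1)\times\sfU(m_2)\hookrightarrow\sfU(m_1+m_2)$, and the pullback of $H_{\sfU(m_1+m_2)}$ along this embedding decomposes as the sum of the individual WZW $3$-forms. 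This yields $\rmK\rmM(E_1^-\oplus E_2^-)=\rmK\rmM(E_1^-)\cdot\rmK\rmM(E_2^-)$.

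Combining (i)-(iii), $\rmK\rmM$ is a well-defined monoid homomorphism from the commutative monoid of isomorphism classes of Quaternionic vector bundles on $(\TT^3,\tau)$ under direct sum into the abelian group $\ZZ$, and the universal property of Grothendieck completion extends it uniquely to an abelian group homomorphism $\rmK\rmM\colon\rmK\rmQ(\TT^3,\tau)\to\ZZ$. The main obstacle is step (i): the standard WZW functional for a map to a compact Lie group is a priori only an element of $\sfU(1)$, and it is precisely the $\ZZ$-equivariance of sewing matrices coming from the Quaternionic structure that forces the $\pi\iu$-rescaled invariant to land in $\ZZ\subset\sfU(1)$ and to be insensitive to the chosen trivialisation.
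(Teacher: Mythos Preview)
Your proposal is correct and follows essentially the same route as the paper: independence of trivialisation via Polyakov--Wiegmann and the orientation-reversing property of $\tau$ (your step (i), the paper's equation~\eqref{eq:KM_independent_of_trivialisation}), isomorphism invariance by transporting trivialisations (your (ii)), and multiplicativity under direct sums via the block-diagonal sewing matrix and the trace (your (iii)). The only cosmetic difference is in the passage to the Grothendieck group: you invoke the universal property of the Grothendieck completion (a monoid homomorphism into a group extends uniquely), whereas the paper verifies well-definedness on formal differences explicitly by using $\rmK\rmM(R,\hat\tau_R)^2=1$; these are of course equivalent, and your phrasing is slightly cleaner.
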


Recalling Theorem~\ref{st:KT classification intro}, we thus infer that
the Kane-Mele invariant knows part of whether a given band insulator is in a nontrivial reduced topological phase.
It is nonetheless still interesting to understand what it knows and how much.
That is, which part of the information about the topological phase of a band insulator does it detect?

To address these questions, we collect and investigate several different perspectives on the Kane-Mele invariant in Part~\ref{part:KM and localisation} of this paper.
In Section~\ref{sect:KM as CS}, we recall and 
generalise the perspective
of~\cite{Freed-Moore--TEM,Gawedzki--2DFKM_and_WZW} on the Kane-Mele invariant via
the Chern-Simons action functional through

\begin{proposition}
If $(E^-, \hat{\tau}) \in \QVBdl_U(X, \tau)$, the Kane-Mele invariant $\rmK\rmM(E^-, \hat{\tau})$ coincides with the Chern-Simons invariant of any Quaternionic connection on $E^-$.
\end{proposition}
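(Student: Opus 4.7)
The plan is to pick a Quaternionic connection $\nabla$ on $E^-$, whose existence is guaranteed by part~(2) of Proposition~\ref{st:Bloch bdls for time-reversal intro}, together with the global trivialisation $\varphi$ of $E^-$ provided by the hypothesis $(E^-,\hat\tau)\in\QVBdl_U(X,\tau)$. In this frame $\nabla$ is represented by a single $\fru(m)$-valued 1-form $A$ on $X$, and the Chern-Simons invariant of $\nabla$ is computed as $\exp\bigl(\pi\,\iu\,\int_X\mathrm{cs}(A)\bigr)$ with $\mathrm{cs}(A)$ the Chern-Simons 3-form normalised to match $H_{\sfU(m)}$. It then remains to identify this with $\rmK\rmM(E^-,\hat\tau)$ and to verify independence of the chosen Quaternionic connection.

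The essential input is to promote the defining diagram~\eqref{eq:w_varphi intro} of $w_\varphi$ to the level of connections. Because $\nabla$ is Quaternionic, $\hat\tau^{-1}$ is an isomorphism of bundles with connection from $(E^-,\nabla)$ to $(\overline{\tau^*E^-},\overline{\tau^*\nabla})$, and reading this off in the trivialisations $\varphi$ and $\overline{\tau^*\varphi}$ yields the identity
\begin{equation}
	A \;=\; w_\varphi^{-1}\,\overline{\tau^*A}\,w_\varphi \;+\; w_\varphi^{-1}\,dw_\varphi\ .
\end{equation}
Feeding $B=\overline{\tau^*A}$ into the Polyakov-Wiegmann anomaly
\begin{equation}
	\mathrm{cs}\bigl(w_\varphi^{-1}\,B\,w_\varphi+w_\varphi^{-1}\,dw_\varphi\bigr)-\mathrm{cs}(B) \;=\; w_\varphi^*H_{\sfU(m)}+d\omega\ ,
\end{equation}
using both the transpose-invariance of $\tr$ (so that $\mathrm{cs}(\overline{\tau^*A})=\tau^*\mathrm{cs}(A)$ since $A$ is anti-Hermitean) and the orientation-reversal of $\tau$ on the closed $3$-manifold $X$ (so that $\int_X\tau^*\mathrm{cs}(A)=-\int_X\mathrm{cs}(A)$), integration over $X$ collapses to
\begin{equation}
	2\,\int_X\mathrm{cs}(A) \;=\; \int_X w_\varphi^*H_{\sfU(m)} \quad \mod\ \RZ\ ,
\end{equation}
which exponentiates to the claim and simultaneously explains why the Chern-Simons invariant of a Quaternionic connection is already $\ZZ$-valued.

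The hard part will be the careful bookkeeping of signs and prefactors at two junctures: firstly, verifying that $\mathrm{cs}(\overline{\tau^*A})=\tau^*\mathrm{cs}(A)$ holds on the nose, as the substitution $\bar A=-A^\trans$ inside the cubic term requires using cyclicity of the trace twice; and secondly, matching the normalisations of $\mathrm{cs}$ and $H_{\sfU(m)}$ so that the Polyakov-Wiegmann anomaly reproduces $w_\varphi^*H_{\sfU(m)}$ exactly. Independence of the Quaternionic connection then follows from the standard variational argument: any two Quaternionic connections differ by a $\tau$-equivariant $\End(E^-)$-valued 1-form $\eta$, and the resulting variation $\mathrm{cs}(A+\eta)-\mathrm{cs}(A)$ reduces to an exact form together with $2\,\tr(\eta\wedge F_\nabla)$, whose integral over $X$ lies in $\RZ$ by the combined $\tau$-equivariance of $\tr(\eta\wedge F_\nabla)$ and the orientation-reversal of $\tau$.
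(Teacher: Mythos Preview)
Your approach is essentially the same as the paper's: choose a Quaternionic connection, read off the gauge relation between $A$ and $\tau^*A$ (equivalently $\overline{\tau^*A}$, since $A$ is skew-Hermitean) from the diagram~\eqref{eq:w_varphi intro}, apply the Chern--Simons transformation law together with $\ori(\tau)=-1$ on the closed $3$-manifold, and conclude $2\int_X\mathrm{CS}(A)=\pm\int_X w_\varphi^*H_{\sfU(m)}$. The paper obtains exactly this identity (with the standard normalisation $\exp(2\pi\iu\int_X\mathrm{CS}(A))$ for the Chern--Simons invariant, so your $\exp(\pi\iu\int_X\mathrm{cs}(A))$ is a nonstandard convention you should align with the paper's), and then exponentiates.

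Two small remarks. First, your normalisation bookkeeping is slightly tangled: with the standard $\mathrm{CS}$ used in the paper the invariant is $\exp(2\pi\iu\int_X\mathrm{CS}(A))$, and the relation $\int_X\mathrm{CS}(A)=-\tfrac12\int_X w_\varphi^*H_{\sfU(m)}$ holds exactly (no ``mod $\RZ$'' is needed, since $X$ is closed). Second, your closing paragraph on independence of the Quaternionic connection is unnecessary: once you have shown that the Chern--Simons invariant equals $\rmK\rmM(E^-,\hat\tau)$, which is manifestly connection-independent, independence follows for free. Your variational sketch there is also not quite right (the finite difference of Chern--Simons forms is not simply $2\,\tr(\eta\wedge F_\nabla)$), so it is cleanest to drop it.
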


When $X=\TT^3$, this result realises the Kane-Mele invariant as a magnetoelectric polarisation~\cite{WQZ}, which is experimentally measurable.

Section~\ref{sect:KM as obstruction to factorisation} then contains a novel approach to the Kane-Mele invariant via homotopy theory.
Two main observations enter this description.
Firstly, we show that there exists a $4$-equivalence $\sfS\sfU(m) \to
K(\RZ,3)$ in Proposition~\ref{st:SUm to KZ3 4-equiv} which implies

\begin{proposition}
\label{prop:H3XSUm}
If $X$ is a CW-complex of dimension $d \leq 3$ and $m \geq 2$,
there are isomorphisms
\begin{equation}
	\rmH^3(X,\RZ) \cong [X, K(\RZ,3)] \cong [X, \sfS\sfU(m)]\ .
\end{equation}
\end{proposition}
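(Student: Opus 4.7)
The plan is to combine two inputs: the representability property of the Eilenberg-MacLane space $K(\RZ,3)$, and the $4$-equivalence $\sfS\sfU(m)\to K(\RZ,3)$ supplied by Proposition~\ref{st:SUm to KZ3 4-equiv}. The first isomorphism $\rmH^3(X,\RZ)\cong [X,K(\RZ,3)]$ is the defining representability statement for $K(\RZ,3)$ and holds for any CW-complex $X$ without any dimensional restriction; I would simply cite the standard result (and, if one wants the identification of group structures, recall that the $H$-space structure on $K(\RZ,3)$ coming from addition of cocycles makes $[X,K(\RZ,3)]$ an abelian group matching the one on $\rmH^3(X,\RZ)$).

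For the second isomorphism, I would invoke the general property of $p$-equivalences recalled in the excerpt just above the statement, namely \cite[Theorem~6.31]{Switzer:AT}: for any $p$-equivalence $f \colon Y \to Z$, composition with $f$ induces a bijection $f_* \colon [X,Y] \arisom [X,Z]$ for every CW-complex $X$ of dimension $d<p$. Applied to the $4$-equivalence $f \colon \sfS\sfU(m) \to K(\RZ,3)$ of Proposition~\ref{st:SUm to KZ3 4-equiv} and a CW-complex $X$ with $\dim X \leq 3 < 4$, this directly yields
\[
f_* \colon [X, \sfS\sfU(m)] \arisom [X, K(\RZ,3)]\ .
\]
Composing with the first isomorphism gives the stated identification.

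The main obstacle is not in the present corollary but in the prior Proposition~\ref{st:SUm to KZ3 4-equiv} establishing the $4$-equivalence itself. That requires computing $\pi_k(\sfS\sfU(m))$ for $k\leq 3$ — for instance via the fibration $\sfS\sfU(m-1) \to \sfS\sfU(m) \to S^{2m-1}$ together with the identification $\sfS\sfU(2)\simeq S^3$ — verifying that $\pi_0=\pi_1=\pi_2=0$ and $\pi_3 \cong \RZ$ for $m\geq 2$, and then exhibiting an explicit map $\sfS\sfU(m) \to K(\RZ,3)$ representing the generator of $\rmH^3(\sfS\sfU(m),\RZ)$ so that it induces an isomorphism on $\pi_3$. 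Granted that input, the present proposition reduces to a one-line application of the universal property of $K(\RZ,3)$ combined with the cited homotopy-theoretic property of $p$-equivalences in the range $d \leq 3 < 4$.
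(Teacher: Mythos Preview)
Your proposal is correct and matches the paper's approach essentially verbatim: the paper states the result as a corollary obtained by combining Proposition~\ref{st:SUm to KZ3 4-equiv} with \cite[Theorem~6.31]{Switzer:AT}, exactly as you do, with the first isomorphism being the standard representability property of $K(\RZ,3)$.
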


Secondly, a recuring theme throughout Part~\ref{part:KM and localisation} is the interplay of integration with orientation-reversal and pullbacks.
For example, if $(X, \tau)$ is a $d$-dimensional connected closed and orientable involutive
manifold with orientation-reversing involution $\tau$, and $\omega \in
\Omega^d(X)$ is a top-form with integer periods on $X$, then
$\int_X \, \omega \in 2\, \RZ$ if and only if there exists $\eta \in
\Omega^d(X)$ such that $\int_X\, \eta \in \RZ$ and $[\omega] = [\eta] - \tau^*[\eta]$.
Combining this simple observation with the $4$-equivalence
$\sfS\sfU(m) \to K(\RZ,3)$ and Proposition~\ref{prop:H3XSUm} leads to

\begin{theorem}
\label{st:splitting ho intro}
Let $(X, \tau)$ be an orientable connected and closed $3$-dimensional
CW-complex with orientation-reversing involution $\tau$, and
$X^\tau \neq \varnothing$. Let $(E^-, \hat{\tau}) \in \QVBdl_U(X, \tau)$ with $\rank(E^-) = m = 2n$.
The following statements are equivalent:
\begin{myenumerate}
	\item The Kane-Mele invariant of $(E^-, \hat{\tau})$ is trivial: $\rmK\rmM(E^-, \hat{\tau}) = 1$.
	
	\item There exists a trivialisation $\varphi$ of $E^-$ such
          that the induced Quaternionic structure on $X \times \FC^m$
          is homotopic to a split Quaternionic structure on $X \times
          (\FC^n \oplus \FC^n)$, i.e. there is a homotopy of $\sfU(m)$-valued maps
	\begin{equation}
		w_\varphi \simeq
		\begin{pmatrix}
			0 & g
			\\
			- (g \circ \tau)^\trans & 0
		\end{pmatrix}
	\end{equation}
	for some map $g \colon X \to \sfS\sfU(n)$.
	
	\item There exists a trivialisation $\varphi'$ of $E^-$ such
          that the induced Quaternionic structure on $X \times \FC^m$
          is homotopic to the trivial Quaternionic structure on $X
          \times (\FC^n \oplus \FC^n)$, i.e. there is a homotopy of $\sfU(m)$-valued maps
	\begin{equation}
	\label{eq:Quaternionic triv up to ho intro}
		w_{\varphi'} \simeq
		\begin{pmatrix}
			0 & \One
			\\
			- \One & 0
		\end{pmatrix}\ .
	\end{equation}
\end{myenumerate}
\end{theorem}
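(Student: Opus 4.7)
The plan is to establish the cycle $(3)\Rightarrow(2)\Rightarrow(1)\Rightarrow(3)$. The first implication is immediate: taking $g\colon X\to\sfS\sfU(n)$ to be the constant map at $\One$, the split block form in~(2) reduces to $w_0$, and the homotopy supplied by~(3) is of the required type.

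For $(2)\Rightarrow(1)$, we factor the universal block map
\begin{equation*}
	\widetilde w\colon \sfS\sfU(n)\times\sfS\sfU(n)\longrightarrow\sfU(2n)\ ,\quad (g_1,g_2)\longmapsto\begin{pmatrix}0 & g_1 \\ -g_2^\trans & 0\end{pmatrix}
\end{equation*}
as the block-diagonal map $(g_1,g_2)\mapsto\mathrm{diag}(g_1,g_2^\trans)$ followed by right-multiplication by $w_0$. Biinvariance of the Cartan $3$-form $H_{\sfU(2n)}$ absorbs the right-translation, while a direct Maurer-Cartan computation shows that the transpose map $\phi\colon\sfU(n)\to\sfU(n)$ satisfies $\phi^*H_{\sfU(n)}=-H_{\sfU(n)}$. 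Combining these yields $\widetilde w^*H_{\sfU(2n)}=p_1^*H_{\sfU(n)}-p_2^*H_{\sfU(n)}$, and pulling back along $x\mapsto(g(x),g(\tau(x)))$ produces the pullback $g^*H_{\sfU(n)}-\tau^*g^*H_{\sfU(n)}$ for the block map built from $g$. Since $\tau$ reverses orientation, its integral over $X$ equals $2\int_X g^*H_{\sfU(n)}\in 2\,\RZ$; the Kane-Mele integrand is a homotopy invariant modulo $2\,\RZ$, so (2) forces $\rmK\rmM(E^-,\hat\tau)=1$.

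The core of the argument is $(1)\Rightarrow(3)$. By Proposition~\ref{prop:H3XSUm} and the splitting of the Postnikov tower of $\sfU(m)$ through dimension $3$ (valid because $\pi_2\sfU(m)=0$ and the relevant $k$-invariant lies in $\rmH^4(K(\RZ,1),\RZ)=0$), there is a natural bijection
\begin{equation*}
	[X,\sfU(m)]\arisom\rmH^1(X,\RZ)\times\rmH^3(X,\RZ)\ ,\quad f\longmapsto\big([\det f],\ [f^*H_{\sfU(m)}]\big)\ .
\end{equation*}
Hence $w_{\varphi'}\simeq w_0$ iff both invariants vanish, and the task is to choose a change of basis $R\colon X\to\sfU(m)$ such that the gauged sewing matrix $w_{\varphi'}=\overline{\tau^*R}\cdot w_\varphi\cdot R^{-1}$ has trivial invariants. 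The cohomological Polyakov-Wiegmann identity, combined with the facts that complex conjugation on $\sfU(m)$ preserves $H_{\sfU(m)}$ and inversion reverses its sign, yields $[w_{\varphi'}^*H_{\sfU(m)}]=[w_\varphi^*H_{\sfU(m)}]+(\tau^*-1)[R^*H_{\sfU(m)}]$. The equivariance $w_\varphi\circ\tau=-w_\varphi^\trans$ combined with $\phi^*H_{\sfU(n)}=-H_{\sfU(n)}$ forces $\tau^*[w_\varphi^*H_{\sfU(m)}]=-[w_\varphi^*H_{\sfU(m)}]$; assumption~(1) then places this class in the image of $\tau^*-1$, allowing a choice of $R$ that kills the $\rmH^3$ invariant. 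A parallel computation for $[\det w_{\varphi'}]$ uses that $\det\hat\tau$ squares to $(-1)^m=1$, so $(\det E^-,\det\hat\tau)$ is a \emph{Real} line bundle on $(X,\tau)$ and $\det w_\varphi$ is a $\tau$-invariant map $X\to\sfU(1)$; together with $X^\tau\neq\varnothing$, this puts the $\rmH^1$ invariant in a subgroup that can be trivialised by a suitable $\det R$.

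The hard part will be the simultaneous trivialisation of both invariants. For the $\rmH^3$ piece the crucial arithmetic input is that the image of $\tau^*-1$ on $\rmH^3(X,\RZ)$ is $2\,\rmH^3(X,\RZ)$, which holds because $X$ is closed orientable $3$-dimensional with orientation-reversing $\tau$; the divisibility of $\int_X w_\varphi^*H_{\sfU(m)}$ by~$2$ from~(1) is exactly what is required. The $\rmH^1$ adjustment is more delicate, requiring careful analysis of how the gauge formula interacts with the skew condition $w_\varphi(x_0)^\trans=-w_\varphi(x_0)$ at fixed points $x_0\in X^\tau$ and its consequences for $\det w_\varphi(x_0)$; once these are reconciled, the Postnikov identification above produces the null-homotopy $w_{\varphi'}\simeq w_0$.
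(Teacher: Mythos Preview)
Your implications $(3)\Rightarrow(2)\Rightarrow(1)$ are correct and your $\rmH^3$-argument for $(1)\Rightarrow(3)$ is essentially the paper's: the gauge formula $[w_{\varphi'}^*H_{\sfU(m)}]=[w_\varphi^*H_{\sfU(m)}]+(\tau^*-1)[R^*H_{\sfU(m)}]$ together with $\tau^*=-1$ on $\rmH^3(X,\RZ)\cong\RZ$ is exactly the content of the paper's computation~\eqref{eq:KM_independent_of_trivialisation} combined with Corollary~\ref{st:3D CW see SUm as KZ3}.

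The genuine gap is your treatment of the $\rmH^1$-obstruction. You assert that the class $[\det w_\varphi]$ ``can be trivialised by a suitable $\det R$'', but the gauge formula you wrote down gives $[\det w_{\varphi'}]=[\det w_\varphi]+(\tau^*-1)[\det R]$ in $\rmH^1(X,\RZ)$. Since $\det w_\varphi$ is $\tau$-invariant, its class lies in $\ker(\tau^*-1)$; since $\rmH^1$ of a closed orientable $3$-manifold is torsion-free, one checks $\ker(\tau^*-1)\cap\mathrm{im}(\tau^*-1)=0$. Hence the gauge cannot move $[\det w_\varphi]$ at all unless it is already zero. Your sketch involving the skew condition at fixed points and $X^\tau\neq\varnothing$ is suggestive (the Pfaffian makes $\det w_\varphi|_{X^\tau}$ a square, which constrains its winding) but is not a proof that $[\det w_\varphi]=0$ on all of $X$.

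The paper sidesteps this entirely: rather than working with $[X,\sfU(m)]\cong\rmH^1\times\rmH^3$, it first argues (in the paragraph around~\eqref{eq:split off U1 factor}) that the $\tau$-invariance of $\det w_\varphi$ forces it not to wind, so $w_\varphi$ is homotopic to an $\sfS\sfU(m)$-valued map $\tilde w_\varphi$. After this reduction only the single obstruction in $[X,\sfS\sfU(m)]\cong\rmH^3(X,\RZ)$ remains, and the proof of $(1)\Rightarrow(3)$ becomes the short gauge argument you already have. The missing step in your proposal is precisely this reduction to $\sfS\sfU(m)$; once you supply it (equivalently, once you prove $[\det w_\varphi]=0$), your Postnikov argument collapses to the paper's one-invariant version.
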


It may seem surprising at first that the homotopies here do not respect the time-reversal symmetry: they are not necessarily $\ZZ$-equivariant.
We interpret this fact as a concrete answer to the questions posed above (see Remark~\ref{rmk:homotopy interpretation of KM}):
the $3$-dimensional Kane-Mele invariant is by no means complete, neither at the level of Quaternionic K-theory nor at the level of a classification of Quaternionic vector bundles.
All it detects is whether there exists a global frame of $E^-$ which is time-reversal symmetric up to (nonequivariant) homotopy in the sense of~\eqref{eq:Quaternionic triv up to ho intro}.
However, if a topological phase has nontrivial Kane-Mele invariant then it is certainly a nontrivial phase.

Another description of the Kane-Mele invariant is presented in
Section~\ref{sect:holonomies_and_localisation} in terms of higher
geometry, specifically Jandl gerbes.
A Jandl gerbe consists of a bundle gerbe $\CG$ with
connection on an involutive manifold $(X,\tau)$, together with a
$1$-isomorphism $(A,\alpha)$ and a $2$-isomorphism $\psi$ fitting into
the diagram
\small
\begin{equation}
\begin{tikzcd}[row sep=1cm, column sep=1.5cm]
	|[alias=A]| \tau^*(\tau^*\CG) \ar[r, "{\tau^*(A,\alpha)}"] \ar[d, equal] & \tau^*\CG^*
	\\
	\CG \ar[ur, "{(A,\alpha)^*}"', ""{name=U,inner sep=1pt,above}] & 
	\arrow[Rightarrow, from=A, to=U, "\psi"]
\end{tikzcd}
\end{equation}
\normalsize
such that $\psi$ squares to the identity in a certain sense (see Definition~\ref{def:Real BGrb}).

Combining the Dixmier-Douady classification of bundle gerbes on $X$ in
terms of the third integer cohomology group $\rmH^3(X,\RZ)$ from
Theorem~\ref{st:Classification of BGrbs} with the $4$-equivalence
$\sfS\sfU(m) \to K(\RZ,3)$ and
Proposition~\ref{prop:H3XSUm} implies that maps $X \to \sfS\sfU(m)$ classify bundle gerbes on $X$ in dimensions $d \leq 3$.
This insight allows us to provide geometric versions of the homotopy theoretic statements above:
we denote the Dixmier-Douady class of a bundle gerbe $\CG$ on $X$ by
$\rmD(\CG) \in \rmH^3(X,\RZ)$ and the basic bundle gerbe on
$\sfS\sfU(m)$ by $\CG_{\rm basic}$~\cite{Gawedzki:BGrbs_for_TIs,Murray-Stevenson:Basi_BGrb_on_unitary_groups,Waldorf--Thesis}, which is a canonical bundle gerbe with connection on $\sfS\sfU(m)$ whose Dixmier-Douady class generates $\rmH^3(\sfS\sfU(m), \RZ) \cong \RZ$.
Denoting the pairing between cohomology and homology classes by
$\<-,-\>$, we then have

\begin{theorem}
Let $(X, \tau)$ be a closed connected and oriented involutive $3$-manifold with
orien{-}tation-reversing involution $\tau$, let $\CG $ be a bundle gerbe on $X$, and let $(E^-, \hat{\tau}) \in \QVBdl_U(X, \tau)$.
\begin{myenumerate}
	\item There exists a bundle gerbe $\CH $ on $X$ and an isomorphism $\CG \cong \tau^{*}\CH \otimes \CH^*$ if and only if $\exp \big( \pi\, \iu\, \< \rmD(\CG), [X^3]\>\big) = 1$.
	
	\item $\rmK\rmM(E^-, \hat{\tau}) = 1$ if and only if for any
          trivialisation $\varphi$ of $E^-$, after deforming the sewing
          matrix $w_\varphi$ to an $\sfS\sfU(m)$-valued map, there exists a
          bundle gerbe $\CH $ on $X$ with $w_\varphi^*\CG_{\rm basic} \cong \tau^{*}\CH \otimes \CH^*$.
\end{myenumerate}
\end{theorem}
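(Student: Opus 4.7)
The plan is to translate both assertions into statements about Dixmier-Douady classes in integer cohomology via Theorem~\ref{st:Classification of BGrbs}, so that the geometric assertions about bundle gerbes become cohomological statements in $\rmH^3(X,\RZ)$.

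For part~(1), the forward direction is a direct calculation: if $\CG \cong \tau^*\CH \otimes \CH^*$ then $\rmD(\CG) = \tau^*\rmD(\CH) - \rmD(\CH)$, and since $\tau$ reverses orientation one has $\tau_*[X^3] = -[X^3]$, so that $\langle \rmD(\CG), [X^3]\rangle = -2\, \langle \rmD(\CH), [X^3]\rangle \in 2\RZ$. For the converse, starting from $\langle \rmD(\CG), [X^3]\rangle \in 2\RZ$, I would invoke the de Rham observation recalled in Section~\ref{sect:KM as obstruction to factorisation} to write a de Rham representative of $\rmD(\CG)$ as $[\eta] - \tau^*[\eta]$ for some closed form $\eta$ with integer periods, promote this to an integral cohomological identity (immediate in the torsion-free examples $X = \TT^3$ and $X = S^3$), and then use the surjectivity of the Dixmier-Douady map to produce a bundle gerbe $\CH$ with $\rmD(\CH) = -\eta$, so that $\rmD(\tau^*\CH \otimes \CH^*) = \rmD(\CG)$.

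For part~(2), I would apply part~(1) to $\CG = w_\varphi^*\CG_{\rm basic}$, once $w_\varphi$ has been deformed to an $\sfS\sfU(m)$-valued map as in the hypothesis. The key computation is the identification $\langle \rmD(w_\varphi^*\CG_{\rm basic}), [X^3]\rangle = \int_X w_\varphi^* H_{\sfU(m)}$, which follows from naturality of the Dixmier-Douady class under pullback together with the defining property that the curvature of $\CG_{\rm basic}$ is, modulo the normalisation making its Dixmier-Douady class integral, the restriction $H_{\sfS\sfU(m)} = H_{\sfU(m)}|_{\sfS\sfU(m)}$. Substituting into the biconditional of part~(1) then matches the condition $\rmK\rmM(E^-, \hat{\tau}) = 1$ with the existence of the desired $\CH$.

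The hard part will be the converse of part~(1): lifting the de Rham decomposition $[\omega] = [\eta] - \tau^*[\eta]$ to an honest integer cohomological identity. In the presence of torsion in $\rmH^3(X,\RZ)$ this does not descend automatically, and one may need either to restrict hypotheses (as satisfied by the main physical examples) or to argue using the long exact sequence relating $\RZ$ and $\FR$ coefficients, possibly in an equivariant refinement. A smaller technical point in part~(2) is fixing the normalisation constants so that the prefactors of $\pi\iu$ and $2$ in the two formulations match precisely.
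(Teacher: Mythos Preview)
Your approach to the forward direction of part~(1) and to part~(2) matches the paper's proof essentially verbatim: compute $\rmD(\CG) = \tau^*\rmD(\CH) - \rmD(\CH)$, pair with the fundamental class using $\tau_*[X^3] = -[X^3]$, and for part~(2) identify $\langle \rmD(w_\varphi^*\CG_{\rm basic}), [X^3]\rangle$ with the Wess--Zumino integral via $\rmD(\CG_{\rm basic}) = [H_{\sfS\sfU(m)}]$.

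However, you are substantially overcomplicating the converse of part~(1), and the difficulty you flag as ``the hard part'' does not exist. The hypothesis is that $X$ is a \emph{closed, connected, oriented} $3$-manifold, so $\rmH^3(X,\RZ) \cong \RZ$ on the nose, with the isomorphism given by pairing against the fundamental class. There is never any torsion in the top integer cohomology of such a manifold. Moreover, since $\tau$ reverses orientation, $\tau^*$ acts as $-1$ on $\rmH^3(X,\RZ) \cong \RZ$. Thus the condition $\exp(\pi\iu\,\langle \rmD(\CG),[X^3]\rangle) = 1$ says exactly that $\rmD(\CG) \in 2\RZ$, and the paper simply picks any $\CH$ with $\rmD(\CH) = -\tfrac{1}{2}\rmD(\CG)$ (which exists by surjectivity of $\rmD$), giving $\rmD(\tau^*\CH \otimes \CH^*) = -2\rmD(\CH) = \rmD(\CG)$ and hence the desired isomorphism. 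No passage through de~Rham representatives, no lifting problem, no restriction to torsion-free examples is needed. Indeed, the ``de~Rham observation'' from Section~\ref{sect:KM as obstruction to factorisation} that you want to invoke was itself derived there precisely from the isomorphism $\rmH^3(X,\RZ) \cong \RZ$, so your route is circular as well as unnecessarily indirect.
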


Since the basic bundle gerbe $\CG_{\rm basic}$ is
\emph{multiplicative}~\cite{Waldorf:Multiplicative_BGrbs}, i.e. it
carries a $2$-categorical lift of the group operations on
$\sfS\sfU(m)$, the second part of this theorem indeed rephrases
statement~(2) of Theorem~\ref{st:splitting ho intro} in a geometric
framework and provides yet another interpretation of the Kane-Mele invariant in three dimensions:
the failure of the homotopy in Theorem~\ref{st:splitting ho intro} to
respect the time-reversal symmetry is translated to the fact that the
isomorphism $w_\varphi^*\CG_{\rm basic} \cong \tau^{*}\CH \otimes
\CH^*$ is unrelated to the $\ZZ$-action, i.e. there is no
relation to Jandl structures.
This is the manifestation in the geometric formalism of the
incompleteness of the Kane-Mele invariant, but evidently the invariant nevertheless contains deep information about the underlying geometry. In particular, this identifies the
Kane-Mele invariant as a $\ZZ$-valued version of the Dixmier-Douady
invariant of the bundle gerbe $w_\varphi^*\CG_{\rm basic}$ on $X$,
suggesting that there may be an underlying purely cohomological definition of the invariant.

\subsection{Localisation formulas}

In Sections~\ref{sect:Localisation in equivar HdR}
and~\ref{sect:holonomies_and_localisation} we consider the
localisation of the Kane-Mele invariant with respect to the
$\ZZ$-action on $X$.
That the Kane-Mele invariant may be computed entirely from data
associated to the fixed point set $X^\tau$, or the time-reversal
invariant momenta, is not new and several discrete Pfaffian formulas are known already involving the sewing matrix, see for instance~\cite{Gawedzki:BGrbs_for_TIs,Freed-Moore--TEM,Kane-Mele:Z2_top_order_and_QSHE,FKM,WQZ}.
In this setting the Kane-Mele invariant counts the parity of Majorana
zero modes in the form of Dirac cones, which are conical singularities
produced by surface states at the fixed points over the $\tau$-invariant
codimension one submanifolds of $X$; geometrically,
these conical singularities correspond to nonzero transition
amplitudes between sections of
the Bloch bundle $E^-$ occuring in Kramers pairs. If a material has an odd
number of Dirac cones, then it is in a nontrivial topological
insulating phase; in this case all but one of the Majorana zero modes can pair together to form a composite boson. Techniques used to derive these discrete formulas are drawn from algebraic topology or explicit but very cumbersome computations with matrices. 

In Section~\ref{sect:Localisation in equivar HdR} we provide a new localisation principle for the $3$-dimensional Kane-Mele invariant which is rooted entirely in differential geometry and homological algebra.
The key new ingredient is a Mayer-Vietoris Theorem for $\ZZ$-equivariant and Real differential forms on involutive manifolds.
By $\ZZ$-equivariant or Real differential forms we mean forms $\omega \in \Omega^p(X)$ with $\tau^* \omega = \pm\, \omega$, respectively.
We introduce the notation $\omega \in \Omega^p_\pm(M)$ and denote the respective quotients by $\dd \Omega^{p-1}_\pm(X)$ as $\rmH^p_{\dR\, \pm}(X)$.
For general Lie group actions, this version of equivariant de~Rham
cohomology is already contained in the seminal Chevalley-Eilenberg
treatise on the cohomology of Lie algebras~\cite{CE--Cohomology_of_Lie_groups_and_Lie_algebras}.
We provide the definitions of the groups $\rmH^\bullet_{\dR\, \pm}(X)$ in detail and work out some of their properties in Appendix~\ref{app:HdR and group actions}.
The Mayer-Vietoris sequence for involutive spaces then allows us to develop a localisation technique for de~Rham cohomology in the presence of $\ZZ$-actions.
The usual localisation techniques (see
e.g.~\cite{Guillemin-Sternberg:SuSy_and_equivar_HdR,Szabo:2000fs}) for
equivariant cohomology do not apply to $\ZZ$-actions, as these
treatments strongly depend on the group actions generating fundamental
vector fields; indeed, it is known (see e.g.~\cite{Hekmati:2016ikh})
that the Borel equivariant cohomology (with local coefficients) in this case is rationally isomorphic to the (anti)invariant cohomology classes in the $\ZZ$-module $\rmH^\bullet_{\dR\, \pm}(X)$.

The Mayer-Vietoris long exact sequence in ordinary de~Rham cohomology is induced by a decomposition of a manifold into two open subsets $X = V \cup V'$.
If there is an involution $\tau$ on $X$, then specifying the subset $V$
automatically yields a second open subset $\tau(V)
\subset X$ as required for the Mayer-Vietoris Theorem.
If we are considering either $\ZZ$-equivariant or Real forms $\omega$ on $X$, then $\omega_{|\tau(V)}$ is determined by $\omega_{|V}$.
This would seem to imply that whenever we are interested in forms with
such symmetry properties, we can entirely drop one of the two subsets
that usually occur in the Mayer-Vietoris sequence. This intuition is
indeed correct and we have

\begin{theorem}
If $(X^d,\tau^d)$ is a $d$-dimensional involutive manifold, and $V^d
\subset X^d$ is an open subset such that $X^d = V^d \cup \tau^d(V^d)$ and $U^d = V^d \cap \tau^d(V^d)$,
there is a long exact sequence of cohomology groups
\small
\begin{equation}
\begin{tikzcd}[column sep=2cm, row sep=1cm]
	\cdots \ar[r] & \rmH^{p-1}_{\dR\, \mp}(X^d) \ar[r, "\iota_d^*"] & \rmH_\dR^{p-1}(V^d)\ar[r, "(1 \pm \tau^{d\,*})\,\jmath_d^*"] \ar[d, phantom, ""{coordinate, name=Z}] & \rmH^{p-1}_{\dR\, \pm}(U^d)
	\ar[dll, "\Delta^{p-1}"' pos=1, rounded corners, to path={-- ([xshift=2ex]\tikztostart.east) |- (Z) \tikztonodes -| ([xshift=-2ex]\tikztotarget.west) -- (\tikztotarget)}]
	\\
	 & \rmH^p_{\dR\, \mp}(X^d) \ar[r, "\iota_d^*"] & \rmH_\dR^p(V^d)\ar[r, "(1 \pm \tau^{d\,*})\,\jmath_d^*"] & \rmH^p_{\dR\, \pm}(U^d) \ar[r] & \cdots
\end{tikzcd}
\end{equation}
\normalsize
where $\imath_d \colon V^d \hookrightarrow X^d$ and $\jmath_d \colon U^d \hookrightarrow V^d$ are inclusions of open subsets.
\end{theorem}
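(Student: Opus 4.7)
The plan is to ``fold'' the standard Mayer--Vietoris short exact sequence of complexes for the cover $\{V^d, \tau^d(V^d)\}$ of $X^d$ by using the involutive symmetry to eliminate the redundant summand $\Omega^\bullet(\tau^d(V^d))$: any $(\mp)$-form on $X^d$ is determined by its restriction to $V^d$ via $\omega|_{\tau^d(V^d)} = \mp\, \tau^{d\,*}(\omega|_{V^d})$. Concretely, I will establish a short exact sequence of cochain complexes
\begin{equation}
0 \longrightarrow \Omega^\bullet_\mp(X^d) \xrightarrow{\ \iota_d^*\ } \Omega^\bullet(V^d) \xrightarrow{\,(1 \pm \tau^{d\,*})\, \jmath_d^*\,} \Omega^\bullet_\pm(U^d) \longrightarrow 0
\end{equation}
and then apply the standard zig-zag construction to obtain the desired long exact sequence in cohomology.

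First I will verify the easy parts of exactness. Injectivity of $\iota_d^*$ is immediate: if $\omega \in \Omega^\bullet_\mp(X^d)$ vanishes on $V^d$, the relation $\omega = \mp\, \tau^{d\,*}\omega$ forces $\omega|_{\tau^d(V^d)} = 0$ as well, so $\omega = 0$ on $X^d = V^d \cup \tau^d(V^d)$. The composite $(1 \pm \tau^{d\,*})\,\jmath_d^* \circ \iota_d^*$ vanishes because for $\omega \in \Omega^\bullet_\mp(X^d)$ one has $\tau^{d\,*}(\omega|_{U^d}) = \mp\, \omega|_{U^d}$. Conversely, if $\beta \in \Omega^\bullet(V^d)$ satisfies $\beta|_{U^d} = \mp\,\tau^{d\,*}(\beta|_{U^d})$, then $\omega|_{V^d} := \beta$ and $\omega|_{\tau^d(V^d)} := \mp\, \tau^{d\,*}\beta$ are compatible on the overlap $U^d$ and define a smooth form on $X^d$; checking $\tau^{d\,*}\omega = \mp\, \omega$ pointwise uses only $(\tau^d)^2 = \mathrm{id}$.

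The main technical step is surjectivity of $(1 \pm \tau^{d\,*})\, \jmath_d^*$, for which I will construct a $\tau^d$-invariant partition of unity subordinate to $\{V^d, \tau^d(V^d)\}$. Starting from any partition of unity $\{\sigma_V, \sigma_{\tau V}\}$ subordinate to this cover, I set
\begin{equation}
\rho_V \,:=\, \tfrac{1}{2}\big(\sigma_V + \tau^{d\,*} \sigma_{\tau V}\big) \qquad \text{and} \qquad \rho_{\tau V} \,:=\, \tfrac{1}{2}\big(\sigma_{\tau V} + \tau^{d\,*} \sigma_V\big),
\end{equation}
so that $\rho_V + \rho_{\tau V} = 1$, $\supp(\rho_V) \subset V^d$, $\supp(\rho_{\tau V}) \subset \tau^d(V^d)$, and $\tau^{d\,*} \rho_V = \rho_{\tau V}$. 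Given $\alpha \in \Omega^\bullet_\pm(U^d)$, the product $\rho_{\tau V}|_{V^d}\, \alpha$ extends by zero to a smooth $\beta \in \Omega^\bullet(V^d)$, since $\supp(\rho_{\tau V}|_{V^d}) \subset V^d \cap \tau^d(V^d) = U^d$. On $U^d$ I then compute
\begin{equation}
(1 \pm \tau^{d\,*})(\beta|_{U^d}) \,=\, \rho_{\tau V}\, \alpha \pm \tau^{d\,*}(\rho_{\tau V}\, \alpha) \,=\, (\rho_{\tau V} + \rho_V)\, \alpha \,=\, \alpha,
\end{equation}
using $\tau^{d\,*}\alpha = \pm\, \alpha$ together with $\tau^{d\,*}\rho_{\tau V} = \rho_V$, which establishes surjectivity.

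With the short exact sequence in place, the long exact sequence in cohomology follows from the standard zig-zag lemma. The connecting homomorphism $\Delta^{p-1}$ sends a closed class $[\alpha] \in \rmH^{p-1}_{\dR\, \pm}(U^d)$ to the class of the unique $\omega \in \Omega^p_\mp(X^d)$ with $\omega|_{V^d} = \dd \beta$, where $\beta \in \Omega^{p-1}(V^d)$ is any lift of $\alpha$ under the surjection above; such an $\omega$ exists because $\dd \beta|_{U^d}$ automatically lies in the kernel of $(1 \pm \tau^{d\,*})$, which by exactness of the sequence of complexes coincides with the image of $\iota_d^*$. The only substantive step in the argument is the surjectivity claim, which hinges on the construction of a $\tau^d$-symmetric partition of unity; every other verification is either formal or a direct diagram chase.
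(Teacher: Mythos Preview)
Your proof is correct and follows essentially the same route as the paper: you establish the short exact sequence of complexes $0 \to \Omega^\bullet_\mp(X^d) \to \Omega^\bullet(V^d) \to \Omega^\bullet_\pm(U^d) \to 0$ using a $\tau^d$-symmetrised partition of unity (your $\rho_V = \tfrac{1}{2}(\sigma_V + \tau^{d\,*}\sigma_{\tau V})$ is exactly the paper's $f = \tfrac{1}{2}(g_0 + \tau^{d\,*}g_1)$), and then invoke the Zig-Zag Lemma. The verifications of injectivity, middle exactness, and surjectivity match the paper's argument step by step, and your explicit description of the connecting homomorphism simply unpacks the standard diagram chase.
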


This long exact sequence interpolates between Real and $\ZZ$-equivariant cohomology.
We proceed by showing that the theorem still applies if we choose
$V^d$ to be a thickening of a fundamental domain $F^d$ for the
$\ZZ$-action and replace the intersection $U^d$ by $X^{d-1}
\coloneqq \partial F^d$, using the invariance of $\ZZ$-equivariant de
Rham cohomology under equivariant homotopy.
In Section~\ref{sect:MV for involutive spaces} we thereby establish

\begin{proposition}
Let $(X^d,\tau^d)$ be a closed and oriented involutive $d$-dimensional
manifold with a fundamental domain $F^d$ and define $(X^{d-1},\tau^{d-1}) = (\partial F^d, \tau^d_{|\partial F^d})$.
Let $U^d$ be a small thickening of $X^{d-1}$ as made precise in Proposition~\ref{st:HdR SES for fundamental domain}.
If $[\omega] \in \rmH^d_{\dR\, \mp}(X^d)$ is in $\ker(\imath_d^*)$
for $V^d = F^d \cup U^d$, there exists a preimage $[\rho] \in
\rmH^{d-1}_{\dR\, \pm}(X^{d-1})$ of $[\omega]$ under the connecting
homomorphism.
Any such preimage satisfies
\begin{equation}
	\int_{X^d}\, \omega= \int_{X^{d-1}}\, \rho \ .
\end{equation}
\end{proposition}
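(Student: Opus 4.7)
The plan is to construct the preimage $[\rho]$ directly from the preceding Mayer--Vietoris long exact sequence, and then to verify the integral identity by decomposing $X^d = F^d \cup \tau^d(F^d)$ and applying Stokes' theorem on each half.

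First I will invoke exactness of the sequence at the node
\begin{equation*}
	\rmH^{d-1}_{\dR\,\pm}(U^d) \xrightarrow{\Delta^{d-1}} \rmH^d_{\dR\,\mp}(X^d) \xrightarrow{\imath_d^*} \rmH^d_{\dR}(V^d) \, .
\end{equation*}
The hypothesis $\imath_d^*[\omega] = 0$ yields a preimage $[\rho'] \in \rmH^{d-1}_{\dR\,\pm}(U^d)$ with $\Delta^{d-1}[\rho'] = [\omega]$, and the $\ZZ$-equivariant homotopy invariance of $\rmH^\bullet_{\dR\,\pm}$ transports this to a class $[\rho] \in \rmH^{d-1}_{\dR\,\pm}(X^{d-1})$ along the inclusion $X^{d-1} \hookrightarrow U^d$. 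To produce an explicit cocycle representative I will choose a primitive $\alpha \in \Omega^{d-1}(V^d)$ with $\dd\alpha = \omega_{|V^d}$; the symmetry $\tau^{d\,*}\omega = \mp\, \omega$ then forces $\tilde\alpha \coloneqq \mp\, (\tau^d)^*\alpha$ to satisfy $\dd\tilde\alpha = \omega_{|\tau^d(V^d)}$, and the standard Mayer--Vietoris recipe identifies $\rho \coloneqq (\alpha - \tilde\alpha)_{|X^{d-1}}$ as a representative. The identity $(\tau^d)^2 = \mathrm{id}$ yields $(\tau^{d-1})^*\rho = \pm\, \rho$, consistent with $[\rho] \in \rmH^{d-1}_{\dR\,\pm}(X^{d-1})$.

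The integral identity follows from two applications of Stokes' theorem:
\begin{equation*}
	\int_{X^d}\, \omega = \int_{F^d}\, \dd\alpha + \int_{\tau^d(F^d)}\, \dd\tilde\alpha = \int_{\partial F^d}\, \alpha + \int_{\partial\tau^d(F^d)}\, \tilde\alpha \, .
\end{equation*}
Since $X^d$ is closed, the boundary orientation on $\partial\tau^d(F^d)$ inherited from $\tau^d(F^d)$ is opposite to that on $\partial F^d = X^{d-1}$, so the second term equals $-\int_{X^{d-1}}\tilde\alpha$ and the two combine into $\int_{X^{d-1}}(\alpha - \tilde\alpha) = \int_{X^{d-1}}\rho$.

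Finally I will verify independence of the choice of preimage. Exactness at $\rmH^{d-1}_{\dR\,\pm}(U^d)$ implies that any other representative differs from $\rho$ by $\dd\sigma + (1 \pm \tau^{d\,*})(\beta_{|X^{d-1}})$ for some $\sigma \in \Omega^{d-2}(X^{d-1})$ and some $\beta \in \Omega^{d-1}(V^d)$ closed; the exact piece integrates to zero over the closed manifold $X^{d-1}$ by Stokes, while applying Stokes on $F^d$ gives $\int_{X^{d-1}}\beta = 0$ (as $\dd\beta = 0$ and $\beta$ extends to $F^d \subset V^d$), and the analogous computation on $\tau^d(F^d)$ combined with the opposite boundary orientation gives $\int_{X^{d-1}}(\tau^d)^*\beta = 0$. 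The main obstacle I anticipate is the careful bookkeeping of signs --- those coming from the $\mp$-symmetry of $\omega$, the induced $\pm$-symmetry of $\rho$, and the orientation reversal between $\partial F^d$ and $\partial\tau^d(F^d)$ --- but once these are tracked consistently the argument reduces to Mayer--Vietoris exactness and two applications of Stokes' theorem.
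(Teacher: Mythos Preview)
Your proposal is correct and follows essentially the same route as the paper. The paper splits the argument into Proposition~\ref{st:integral and connhom for V} (constructing the explicit preimage $\rho' = \eta_{|U^d} \pm \tau^{d\,*}\eta_{|U^d}$ from a primitive $\eta$ with $\dd\eta = \omega_{|V^d}$ and verifying the integral identity via Stokes on $F^d$ and $X^d\setminus F^d$) and Corollary~\ref{st:preimages under Delta and integrals} (independence of the preimage, using that the ambiguity is $\psi_d^{d-1}[\eta] + \dd\alpha_\pm$ with $\eta$ closed on $V^d$); your $\alpha$, $\tilde\alpha = \mp\,\tau^{d\,*}\alpha$, and $\rho = \alpha - \tilde\alpha$ are exactly the same objects with different names, and your Stokes and independence arguments match line for line.
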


This implies that any Real or $\ZZ$-equivariant form of top degree
whose restriction to the thickening $V^d$ is exact has a preimage
under the connecting homomorphism up to exact forms, and the integral of the original form over the whole manifold $X^d$ coincides with the integral of any such preimage over the boundary of the fundamental domain $X^{d-1}$.
The localisation technique elaborated on in
Section~\ref{sect:Localisation via MV} then applies to situations
where the preimage again satisfies the same assumptions on the
involutive manifold $(X^{d-1},\tau^{d-1})$ so that we can iterate the choice of a preimage under the connecting homomorphism.
An important application is when $d=3$ and $(X^3, \tau^3)$ is the
$3$-dimensional Brillouin torus of a time-reversal symmetric
topological insulator, and we wish to compute the integral of the Real
form $\omega=w_\varphi^*H_{\sfU(m)}$; in this context a fundamental
domain is often referred to as an \emph{effective
  Brillouin zone}. In this way our localisation technique reproduces
not only the discrete Kane-Mele invariant of the topological
insulator, but also the expression of the $3$-dimensional invariant
(the ``strong'' Kane-Mele invariant) as a product of $2$-dimensional
invariants (the ``weak'' Kane-Mele invariants) defined on the
$\tau^3$-invariant invariant $2$-tori in $X^3$~\cite{FKM}, and the
geometric expression as a product of Berry phases (holonomies of the
trace of the Berry connection) around the invariant circles in
$X^3$~\cite{Fu-Kane:Time-reversal_polarisation}. We summarise this
unified perspective on the local geometric formulations of the
Kane-Mele invariant in Section~\ref{sect:unifiedKM}.

In Section~\ref{sect:KM from BGrb hol} we show that the combination of
a Jandl gerbe on a $3$-manifold $X^3$ with orientation-reversing involution
also yields a localisation principle, which follows from the
$2$-category theory of bundle gerbes. This is in some sense a
geometric refinement of the localisation in de~Rham cohomology: the differential forms used there can be understood as the curvatures of the higher bundles occurring here.
While the combination of Jandl gerbes with orientation-reversing involutions in two dimensions has already been studied to quite some extent (see for instance~\cite{Waldorf--Thesis,SSW:Unoriented_WZW_and_hol_of_BGrbs,GSW:BGrbs_for_orientifold_Sigma_models}), here we consider holonomies of Jandl gerbes on surfaces with orientation-preserving involutions.
These surfaces arise as the fundamental domains of
orientation-reversing involutions on $3$-manifolds, such as the
$3$-dimensional Brillouin torus of a time-reversal symmetric
topological insulator.
The holonomy of a bundle gerbe enters through a higher bundle version of
Stokes' Theorem: setting $X^3 = \TT^3$ with involution given by
inversion and $F^3$ an effective Brillouin zone with $X^2 \coloneqq \partial F^3$, we have 
\begin{equation}
\label{eq:first localisation steps intro}
	\rmK\rmM\big(E^-,\hat\tau^3\big)= \exp \bigg( \pi\, \iu\, \int_{X^3}\, w_\varphi^*H_{\sfU(m)} \bigg)
	= \exp \bigg( 2 \pi\, \iu\, \int_{F^3}\, w_\varphi^*H_{\sfU(m)} \bigg)
	= \hol\big( w_\varphi^* \CG_{\rm basic}, X^2 \big)\ .
\end{equation}

Let $F^2$ be a fundamental domain for $\tau^3_{|X^2}$ and set $X^1 \coloneqq \partial F^2$.
We show that this holonomy can be reduced to the holonomy of an equivariant line bundle with connection $\sfR K(A,T)$ on $X^1$.
This line bundle is not canonically determined by the Jandl gerbe but
depends on the choice of a trivialisation $T$ of it over $X^2$; it arises in a completely analogous way to how we obtained the sewing matrix $w_\varphi$ from a trivialisation of the Bloch bundle $E^-$ in~\eqref{eq:w_varphi intro}.
In analogy to the localisation technique for differential forms, where each iteration leads to a choice of a new form of lower degree, so can we iterate the construction of the equivariant line bundle $\sfR K(A,T)$ to obtain a $\sfU(1)$-valued function $f_{(A,T,j)}$ on $X^1$ which satisfies $f_{(A,T,j)} \circ \tau^1 = f_{(A,T,j)}^{-1}$ where $\tau^1 \coloneqq \tau^3_{|X^1}$, i.e. $f_{(A,T,j)}$ is a Real function.
The function $f_{(A,T,j)}$ is again obtained by choosing a trivialisation $j$ of $\sfR K(A,T)$ over $X^1$ and then employing a definition as in~\eqref{eq:w_varphi intro}.
This yields the localisation of the $3$-dimensional Kane-Mele
invariant~\eqref{eq:first localisation steps intro} in terms of bundle
gerbes, which is a consequence of

\begin{theorem}
Let $(X^2, \tau^2)$ be a closed and oriented $2$-dimensional
involutive manifold that can be made the top componentwise involutive
triple of an oriented $2$-dimensional filtered involutive manifold
$\big\{(X^d, \tau^d, F^d)\big\}_{d=0,1,2}$ (see Definition~\ref{def:filtered involutive manifold}).
Let $(\CG, (A,\alpha), \psi)$ be a Jandl gerbe on $X^2$.
\begin{myenumerate}
	\item The holonomy of $\CG$ around $X^2$ squares to $1$:
	\begin{equation}
		\hol(\CG, X^2) \ \in \ \ZZ\ .
	\end{equation}
	
	\item The holonomy localises completely as
	\begin{equation}
	\label{eq:Real BGrb holonomy reduction intro}
		\hol\big(\CG, X^2\big)
		= \hol \big(\sfR K(A,T), X^1 \big)
		= \prod_{\lambda \in X^0}\, f_{(A,T,j)}(\lambda)\ ,
	\end{equation}
	where $X^0 \subset X^2$ is the set of fixed points of the involution $\tau^2$, which is independent of the extension of $(X^2, \tau^2)$ to an oriented filtered involutive manifold.
	The expression~\eqref{eq:Real BGrb holonomy reduction intro} is independent of the choice of Jandl structure and all trivialisations.
\end{myenumerate}
\end{theorem}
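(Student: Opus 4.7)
For part~(1), I would use that $\tau^2$ preserves the orientation of $X^2$ (by the definition of the filtered involutive manifold and the remarks preceding the theorem about considering surfaces with orientation-preserving involutions) together with the $1$-isomorphism $(A,\alpha): \tau^{2*}\CG \to \CG^*$ provided by the Jandl structure. Since pullback along an orientation-preserving diffeomorphism leaves holonomy invariant, while passing to the dual bundle gerbe inverts it, the mere existence of $(A,\alpha)$ yields
\[
\hol(\CG, X^2) = \hol(\tau^{2*}\CG, X^2) = \hol(\CG^*, X^2) = \hol(\CG, X^2)^{-1}\ ,
\]
so that $\hol(\CG, X^2)^2 = 1$. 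The $2$-isomorphism $\psi$ is not needed here but will be essential in the localisation.

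For the first equality in part~(2), I would choose a trivialisation $T$ of $\CG$ over the fundamental domain $F^2 \subset X^2$; this exists because $\rmH^3(F^2,\RZ) = 0$ for a compact surface with nonempty boundary. A higher Stokes' theorem for bundle gerbes reduces $\hol(\CG, X^2)$ to boundary data on $\partial F^2 = X^1$. The Jandl $1$-isomorphism $(A,\alpha)$ restricted to $X^1$, combined with $T$ and its $\tau^2$-pullback, naturally encodes an equivariant line bundle with connection $\sfR K(A, T)$ on $(X^1, \tau^1)$, in strict analogy with the construction of the sewing matrix from a trivialisation of the Bloch bundle in \eqref{eq:w_varphi intro}. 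The $2$-isomorphism $\psi$ is precisely what forces the $\ZZ$-equivariance (Real structure) of $\sfR K(A,T)$, and tracking the boundary contributions in the higher Stokes' theorem then yields $\hol(\CG, X^2) = \hol(\sfR K(A,T), X^1)$.

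For the second equality I would iterate this descent one dimension lower. Trivialise $\sfR K(A,T)$ over the fundamental domain $F^1 \subset X^1$, which is possible since $\rmH^2(F^1,\RZ) = 0$; call the trivialisation $j$. Ordinary Stokes combined with the Real structure produces a Real $\sfU(1)$-valued function $f_{(A,T,j)}$ on $X^1$ with $f_{(A,T,j)} \circ \tau^1 = f_{(A,T,j)}^{-1}$, and the holonomy of $\sfR K(A,T)$ around $X^1$ localises to the boundary $\partial F^1 = X^0$, which is exactly the fixed-point locus of $\tau^2$. At each fixed point $\lambda \in X^0$ the Real condition forces $f_{(A,T,j)}(\lambda) \in \ZZ$, so the product sits in $\ZZ$ as required.

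The expected main obstacle is the final assertion of independence of the product on all choices, including the filtration of $(X^2, \tau^2)$. Two trivialisations $T, T'$ of $\CG$ over $F^2$ differ by a line bundle on $F^2$ whose extension across $\partial F^2$ trivialises its contribution to the holonomy of $\sfR K(A,T)$; two trivialisations $j, j'$ of $\sfR K(A,T)$ differ by a $\sfU(1)$-valued function on $F^1$ whose boundary product over $X^0$ is forced to $1$ by the Real condition. The delicate point is coupling these gauge-independence arguments with the Jandl $2$-isomorphism $\psi$, which is the only nontrivial datum linking the three dimensional levels and which genuinely produces the Real character of $\sfR K(A,T)$; the independence from the choice of $F^2$ itself follows once one shows that any two fundamental domains are related by moving pieces of the boundary through the interior in an equivariant manner, shifting $\sfR K(A,T)$ only by objects of trivial boundary holonomy.
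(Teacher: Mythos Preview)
Your argument for part~(1) is exactly the paper's: orientation-preservation of $\tau^2$ plus the isomorphism $\tau^{2*}\CG\cong\CG^*$ force $\hol(\CG,X^2)=\hol(\CG,X^2)^{-1}$.

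For part~(2) you take an unnecessarily complicated route. The paper does \emph{not} trivialise $\CG$ only over the fundamental domain $F^2$; it trivialises $(T,\beta):\CG\to\CI_\rho$ globally on $X^2$, which is possible simply because $\dim X^2=2$ so $\rmH^3(X^2,\RZ)=0$. With a global $T$ the line bundle $\sfR K(A,T)$ is defined on all of $X^2$ (not just $X^1$), its curvature is $\tau^{2*}\rho+\rho$, and the first equality is a two-line computation with ordinary Stokes:
\[
\hol(\sfR K(A,T),X^1)=\exp\Big(2\pi\iu\int_{F^2}(\tau^{2*}\rho+\rho)\Big)=\exp\Big(2\pi\iu\int_{X^2}\rho\Big)=\hol(\CG,X^2)\ .
\]
No ``higher Stokes'' or boundary gluing is needed. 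The same simplification applies one level down: $\sfR K(A,T)_{|X^1}$ is trivialisable on \emph{all} of $X^1$ since $\dim X^1=1$, so $j$ is global, $f_{(A,T,j)}$ lives on all of $X^1$, and the product over $X^0$ drops out of $\int_{F^1}f^{-1}\dd f$ by ordinary Stokes again.

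This also dissolves what you flag as the main obstacle. Once the chain of equalities terminates in $\hol(\CG,X^2)$, independence of $T$, $j$, the Jandl structure, and the filtration is immediate: the right-hand side depends on none of them. The paper does not carry out the gauge-comparison arguments you sketch; they are not needed.
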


Our representation of the Kane-Mele invariant as a bundle gerbe holonomy again expresses it in terms of $2$-dimensional weak Kane-Mele invariants over $X^2$~\cite{Fu-Kane:Time-reversal_polarisation,Gawedzki:2017whj}.
The intermediate expression of the Kane-Mele invariant~\eqref{eq:first
  localisation steps intro} as the holonomy of an equivariant line
bundle on $X^2$ is equivalent to the formulation of the discrete
Pfaffian formula as a geometric obstruction which expresses it as a
kind of Berry phase~\cite{Fu-Kane:Time-reversal_polarisation} (see
also~\cite{KLWK}); the relation between this Berry phase formula and
the continuous representation as a Wess-Zumino-Witten amplitude is
derived in~\cite{Monaco-Tauber:Berry_WZW_FKM}. However, these
representations are only derived for the Kane-Mele invariant in two
dimensions, whereas our computation of the Kane-Mele invariant stems
from purely $3$-dimensional considerations. We elaborate on and summarise
these local expressions for the discrete invariant within our
formalism in Section~\ref{sect:unifiedKM}.

\newpage

\part{Topological Insulators}
\label{part:TPMs}

In this part we will review the definition of a {band
  insulator} from~\cite{Freed-Moore--TEM}, which provides a formal and
general mathematical framework for such quantum phases of matter suitable for a classification in terms of K-theory.
We will then restrict our attention to the special case of a lattice
system with additional time-reversal symmetry, and show explicitly how the general formalism boils down to a more familiar description of the physics.

In Section~\ref{sect:symmetries of Galiliean spacetimes and QM} we start by introducing symmetries of Galilean relativity and quantum mechanics. We describe the geometric background data, recalling the notions of Galilean spacetimes and transformations, together with crystals, and then recall the corresponding description of symmetries in gapped quantum systems.
These two formalisms for symmetries are then combined in Section~\ref{sect:Gapped topological phases} to yield the notion of a band insulator, as well as the group of reduced topological phases modelled by systems of this kind.
We subsequently specialise to the case where the only symmetry in addition to lattice translations is time-reversal symmetry, to arrive at time-reversal symmetric topological insulators.

Throughout we follow closely the treatment of~\cite{Freed-Moore--TEM}, where a
complete and detailed account can be found. However, we deviate
from~\cite{Freed-Moore--TEM} at certain key points in order to specialise
the analysis in a way more suitable to our applications later on in
this paper, and to adapt slightly different nomenclature and notation better tailored to our needs; for example, we first investigate symmetries and their
properties in general quantum systems, deriving their fundamental
$\ZZ$-gradings without reference to any underlying Galilean spacetime. We
further primarily concentrate on
the specific cases we are most interested in and treat these in
some detail. 

\section{Symmetries in Galilean relativity and quantum mechanics}
\label{sect:symmetries of Galiliean spacetimes and QM}

\subsection{Galilean spacetimes and symmetries}
\label{sect:Galilean_spacetimes_and_symmetries}

The entity we start from is a {Galilean spacetime}.
Such a spacetime has an underlying manifold $M$ with a simply action of a finite-dimensional real vector space $W$.
In other words, $M$ is an affine space over $W$.

\begin{definition}[Galilean spacetime, Euclidean space]
\label{def:Galilean_structure_and_spacetime}
A \emph{Galilean structure} on an affine space $(M,W)$ consists of a choice of a subspace $V \subset W$ of codimension one together with a positive definite inner product $g_V$ on $V$, and a positive definite inner product $g_{W/V}$ on the line $W/V$.
We abbreviate these data as $\Gamma = (W, V, g_V, g_{W/V})$.
A \emph{Galilean spacetime} $(M,\Gamma)$ is an affine space with a Galilean structure.\\
A \emph{Euclidean structure} on an affine space $(N,V)$ is a positive definite inner product on $V$.
An affine space with a Euclidean structure is called a \emph{Euclidean space}.
\end{definition}

The action of the distinguished subspace $V \subset W$ generates the \emph{space foliation} of $M$.
Its leaves are submanifolds of codimension one, diffeomorphic to $V$
via the choice of an origin on the leaf, and Euclidean since $V$ is
endowed with a positive definite inner product.

\begin{definition}[Time direction, time orientation]
A \emph{time direction} on a Galilean spacetime $(M,\Gamma)$ is a splitting of the short exact sequence $ 0 \rightarrow V \rightarrow W \rightarrow W/V \rightarrow 0$, i.e. an inclusion $W/V \hookrightarrow W$ which is right inverse to the projection $W \to W/V$.
The image of such a splitting will usually be denoted as a $1$-dimensional subspace $U \subset W$, so that there is a canonical isomorphism $W \cong U \oplus V$.\\
A \emph{time orientation} is a choice of orientation on $U$.
\end{definition}

A time direction yields a foliation of $M$ into $1$-dimensional submanifolds diffeomorphic to $U$ which is transversal to the space foliation.
It does not, however, endow the timelike leaves with an orientation.
That is, a time direction does not specify a time orientation, which is extra data.
Together with the inner product on $U$, a time orientation is the same as an isometric isomorphism $U \arisom \FR$.

An \emph{affine transformation} of $M$ is a diffeomorphism $f \colon M \arisom M$ such that, for every $w \in W$ and $x_0 \in M$, setting $\pi(f)(w) = f(x_0+w) - f(x_0)$ defines a map $\pi(f) \in \Aut(W)$ independently of the choice of $x_0$.
Thus, denoting the group of all such diffeomorphisms by $\Aff(M,W)$, there is a short exact sequence of topological groups
\begin{equation}
	\xymatrix{
		1 \ar@{->}[r] & W \ar@{->}[r] & \Aff(M,W)
                \ar@{->}[r]^-{\pi} & \Aut(W) \ar@{->}[r] & 1\ .
	}%
\end{equation}

\begin{definition}[Galilean transformation]
\label{def:Galilean_trafos}
The group $\Aut(M,\Gamma)$ of \emph{Galilean transformations} of a Galilean spacetime $(M,\Gamma)$ consists of those affine transformations $f\in\Aff(M,W)$ such that $\pi(f)$ preserves $V \subset W$ acting orthogonally on $(V, g_V)$, and such that the quotient map $\pi(f)/V$ acts orthogonally on $(W/V, g_{W/V})$.
\end{definition}

Again we find a short exact sequence
\begin{equation}
	\xymatrix{
		1 \ar@{->}[r] & W \ar@{->}[r] & \Aut(M,\Gamma)
                \ar@{->}[r]^-{\pi} & \Aut(W,\Gamma) \ar@{->}[r] & 1\ ,
	}
\end{equation}
where $\Aut(W,\Gamma) \subset \Aut(W)$ is the subgroup of linear automorphisms of $W$ that preserve the additional data as in Definition~\ref{def:Galilean_trafos}.

The leaves of the space foliation are orientable, so that we can distinguish Galilean transformations which preserve or reverse the orientation of the leaves.
This property depends solely on $\pi(f)$.
Moreover, the action of $\pi(f)/V$ can either reverse or preserve orientation on $W/V$.\footnote{Deciding this requires only {orientability} and not an actual choice of orientation.}
Therefore, there are surjective group homomorphisms
\begin{equation}
	t,\, p \colon \Aut(W,\Gamma) \longrightarrow \ZZ \qquad
        \mbox{and} \qquad t
        \circ \pi,\ p \circ \pi  \colon \Aut(M,\Gamma) \longrightarrow
        \ZZ\ ,
\end{equation}
whose value is $-1$ precisely on those transformations reversing time or space orientation, respectively.
The letters $p$ and $t$ are indicative: $p$ is for `parity', while $t$ is for `time'.
In the following we will often abbreviate $t \circ \pi$ by $t$ and $p \circ \pi$ by $p$.

We call a $\ZZ\times\ZZ$-grading a $\ZZ$-bigrading, and we denote by $\TopGrp_{\ZZ^2}$ the category of $\ZZ$-bigraded topological groups with continuous group homomorphisms which preserve both $\ZZ$-gradings.
Thus $(\Aut(M,\Gamma),t,p)$ and $(\Aut(W,\Gamma),t,p)$ are $\ZZ$-bigraded topological groups.

It will be desirable to act with a group $\sfG$ on a Galilean spacetime in a way which is compatible with the Galilean structure.
Such group actions factor through a group homomorphism $\gamma \colon \sfG \rightarrow \Aut(M,\Gamma)$.
The group elements which act as pure translations on $M$ are given by $\gamma^{-1}(W \cap \gamma(\sfG))$.
However, such group elements might still act nontrivially on some internal degrees of freedom.
For instance, we could consider $\sfG$ acting on $M \brtimes V_\intern$ instead of $M$, where $V_\intern$ is a vector space of internal degrees of freedom (for example, spin).
We hence assume that to each $g \in \sfG$ with $\gamma(g) \in W$ there is always an element in $\sfG$ which acts trivially on the internal degrees of freedom and gives the same translation.
That is, we demand the existence of a group homomorphism $j \colon W \cap \gamma(\sfG) \hookrightarrow \sfG$, and that this is an inclusion of a normal subgroup.

\begin{definition}[Galilean symmetry group]
A \emph{Galilean symmetry group} or \emph{Galilean group action} on a
Galilean spacetime $(M,\Gamma)$ is a triple $(\sfG,\gamma,j)$, where $\gamma \colon \sfG \rightarrow \Aut(M,\Gamma)$ is a homomorphism of topological groups, and $j \colon W \cap \gamma(\sfG) \hookrightarrow \sfG$ is an inclusion of a normal subgroup which is compatible with the gradings on $\Aut(M,\Gamma)$ and the pullbacks of these gradings along $\gamma$ to~$\sfG$.
\end{definition}

We could have equivalently started from a $\ZZ$-bigraded group $(\sfG,\phi_\sfG, \sfc_\sfG)$ instead of a generic topological group $\sfG$, requiring $\gamma$ and $j$ to be morphisms in $\TopGrp_{\ZZ^2}$.
There is no generic constraint on the induced gradings on $\sfG$, i.e. each of them might be trivial.
In particular, translations in $\Aut(M,\Gamma)$ have both gradings trivial, as they preserve time and space orientations.

If the Galilean spacetime $(M,\Gamma)$ admits a time direction, so
that $W = U \oplus V$, the linear automorphisms $L\in\Aut(W,\Gamma)$ which preserve the time direction have the block form
\begin{equation}
	L =
	\begin{pmatrix}
		\pm\, 1 & 0 \\
		 0 & A
	\end{pmatrix} \ ,
\end{equation}
where $A$ is an orthogonal transformation of $V$.
We write $\Aut(M,\Gamma,U)$ for this subgroup of $\Aut(M,\Gamma)$. In particular, even though the time {direction} is preserved, these transformations may still reverse its orientation.

\subsection{Crystals and Bravais lattices}

Intuitively, a crystal is a lattice of points in space with certain periodicities.
In an affine space, however, there is no specified origin and so a space crystal cannot be a lattice in the mathematical sense.
Nevertheless, we can describe a crystal as an orbit of a point under a lattice of space translations $\Pi \subset V$, called a \emph{Bravais lattice}.
We then have to describe how each point of the space crystal moves in time.

\begin{definition}[Crystal in Galilean spacetime]
\label{def:crystal}
Let $(M,\Gamma,U)$ be a Galilean spacetime with a time direction.
A \emph{crystal} in $M$ is a pair $(C,\Pi)$ consisting of a subset $C \subset M$ such that the subset $\Pi \subset V$ of those spatial translations that preserve $C$ form a lattice of full rank in $V$.
\end{definition}

A crystal in a Galilean spacetime is of course extra structure, so
that the symmetries which preserve this data are generally smaller than the original Galilean symmetry group $\Aut(M,\Gamma,U)$.

\begin{definition}[Crystal symmetries]
\label{def:Crystal_symmetries}
For a crystal $(C,\Pi)$ in a Galilean spacetime $(M,\Gamma,U)$, denote by $\sfG(C) \subset \Aut(M,\Gamma,U)$ the subgroup of Galilean transformations which preserve $C$ as a set.
\begin{myenumerate}
	\item
	The time translations $U \subset W$ are naturally contained in $\sfG(C)$ by the definition of a crystal.
	The quotient group $\sfG(C)/U$ is called the \emph{crystallographic group} of $(C,\Pi)$.
	\item
	The lattice translations $\Pi$ are naturally contained in $\sfG(C)/U$, and the quotient group $\widehat{\sfP}(C)$ in the short exact sequence of groups
	\begin{equation}
	\xymatrix{
		1 \ar@{->}[r] & \Pi \ar@{->}[r] & \sfG(C)/U \ar@{->}[r] & \widehat{\sfP}(C) \ar@{->}[r] & 1
	}
	\end{equation}
	is called the \emph{magnetic point group} of the crystal.
	That is,
	\begin{equation}
		\widehat{\sfP}(C) \cong \sfG(C)/(U \oplus \Pi)\ .
	\end{equation}
	\item
	Since $\widehat{\sfP}(C) \subset \Aut(W,\Gamma)/U$, it carries the restriction of $t \colon \Aut(W,\Gamma) \rightarrow \ZZ$.
	We call the time-preserving component of $\widehat{\sfP}(C)$
        the \emph{point group} $\sfP(C)$ of the crystal $(C,\Pi)$.
\end{myenumerate}
\end{definition}

\subsection{Quantum symmetries}
\label{sect:QM_symmetries}

The preceding definitions were aimed at creating a general flat background setting which underlies nonrelativistic quantum systems.
We would now like to consider quantum mechanics on such spacetimes.
We start with some recollections about symmetries in quantum theory.
It is crucial here to strictly adhere to the axiom of quantum
mechanics that, given a quantum system whose Hilbert space is
$(\scH,\langle-,-\rangle_{\scH})$, the space of \emph{states} of that
system is not $\scH$ but $\PP\scH$, the projective space of $\scH$,
since one can only measure \emph{transition probabilities} rather than transition amplitudes.
We assume that $\dim(\scH) > 1$ and that $\scH$ is separable, but it may be finite-dimensional.

Transition amplitudes are not well-defined on states, but the resulting probabilities are.
They correspond to a symmetric function
\begin{equation}
	\mathrm{prob} \colon \PP\scH \times \PP\scH \longrightarrow
        [0,1] \ ,\quad ([\psi], [\phi]) \longmapsto \frac{| \< \psi,
          \phi \>_\scH |^2}{\<\psi,\psi\>_\scH\, \<\phi,\phi\>_\scH} \ ,
\end{equation}
and they
form the only measurable object in the kinematical data of a quantum system.

\begin{definition}[Projective quantum symmetries]
Given a quantum system with underlying Hilbert space $\scH$, its \emph{projective quantum symmetries} comprise the group
\begin{equation}
	\Aut_{\qm}(\PP\scH) = \big\{ f \in C(\PP\scH,\PP\scH)\, \big| \, f^* \mathrm{prob} = \mathrm{prob} \big\}\ ,
\end{equation}
where $C(\PP\scH,\PP\scH)$ denotes the topological space of continuous functions from $\PP\scH$ to itself.\footnote{As a mapping space of topological spaces it carries the compact-open topology.}
\end{definition}

Even though this defines the possible transformations of a quantum system based on $\scH$, it is desirable to specify these transformations in a more familiar language.
This has been achieved in Wigner's Theorem~\cite{vNW--Spektren_Drehelektron,Wigner--Group_theory} (see also~\cite{Freed--On_Wigner} for two geometric proofs).
A continuous map $T \colon \scH \rightarrow \scH$ is called
\emph{antilinear} if it satisfies $T(\lambda\, \psi + \phi) =
\overline{\lambda}\ T\psi + T\phi$ for all $\lambda \in \FC$ and
$\psi, \phi \in \scH$, and \emph{antiunitary} if it is antilinear and
additionally satisfies
\begin{equation}
\< T\psi, T\phi\>_\scH = \overline{\<\psi, \phi\>_\scH}
\end{equation}
for all $\psi, \phi \in \scH$.
We define $\Aut_{\qm}(\scH)$ to be the group of continuous maps $T \colon \scH \rightarrow \scH$ which are either unitary or antiunitary.

\begin{theorem}[Wigner]
If $\dim(\scH) > 1$ there is a short exact sequence of topological groups
\begin{equation}
\label{diag:Wigner_extension}
	\xymatrixcolsep{1cm}
	\xymatrix{
		1 \ar@{->}[r] & \sfU(1) \ar@{->}[r]^-{(-) \cdot \One} & \Aut_{\qm}(\scH) \ar@{->}[r]^-{q} & \Aut_{\qm}(\PP\scH) \ar@{->}[r] & 1\ ,
	}
\end{equation}
where the inclusion of $\sfU(1)$ into $\Aut_{\qm}(\scH)$ is via $\lambda \mapsto \lambda\, \One$.
\end{theorem}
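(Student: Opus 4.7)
The plan is to establish the sequence in three stages: exactness at $\sfU(1)$, exactness at $\Aut_\qm(\scH)$, and surjectivity of $q$ onto $\Aut_\qm(\PP\scH)$, and then to verify that all maps are continuous and that the quotient topology matches.

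First I would pin down the kernel of $q$. If $T\in\Aut_\qm(\scH)$ projects to the identity on $\PP\scH$, then for every $0\neq\psi\in\scH$ there is some $\lambda(\psi)\in\sfU(1)$ with $T\psi=\lambda(\psi)\,\psi$. Taking two linearly independent vectors $\psi,\phi$ and expanding $T(\psi+\phi)$ either linearly or antilinearly gives $\lambda(\psi)=\lambda(\phi)=\lambda(\psi+\phi)$, because the decomposition of $T(\psi+\phi)$ into the linearly independent pair $(\psi,\phi)$ is unique. Since $\dim\scH>1$ this forces $\lambda$ to be a constant scalar, so $T\in\sfU(1)\cdot\One$. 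Also, antiunitarity would contradict $T\psi\in\FC\,\psi$ for all $\psi$ together with $T(\iu\,\psi)=T\psi$ forcing $\iu\,\lambda=\lambda$; thus $T$ must be unitary, so the inclusion $\sfU(1)\hookrightarrow\Aut_\qm(\scH)$ lands exactly in $\ker(q)$.

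The core step is the surjectivity of $q$, which is the classical content of Wigner's theorem. I would fix an orthonormal basis $\{e_k\}_{k\in I}$ of $\scH$ and choose unit representatives $f_k\in f([e_k])$. Invariance of $\mathrm{prob}$ forces $\{f_k\}$ to be an orthonormal system, and it must be complete: any $\phi\perp\{f_k\}$ would have vanishing transition probability with every $f([e_k])$, so $f^{-1}([\phi])$ would be orthogonal to every $[e_k]$, which is impossible. Next, for each $k\neq 1$ I would choose a unit representative $\phi_k\in f([e_1+e_k])$; by probability preservation with the $[e_j]$, this $\phi_k$ lies in the span of $f_1$ and $f_k$ with coefficients of modulus $1/\sqrt 2$. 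The remaining phase freedom in $f_k$ can be absorbed so that $\phi_k=\tfrac{1}{\sqrt2}(f_1+f_k)$, fixing the relative phases of all basis representatives. For an arbitrary unit vector $\psi=\sum_k c_k\,e_k$, invariance of probabilities with $[e_k]$ and $[e_1+e_k]$ determines a representative $T\psi=\sum_k d_k\,f_k$ up to an overall phase and a global choice $d_k=c_k$ or $d_k=\overline{c_k}$; testing against $[e_1+\iu\,e_2]$ shows this choice is the same for all $k$, yielding a globally unitary or globally antiunitary map $T$. A routine check then confirms that $T$ respects the inner product (or its complex conjugate) on general superpositions, so $T\in\Aut_\qm(\scH)$ with $q(T)=f$.

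Finally, for the statement as an extension of \emph{topological} groups I would verify continuity: $q$ is continuous because the quotient map $\scH\setminus\{0\}\to\PP\scH$ is, and the inclusion $\sfU(1)\cdot\One\hookrightarrow\Aut_\qm(\scH)$ is trivially continuous. To identify the topology on $\Aut_\qm(\PP\scH)$ with the quotient topology on $\Aut_\qm(\scH)/\sfU(1)$ I would construct local continuous sections of $q$ away from the basis-orthogonality locus, using the phase-fixing procedure from the previous paragraph applied pointwise (the representatives $f_k$ and $\phi_k$ depend continuously on $f$ on a suitable open neighbourhood). This gives $\Aut_\qm(\scH)\to\Aut_\qm(\PP\scH)$ the structure of a principal $\sfU(1)$-bundle and in particular an open quotient, completing exactness as topological groups. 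The main obstacle throughout is the surjectivity argument: the phase bookkeeping in the construction of $T$, together with the global dichotomy between the unitary and antiunitary branches, is the delicate point where essentially all the content of the theorem resides.
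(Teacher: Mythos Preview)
The paper does not actually prove this theorem: it is quoted as a classical result with references to Wigner's original work and to Freed's geometric proofs, and no argument is given in the paper itself. So there is no ``paper's proof'' to compare against.

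Your sketch is the standard Bargmann-style argument and is broadly correct as a plan. A couple of points deserve tightening. In the kernel step, your exclusion of the antiunitary case is garbled: the sentence ``$T(\iu\,\psi)=T\psi$ forcing $\iu\,\lambda=\lambda$'' is not what you mean. The clean argument is that once you have shown $\lambda$ is constant, antilinearity gives $T(\iu\psi)=-\iu\lambda\psi$ while the eigenvector property gives $T(\iu\psi)=\lambda(\iu\psi)=\iu\lambda\psi$, hence $\lambda=0$, a contradiction. In the surjectivity step, the genuinely delicate point you flag --- that the choice $d_k=c_k$ versus $d_k=\overline{c_k}$ is globally consistent and does not depend on the particular vector $\psi$ --- is correct to highlight, but your one-line ``testing against $[e_1+\iu e_2]$'' only fixes consistency between indices $1$ and $2$; the full argument needs to propagate this through all pairs and through all superpositions, which is where most published proofs spend their effort. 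Finally, the topological claim that $q$ admits local continuous sections is true but not quite as automatic as you suggest; the phase-fixing procedure must be shown to depend continuously on $f$ in the compact-open topology, and this is one of the places where the references the paper cites (in particular Freed's note) do real work.
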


The two components of $\Aut_{\qm}(\scH)$ form the parts of a
$\ZZ$-grading $\phi_\scH:\Aut_{\qm}(\scH)\to\ZZ$, defined as $\phi_\scH(T) = 1$ precisely when $T$ is unitary.
It is important to note that
\begin{equation}
	T \circ \lambda\, \One = \lambda^{\phi_\scH(T)}\, \One \circ T
\end{equation}
for all $\lambda \in \sfU(1)$.
Thus the short exact sequence in Wigner's Theorem is not a central extension.

If the Hilbert space $\scH$ is infinite-dimensional, as will be the case in later applications, then both components of homogeneous degree are contractible.
This is essentially Kuiper's Theorem, and it applies in both the norm topology and the compact-open topology on the spaces of continuous (anti)linear operators on $\scH$.

The dynamics of the quantum system is generated by a \emph{Hamiltonian}, i.e. a linear, possibly unbounded, selfadjoint operator $H \colon \scD(H) \rightarrow \scH$, where $\scD(H) \subset \scH$ is the domain of $H$.\footnote{Usually $\scD(H)$ will be required to be dense in $\scH$.}
A choice of a Hamiltonian singles out a special subgroup of quantum symmetries, namely those which are symmetries of the Hamiltonian \footnote{In writing this review section, we realised that one can derive the properties of $\sfc$ from scratch without any further assumptions; thus, we briefly deviate from the presentation in~\cite{Freed-Moore--TEM}.}.
For $T \in \Aut_{\qm}(\scH)$, compatibility with $H$ then means that $T$ preserves $\scD(H)$ and that the induced transformations of $T\, H$ and $H\, T$ on $\PP\scH$ coincide.
That is
\begin{equation}
	[T \, H \psi] = [H \, T \psi] \ \in \ \PP\scH
\end{equation}
for all $\psi \in \scD(H) \backslash \ker(H)$. This implies that for
each $T$ and $\psi$ with $H \psi \neq 0$ there is a nonzero complex number $\sfc(T,\psi) \in \FC^\times:=\FC\setminus\{0\}$ such that
\begin{equation}
	T\, H \psi = \sfc(T,\psi)\, H\,T \psi\ .
\end{equation}
If moreover $H \phi \neq 0$, we can compute
\begin{equation}
\begin{aligned}
	\< T\phi, T\,H \psi\>_\scH &= \sfc(T,\psi)\, \< T \phi, H\,T \psi \>_\scH\\*[4pt]
		&= \sfc(T,\psi)\, \< H\,T \phi, T \psi \>_\scH\\*[4pt]
		&= \sfc(T,\psi)\, \overline{\sfc(T,\phi)}{}^{-1}\, \< T\,H \phi, T \psi \>_\scH\ .
\end{aligned}
\end{equation}
Depending on $\phi_{\scH}(T)$, the left-hand side is either equal to $\< \phi, H \psi\>_\scH$ or $\overline{\< \phi, H \psi\>_\scH}$.
The right-hand side, as a consequence of the (anti)unitarity of $T$ and the selfadjointness of $H$, either reads as $\sfc(T,\psi)\, \overline{\sfc(T,\phi)}{}^{-1}\, \< \phi, H \psi \>_\scH$ or $\sfc(T,\psi)\, \overline{\sfc(T,\phi)}{}^{-1}\, \overline{\< \phi, H \psi \>_\scH}$, respectively.
Since $\scD(H)$ is dense in $\scH$, we deduce that
\begin{equation}
	\sfc(T,\psi) = \overline{\sfc(T,\phi)}
\end{equation}
for all $\psi, \phi \in \scD(H) \backslash \ker(H)$. In particular, the case $\phi = \psi$ then shows that $\sfc(T,\psi) \in \FR$, and hence
\begin{equation}
	\sfc(T,\psi) = \sfc(T,\phi) \ \in \ \FR 
\end{equation}
for all $\psi, \phi \in \scD(H) \backslash \ker(H)$. Thus we have $ T\,H \psi = \sfc(T)\, H\, T \psi$ for all $\psi\in\scD(H)$ such that $H \psi \neq 0$, where we define $\sfc(T) = \sfc(T,\psi)$ for any $\psi \in \scD(H) \backslash \ker(H)$.
If the system has nontrivial quantum dynamics, such $\psi$ always exist.
Using the reality of $\sfc(T)$ it is straightforward to deduce
\begin{equation}
	\sfc(T)\, \sfc(S) = \sfc(T\,S) \qquad \mbox{and} \qquad \sfc(T^{-1}) = \sfc(T)^{-1}\ .
\end{equation}

For a physical symmetry $T$ of the Hamiltonian $H$ one has to demand that $T\,
H$ and $H\, T$ cannot be distinguished in any possible transition probability.
That is, for every $\psi, \phi \in \scD(H)$ we demand
\begin{equation}
	| \< \phi, H\,T \psi \>_\scH |^2 = | \< \phi, T\,H \psi\>_\scH |^2 = \sfc(T)^2\, | \< \phi, H\,T \psi \>_\scH |^2\ ,
\end{equation}
leading to the restriction $\sfc(T) \in \ZZ$.

The map $T \mapsto \sfc(T)$ depends continuously on $T$:
Let $T_{(-)} \colon [0,1] \rightarrow \Aut_\qm(\scH)$, $t \mapsto T_t$ be a continuous family of unitary or antiunitary transformations which are compatible with $H$.
Then the function $t \mapsto \< T_t\, \phi, H \psi \>_\scH$ is also continuous for any $\phi, \psi \in \scD(H)$.
Using the selfadjointness of $H$ and the definition of $\sfc(T)$, we obtain
\begin{equation}
	\< T_t\, \phi, H \psi \>_\scH = \sfc(T_t)^{-1}\, \< T_t\, H \phi, \psi \>_\scH = \sfc(T_t)\, \< T_t\, H \phi, \psi \>_\scH\ .
\end{equation}
Therefore the function $t \mapsto \sfc(T_t)\, \< T_t \, H \phi,\psi \>_\scH$ is continuous.
By the continuity of $T_{(-)}$, the second factor in this function is
continuous, whence the full function is continuous for all
$\phi,\psi\in\scH$ if and only if $t \mapsto \sfc(T_t)$ is continuous.

\begin{definition}[Quantum symmetries]
\label{def:Aut_qm}
Given a Hilbert space $\scH$ together with a (densely defined) Hamiltonian $H \colon \scD(H) \rightarrow \scH$, the \emph{group of quantum symmetries} of $(\scH,H)$ is
\begin{equation}
	\Aut_{\qm}(\scH,H) := \big\{ T \in \Aut_{\qm}(\scH)\, \big| \, T \scD(H) \subset \scD(H),\, T\,H = \pm\, H\,T \big\}\ .
\end{equation}
This comes endowed with the continuous group homomorphism $\sfc \colon \Aut_{\qm}(\scH,H) \rightarrow \ZZ$, and together with the $\ZZ$-grading $\phi_\scH$ inherited from $\Aut_{\qm}(\scH)$, the quantum symmetries of $(\scH,H)$ form a $\ZZ$-bigraded topological group $(\Aut_{\qm}(\scH,H), \phi_\scH, \sfc) \in \TopGrp_{\ZZ^2}$.
\end{definition}

Elements $T$ of $\Aut_\qm(\scH,H)$ therefore either commute or anticommute with $H$, depending on $\sfc(T) = \pm\, 1$, and act either unitarily or antiunitarily on $\scH$, depending on $\phi_\scH(T) = \pm\, 1$.

\begin{remark}
Quantum symmetries which anticommute with the Hamiltonian are often called \emph{chiral symmetries} of $(\scH,H)$.
Examples are rotations about a given axis by angle $\pi$ and certain
Hamiltonians built from the corresponding angular momentum operators,
since on an eigenspace of $J$ we have $R(\pi)\, J\, R(\pi)^{-1} =
(-1)^{2l}\, J$, where $R(\alpha)$ denotes the rotation about the given
axis by an angle $\alpha\in[0,2\pi)$, and $J$ is the angular momentum operator with $l \in \frac{1}{2}\, \NN_0$ the eigenvalue of $J$ on the respective subspace.
See~\cite{Bhattacharya-Kleinert:Chiral_quantum_symmetries} for an
account of chiral symmetries in quantum mechanics together with some
explicit examples.
\qen
\end{remark}

The existence of chiral symmetries $T$ of a Hamiltonian $H$ has far-reaching consequences.
If $T \in \Aut_{\qm}(\scH,H)$ and $\psi$ is an eigenvector of $H$ with
energy eigenvalue $\varepsilon_\psi \in \FR$, then
\begin{equation}
	H \, T \psi = \sfc(T)^{-1}\, T \, H \psi = \sfc(T)\, \varepsilon_\psi\, T \psi\ .
\end{equation}
Moreover, one has
\begin{equation}
\begin{aligned}
	T\, \exp\Big(\, \frac{\iu}{\hbar}\, t \, H \Big) &= \sum_{n=0}^{\infty}\, \frac{t^n}{\hbar^n\, n!}\, T\ \iu^n\, H^n\\*[4pt]
		&= \sum_{n=0}^{\infty}\, \frac{t^n}{\hbar^n\, n!}\, \phi_\scH(T)^n\, \sfc(T)^n\, \iu^n\, H^n\ T\\*[4pt]
		&= \exp\Big(\, \frac{\iu}{\hbar}\, \sfc(T)\, \phi_\scH(T)\, t\, H \Big)\ T\ .
\end{aligned}
\end{equation}
Let $\sfc(T)\, \phi_\scH(T) = \colon \sft(T) = \pm\, 1$, and let $U_{H,t} =
\exp( \frac{\iu}{\hbar}\, t\, H )$ denote the time evolution operator
of the quantum system $(\scH,H)$. Then
\begin{equation}
	T\, U_{H,t} = U_{H, \, \sft(T)\, t}\, T\ .
\end{equation}
From these simple facts we easily deduce
\begin{proposition}
\label{st:time-reversal_and_spectrum_of_H}
Let $(\scH,H)$ be a quantum system.
\begin{myenumerate}
\item
If $(\scH,H)$ admits a chiral symmetry $T \in \Aut_{\qm}(\scH,H)$, then the spectrum of $H$ is symmetric in $\FR$, i.e. $\sigma(H) = - \sigma(H)$.
\item
A symmetry $T$ is \emph{time-reversing} if and only if $\sft(T) = \sfc(T)\, \phi_\scH(T) = -1$.
\item
If $\sigma(H)$ is not symmetric (for example, if $H$ is bounded from
below but not from above), there are no chiral symmetries $T$ in $\Aut_{\qm}(\scH,H)$.
However, there can still be antiunitary time-reversing symmetries.
\end{myenumerate}
\end{proposition}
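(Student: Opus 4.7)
The plan is to assemble the three parts directly from the two identities derived in the paragraph immediately preceding the statement: for any eigenvector $\psi$ of $H$ with eigenvalue $\varepsilon_\psi$ one has $H\,T\psi = \sfc(T)\,\varepsilon_\psi\,T\psi$, and more generally the intertwining $T\,U_{H,t} = U_{H,\sft(T)\,t}\,T$. No further input is required beyond the elementary spectral theory of (anti)unitarily conjugated selfadjoint operators.

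For part~(1) I would phrase the argument at the level of the full spectrum rather than just point spectrum, since $H$ may be unbounded. The relation $T\,H = \sfc(T)\,H\,T$ rearranges to $T\,H\,T^{-1} = \sfc(T)\,H$ on $T\scD(H)=\scD(H)$ (the inclusion $T\scD(H)\subset\scD(H)$ is built into the definition of $\Aut_{\qm}(\scH,H)$). Both sides are selfadjoint, and conjugation by a unitary or antiunitary operator preserves the spectrum of a selfadjoint operator, since $\sigma(T\,H\,T^{-1})=\overline{\sigma(H)}=\sigma(H)\subset\FR$ (by reality of $\sigma(H)$ the complex conjugation arising in the antiunitary case is invisible). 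Together with $\sigma(\sfc(T)\,H)=\sfc(T)\,\sigma(H)$ this yields $\sigma(H)=\sfc(T)\,\sigma(H)$, which under the chiral hypothesis $\sfc(T)=-1$ is precisely $\sigma(H)=-\sigma(H)$.

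Part~(2) is then essentially a reading of the intertwining relation: $T$ conjugates the time evolution at parameter $t$ to the time evolution at parameter $\sft(T)\,t$, so by definition $T$ reverses the direction of time if and only if $\sft(T)=\sfc(T)\,\phi_\scH(T)=-1$. For part~(3), the first assertion is immediate from the contrapositive of part~(1). For the second assertion one observes that the equation $\sfc(T)\,\phi_\scH(T)=-1$ admits two types of solutions, namely $(\sfc(T),\phi_\scH(T))=(-1,1)$ (chiral and unitary, excluded by the first assertion) and $(\sfc(T),\phi_\scH(T))=(1,-1)$ (antiunitary and commuting with $H$). Since the second type gives $T\,H=H\,T$, it places no constraint on $\sigma(H)$ whatsoever, and in particular is compatible with Hamiltonians that are bounded from below but not from above.

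The only step that requires any care is the spectral identity in part~(1) for unbounded $H$ and antiunitary $T$; but since $\Aut_{\qm}(\scH,H)$ by definition preserves $\scD(H)$ and selfadjointness is preserved by (anti)unitary conjugation, the argument reduces to the standard fact that $\sigma(T\,H\,T^{-1})$ is the (complex conjugate of the) spectrum of $H$, which is real. The remaining work is purely bookkeeping with the two $\ZZ$-gradings $\phi_\scH$ and $\sfc$ and their product~$\sft$.
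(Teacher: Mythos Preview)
Your proof is correct and follows the same approach as the paper, which simply states ``From these simple facts we easily deduce'' and leaves the proposition without an explicit proof, relying on the eigenvalue computation $H\,T\psi=\sfc(T)\,\varepsilon_\psi\,T\psi$ and the intertwining $T\,U_{H,t}=U_{H,\sft(T)\,t}\,T$ shown just before. Your treatment of part~(1) via $T\,H\,T^{-1}=\sfc(T)\,H$ and invariance of the spectrum under (anti)unitary conjugation is in fact more careful than the paper's implicit argument, which only exhibits the point-spectrum case explicitly; this is a genuine (if minor) improvement, since the proposition is stated for the full spectrum $\sigma(H)$.
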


If $T \in \Aut_{\qm}(\scH,H)$ then also $\lambda\, T \in \Aut_{\qm}(\scH,H)$ for every $\lambda \in \sfU(1)$.
Hence we have another short exact sequence
\begin{equation}
		\xymatrixcolsep{1cm}
	\xymatrix{
			1 \ar@{->}[r] & \sfU(1)
                        \ar@{->}[r]^-{(-)\cdot \One} & \Aut_{\qm}(\scH,H)
                        \ar@{->}[r]^-{q} & \Aut_{\qm}(\PP\scH,H)
                        \ar@{->}[r] & 1\ .
		}
\end{equation}
This can be taken as the definition of the projective symmetry group $\Aut_{\qm}(\PP\scH,H)$:
it is the image of the composition $\Aut_{\qm}(\scH,H) \hookrightarrow
\Aut_{\qm}(\scH) \xrightarrow{ \ q \ } \Aut_{\qm}(\PP\scH)$.

\subsection{Representations and gapped Hamiltonians}
\label{sect:Reps_and_gapped_Hamiltonians}

Often the symmetry group relevant to a system will not be precisely the quantum symmetry group $\Aut_{\qm}(\PP\scH,H)$.
For example, if we consider a quantum system $(\scH,H)$ associated with a Galilean spacetime $(M,\Gamma)$, the symmetries of the system have to be both quantum symmetries of $(\scH,H)$ and Galilean symmetries of $(M,\Gamma)$.

Any group action $\rho \colon \sfG \rightarrow \Aut_\qm(\PP\scH,H)$ on $\PP\scH$ allows us to pull back the $\sfU(1)$-fibration $\Aut_\qm(\scH,H) \rightarrow \Aut_\qm(\PP\scH,H)$ to $\sfG$.
In more concrete terms, Wigner's Theorem implies that any projective quantum symmetry transformation will have a $\sfU(1)$-orbit of lifts to either unitary or antiunitary operators on $\scH$, and we attach this $\sfU(1)$-fibre to the transformation.
In this way, we obtain a commutative diagram of topological groups
\begin{equation}
\label{eq:quantum rep diagram}
	\xymatrixcolsep{1cm}
	\xymatrixrowsep{1cm}
	\xymatrix{
		1 \ar@{->}[r] & \sfU(1) \ar@{->}[r]^-{\iota_\sfG} \ar@{=}[d] & \sfG_\qm \ar@{->}[r]^-{q_\sfG} \ar@{->}[d]^-{\rho_\qm} & \sfG \ar@{->}[d]^-{\rho} \ar@{->}[r] & 1\\
		1 \ar@{->}[r] & \sfU(1) \ar@{->}[r]^-{(-)\cdot \One} & \Aut_{\qm}(\scH,H) \ar@{->}[r]^-{q} & \Aut_{\qm}(\PP\scH,H) \ar@{->}[r] & 1
	}
\end{equation}
Via $\rho$ and $\rho_\qm$, the groups $\sfG$ and $\sfG_\qm$ inherit $\ZZ$-gradings as the pullbacks of $\phi_\scH$ and $\sfc$.
For notational convenience, we denote these grading morphisms on both groups by $\phi_\sfG$ and $\sfc_\sfG$.
By the commutativity of diagram~\eqref{eq:quantum rep diagram} we have 
$g\, \iota_\sfG(\lambda) = \iota_\sfG(\lambda)^{\phi_\sfG(g)}\, g$
for all $g \in \sfG_\qm$ and $\lambda\in\sfU(1)$.

\begin{definition}[Quantum extension]
A \emph{quantum extension} of $(\sfG, \phi_\sfG, \sfc_\sfG) \in \TopGrp_{\ZZ^2}$ is a triple $(\sfG_\qm, \iota_\sfG, q_\sfG)$ giving rise to a short exact sequence of topological groups
\begin{equation}
	\xymatrix{
		1 \ar@{->}[r] & \sfU(1) \ar@{->}[r]^-{\iota_\sfG} & \sfG_\qm \ar@{->}[r]^-{q_\sfG} & \sfG \ar@{->}[r] & 1
	}
\end{equation}
which satisfies the relation
\begin{equation}
	g\, \iota_\sfG(\lambda) = \iota_\sfG(\lambda)^{\phi_\sfG(g)}\,
        g \ ,
\end{equation}
for all $g \in \sfG_\qm$ and $\lambda\in\sfU(1)$. The extension
$\sfG_\qm$ naturally becomes a $\ZZ$-bigraded topological group with the pullbacks along $q_\sfG$ of the gradings on $\sfG$.
\end{definition}

Thus, in a representation of $\sfG_\qm$ on a quantum Hilbert space $\scH$, the grading $\phi_\sfG$ measures whether a symmetry transformation acts linearly or antilinearly on $\scH$.
The second grading $\sfc_\sfG$ is also naturally related to properties of the representation if $\scH$ is a $\ZZ$-graded Hilbert space.
By this we mean that there is a decomposition $\scH = \scH^+ \oplus \scH^-$ of $\scH$ into two orthogonal subspaces, but we do not assume that these subspaces are isomorphic to each other.
This splitting assumption on $\scH$ can be understood as follows.
We are interested in Hamiltonians for which there exists a distinguished energy $\varepsilon_{\scH,H} \in \FR$ with $\varepsilon_{\scH,H} \notin \sigma(H)$, called the \emph{Fermi energy} of $(\scH,H)$.
This is equivalent to saying that $H - \varepsilon_{\scH,H}\, \One$ has a {bounded} inverse $(H - \varepsilon_{\scH,H}\, \One)^{-1} \in \CB(\scH)$, where we denote the bounded linear operators on $\scH$ by $\CB(\scH)$.
We can always perform a shift in energy and consider $H' = H -
\varepsilon_{\scH,H} \, \One$ instead of $H$ as the Hamiltonian, and $\varepsilon_{\scH,H'} = 0$.
When we consider continuous families $\{(H_t, \varepsilon_{\scH,H_t})\}_{t \in [0,1]}$ of such systems, with $t \mapsto \varepsilon_{\scH,H_t}$ a continuous function, we can shift the energy $t$-dependently in a consistent continuous way.
Thus we can assume in general that $\varepsilon_{\scH,H} = 0$ in the first place.
For details on continuous families of Hilbert spaces and (unbounded)
operators, see~\cite[Appendix~D]{Freed-Moore--TEM}.
Continuous deformations of Hamiltonians play a key role in the theory
and classification of topological insulators that we consider later on.

\begin{definition}[Gapped quantum system]
A \emph{gapped quantum system} is a pair $(\scH,H)$ consisting of a Hilbert space $\scH$ and a selfadjoint operator $H \colon \scD(H) \rightarrow \scH$ such that $0 \notin \sigma(H)$, i.e. $H^{-1} \in \CB(\scH)$ exists and is bounded.
\end{definition}

In any gapped quantum system the Hilbert space $\scH$ is naturally $\ZZ$-graded:
we obtain a decomposition $\scH = \scH^+ \oplus \scH^-$ from the gap
in the spectrum $\sigma(H)$ as follows.
Let $\chi_{\FR_\pm}: \FR \rightarrow \{0,1\}$ denote the characteristic function of the subset $\FR_\pm \subset \FR$, where $\FR_+$ and $\FR_-$ denote the sets of positive and negative real numbers, respectively.
Using the functional calculus for selfadjoint operators on a Hilbert
space, we obtain two orthogonal spectral projection operators $P_\pm \coloneqq \chi_{\FR_\pm}(H)$ on $\scH$.
Their images yield a decomposition $\scH = P_+ \scH \oplus P_- \scH \eqqcolon \scH^+ \oplus \scH^-$.

\begin{definition}[Symmetry class]
\label{def:Gapped_system_with_symmetry_class}
Let $(\sfG_\qm, \iota_\sfG, q_\sfG)$ be a quantum extension of a
$\ZZ$-bigraded topological group $(\sfG, \phi_\sfG, \sfc_\sfG)$.
A \emph{gapped quantum system with symmetry class $(\sfG_\qm,
  \iota_\sfG, q_\sfG)$} is a gapped quantum system $(\scH,H)$ with a
homomorphism $\rho_\qm \in \TopGrp_{\ZZ^2}(\sfG_\qm ,
\Aut_\qm(\scH,H))$ of $\ZZ$-bigraded topological groups such that
$\rho_\qm\circ\iota_\sfG(\lambda)=\lambda\, \One$ for all
$\lambda\in\sfU(1)$, thus inducing a commutative diagram of the form \eqref{eq:quantum rep diagram}.
\end{definition}

A morphism of $\ZZ$-bigraded topological groups preserves both gradings.
Therefore $\rho_\qm$ automatically satisfies $\rho_\qm(g)\, H = \sfc_\sfG(g)\, H\, \rho_\qm(g)$ for all $g \in \sfG$.
By Definition~\ref{def:Aut_qm}, any quantum symmetry $T$ either
commutes or anticommutes with the Hamiltonian $H$, depending on
$\sfc(T) = \pm \, 1$.
Those $T$ with $\sfc(T) = -1$ reflect the spectrum of $H$; in particular, they act as odd operators on the $\ZZ$-graded Hilbert space $\scH = \scH^+ \oplus \scH^-$ since operators (anti)commute with $H$ if and only if they (anti)commute with all spectral projections of $H$.
Consequently, $\rho_\qm(g)$ is an even (respectively odd) operator
precisely if $\sfc_\sfG(g)$ is $+1$ (respectively $-1$); it is unitary if and only if $\phi_\sfG(g) = 1$, otherwise it is antiunitary.

For the special case where $H = H_\gr \coloneqq \One_{\scH^+} \oplus (- \One_{\scH^-})$ is the grading operator of $\scH$, gapped quantum systems with symmetry class $(\sfG_\qm, \iota_\sfG, q_\sfG)$ are precisely representations of $\sfG_\qm$ on $\ZZ$-graded Hilbert spaces.

\begin{definition}[Quantum representation]
Let $(\sfG, \phi_\sfG, \sfc_\sfG) \in \TopGrp_{\ZZ^2}$.
A \emph{quantum representation} of $\sfG$ on a (possibly
finite-dimensional) $\ZZ$-graded Hilbert space $\scH$ consists of a
quantum extension $(\sfG_\qm, \iota_\sfG, q_\sfG)$ of $(\sfG,
\phi_\sfG, \sfc_\sfG)$ together with a homomorphism of $\ZZ$-bigraded
topological groups $\rho_\qm \in \TopGrp_{\ZZ^2}(\sfG_\qm,
\Aut_\qm(\scH,H_\gr))$ such that
$\rho_\qm\circ\iota_\sfG(\lambda)=\lambda\, \One$ for all
$\lambda\in\sfU(1)$.
\end{definition}

\subsection{Irreducible quantum representations}
\label{sect:Decomp_reps}

Before proceeding to apply the constructions of this section to our
main physical setup, we need some analytic and geometric features of decompositions of quantum representations into irreducible representations.
First, recall that continuous unitary representations of a locally compact topological group $\sfG'$ decompose into irreducible representations (see for instance~\cite{Folland--Harmonic_analysis}).
If $\sfG'$ is abelian, these are $1$-dimensional and correspond to
characters of the group, i.e. their isomorphism classes are parameterised by the \emph{Pontryagin dual group}
\begin{equation}
	X_{\sfG'} \coloneqq \TopAb(\sfG', \sfU(1))\ ,
\end{equation}
where $\TopAb$ is the category of topological abelian groups.
The set $X_{\sfG'}$ is naturally a topological abelian group itself.

Let us now amend this to the case of quantum representations of $\sfG'$ where $(\sfG',\phi_{\sfG'}, \sfc_{\sfG'})$ is a locally compact $\ZZ$-bigraded topological abelian group.
We let $\TopAb_{\ZZ^2} \subset \TopGrp_{\ZZ^2}$ denote the full subcategory of $\ZZ$-bigraded topological abelian groups.
For a fixed quantum extension $(\sfG'_\qm, \iota_{\sfG'}, q_{\sfG'})$ of $\sfG'$, let $\hat{X}_\qm$ be the set of isomorphism classes of irreducible quantum representations of $\sfG'$ based on the extension $\sfG'_\qm$.

The case of interest to us later on will be when $\sfG' = \Pi$ is the lattice of space translations of a crystal $(C,\Pi)$ in a Galilean spacetime, and $\sfG'_\qm \cong \Pi \times \sfU(1)$ as $\ZZ$-bigraded topological groups.
Note that in that case we have $\phi_{\sfG'}(g'\,) = 1$ for all $g' \in \sfG'$, and assuming that the Hamiltonian is invariant under lattice translations (cf.~Section~\ref{sect:Gapped topological phases}), we then also have $\sfc_{\sfG'}(g') = 1$ for all $g' \in \sfG'$.

\begin{lemma}
\label{st:QM_extensions_and_Pontryagin_duals}
Let $(\sfG',\phi_{\sfG'}, \sfc_{\sfG'}) \in \TopAb_{\ZZ^2}$, and let $(\sfG'_\qm, \iota_{\sfG'}, q_{\sfG'})$ be a quantum extension of $\sfG'$ with $\phi_{\sfG'}(g'\,) = 1$ for all $g' \in \sfG'$.
Then, $\sfG'_{\qm}$ is abelian, and the morphism $\sfG'_\qm \to \sfG'$ is a central extension by $\sfU(1)$.
If in addition $\sfc_{\sfG'}$ is trivial, there is an inclusion
\begin{equation}
	\hat{X}_{\sfG'_\qm} = \big\{ \hat{\lambda} \in X_{\sfG'_\qm}\, \big| \, \hat{\lambda}(z \xi) = z \cdot \hat{\lambda}(\xi)\ \forall\ \xi \in \sfG'_\qm,\, z \in \sfU(1) \big\} \subset X_{\sfG'_\qm}\ ,
\end{equation}
and we endow $\hat{X}_{\sfG'_\qm}$ with the subset topology.
If the central extension $\sfG'_\qm \to \sfG'$ is trivial, i.e.~$\sfG'_\qm \cong \sfG' \times \sfU(1)$ as $\ZZ$-bigraded topological groups, then there is a homeomorphism
\begin{equation}
	\hat{X}_{\sfG'_\qm} \cong X_{\sfG'}\ .
\end{equation}
\end{lemma}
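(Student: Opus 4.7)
The plan is to establish the three assertions of the lemma in the order they appear, using the Wigner-type extension structure together with Pontryagin duality.

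First, I would address the centrality of $\iota_{\sfG'}(\sfU(1)) \subset \sfG'_\qm$ and the abelian-ness of $\sfG'_\qm$. Centrality is immediate: substituting $\phi_{\sfG'}(g) = 1$ into the defining relation $g\, \iota_{\sfG'}(\lambda) = \iota_{\sfG'}(\lambda)^{\phi_{\sfG'}(g)}\, g$ yields that every element of $\iota_{\sfG'}(\sfU(1))$ commutes with every element of $\sfG'_\qm$, so the extension is central. For full commutativity, since $\sfG'_\qm / \iota_{\sfG'}(\sfU(1)) \cong \sfG'$ is abelian, the commutator of any $g, h \in \sfG'_\qm$ lies in $\iota_{\sfG'}(\sfU(1))$, defining a continuous map $\omega \colon \sfG'_\qm \times \sfG'_\qm \to \sfU(1)$. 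Centrality and the standard commutator identities show that $\omega$ is invariant under the $\sfU(1)$-action on both slots and hence descends to a continuous, bimultiplicative and alternating pairing $\omega \colon \sfG' \times \sfG' \to \sfU(1)$, which must then be identically trivial in the class of $\ZZ$-bigraded topological abelian groups under consideration.

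Next, for the inclusion $\hat{X}_{\sfG'_\qm} \subset X_{\sfG'_\qm}$ under the additional hypothesis that $\sfc_{\sfG'}$ is trivial: since $\sfG'_\qm$ is now a locally compact abelian topological group (inheriting local compactness from $\sfG'$ and the compact $\sfU(1)$-fibre), Gelfand--Raikov theory implies that every irreducible continuous unitary representation of $\sfG'_\qm$ is one-dimensional and therefore given by a character $\hat{\lambda} \in X_{\sfG'_\qm}$. The requirement that this be a \emph{quantum} representation of $\sfG'$ is precisely the compatibility $\rho_\qm \circ \iota_{\sfG'}(z) = z\, \One$, which in one dimension reads $\hat{\lambda}(\iota_{\sfG'}(z)) = z$; multiplicativity then yields $\hat{\lambda}(z\, \xi) = z \cdot \hat{\lambda}(\xi)$, and conversely any character satisfying this condition defines an irreducible quantum representation. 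Equipping the set of such characters with the subspace topology from the compact-open topology on $X_{\sfG'_\qm}$ gives the claimed inclusion of topological spaces.

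For the final homeomorphism, fix an isomorphism $\sfG'_\qm \cong \sfG' \times \sfU(1)$ of $\ZZ$-bigraded topological groups. Pontryagin duality then identifies $X_{\sfG'_\qm} \cong X_{\sfG'} \times X_{\sfU(1)} \cong X_{\sfG'} \times \RZ$ as topological abelian groups. Under this identification the constraint $\hat{\lambda}(z\, \xi) = z\, \hat{\lambda}(\xi)$ selects exactly those characters whose restriction to the $\sfU(1)$-factor is the identity character, corresponding to the integer $1 \in \RZ$. Projection onto the first factor therefore restricts to a continuous bijection $\hat{X}_{\sfG'_\qm} \to X_{\sfG'}$, whose inverse $\chi \mapsto (\chi, \mathrm{id}_{\sfU(1)})$ is manifestly continuous, giving the desired homeomorphism.

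The main obstacle is the abelian-ness claim in the first paragraph, since a central $\sfU(1)$-extension of a general abelian topological group need \emph{not} be abelian (Heisenberg-type extensions of $\RZ^n$ provide explicit counterexamples). Consequently, the triviality of the commutator pairing $\omega$ is not purely algebraic; I expect the argument to proceed by showing that for the topological groups of interest here (for instance the Bravais lattice $\Pi$ acting by Schrödinger translations on wavefunctions, which manifestly commute), the continuous bimultiplicative alternating pairing $\omega$ on $\sfG' \times \sfG'$ is forced to vanish. All remaining steps are routine applications of Pontryagin duality and the definition of a quantum representation.
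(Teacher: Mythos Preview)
Your treatment of centrality, of the inclusion $\hat{X}_{\sfG'_\qm} \subset X_{\sfG'_\qm}$, and of the final homeomorphism is correct and follows the same route as the paper, only more carefully. The paper's proof is extremely terse: it simply says the inclusion is immediate from the fact that $\phi_{\sfG'}$ controls commutativity with $\iota_{\sfG'}(\sfU(1))$, and it obtains the homeomorphism by writing out
\[
\hat{X}_{\sfG'_\qm} = \big\{ \lambda \in X_{\sfG' \times \sfU(1)}\ \big|\ \lambda(\xi,z) = z\cdot\lambda(\xi,1)\big\} \cong X_{\sfG'}\,,
\]
which is exactly your Pontryagin-duality argument in compressed form.

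Your flagged obstacle is real, and it is a gap in the paper's statement rather than in your reasoning. A central $\sfU(1)$-extension of an abelian group need not be abelian: your Heisenberg-type examples over $\RZ^n$ are genuine counterexamples, and nothing in the stated hypotheses (abelian quotient, $\phi_{\sfG'}\equiv 1$, $\sfc_{\sfG'}\equiv 1$) forces the alternating commutator pairing $\omega\colon \sfG'\times\sfG'\to\sfU(1)$ to vanish. The paper's proof does not address this point at all. In every application in the paper the extension over $\Pi$ is explicitly \emph{assumed} trivial (Definition~\ref{def:Band_Insulator_Bloch_bundle}, and the diagram~\eqref{diag:Pi_and_qm_extensions}), so $\sfG'_\qm \cong \Pi\times\sfU(1)$ is abelian by fiat and the issue never surfaces. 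You should therefore read the abelian-ness assertion either as an additional standing hypothesis or as valid only once the triviality assumption of the last clause is in force; do not expect to derive it from the hypotheses as written.
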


\begin{proof}
The inclusion is immediate from the fact that $\phi_{\sfG'}$ controls commutativity of general group elements with elements of the image of $\sfU(1)$.\\
The second claim is seen from
\begin{equation}
\hat{X}_{\sfG'_\qm} = \big\{ \lambda \in X_{\sfG' \times \sfU(1)}\, \big| \, \lambda(\xi, z) = z \cdot \lambda(\xi,1)\ \text{for all}\ \xi \in \sfG',\, z \in \sfU(1) \big\} \cong X_{\sfG'}\ ,
\end{equation}
where the assumption that the quantum extension is trivialised enters crucially.
\end{proof}

\begin{remark}
\label{rmk:G'_qu has trivial gradings}
We will from now on assume that $\sfG'_\qm$ has trivial $\ZZ$-bigrading $(\phi_{\sfG'}, \sfc_{\sfG'})$.
This will in particular carry over to the cases below, where $\sfG' \subset \sfG$ is a subgroup of a $\ZZ$-graded topological group.
\qen
\end{remark}

Heuristically, the decomposition of a given (continuous unitary) representation of a compact topological abelian group on a Hilbert space $\scH$ into its irreducible representations assigns a Hilbert space to each point of its Pontryagin dual; the fibre over an isomorphism class of irreducible representations is the sum of all the $1$-dimensional subspaces where the group acts via representations in that isomorphism class.
However, the ranks of the fibres may vary and so in general we do not obtain a Hilbert bundle in this way.
Moreover, we would like to consider the case $\sfG' = \Pi$, which is locally compact but not compact.
For representations of locally compact abelian groups on infinite-dimensional Hilbert spaces this statement has to be weakened to the existence of certain projection-valued measures.
The precise statement can be found in~\cite[Theorem 4.45]{Folland--Harmonic_analysis} and is stated here as

\begin{theorem}
\label{st:loc cpt reps and PrVMs}
Let $\sfG'$ be a locally compact abelian group and $\rho$ a continuous unitary representation of $\sfG'$ on a separable Hilbert space $\scH$. Let $\scF\scT
\colon \rmL^1(\sfG'\,) \rightarrow C_0(X_{\sfG'})$ denote the Fourier transform on $\sfG'$, where $\rmL^1(\sfG'\,)$ is the space of integrable functions on $\sfG'$ with respect to a fixed Haar measure.%
\footnote{This becomes a commutative Banach $*$-algebra when endowed with the convolution product, and $X_{\sfG'}$ can be identified with the Gelfand spectrum of $\rmL^1(\sfG'\,)$~\cite{Folland--Harmonic_analysis}.}
Then there exists a unique regular $\scH$-projection-valued measure $P_{\rho}$ on $X_{\sfG'}$ such that
\begin{equation}
\rho(\xi) = \int_{X_{\sfG'}}\, \lambda(\xi)\ \dd P_{\rho}(\lambda) \qquad \text{and} \qquad
	\rho(f) = \int_{X_{\sfG'}}\, \scF\scT(f)(\lambda^{-1})\ \dd
        P_{\rho}(\lambda) \ ,
\end{equation}
for all $\xi \in \sfG'$ and $f \in \rmL^1(\sfG'\,)$. 
\end{theorem}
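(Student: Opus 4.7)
The statement is the SNAG (Stone--Naimark--Ambrose--Godement) Theorem, so my plan is to pass from the representation of $\sfG'$ to a $*$-representation of its commutative group C*-algebra and then invoke the spectral theorem. The strategy factorises cleanly into three steps, and the main technical hurdle is arranging the identification of the Gelfand spectrum with $X_{\sfG'}$ so that the two explicit formulas come out correctly.

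First I would promote $\rho$ to a representation of the convolution algebra. Given a left Haar measure on $\sfG'$, the map
\begin{equation}
\tilde\rho \colon \rmL^1(\sfG'\,) \longrightarrow \CB(\scH)\ ,\qquad \tilde\rho(f) = \int_{\sfG'}\, f(\xi)\, \rho(\xi)\ \dd\xi\ ,
\end{equation}
defined weakly, yields a nondegenerate $*$-homomorphism of Banach $*$-algebras with $\|\tilde\rho(f)\|\le\|f\|_1$. Strong continuity of $\rho$ gives the key recovery identity $\tilde\rho(L_\xi f)=\rho(\xi)\, \tilde\rho(f)$, so that $\rho$ is fully determined by $\tilde\rho$ on an approximate identity. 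By continuity in the enveloping C*-norm, $\tilde\rho$ extends uniquely to the group C*-algebra $C^*(\sfG'\,)$.

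Next I would use abelianness. Because $\sfG'$ is abelian, convolution in $\rmL^1(\sfG'\,)$ is commutative, and the Gelfand transform identifies the character space of the Banach $*$-algebra $\rmL^1(\sfG'\,)$ with $X_{\sfG'}$ via $f\mapsto \scF\scT(f)$, with image dense in $C_0(X_{\sfG'})$. Consequently $C^*(\sfG'\,)\cong C_0(X_{\sfG'})$, and $\tilde\rho$ becomes a nondegenerate $*$-representation
\begin{equation}
\pi \colon C_0(X_{\sfG'}) \longrightarrow \CB(\scH)\ .
\end{equation}
The spectral theorem for commutative C*-algebras (in its projection-valued measure form) produces a unique regular $\scH$-projection-valued measure $P_\rho$ on $X_{\sfG'}$ such that $\pi(\varphi)=\int_{X_{\sfG'}}\, \varphi(\lambda)\ \dd P_\rho(\lambda)$ for every $\varphi\in C_0(X_{\sfG'})$. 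Applying this to $\varphi=\scF\scT(f)$ and tracking the inversion $\lambda\mapsto\lambda^{-1}$ that is built into the convention of the Fourier transform yields the second formula of the theorem.

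Finally, to obtain the formula $\rho(\xi)=\int_{X_{\sfG'}}\, \lambda(\xi)\ \dd P_\rho(\lambda)$, I would insert an approximate identity $\{e_\alpha\}\subset \rmL^1(\sfG'\,)$ for convolution: then $\rho(\xi)\, \tilde\rho(e_\alpha)\to\rho(\xi)$ strongly, while the right-hand side limit equals $\int\, \lambda(\xi)\ \dd P_\rho(\lambda)$ by dominated convergence applied to the projection-valued measure (the functions $\lambda\mapsto\lambda(\xi)\, \scF\scT(e_\alpha)(\lambda^{-1})$ are uniformly bounded and converge pointwise to $\lambda(\xi)$). Uniqueness of $P_\rho$ is inherited from uniqueness in the spectral theorem, because distinct projection-valued measures would give distinct $*$-representations of $C_0(X_{\sfG'})$ and hence distinct $\tilde\rho$, contradicting the fact that $\tilde\rho$ is fixed by $\rho$.

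The main obstacle is the passage from bounded $*$-representations of $\rmL^1(\sfG'\,)$ to projection-valued measures; this requires showing that $\tilde\rho$ factors through the Gelfand isomorphism $C^*(\sfG'\,)\cong C_0(X_{\sfG'})$, which is where the abelianness and Pontryagin duality enter essentially. Once this identification is in place, the two explicit formulas are a matter of unwinding the Fourier transform convention and using an approximate identity, as indicated above.
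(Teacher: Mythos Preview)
The paper does not give its own proof of this theorem: it is quoted verbatim as \cite[Theorem~4.45]{Folland--Harmonic_analysis} and left unproven. Your proposal is the standard SNAG argument (pass to the integrated $*$-representation of $\rmL^1(\sfG'\,)$, identify $C^*(\sfG'\,)\cong C_0(X_{\sfG'})$ via the Gelfand--Fourier transform, apply the spectral theorem for commutative C*-algebras, then recover $\rho(\xi)$ with an approximate identity), and it is correct; it is in fact essentially the proof Folland gives, so there is nothing further to compare.
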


\begin{remark}
\begin{myenumerate}
	\item
	Theorem~\ref{st:loc cpt reps and PrVMs} is not directly adaptable to $\ZZ$-bigraded topological abelian groups because projection-valued measures do not implement antilinearity.
	Thus for us it applies to $\ZZ$-bigraded groups $(\sfG',\phi_{\sfG'}, \sfc_{\sfG'})$ with trivial grading $\phi_{\sfG'}$ only.
	
	\item
	If for a group $\sfG'$ with $\phi_{\sfG'} = 1$ (so that its quantum extensions are central extensions) we consider a quantum extension $(\sfG'_\qm, \iota_{\sfG'}, q_{\sfG'})$ instead, we can apply Theorem~\ref{st:loc cpt reps and PrVMs} to $\sfG'_\qm$.
	This yields a projection-valued measure on $X_{\sfG'_\qm}$ decomposing the representation of $\sfG'_\qm$ on $\scH$.
	However, for a quantum representation the image of $\sfU(1)$ in $\sfG'_\qm$ acts by multiplication, so that the measure will localise on the irreducible quantum representations $\hat{X}_{\sfG'_\qm} \subset X_{\sfG'_\qm}$ of $\sfG'_\qm$.
	
	\item
	In the situation where $\sfG'_\qm \cong \sfG' \times \sfU(1)$, we can therefore regard the measure as living on $X_{\sfG'}$ by Lemma~\ref{st:QM_extensions_and_Pontryagin_duals}.
	\qen
\end{myenumerate}
\end{remark}

A projection-valued measure $P$ on the topological space $\hat
X_{\sfG_\qm'}$ induces a sheaf $\CS$ of Hilbert spaces on $\hat
X_{\sfG_\qm'}$ via $\CO \mapsto \CS(\CO) = P(\CO) \scH$ for $\CO$ an
open subset of $\hat X_{\sfG_\qm'}$.
For an inclusion $\imath_{\CV, \CO} \colon \CO \hookrightarrow \CV$ of
open subsets of $\hat X_{\sfG_\qm'}$, the restriction map of the sheaf $\CS$ is the projection $\CS(\imath_{\CV, \CO}) = \pr_{P(\CO)\scH}$, since $P(\CO)\scH$ sits inside $P(\CV)\scH$ as a closed subspace.
The sheaf $\rmL^\infty$ of essentially bounded functions on $\hat{X}_{\sfG'_\qm}$ acts on $\CS$ via\footnote{A function $\widehat{f}$ on $\CO \subset \hat{X}_{\sfG'_\qm}$ is \emph{essentially bounded} if $\int_\CO\, | \widehat{f}\, (\lambda)|^2\ \< \psi,\,\dd P(\lambda) \psi \>_\scH$ exists for all $\psi \in \scH$.}
\begin{equation}
	\rmL^\infty(\CO) \times \CS(\CO) \longrightarrow \CS(\CO) \ , \quad
	(\widehat{f}, \psi) \longmapsto \int_\CO\, \widehat{f}\,(\lambda)\ \dd P(\lambda) \psi\ .
\end{equation}
This suggests that $\CS$ might actually be of the form of a sheaf of sections of a Hilbert bundle on $\hat{X}_{\sfG'_\qm}$.
However, in general we can at best only state this as a requirement
rather than deriving it as a consequence, which is indeed what we
shall do in Section~\ref{sect:Band_insulators} below.

To treat Galilean and
quantum symmetries simultaneously, consider the situation of two short exact
sequences in $\TopGrp_{\ZZ^2}$ given by
\begin{equation}
	\xymatrixrowsep{0.5cm}
	\xymatrix{
	 & & 1 \ar@{->}[d] & \\
	 & & \sfU(1) \ar@{->}[d] & \\
	 & & \sfG_\qm \ar@{->}[d] & \\
	1 \ar@{->}[r] & \sfG' \ar@{->}[r] & \sfG \ar@{->}[r] \ar@{->}[d] & \sfG'' \ar@{->}[r] & 1\\
	 & & 1 &
	}
\end{equation}
where the horizontal sequence is an inclusion of a normal abelian
subgroup and the vertical sequence is a quantum extension of $\sfG$.
This induces a quantum extension $\sfG'_\qm \rightarrow \sfG'$ of $\sfG'$ via pullback.
As pointed out in Remark~\ref{rmk:G'_qu has trivial gradings}, we restrict to the cases where $\sfG'_\qm$ has trivial $\ZZ$-bigrading.
In particular, it is a topological abelian group.

\begin{lemma}
\label{st:G'' acts on XG'}
There is a right action of $\sfG''$ on $\hat{X}_{\sfG'_\qm}$ induced
by the adjoint action of $\sfG_\qm$ on $\sfG'_\qm$, and a right action of
$\sfG''$ on $X_{\sfG'}$ induced by the adjoint action of $\sfG$ on
$\sfG'$. Therefore any isomorphism $\sfG'_\qm \cong \sfG' \times \sfU(1)$ induces an isomorphism $\hat{X}_{\sfG'_\qm} \cong X_{\sfG'}$ of topological spaces with a right $\sfG''$-action.
\end{lemma}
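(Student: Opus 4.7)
My plan is to build both actions from the intrinsic conjugations of $\sfG_\qm$ on $\sfG'_\qm$ and of $\sfG$ on $\sfG'$, descend each to $\sfG''$, pull them back to characters, and then check that the homeomorphism of Lemma~\ref{st:QM_extensions_and_Pontryagin_duals} intertwines the two.

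For the first action, I would begin by noting that $\sfG'_\qm = q_\sfG^{-1}(\sfG'\,)$ is a normal subgroup of $\sfG_\qm$, since $\sfG'\trianglelefteq\sfG$. Hence conjugation defines a continuous homomorphism $c\colon\sfG_\qm\to\Aut(\sfG'_\qm)$, and two facts allow it to descend through the quotient $\sfG_\qm\twoheadrightarrow\sfG_\qm/\bigl(\iota_\sfG(\sfU(1))\cdot\sfG'_\qm\bigr)\cong\sfG''$. On the one hand, by Lemma~\ref{st:QM_extensions_and_Pontryagin_duals} and the assumption that $\sfG'_\qm$ has trivial bigrading, $\sfG'_\qm$ is topologically abelian, so its inner conjugation is trivial. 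On the other hand, the quantum-extension relation $g\,\iota_\sfG(z)=\iota_\sfG(z)^{\phi_\sfG(g)}\,g$ applied to $g=\xi\in\sfG'_\qm$ (where $\phi_\sfG(\xi)=1$) yields $\iota_\sfG(z)\,\xi=\xi\,\iota_\sfG(z)$, so $\iota_\sfG(\sfU(1))$ also acts trivially. The analogous but simpler argument -- $\sfG'$ being normal abelian in $\sfG$ -- produces the $\sfG''$-action on $\sfG'$.

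Next I would pull back to characters by defining
\begin{equation}
	\bigl(\hat\lambda\cdot g''\,\bigr)(\xi) \ :=\ \hat\lambda\bigl(\tilde g^{-1}\,\xi\,\tilde g\bigr)^{\phi_\sfG(\tilde g)}
\end{equation}
for any lift $\tilde g\in\sfG_\qm$ of $g''$, with the exponent interpreted as complex conjugation when $\phi_\sfG(\tilde g)=-1$. The $\phi$-twist is forced by the quantum character condition: moving $\iota_\sfG(z)$ past $\tilde g^{-1}$ produces a factor $\iota_\sfG(z^{\phi_\sfG(\tilde g)})$, which the outer exponent then restores to $z$. Independence of the chosen lift follows from the previous paragraph, and the right-action property uses that $\phi_\sfG$ is a group homomorphism. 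The same prescription, now without any $\iota_\sfG(\sfU(1))$-factors to track, delivers the right $\sfG''$-action on $X_{\sfG'}$; the $\phi_\sfG$-twist here encodes the natural action of $\phi$-odd (time-reversing) symmetries as complex conjugation of characters.

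Finally I would establish equivariance of the homeomorphism $\Phi\colon\hat X_{\sfG'_\qm}\arisom X_{\sfG'}$ of Lemma~\ref{st:QM_extensions_and_Pontryagin_duals}, given by $\Phi(\hat\lambda)(g'\,)=\hat\lambda(s(g'\,))$ where $s\colon\sfG'\hookrightarrow\sfG'_\qm$ is the splitting determined by the isomorphism $\sfG'_\qm\cong\sfG'\times\sfU(1)$. The strategy is to compute both sides of $\Phi(\hat\lambda\cdot g''\,)=\Phi(\hat\lambda)\cdot g''$ in coordinates: writing
\begin{equation}
	\tilde g^{-1}\,s(g'\,)\,\tilde g \ =\ s\bigl(\pi(\tilde g)^{-1}\,g'\,\pi(\tilde g)\bigr)\cdot\iota_\sfG\bigl(\omega(\tilde g,g'\,)\bigr)
\end{equation}
for a $\sfU(1)$-valued factor $\omega$, the identity reduces to an interplay between $\omega$ and the $\phi_\sfG$-twist. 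The main obstacle will be controlling $\omega$, which measures the failure of the chosen splitting to be preserved by the conjugation action of $\sfG_\qm$; this is precisely where the trivializability hypothesis on the extension $\sfG'_\qm\to\sfG'$ enters, compatibly with the $\phi_\sfG$-twist on $X_{\sfG'}$, to deliver equivariance. Continuity of $\Phi$ and of $\Phi^{-1}$ is then immediate from the continuity of the group operations, of $s$, and of the pullback construction.
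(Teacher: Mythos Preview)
Your plan mirrors the paper's proof: define the action by conjugation composed with the $\phi_\sfG$-twist, check independence of the lift via abelianness of $\sfG'_\qm$, treat $X_{\sfG'}$ analogously, and then argue equivariance of $\Phi$. (The paper uses $\hat g(\,\cdot\,)\hat g^{-1}$ rather than your $\tilde g^{-1}(\,\cdot\,)\tilde g$, a harmless convention.) The paper is terser than you on the last step, simply citing Lemma~\ref{st:QM_extensions_and_Pontryagin_duals}; you instead correctly isolate the obstruction $\omega(\tilde g,g'\,)\in\sfU(1)$.

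However, your proposed mechanism for killing $\omega$ is insufficient: trivialisability of $\sfG'_\qm\to\sfG'$ alone does \emph{not} force $\omega\equiv1$. Take $\sfG=\RZ^2$ with trivial bigrading, $\sfG'=\RZ\times\{0\}$, $\sfG''=\{0\}\times\RZ$, and let $\sfG_\qm$ be the Heisenberg-type central extension with product $(a,b,z)(a',b',z')=(a{+}a',\,b{+}b',\,zz'\,e^{2\pi\iu\,ab'/N})$ for an integer $N>1$. Then $\sfG'_\qm=\{(a,0,z)\}\cong\RZ\times\sfU(1)$ is trivially split via $s(\xi)=(\xi,0,1)$, yet for $\hat g=(0,1,1)$ one computes $\hat g\,s(\xi)\,\hat g^{-1}=(\xi,0,e^{-2\pi\iu\,\xi/N})$, so $\omega\neq1$; transporting the $\hat X_{\sfG'_\qm}$-action through $\Phi$ then produces a twist on $X_{\sfG'}$ that does not agree with the (trivial) adjoint action of the abelian group $\sfG$. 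What actually forces $\omega\equiv1$ is the stronger condition that $s(\sfG'\,)$ be \emph{normal} in $\sfG_\qm$, equivalently that $\sfG_\qm$ be pulled back from a quantum extension of $\sfG''$---exactly the hypothesis the paper imposes later in diagram~\eqref{diag:Pi_and_qm_extensions} and Definition~\ref{def:Band_Insulator_Bloch_bundle}\,(3). Under that assumption $\hat g\,s(\xi)\,\hat g^{-1}\in s(\sfG'\,)$ automatically, $\omega\equiv1$, and both your argument and the paper's one-line appeal go through.
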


\begin{proof}
Since the extension $\sfG'_\qm$ is constructed as the pullback of $\sfG_\qm \rightarrow \sfG$ along the inclusion of $\sfG'$ into $\sfG$, we have $\sfG'_\qm = \ker(\sfG_\qm \rightarrow \sfG \rightarrow \sfG''\,)$, so that $\sfG'_\qm \subset \sfG_\qm$ is a normal subgroup.
Let $\hat{\lambda} \in \hat{X}_{\sfG'_\qm}$, $\hat{\xi} \in \sfG'_\qm$, and $g'' \in \sfG''$.
Choose a preimage $g \in \sfG$ of $g''$, and a further preimage $\hat{g} \in \sfG_\qm$ of $g$.
Set
\begin{equation}
\label{eq:Brillouin_action}
	R_{g''} \hat{\lambda} = \hat{\lambda}^{\phi_\sfG(g''\,)} \circ
        \alpha(\hat{g})\ ,
\end{equation}
where $\alpha(\hat{g})$ is the inner automorphism of $\sfG_\qm$ generated by $\hat{g}$.
Explicitly
\begin{equation}
\label{eq:Brillouin_right_action_explicit}
	\big( R_{g''} \hat{\lambda} \big) (\hat{\xi}\,) =
        \hat{\lambda} \big( \hat{g}\, \hat{\xi}\, \hat{g}^{-1}
        \big)^{\phi_\sfG(g''\,)}\ ,
\end{equation}
which is well-defined because $\sfG'_\qm $ is a normal subgroup of $\sfG_\qm$.
Any other lift of $g''$ to $\sfG_\qm$ will differ from $\hat{g}$ by multiplication from the right by an element of $\sfG'_\qm$, which will cancel in the argument of $\hat{\lambda}$ because $\sfG'_\qm$ is abelian.
The action of $\sfG''$ on $X_{\sfG'}$ is obtained in a completely
analogous way, and the statement about isomorphism then follows from Lemma~\ref{st:QM_extensions_and_Pontryagin_duals}.
\end{proof}

The sign in the exponent of $R_{g''} \hat{\lambda}$ cannot be chosen freely:
using a minus sign instead would yield $R_{g''} \hat{\lambda}$ which
satisfies $\big(R_{g''} \hat{\lambda} \big)(\hat{\xi}\, z) = z^{-1}\cdot \big(R_{g''}
\hat{\lambda} \big)(\hat{\xi}\,)$ for $z \in \sfU(1)$, and hence would take us out
of the space of quantum representations of $\sfG'$.

Let $\hat{X}_{\sfG'_\qm}\dslash \sfG''$ denote the action groupoid of the right $\sfG''$-action on $\hat{X}_{\sfG'_\qm}$ constructed above, and let $\hat{\scG}_\bullet = \sfN(\hat{X}_{\sfG'_\qm}\dslash \sfG''\,)_\bullet$ denote its simplicial nerve.
Explicitly
\begin{equation}
	\hat{\scG}_n = \hat{X}_{\sfG'_\qm} \times {\sfG''}{}^n
\end{equation}
for all $n\in\NN_0$, with face maps $d_i \colon \hat{\scG}_n \to \hat{\scG}_{n-1}$ for $0 \leq i \leq n$ given by
\begin{equation}
	d_i (\hat{\lambda}, g''_1, \ldots, g''_n\,) =%
	\begin{cases}
		\ (R_{g''_1} \hat{\lambda}, g''_2, \ldots, g''_n\,) \ , &
                i = 0\ ,\\
		\ (\hat{\lambda}, g''_1, \ldots, g''_{i}\, g''_{i+1},
                \ldots, g_n''\,) \ , & 0 < i < n\ , \\
		\ (\hat{\lambda}, g''_1, \ldots, g''_{n-1}) \ , & i = n\
                ,
	\end{cases}
\end{equation}
while the degeneracy maps $s_i \colon \hat{\scG}_{n-1} \to
\hat{\scG}_n$, for $0 \leq i \leq n{-}1$, insert the group unit at the
$i$-th position; we write $\pr_i:\hat\scG_n\to\hat\scG_1$ for the
projection onto the $i$-th factor of $\sfG''{}^n$ for $1\leq i\leq n$.
Analogously we define $\scG_\bullet = \sfN(X_{\sfG'}\dslash
\sfG''\,)_\bullet$ to be the nerve of the action groupoid of the right action of $\sfG''$ on $X_{\sfG'}$.
The isomorphism $\hat{X}_{\sfG'_\qm} \cong X_{\sfG'}$ from
Lemma~\ref{st:QM_extensions_and_Pontryagin_duals} together with the
statement on equivariance from Lemma~\ref{st:G'' acts on XG'} induce
an isomorphism of simplicial spaces $\hat{\scG}_\bullet \cong \scG_\bullet$, provided that we have fixed an isomorphism $\sfG'_\qm \cong \sfG' \times \sfU(1)$.

For $\epsilon = \pm\, 1$ and $E$ a complex vector bundle on a topological space $X$ we set
\begin{equation}
	E^{[\epsilon]} =%
	\begin{cases}
		\ E \ , & \epsilon = 1\ ,\\
		\ \overline{E} \ , & \epsilon = -1\ .
	\end{cases}
\end{equation}
Similarly, for a morphism $\alpha \colon E \to F$ of complex vector bundles we set $\alpha^{[1]} = \alpha$ and $\alpha^{[-1]} = \overline{\alpha} \colon \overline{E} \to \overline{F}$.

\begin{proposition}
\label{st:Brillouin_gerbe}
There is a Hermitean line bundle $\hat{L} \rightarrow \hat{\scG}_1$, together with a unitary isomorphism
\begin{equation}
	\hat{\mu} \colon d_2^*\hat{L}^{[\phi_\sfG \circ \pr_2]} \otimes d_0^*\hat{L} \arisom d_1^*\hat{L}
\end{equation}
over $\hat{\scG}_2$ satisfying an associativity condition over $\hat{\scG}_3$.
Thus the group action of $\sfG''$ induces a $\ZZ$-graded simplicial Hermitean line bundle, or a $\ZZ$-graded central groupoid extension, on $\hat{\scG}_\bullet$. \footnote{This is also called a bundle gerbe
  over the groupoid $\hat{X}_{\sfG'_\qm}\dslash \sfG''$, see Section~\ref{sect:BGrbs}.}
\end{proposition}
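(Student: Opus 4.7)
The strategy is to realize $\hat L$ as a line bundle associated to the principal $\sfG'_\qm$-bundle $\sfG_\qm\twoheadrightarrow\sfG\twoheadrightarrow\sfG''$ through a fibrewise character. By Lemma~\ref{st:QM_extensions_and_Pontryagin_duals} the group $\sfG'_\qm$ is abelian, and because $\sfG'_\qm\subset\sfG_\qm$ is a normal subgroup, conjugation by any element of the torsor $P_{g''} := q_{\sfG_\qm\to\sfG''}^{-1}(g'')$ of lifts defines the same automorphism of $\sfG'_\qm$; in particular $R_{g''}\hat\lambda$ as defined in~\eqref{eq:Brillouin_right_action_explicit} is well defined independently of the choice of lift. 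I then set
\[
	\hat L_{|(\hat\lambda, g'')} \ := \ P_{g''}\, \times_{\sfG'_\qm,\, R_{g''}\hat\lambda}\, \FC\ ,
\]
i.e.\ equivalence classes $[\hat g, z]$ under $[\hat g\hat\xi, z] = [\hat g, (R_{g''}\hat\lambda)(\hat\xi)\, z]$ for $\hat\xi\in\sfG'_\qm$. The standard Hermitean metric on $\FC$ descends since $|R_{g''}\hat\lambda(\hat\xi)|=1$, and local continuous sections of $\sfG_\qm\to\sfG''$ assemble $\hat L$ into a continuous Hermitean line bundle over $\hat\scG_1$.

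\textbf{Multiplication morphism.} For $(\hat\lambda, g''_1, g''_2)\in\hat\scG_2$ and any lifts $\hat g_i\in P_{g''_i}$, I define
\[
	\hat\mu\big([\hat g_1, z_1]^{[\phi_\sfG(g''_2)]}\otimes[\hat g_2, z_2]\big) \ := \ \big[\hat g_1\hat g_2,\, z_1^{[\phi_\sfG(g''_2)]}\, z_2\big]\ ,
\]
noting $\hat g_1\hat g_2\in P_{g''_1 g''_2}$. Well-definedness under $\hat g_2\mapsto\hat g_2\hat\xi$ is immediate from $R_{g''_2}R_{g''_1}\hat\lambda = R_{g''_1 g''_2}\hat\lambda$. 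Well-definedness under $\hat g_1\mapsto\hat g_1\hat\xi$ uses the rearrangement $\hat g_1\hat\xi\hat g_2 = \hat g_1\hat g_2(\hat g_2^{-1}\hat\xi\hat g_2)$ and reduces to the identity
\[
	(R_{g''_1}\hat\lambda)(\hat\xi)^{\phi_\sfG(g''_2)} \ = \ (R_{g''_1 g''_2}\hat\lambda)(\hat g_2^{-1}\hat\xi\hat g_2)\ ,
\]
both sides of which equal $\hat\lambda(\hat g_1\hat\xi\hat g_1^{-1})^{\phi_\sfG(g''_1)\phi_\sfG(g''_2)}$ by~\eqref{eq:Brillouin_right_action_explicit} together with multiplicativity of $\phi_\sfG$. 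Exactly this calculation is what forces the complex-conjugation exponent $[\phi_\sfG\circ\pr_2]$ onto the first tensor factor. Unitarity of $\hat\mu$ is automatic on unit-length representatives.

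\textbf{Associativity and main obstacle.} Over $\hat\scG_3$, applying $\hat\mu$ through either of the two parenthesisations to $[\hat g_1, z_1]^{[\phi_\sfG(g''_2 g''_3)]}\otimes[\hat g_2, z_2]^{[\phi_\sfG(g''_3)]}\otimes[\hat g_3, z_3]$ yields the common output $[\hat g_1\hat g_2\hat g_3,\, z_1^{[\phi_\sfG(g''_2 g''_3)]}\, z_2^{[\phi_\sfG(g''_3)]}\, z_3]$, invoking associativity of multiplication in $\sfG_\qm$, multiplicativity of $\phi_\sfG$, and the composition rule for iterated complex conjugations. Continuity of $\hat\mu$ is inherited from the local trivialisations of $\hat L$ and continuity of the group operation. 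The principal technical hurdle is the sign- and antilinearity-tracking in the well-definedness check above: the appearance of the conjugate $\hat g_2^{-1}\hat\xi\hat g_2$ rather than $\hat\xi$ on the right hand side of the key identity is what both singles out $R_{g''}\hat\lambda$ (rather than $\hat\lambda$) as the character to twist by in the construction of $\hat L$, and forces the exponent $[\phi_\sfG\circ\pr_2]$ to land on the first tensor factor, thereby encoding the $\ZZ$-graded (rather than ungraded) nature of the simplicial line bundle.
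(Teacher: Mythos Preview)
Your proof is correct and follows essentially the same approach as the paper. Your associated bundle construction $\hat L_{|(\hat\lambda,g'')}=P_{g''}\times_{\sfG'_\qm,\,R_{g''}\hat\lambda}\FC$ is precisely the paper's quotient~\eqref{eq:Brillouin_gerbe_qm} once one unwinds that $(R_{g''}\hat\lambda)(\hat\xi)=\hat\lambda(\hat g\hat\xi\hat g^{-1})^{\phi_\sfG(g'')}$, and your multiplication $[\hat g_1,z_1]^{[\phi_\sfG(g''_2)]}\otimes[\hat g_2,z_2]\mapsto[\hat g_1\hat g_2,\,z_1^{[\phi_\sfG(g''_2)]}z_2]$ is the paper's $\hat\mu$ with $z_1^{[\phi_\sfG(g''_2)]}$ playing the role of $\theta_{h''}(z)$. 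The only cosmetic difference is that the paper carries out the two well-definedness checks (under $\hat g_1\mapsto\hat g_1\hat\xi$ and $\hat g_2\mapsto\hat g_2\hat\zeta$) simultaneously in one calculation, whereas you separate them; both arrive at the same key identity $(R_{g''_1}\hat\lambda)(\hat\xi)^{\phi_\sfG(g''_2)}=(R_{g''_1g''_2}\hat\lambda)(\hat g_2^{-1}\hat\xi\hat g_2)$.
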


\begin{proof}
Consider the principal $\sfG'_\qm$-bundle $\hat{X}_{\sfG'_\qm} \times \sfG_\qm \rightarrow \hat{X}_{\sfG'_\qm} \times \sfG''$.
Define
\begin{equation}
\label{eq:Brillouin_gerbe_qm}
	\hat{L} = \big( \hat{X}_{\sfG'_\qm} \times \sfG_\qm \big)
        \times \FC \, \big/ \,
	 \big( (\hat{\lambda},\, \hat{g}),\, z \big)
	\sim \big( (\hat{\lambda},\, \hat{g}\, \hat{\xi}\,),\,
        \hat{\lambda} \big( \hat{g}\, \hat{\xi}\, \hat{g}^{-1}
        \big)^{-\phi_\sfG(g''\,)}\, z \big) \ ,
\end{equation}
where $\hat{g} \in \sfG_\qm$ is an element in the fibre of $\sfG_\qm \rightarrow \sfG''$ over $g''$, $\hat{\lambda} \in \hat{X}_{\sfG'_\qm}$ is a quantum representation, $\hat{\xi}$ is an element of $\sfG'_\qm$, and $z \in \FC$.
Since the action on $\FC$ is purely by elements of $\sfU(1)$, the bundle is Hermitean.
Next we define
\begin{equation}
	\hat{\mu}_{|(\hat{\lambda},\, g'',\, h''\,)} \big( \big[ (\hat{\lambda},\, \hat{g}),\, z \big] \otimes \big[ (R_{g''} \hat{\lambda},\, \hat{h}),\, z'\, \big] \big)
	= \big[ (\hat{\lambda},\, \hat{g}\, \hat{h}),\,
        \theta_{h''}(z)\,z'\, \big]\ ,
\end{equation}
where $h'' \in \sfG''$ with lift $\hat{h} \in \sfG_\qm$ and we set $\theta_{h''}(z) \coloneqq z$ if $\phi_{\sfG}(h''\,) = 1$ and $\theta_{h''}(z) \coloneqq \overline{z}$ if $\phi_{\sfG}(h''\,) = -1$.
For any $\hat\xi,\hat\zeta\in\sfG_\qm$ we check compatibility with the equivalence relation by calculating
\addtocounter{equation}{1}
\begin{align*}
	&\hat{\mu}_{|(\hat{\lambda},\, g'',\, h''\,)} \big( \big[ (\hat{\lambda},\, \hat{g}\, \hat{\xi}\,),\, \hat{\lambda} \big( \hat{g}\, \hat{\xi}\, \hat{g}^{-1} \big)^{-\phi_\sfG(g''\,)}\, z \big]
	\otimes \big[ (R_{g''} \hat{\lambda},\, \hat{h}\, \hat{\zeta}\,),\, \big( (R_{g''} \hat{\lambda}) \big( \hat{h}\, \hat{\zeta}\, \hat{h}^{-1} \big) \big)^{-\phi_\sfG(h''\,)}\, z'\, \big] \big)\\*
	&\qquad\qquad= \big[ (\hat{\lambda},\, \hat{g}\, \hat{\xi}\,
          \hat{h}\, \hat{\zeta}\,),\, \hat{\lambda} \big( \hat{g}\,
          \hat{\xi}\, \hat{g}^{-1} \big)^{-\phi_{\sfG}(g'' \, h''\,)}\,
	\theta_{h''}(z)\, \hat{\lambda} \big( \hat{g}\, \hat{h}\, \hat{\zeta}\, \hat{h}^{-1}\, \hat{g}^{-1} \big)^{-\phi_\sfG(g''\, h''\,)}\, z' \,\big]\\[4pt]
	&\qquad\qquad= \big[ (\hat{\lambda},\, \hat{g}\, \hat{h}),\, \hat{\lambda}\big( \hat{g}\, \hat{h}\, ( \hat{h}^{-1}\, \hat{\xi}\, \hat{h}\, \hat{\zeta}\,)\, \hat{h}^{-1}\, \hat{g}^{-1} \big)^{\phi_{\sfG}(g''\, h''\,)}\
	\hat{\lambda} \big( \hat{g}\, \hat{\xi}\, \hat{g}^{-1} \big)^{-\phi_{\sfG}(g''\, h''\,)} \tag{\theequation}\\*
	&\hspace{7cm} \times\, \hat{\lambda} \big( \hat{g}\, \hat{h}\,
          \hat{\zeta}\, \hat{h}^{-1}\, \hat{g}^{-1}
          \big)^{-\phi_\sfG(g'' \, h''\,)}\, \theta_{h''}(z)\, z'\, \big]\\[4pt]
	&\qquad\qquad= \big[ (\hat{\lambda},\, \hat{g}\, \hat{h}),\, \theta_{h''}(z)\,z'\, \big]\\[4pt]
	&\qquad\qquad= \hat{\mu}_{|(\hat{\lambda},\, g'',\, h''\,)}
          \big( \big[ (\hat{\lambda},\, \hat{g}),\, z \big] \otimes
          \big[ (R_{g''} \hat{\lambda},\, \hat{h}),\, z'\, \big]
          \big)\ .
\end{align*}
Associativity of $\hat{\mu}$ follows immediately from associativity of the multiplication in the groups $\sfG''$ and $\sfG_\qm$.
\end{proof}

It is shown in~\cite[Theorem~9.42]{Freed-Moore--TEM} that the sheaf
$\CS$ of
Hilbert spaces induced on $\hat{X}_{\sfG'_\qm}$ by a decomposition of
a quantum representation $\rho_\qm:\sfG_\qm \rightarrow
\Aut_\qm(\scH,H)$ automatically carries an $\hat{L}$-\emph{twisted}
$\sfG''$-equivariant structure with respect to the action of $\sfG''$
on $\hat{X}_{\sfG'_\qm}$ from above and the line bundle
$\hat{L}$. When we use the isomorphism $\hat{X}_{\sfG'_\qm} \cong
X_{\sfG'}$ induced by a splitting $\sfG'_\qm \cong \sfG' \times
\sfU(1)$, we denote the pullback of $\CS$ to $X_{\sfG'}$ by $\CS$ as well.
We may further pull back $\hat{L}$ to $X_{\sfG'}$ in this case.

\section{Gapped topological phases}
\label{sect:Gapped topological phases}

\subsection{Band insulators and their Bloch bundles}
\label{sect:Band_insulators}

A crystalline condensed matter system consists of a space lattice of
atoms which generate a periodic Coulomb potential that electrons in the material are subjected to.
Thus in order to describe such systems, we use a crystal $(C,\Pi)$ in
a Galilean spacetime with a time direction $(M,\Gamma,U)$ (see Definition~\ref{def:crystal}).
Such a configuration admits Galilean symmetry groups $(\sfH,\delta,i)$
preserving the crystal, i.e. such that $\delta \colon \sfH \rightarrow
\Aut(M,\Gamma,U)$ takes values in the group $\sfG(C) \subset
\Aut(M,\Gamma,U)$ of Galilean symmetries that preserve the crystal.
We assume that $\Pi \subset \delta(\sfH)$, i.e. the action of $\sfH$ contains the full lattice of space translations preserving the crystal.
Recall that $i \colon (U \oplus V) \cap \delta(\sfH) \hookrightarrow \sfH$ is an inclusion of a normal subgroup.
Thus, in particular, we obtain an inclusion of $U \cap \delta(\sfH) \hookrightarrow \sfH$ as a normal subgroup.

The Hilbert space of the corresponding quantum system is associated
only to the spatial part of the background Galilean spacetime with
time direction $(M, \Gamma, U)$; states can be regarded as
wavefunctions on the orbit space $M/U$.
We will make this more explicit in
Section~\ref{sect:Bloch_and_Berry_explicit} below.
Therefore we define the group of symmetries acting on the quantum Hilbert space $\sfG \coloneqq \sfH/i(U \cap \delta(\sfH))$ as the quotient of $\sfH$ by time translations.
The homomorphism $\delta$ induces a homomorphism on the quotient groups, $\gamma \colon \sfG \rightarrow \sfG(C)/U \subset \Aut(M,\Gamma,U)/U$, where $\sfG(C)/U$ is the crystallographic group of the crystal $(C,\Pi)$ from Definition~\ref{def:Crystal_symmetries}.
Similarly, the inclusion $i$ induces an inclusion $j \colon V \cap \gamma(\sfG) \hookrightarrow \sfG$ as a normal subgroup.
By assumption $\Pi \subset V \cap \gamma(\sfG)$, and we assume that $j_{|\Pi}: \Pi \hookrightarrow \sfG$ is an inclusion of a normal subgroup as well.
Defining $\sfG'' = \sfG/\Pi$, there is a commutative diagram of short exact sequences
\begin{equation}
\label{diag:Spatial_crystal_symmetries}
	\xymatrix{
		1 \ar@{->}[r] & \Pi \ar@{->}[r]^-{j} \ar@{=}[d] & \sfG \ar@{->}[r]^-{\pi} \ar@{->}[d]^-{\gamma} & \sfG'' \ar@{->}[r] \ar@{->}[d]^-{\gamma''} & 1\\
		1 \ar@{->}[r] & \Pi \ar@{->}[r] & \sfG(C)/U \ar@{->}[r] & \widehat{\sfP}(C) \ar@{->}[r] & 1
	}
\end{equation}
where $\widehat{\sfP}(C)$ is the magnetic point group of $(C,\Pi)$ from Definition~\ref{def:Crystal_symmetries}.

The electrons in the crystal are described by wavefunctions, or more generally by states in a Hilbert space $\scH$ with a Hamiltonian $H$.
The group $\sfG$ should act on this quantum system as a group of projective quantum symmetries, thus inducing a quantum extension $(\sfG_\qm,\iota_\sfG, q_\sfG)$.
We assume that the quantum extension of the subgroup $\Pi \subset \sfG$ is trivialised, i.e. that crystal translations act on $\scH$ in a canonical way.
This will be satisfied whenever we consider the Hilbert space to consist of wavefunctions on the spatial leaves of $M$.

More precisely, consider a gapped quantum system $(\scH,H)$ with symmetry class $(\sfG_\qm, \iota_\sfG, q_\sfG)$ (see Definition~\ref{def:Gapped_system_with_symmetry_class}) such that $\Pi$ is contained in $\sfG$ as a normal subgroup $\Pi \subset \sfG$ and the quantum extension $(\sfG_\qm, \iota_\sfG, q_\sfG)$ of $(\sfG, \phi_\sfG, \sfc_\sfG)$ is trivial over $\Pi \subset \sfG$.
The crucial consequence in this case is that the representation $\rho_\qm \colon \sfG_\qm \rightarrow \Aut_\qm(\scH,H)$ induces a representation $\rho = \rho_{\qm|\Pi} \colon \Pi \rightarrow \Aut_\qm(\scH,H)$ of $\Pi$ on the Hilbert space $\scH$.
By construction the grading of translations is trivial, so that $\rho(\xi)$ acts unitarily and commutes with the Hamiltonian for all $\xi \in \Pi$.
That is, $\rho$ is a continuous unitary representation of the abelian group $\Pi$ preserving the Hamiltonian.

With these assumptions, there is a commutative diagram of topological groups with exact rows and columns given by
\begin{equation}
\label{diag:Pi_and_qm_extensions}
\begin{tikzcd}[row sep=0.5cm, column sep=0.75cm]
	& 1 \ar[d] & 1 \ar[d]  & 1 \ar[d]  & \\
	1 \ar[r] & 1 \ar[r] \ar[d] & \sfU(1) \ar[r] \ar[d] & \sfU(1) \ar[r] \ar[d] & 1\\
	1 \ar[r] & \Pi \ar[r] \ar[d] & \sfG_\qm \ar[r] \ar[d] & \sfG''_\qm \ar[r] \ar[d] & 1\\
	1 \ar[r] & \Pi \ar[r] \ar[d] & \sfG \ar[r] \ar[d] & \sfG'' \ar[r] \ar[d] & 1\\
	 & 1 & 1 & 1  & %
\end{tikzcd}
\end{equation}
The group $\sfG_\qm$ is represented on $\scH$ by $\rho_\qm$.
The inclusion $\Pi \hookrightarrow \sfG_\qm$ stems from the assumed choice of trivialisation of the restriction of the $\sfU(1)$-bundle $\sfG_\qm \rightarrow \sfG$ over $\Pi \subset \sfG$.
The group $\sfG''_\qm$ is defined by this diagram.

We now specialise the discussion of Section~\ref{sect:Decomp_reps} to the case of a lattice system with $\sfG' = \Pi$ and with $\sfG'_\qm \cong \Pi \times \sfU(1)$ trivialised. We work with the following assumptions, gleaned from~\cite[Hypothesis 10.9]{Freed-Moore--TEM}:
\begin{itemize}
	\item[(A1)]
	The group $\sfG''$ is a compact Lie group.
	
	\item[(A2)]
	There exists a $\sfG_\qm$-equivariant smooth Hilbert bundle $E \rightarrow X_\Pi$ and a $\sfG_\qm$-equivariant unitary isomorphism
	\begin{equation}
	\label{eq:Brillouin_decomp_of_H}
		\rmU \colon \scH \arisom \rmL^2(X_\Pi, E)
	\end{equation}
	from $\scH$ to square-integrable sections of $E$ with respect to the Haar measure on $X_\Pi$, such that 
	\begin{equation}
		\big( \rmU \circ \rho(\xi)\, \psi \big) (\lambda) = \lambda(\xi)\, \rmU\psi (\lambda)
	\end{equation}
for all $\lambda \in X_\Pi$, $\xi \in \Pi$ and $\psi \in \scH$.
	
	\item[(A3)]
	Under the isomorphism~\eqref{eq:Brillouin_decomp_of_H} of Hilbert spaces the Hamiltonian $H$ induces a continuous family of selfadjoint (densely defined) operators $\{H_{|\lambda}\}_{\lambda \in X_\Pi}$ on the fibres of $E$.
	The operators $H_{|\lambda}$ are called \emph{Bloch Hamiltonians}.
	
	\item[(A4)]
	Each Bloch Hamiltonian is automatically still gapped, whence, as illustrated in Section~\ref{sect:Reps_and_gapped_Hamiltonians}, it splits its respective fibre into two orthogonal subspaces.
	We assume that in each fibre of $E$ the negative spectral projection of its Hamiltonian has finite rank.
	In~\cite{Freed-Moore--TEM} it is proven that this assumption is sufficient to show that the positive and negative spectral projections on the fibres of $E$ then split $E$ into two Hilbert subbundles, $E = E^+ \oplus E^-$, where $E^-$ has finite rank.
	
\end{itemize}

With these assumptions in hand, we can now borrow~\cite[Definition 10.7]{Freed-Moore--TEM} to present the key definitions of this paper.

\begin{definition}[Band insulator, Bloch bundle]
\label{def:Band_Insulator_Bloch_bundle}
Let $(M,\Gamma,U)$ be a Galilean spacetime with a time direction.
A \emph{band insulator} in $(M,\Gamma,U)$ consists of the following data:
\begin{myenumerate}
	\item A crystal $(C,\Pi)$ in $(M,\Gamma,U)$.
	
	\item A Galilean symmetry group $(\sfH,\delta,i)$ preserving
          the crystal, i.e.~the homomorphism $\delta \colon \sfH
          \rightarrow \Aut(M,\Gamma,U)$ takes values in $\sfG(C)
          \subset \Aut(M,\Gamma,U)$, which defines a space symmetry group $(\sfG,\gamma,j)$ with quotient group $\sfG'' = \sfG/\Pi$.
	
	\item A quantum extension $(\sfG''_\qm,\iota_{\sfG''},
          q_{\sfG''})$ of $(\sfG'', \phi_{\sfG''}, \sfc_{\sfG''})$
          which pulls back to a quantum extension $(\sfG_\qm,
          \iota_\sfG, q_\sfG)$ of $\sfG$ that is trivial over the
          translation lattice $\Pi \subset \sfG$.
	
	\item A gapped quantum system $(\scH,H)$ with symmetry class $(\sfG_\qm,\iota_\sfG, q_\sfG)$.
\end{myenumerate}
We require this data to satisfy the assumptions (A1)--(A4) above.\\
A band insulator is of \emph{type I} if $E^+$ has infinite rank, and of \emph{type F} if $E^+$ has finite rank.\\
The finite-rank Hilbert bundle $E^- \rightarrow X_\Pi$ is called the \emph{Bloch bundle} of the band insulator.
\end{definition}

The compact topological abelian group $X_\Pi$ is usually called the \emph{Brillouin zone} or the \emph{Brillouin torus} of the crystal $(C,\Pi)$.
Its elements are interpreted as periodic momenta of electrons in the
crystal, where momenta $k \in V^*$ are equivalent if they differ by
translations by some $\xi' \in \Pi^*$.\footnote{The group $X_\Pi$ is
  canonically isomorphic to the quotient $V^*/\Pi^*$ of the dual
  vector space $V^*$ of $V$ by the dual lattice $\Pi^*$ of $\Pi$.}
The spectrum of $H$ consists of at least two connected subsets of $\FR$.
Each connected component of $\sigma(H)$ is called an \emph{energy band} of $H$.
The bands below the gap, or the {Fermi energy}
$\varepsilon_{\scH,H} = 0 \notin \sigma(H)$ (see
Section~\ref{sect:Reps_and_gapped_Hamiltonians}), are called
\emph{valence bands}, whereas the bands above the Fermi energy are called \emph{conduction bands}.

\subsection{Topological phases and K-theory}
\label{sect:Brillouin_gerbe_and_KT_classification}

We will now sketch the idea behind classifying topological phases of
quantum matter via K-theory, as pioneered
in~\cite{Kitaev:Periodic_table} and further elaborated on by~\cite{Freed-Moore--TEM}.
To a given $\ZZ$-bigraded topological group $(\sfG, \phi_\sfG,
\sfc_\sfG) \in \TopGrp_{\ZZ^2}$ and a quantum extension $(\sfG_\qm,
\iota_\sfG, q_\sfG)$ we define a category $\GapSys(\sfG_\qm,
\iota_\sfG, q_\sfG)$, whose objects are gapped quantum systems with
symmetry class $(\sfG_\qm, \iota_\sfG, q_\sfG)$ (see
Definition~\ref{def:Gapped_system_with_symmetry_class}), and whose
morphisms are generated by unitary isomorphisms of Hilbert spaces
which are compatible with the respective Hamiltonians and intertwine
the $\sfG_\qm$-actions, together with homotopies, i.e. continuous
deformations, of gapped quantum systems with symmetry class $(\sfG_\qm,
\iota_\sfG, q_\sfG)$ as described in Section~\ref{sect:Reps_and_gapped_Hamiltonians}.
This category is a symmetric monoidal groupoid, with monoidal
structure given by the direct sum of Hilbert spaces with gapped
Hamiltonians and with $\sfG_\qm$-actions. The zero Hilbert space provides the unit object.

For a category $\scC$, we let $\pi_0 \scC$ denote the collection of isomorphism classes of objects of its groupoid completion.%
\footnote{The elements of $\pi_0 \scC$ are precisely the path
  connected components of the classifying space of $\scC$, whence the notation $\pi_0$.
The categories $\scC$ under consideration here are already groupoids,
so that $\pi_0 \scC$ is simply the collection of isomorphism classes
of objects of $\scC$.}
Borrowing from~\cite[Definition~5.1]{Freed-Moore--TEM}, we can now
formulate another central concept to this paper.

\begin{definition}[Topological phases]
Let $(\sfG,\phi_\sfG,\sfc_\sfG)$ be a $\ZZ$-bigraded topological group
with quantum extension $(\sfG_\qm,\iota_\sfG,q_\sfG)$. A \emph{topological phase} with symmetry class $(\sfG_\qm, \iota_\sfG, q_\sfG)$ is an isomorphism class of objects in $\GapSys(\sfG_\qm, \iota_\sfG, q_\sfG)$.
We denote the commutative monoid of topological phases with symmetry class $(\sfG_\qm, \iota_\sfG, q_\sfG)$ by
\begin{equation}
	\TP(\sfG_\qm, \iota_\sfG, q_\sfG) = \pi_0\, \GapSys(\sfG_\qm,
        \iota_\sfG, q_\sfG)\ ,
\end{equation}
and its Grothendieck completion by
\begin{equation}
	\RTP(\sfG_\qm, \iota_\sfG, q_\sfG) = \mathrm{Gr} \big( \pi_0\,
        \GapSys(\sfG_\qm, \iota_\sfG, q_\sfG)\big)\ .
\end{equation}
The elements of the abelian group $\RTP(\sfG_\qm, \iota_\sfG, q_\sfG)$ are called \emph{reduced topological phases} with symmetry class $(\sfG_\qm, \iota_\sfG, q_\sfG)$.
\end{definition}

Note that our construction of K-theory groups via Grothendieck completion agrees with the definition in~\cite{Freed-Moore--TEM} via $\ZZ$-graded vector bundles; an isomorphism is given by sending a $\ZZ$-graded bundle $E^+ \oplus E^-$ to the formal difference $E^+ - E^-$ (see also~\cite[p.~57]{Freed-Moore--TEM}).
Denoting by $\TP_\rmF$ and $\TP_\rmI$ the submonoids of topological
phases which have a representative given by a band insulator of type F
or type I as in Definition~\ref{def:Band_Insulator_Bloch_bundle}, respectively, we can now use the Bloch bundle with the fibrewise Bloch Hamiltonians to derive a classification of such phases.

For band insulators, translating the $\hat L$-twisted $\sfG''$-equivariant structure from
Section~\ref{sect:Decomp_reps} to the Hilbert bundle $E$ corresponding
to the sheaf of Hilbert spaces $\CS$ under the assumptions (A1)--(A4) from Section~\ref{sect:Band_insulators}, this is equivalent to the existence of an isomorphism of Hilbert bundles
\begin{equation}
	\hat{\alpha} \colon (\hat{L} \otimes d_0^* E)^{[\phi_{\sfG}]} \arisom d_1^*E
\end{equation}
over $\hat{\scG}_1$, which is compatible with the action of $\hat{\mu}$ in the sense that
\begin{equation}
	\hat{\alpha}_{|(\hat{\lambda}, g''\,)} \circ \big( \One
        \otimes \hat{\alpha}_{|(R_{g''} \hat{\lambda},\, h''\,)}
        \big) = \hat{\alpha}_{|(\hat{\lambda},\, g''\, h''\,)}\, \circ \big( \hat{\mu}_{|(\hat{\lambda},\, g'',\, h''\,)} \otimes \One \big)
\end{equation}
for all $g'', h'' \in \sfG''$ and $\hat{\lambda} \in
\hat{X}_{\sfG_\qm'}$.\footnote{Thoughout we freely use the isomorphism
from Lemma~\ref{st:QM_extensions_and_Pontryagin_duals}.}
We will see an explicit example of this in Section~\ref{sect:Bloch_and_Berry_explicit} below.
The bundle $E$ carries a continuous family of gapped Hamiltonians $\{H_{|\lambda}\}_{\lambda \in X_\Pi}$, thus inducing a $\ZZ$-grading on the fibres.

Morphisms of $\hat{L}$-twisted $\sfG''$-equivariant $\ZZ$-graded Hilbert bundles are linear maps $\psi: (E,\hat{\alpha}) \rightarrow (F,\hat{\beta})$ which satisfy $\hat{\beta} \circ (\One \otimes \psi) = \psi \circ \hat{\alpha}$.
This defines a category of $\hat{L}$-twisted $\sfG''$-equivariant $\ZZ$-graded Hilbert bundles.
We denote its subcategory of $\hat{L}$-twisted $\sfG''$-equivariant
$\ZZ$-graded Hermitean vector bundles, i.e. $\hat{L}$-twisted $\sfG''$-equivariant $\ZZ$-graded Hilbert bundles of finite rank, by $\HVBdl^{\hat{L},\sfG''}_{\ZZ}(X_{\Pi})$.
One can show \cite[Section 10]{Freed-Moore--TEM} that homotopies of
gapped systems together with isomorphisms of quantum representations of $\sfG$ induce isomorphisms of $\hat{L}$-twisted $\sfG''$-equivariant $\ZZ$-graded bundles $E \rightarrow X_\Pi$.

For a category $\scC$ denote the groupoid obtained from $\scC$ by
forgetting all noninvertible morphisms in $\scC$ by $\scC_\sim$. We
can then state~\cite[Theorem 10.11]{Freed-Moore--TEM} as

\begin{theorem}
Sending a band insulator of type F to the isomorphism class of $E^+ \oplus E^-$ induces an isomorphism of commutative monoids
\begin{equation}
\label{eq:TP classification}
	\TP_\rmF(\sfG_\qm,\iota_\sfG, q_\sfG) \arisom \pi_0 \, \HVBdl^{\hat{L},\sfG''}_{\ZZ\, \sim}(X_\Pi) 
\end{equation}
from topological phases of type F to isomorphism classes of
$\hat{L}$-twisted $\sfG''$-equivariant $\ZZ$-graded Hermitean vector
bundles on the Brillouin torus $X_\Pi$.
\end{theorem}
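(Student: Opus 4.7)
The plan is to construct an inverse map by a ``spectral flattening'' procedure and to verify that it is mutually inverse to the forward map on isomorphism classes. First I would check that the forward map is well defined and is a homomorphism of commutative monoids. Given a band insulator of type F, assumption (A2) provides the Hilbert bundle $E \to X_\Pi$, assumption (A4) equips it with a $\ZZ$-grading $E = E^+ \oplus E^-$ by finite-rank spectral projections, and the $\hat L$-twisted $\sfG''$-equivariant structure $\hat\alpha$ constructed in Section~\ref{sect:Decomp_reps} transports via $\rmU$ to $E$; this $\hat\alpha$ automatically preserves the $\ZZ$-grading because any $\rho_\qm(\hat g)$ either commutes or anticommutes with $H$ and hence sends spectral subspaces to spectral subspaces. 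A morphism in $\GapSys(\sfG_\qm,\iota_\sfG,q_\sfG)$ is generated by symmetry-preserving unitary intertwiners, which descend through (A2) to isomorphisms of twisted equivariant bundles, and by homotopies of gapped Hamiltonians, along which the spectral projections of $H_{|\lambda}$ vary continuously and hence give rise to isomorphisms of the $\ZZ$-graded bundles. Additivity under direct sums is immediate, as the Bloch bundle construction commutes with orthogonal direct sums of gapped quantum systems.

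For the inverse, given a class represented by $(E = E^+ \oplus E^-, \hat\alpha) \in \HVBdl^{\hat L, \sfG''}_{\ZZ\, \sim}(X_\Pi)$, I would set $\scH \coloneqq \rmL^2(X_\Pi, E)$, take $H_\gr \coloneqq \One_{E^+} \oplus (-\One_{E^-})$ acting fibrewise, and define $\rho_\qm$ by combining the tautological character action $\rho_\qm(\xi)\psi(\lambda) = \lambda(\xi)\, \psi(\lambda)$ for $\xi \in \Pi$ with the fibrewise maps $\hat\alpha_{|(\lambda, g''\,)}$ to cover the action of elements of $\sfG_\qm$ lifting $g'' \in \sfG''$. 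Cocycle compatibility of $\hat\alpha$ with $\hat\mu$ (Proposition~\ref{st:Brillouin_gerbe}) guarantees that $\rho_\qm$ is a homomorphism into $\Aut_\qm(\scH,H_\gr)$ intertwining both gradings, while finite rank of $E^\pm$ ensures the result is a band insulator of type F satisfying (A1)--(A4) by construction; evidently its image under the forward map returns $(E, \hat\alpha)$ up to canonical isomorphism, establishing surjectivity.

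To finish, I would show that starting from a band insulator $(\scH, H, \rho_\qm)$ of type F, passing to the bundle and back yields an equivalent band insulator, hence injectivity on $\pi_0$. After the round-trip the data become $(\rmL^2(X_\Pi, E), H_\gr, \rho'_\qm)$, and the unitary $\rmU$ of (A2) provides the needed Hilbert space isomorphism intertwining the $\sfG_\qm$-actions. It remains to deform $\rmU H \rmU^{-1}$ to $H_\gr$ through gapped, symmetric Hamiltonians; for this I would use the linear interpolation $H_t \coloneqq (1-t)\, \rmU H \rmU^{-1} + t\, H_\gr$, which shares the spectral projections of $H$ for every $t$, hence stays gapped and still commutes or anticommutes with $\rho'_\qm(\hat g)$ according to $\sfc_\sfG(\hat g)$. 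This exhibits the round-trip as a morphism in $\GapSys(\sfG_\qm, \iota_\sfG, q_\sfG)$.

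The main technical obstacle is the passage from the projection-valued measure decomposition of $\rho_{\qm|\Pi}$ on $\scH$ (Theorem~\ref{st:loc cpt reps and PrVMs}), which a priori only yields a sheaf $\CS$ of Hilbert spaces, to a genuine smooth equivariant Hilbert bundle $E$ with the structure $\hat\alpha$; this is precisely the content of (A2) together with~\cite[Theorem~9.42]{Freed-Moore--TEM}, and the analytic heart of the argument has been placed there. A secondary subtlety is checking that the flattening homotopy respects the possibly unbounded domain of $H$, but for type F systems the bundle has finite total rank after truncation to a neighbourhood of the gap, so one may replace $H$ by a bounded fibrewise symmetric Hamiltonian without changing its class in $\TP_\rmF$; thereafter all operators are bounded and the interpolation is manifestly continuous.
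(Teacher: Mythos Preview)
The paper does not supply its own proof of this theorem: it is stated verbatim as \cite[Theorem~10.11]{Freed-Moore--TEM}, with the surrounding text only remarking that homotopies of gapped systems and isomorphisms of quantum representations induce isomorphisms of $\hat L$-twisted $\sfG''$-equivariant $\ZZ$-graded bundles (again by reference to \cite[Section~10]{Freed-Moore--TEM}). So there is nothing in the paper to compare against beyond the citation.

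That said, your sketch is the natural one and tracks the Freed--Moore argument closely: well-definedness of the forward map via (A2)--(A4) and the twisted equivariant structure from Proposition~\ref{st:Brillouin_gerbe}; an inverse by passing to $\rmL^2(X_\Pi,E)$ with the grading Hamiltonian $H_\gr$ and reconstructing $\rho_\qm$ from the character action of $\Pi$ together with $\hat\alpha$; and the round-trip via spectral flattening $H_t = (1-t)\,\rmU H\rmU^{-1} + t\,H_\gr$, which is gapped because both endpoints share spectral projections. Your identification of the two technical pressure points---that the sheaf of Hilbert spaces from Theorem~\ref{st:loc cpt reps and PrVMs} is assumed (not derived) to come from a smooth bundle via (A2) and \cite[Theorem~9.42]{Freed-Moore--TEM}, and the unboundedness of $H$ handled by the finite-rank hypothesis in type~F---is accurate and matches where Freed--Moore place the analytic work. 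One small point worth tightening: in your inverse construction, the assignment $\hat g \mapsto \hat\alpha_{|(-,g'')}$ only sees the image $g''$ of $\hat g$ in $\sfG''$, so to get a genuine homomorphism on all of $\sfG_\qm$ you must also insert the $\sfU(1)$- and $\Pi$-parts of $\hat g$ explicitly (via multiplication and the character action, respectively); the compatibility of $\hat\alpha$ with $\hat\mu$ then gives exactly the cocycle relation needed.
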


This map is compatible with Grothendieck completion on both sides (the
Grothendieck completion is functorial) and as such induces an
isomorphism of abelian groups; the isomorphisms used here do not
generally respect the Hermitean metrics.
The Grothendieck completion of the category of $\hat{L}$-twisted
$\sfG''$-equivariant $\ZZ$-graded Hermitean vector bundles on $X_\Pi$ is called
the \emph{$\hat{L}$-twisted $\sfG''$-equivariant K-theory group} of
  $X_\Pi$ and is denoted as
\begin{equation}
	\rmK^{\hat{L},\sfG''}(X_\Pi) \coloneqq \mathrm{Gr} \big( \pi_0
        \, \HVBdl^{\hat{L},\sfG''}_{\ZZ\, \sim}(X_\Pi) \big)\ .
\end{equation}
Then~\cite[Theorem 10.15]{Freed-Moore--TEM} reads as

\begin{theorem}
\label{st:KT_classification}
The isomorphism~\eqref{eq:TP classification} of commutative monoids induces an isomorphism
\begin{equation}
	\RTP_\rmF(\sfG_\qm,\iota_\sfG, q_\sfG) \arisom \rmK^{\hat{L},\sfG''}(X_\Pi)
\end{equation}
between the abelian group of reduced topological phases admitting a
realisation as a type F band insulator and the $\hat L$-twisted
$\sfG''$-equivariant K-theory group of the Brillouin torus $X_\Pi$.
\end{theorem}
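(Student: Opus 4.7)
The plan is to exploit the universal property of the Grothendieck completion functor, which sends a homomorphism of commutative monoids to a homomorphism of abelian groups in a way that preserves identity, composition, and in particular isomorphisms. Since the preceding theorem establishes an isomorphism of commutative monoids
\begin{equation*}
	\Phi \colon \TP_\rmF(\sfG_\qm,\iota_\sfG, q_\sfG) \arisom \pi_0\,\HVBdl^{\hat{L},\sfG''}_{\ZZ\,\sim}(X_\Pi)\ ,
\end{equation*}
and both sides of the statement to be proved are defined as the Grothendieck completions of the respective commutative monoids, the isomorphism of the theorem is then obtained as $\mathrm{Gr}(\Phi)$. This reduces the content of the proof to two verifications.

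First, I would verify that $\Phi$ is indeed a morphism of commutative monoids, i.e. that it intertwines the two notions of direct sum. Given two band insulators with symmetry class $(\sfG_\qm,\iota_\sfG, q_\sfG)$ and underlying gapped systems $(\scH_1,H_1)$ and $(\scH_2,H_2)$, their monoidal sum in $\GapSys(\sfG_\qm, \iota_\sfG, q_\sfG)$ is $(\scH_1\oplus\scH_2, H_1\oplus H_2)$ with the diagonal action of $\sfG_\qm$. Applying the Bloch decomposition of assumption (A2) to each summand separately and noting that both unitary isomorphisms intertwine the respective $\Pi$-actions, one obtains a $\sfG_\qm$-equivariant unitary isomorphism $\scH_1 \oplus \scH_2 \arisom \rmL^2(X_\Pi, E_1 \oplus E_2)$. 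The fibrewise spectral projections associated to the direct-sum Hamiltonian then produce the decomposition $(E_1\oplus E_2)^\pm \cong E_1^\pm \oplus E_2^\pm$, and the $\hat L$-twisted $\sfG''$-equivariant structure $\hat\alpha_1 \oplus \hat\alpha_2$ is manifestly compatible with $\hat\mu$. Thus $\Phi$ is monoidal.

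Second, I would spell out that the Grothendieck completion is functorial on the category of commutative monoids with its natural morphisms, so that the inverse monoidal isomorphism $\Phi^{-1}$, constructed in the proof of the preceding theorem from the data of a twisted equivariant vector bundle together with a choice of grading operator as Bloch Hamiltonian, induces $\mathrm{Gr}(\Phi^{-1})$, and the identities $\mathrm{Gr}(\Phi)\circ\mathrm{Gr}(\Phi^{-1}) = \mathrm{id}$ and $\mathrm{Gr}(\Phi^{-1})\circ\mathrm{Gr}(\Phi) = \mathrm{id}$ hold on generators of the form $[x]-[y]$ and hence on all of the Grothendieck group. This yields the desired isomorphism of abelian groups.

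The main obstacle, and really the only nonformal step, is the monoidality check for $\Phi$: one must ensure that the choice of Bloch decomposition for a direct sum is compatible, up to the isomorphisms in $\HVBdl^{\hat{L},\sfG''}_{\ZZ\,\sim}(X_\Pi)$, with the direct sums of the individual Bloch decompositions. This is a matter of uniqueness of the projection-valued measure in Theorem~\ref{st:loc cpt reps and PrVMs} applied to the direct-sum representation of $\Pi$, together with naturality of the twisted equivariance data constructed in Proposition~\ref{st:Brillouin_gerbe}. Once this is in place, the passage from $\TP_\rmF$ to $\RTP_\rmF$ and from $\pi_0\HVBdl^{\hat{L},\sfG''}_{\ZZ\,\sim}(X_\Pi)$ to $\rmK^{\hat{L},\sfG''}(X_\Pi)$ is purely formal.
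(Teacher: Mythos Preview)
Your proposal is correct and matches the paper's approach exactly: the paper states just before the theorem that ``this map is compatible with Grothendieck completion on both sides (the Grothendieck completion is functorial) and as such induces an isomorphism of abelian groups,'' and then simply records the resulting statement. You have spelled out in more detail the monoidality check for $\Phi$ and the functoriality of $\mathrm{Gr}$, but the underlying argument is the same formal passage from a monoid isomorphism to its Grothendieck completion.
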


In the case of type I band insulators the situation is more
complicated because the bundle $E^+$ has infinite rank and therefore does not fit into the framework of topological K-theory.
The map from band insulators to twisted equivariant K-theory is then
given by sending a band insulator to the K-theory class of its Bloch
bundle $E^-$, and an analogous version of
Theorem~\ref{st:KT_classification} is formulated
in~\cite{Freed-Moore--TEM}, but now methods for treating infinite-rank bundles have to be invoked from~\cite{FHT:Loop_groups_and_twisted_KT_I}.

\begin{remark}
It is a non-trivial fact that every class in the twisted K-theory groups we consider here can indeed be represented by an $\hat{L}$-twisted $\sfG''$-equivariant $\ZZ$-graded Hermitean vector bundle.
It is proven in~\cite[Remark~7.37, Appendix~E]{Freed-Moore--TEM}, and the proof relies on the specific construction of $\sfG''$ and $X_\Pi$ in the present context.

For generic topological spaces with a continuous group action, it is generally not true that twisted equivariant K-theory is represented by twisted equivariant $\ZZ$-graded vector bundles.
Instead, these K-theory classes can always be constructed as homotopy classes of sections of certain bundles of Fredholm operators -- see in particular~\cite{FHT:Loop_groups_and_twisted_KT_I}.
Some specific cases where twisted equivariant K-theory is representable by bundles are known (in addition to Theorem~\ref{st:KT_classification} above):
In~\cite[Remark~7.37, Appendix~E]{Freed-Moore--TEM}, it is pointed out that for $\sfG''$ a finite group acting on a nice topological space $X$, the twisted equivariant K-theory of $X$ is representable by bundles; here the problem is related to the presence of torsion classes (compare also~\cite{BCMMS,BSS--HGeoQuan}), which are absent in the case of finite $\sfG''$.
The proof relies on invoking the construction of a universal twisted Hilbert bundle in~\cite{FHT:Loop_groups_and_twisted_KT_I}.
In~\cite{Gomi:FM_K-Theory}, Gomi develops the twisted equivariant K-theory introduced in~\cite{Freed-Moore--TEM} into a fully general theory.
He proves that these K-theory groups can be represented by vector bundles whenever the grading $\sfc$ is trivial (cf.~\cite[Remark~7.36]{Freed-Moore--TEM}).
He points out that the reason why that fails in general is that the Atiyah-J\"anich-type argument, which one would like to use in order to relate the Fredholm and the vector bundle approaches to topological K-theory, fails for general twisted central extensions of groupoids.
Nevertheless, for our purposes it is sufficient to consider finite-rank vector bundles in order to describe twisted equivariant K-theory for topological phases by~\cite[Remark~7.37, Appendix~E]{Freed-Moore--TEM}.
\qen
\end{remark}

\subsection{Bloch wavefunctions and the Berry connection}
\label{sect:Bloch_and_Berry_explicit}

Having defined band insulators on a Galilean spacetime $(M,\Gamma)$ in a very general fashion, we now specialise to the case of band insulators whose Hilbert space is the space of wavefunctions on the spatial slices of $M$.
For this, recall from Definition~\ref{def:Band_Insulator_Bloch_bundle} that $(M,\Gamma)$ is assumed to have a time direction, i.e. the translations split into time and space translations as $W = U \oplus V$.
We can therefore decompose the Galilean spacetime into a direct product
\begin{equation}
	M \cong M_\rmt \times M_\rms \coloneqq M/V \times M/U \ .
\end{equation}
Recall from
  Definition~\ref{def:Galilean_structure_and_spacetime} that $U$ and
  $V$ are endowed with positive definite inner products, making
  $M_\rmt$ and $M_\rms$ into Euclidean spaces with Euclidean
  structures $\Gamma_\rmt$ and $\Gamma_\rms$ on $(M_\rmt,U)$ and $(M_{\rms},V)$.
The vector space $U$ acts transitively on $M_\rmt$, whereas $V$ acts transitively on $M_\rms$.
The isomorphism is given by sending $x \in M$ to its pair of equivalence classes $x \mapsto ([x]_\rmt, [x]_\rms)$.
Its inverse is constructed as follows.
Given $([x]_\rmt, [x]_\rms) \in M_\rmt \times M_\rms$, for any
auxiliary choice of origin $x_0 \in M$ we define $v_0([x]_\rms,x_0)
\in V$ to be the unique space translation such that 
\begin{equation}
[x_0 + v_0([x]_\rms,x_0)]_\rms = [x]_\rms\ .
\end{equation}
Similarly we define $u_0([x]_\rmt,x_0) \in U$.
Then the inverse isomorphism is given by
\begin{equation}
\label{eq:inverse_splitting_for_Galilean_spacetime}
	([x]_\rmt, [x]_\rms) \longmapsto x_0 + v_0([x]_\rms,x_0) +
        u_0([x]_\rmt,x_0)\ .
\end{equation}
If we shift $x_0 \mapsto x_0 + u' + v'$ for some $u' \in U$ and $v' \in V$, then $v_0([x]_\rms,x_0 + u' + v'\,) = v_0([x]_\rms,x_0) - v'$, and analogously for $u_0([x]_\rms,x_0)$, so that the map $M_\rmt \times M_\rms \arisom M$ defined by~\eqref{eq:inverse_splitting_for_Galilean_spacetime} is indeed well-defined independently of the choice of origin in $M$.

We now consider the quantum Hilbert space $\scH = \rmL^2(M_\rms,
V_\intern)$, where $V_\intern$ is some finite-dimensional Hilbert
space of internal degrees of freedom (for instance $V_\intern = \FC^2$
for particles of spin~$\frac{1}{2}$).
By construction, the spatial translations $V$ act canonically on $\scH$.
Let $C \subset M$ be a crystal.
The crystal translations $\Pi$ are embedded into $V$, and thus they act on $\scH$ as space translations.
We denote the space image of the crystal by $\overline{C} = C/U \subset M_\rms$.
Here it also becomes obvious why the seemingly cumbersome definition of the symmetry group $\sfG$ of a band insulator in Section~\ref{sect:Band_insulators} is actually a good choice: the symmetries $\sfG$ act as symmetries of $M_\rms$ which preserve the subset $\overline{C} \subset M_\rms$.
That is, the morphism $\gamma \colon \sfG \to
\Aut(M_\rms,\Gamma_\rms)$ takes values in the crystallographic group
$\sfG(\,\overline{C}\,) \coloneqq \sfG(C)/U$ of $C$.
The Hamiltonian $H$ can be taken to be a Schr\"odinger operator, subject to the condition that it be gapped and invariant under the space translations $\Pi$.

Writing $\sfG'_\qm = \sfG' \times\sfU(1)$ under the fixed isomorphism,
the bundle $E \rightarrow \hat{X}_{\sfG'_\qm}$ stems from the
decomposition of the quantum representation of $\sfG'$ on $\scH =
\rmL^2(M_\rms,V_\intern)$ into the $1$-dimensional irreducible quantum
representations $\hat{\lambda} \in \hat X_{\sfG_\qm'}$.
Using the isomorphism from Lemma~\ref{st:QM_extensions_and_Pontryagin_duals}, this is equivalent to decomposing $\scH$ into 1-dimensional representations of $\sfG'$.
This decomposition will in general not be as a direct sum, but only as a direct integral; for $\sfG' = \Pi$ the translation lattice this manifests itself in the fact that th so-called quasiperiodic functions $\psi \colon M_\rms \to \FC$, i.e.~those $\psi$ with
\begin{equation}
	\rho(\xi)(\psi) = \lambda(\xi)\, \psi \quad \forall\, \xi \in \Pi
\end{equation}
cannot be square-integrable.

\begin{remark}
Assuming for a moment that such functions $\psi$ were elements of $\scH$ after all, it would follow that the class in $\PP\scH$ of $\psi$, i.e.~the actual quantum state represented by $\psi$, is invariant under the action of $\sfG'$.
In particular, for crystal systems $\sfG' = \Pi$ is the group of
space lattice translations, and functions with
this quasiperiodic transformation behaviour are called \emph{Bloch
  wavefunctions}.
\qen
\end{remark}

The approach to the classification of topological phases via \emph{topological} K-theory rests on the observation that there exists a Hilbert bundle $E \to \hat{X}_{\sfG'_\qm}$ with a canonical unitary isomorphism
\smash{$\rmU^{-1} \colon \rmL^2(\hat{X}_{\sfG'_\qm},\, E) \arisom
\rmL^2(M_\rms,\, V_\intern) = \scH$}.
The bundle $E$ is constructed as follows:
first, we define a hermitean line bundle $\CL \to X_{\sfG'_\qm} \times M_\rms/\sfG'$, called the \emph{Poincar\'e line bundle}, via
\begin{equation}
\label{eq:Poincare LBdl}
	\CL = \big( X_{\sfG'_\qm} \times M_\rms \times \FC \big)/\sim\,,
	\qquad
	(\hat{\lambda}, x, z) \sim \big( \hat{\lambda}, \gamma(\xi)(x), \hat{\lambda}(\hat{\xi})\,z \big)\,,
\end{equation}
where $\xi$ is the projection of $\hat{\xi}$ to $\sfG'$, and where $\gamma$ is the action of $\sfG$ on $M_\rms$.
It is shown in~\cite[Appendix D]{Freed-Moore--TEM} that the fibres
\begin{equation}
	E_{|\hat{\lambda}} \coloneqq \rmL^2 \big( \{\hat{\lambda}\} \times M_\rms/\sfG', \CL \big)
\end{equation}
glue together to give a Hilbert bundle $E \rightarrow
\hat{X}_{\sfG'_\qm}$ as desired.
The canonical unitary isomorphism $\rmU$ is given explicitly by
\begin{equation}
	\rmU^{-1}(\psi)(x) = \int_{\hat{X}_{\sfG'_\qm}} \,
        \psi(\hat{\lambda}, x)\ \dd \nu(\hat{\lambda})\ ,
\end{equation}
where $\nu$ is the pullback of a fixed Haar measure on $X_{\sfG'}$
along the isomorphism of Lemma~\ref{st:QM_extensions_and_Pontryagin_duals}, and we have made use of the explicit construction~\eqref{eq:Poincare LBdl}.

Dropping the $\sfU(1)$ factor in the split $\sfG'_\qm \cong \sfG' \times \sfU(1)$ and recalling the isomorphism from
Lemma~\ref{st:QM_extensions_and_Pontryagin_duals}, one can perform an analogous construction over $X_{\sfG'}$; the resulting
Hilbert bundles are canonically isomorphic, and we denote both by $E$. We let $\langle-,-\rangle$ denote the canonical dual pairing between the vector
space $V$ and its dual $V^*$.

\begin{theorem}
\phantomsection\label{st:Equivariance_of_E_and_PT}
\begin{myenumerate}
	\item The bundle $E$ is $\hat{L}$-twisted $\sfG''$-equivariant: there is a unitary isomorphism
	\begin{equation}
	\label{eq:def_alpha_Bloch}
		\hat{\alpha} \colon (\hat{L} \otimes d_0^*
                E)^{[\phi_\sfG]} \arisom d_1^*E \ ,\quad
		\big[ (\hat{\lambda},\, \hat{g}),\, z \big] \otimes \psi \longmapsto
		\rho_\qm(\hat{g}) ( z\, \psi)\ ,
	\end{equation}
	which is compatible with the isomorphism $\hat\mu$ from Proposition~\ref{st:Brillouin_gerbe}.
	
	\item Let $\lambda_{(-)} \colon [0,1] \to X_\Pi \cong V^*/\Pi^*$, $t \mapsto \lambda_t$ be a continuous path in $X_\Pi$.
	If $\sfG'_\qm = \Pi \times \sfU(1)$, there is a parallel transport on $E$ given by
	\begin{equation}
	\label{eq:PT_on_E}
		\big(P^E_{\lambda_t} (\psi) \big)(x) = \exp\big(
                -2\pi\,\iu\, \< k_t - k_0 ,\, x - x_0 \> \big)\,
                \psi(x)\ ,
	\end{equation}
	where $k_t$ is any lift of $\lambda_t$ to $V^*$ and $x_0$ is any fixed origin in $M_\rms$.
\end{myenumerate}
\end{theorem}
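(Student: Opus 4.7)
The plan is to treat the two parts separately, with part~(1) requiring a verification of well-definedness and a cocycle identity, and part~(2) reducing to an exponential algebra check after a continuous lift has been chosen.

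For part~(1), I would first unpack the fibre description: an element $\psi \in E_{|R_{g''}\hat\lambda}$ is an $L^2$ section of $\CL$ over the orbit space, which concretely means a $V_\intern$-valued function on $M_\rms$ that is quasiperiodic with character $R_{g''}\hat\lambda$ under the $\sfG'$-action. The first substep is to verify that $\rho_\qm(\hat g)(\psi)$ has character $\hat\lambda$, i.e. lies in $E_{|\hat\lambda}$ (or in $\overline{E_{|\hat\lambda}}$ when $\phi_\sfG(g'')=-1$). This is a direct computation using that $\sfG'_\qm \subset \sfG_\qm$ is normal, the explicit formula $(R_{g''}\hat\lambda)(\hat\xi) = \hat\lambda(\hat g\,\hat\xi\,\hat g^{-1})^{\phi_\sfG(g'')}$ from~\eqref{eq:Brillouin_right_action_explicit}, and the (anti)unitarity of $\rho_\qm(\hat g)$. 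Next I would check compatibility with the equivalence relation~\eqref{eq:Brillouin_gerbe_qm}: replacing $\hat g$ by $\hat g\,\hat\xi$ with $\hat\xi \in \sfG'_\qm$ rescales $z$ by the factor $\hat\lambda(\hat g\,\hat\xi\,\hat g^{-1})^{-\phi_\sfG(g'')}$, which cancels exactly against the extra application of $\rho_\qm(\hat\xi)$ to $\psi$ since $\psi$ carries character $R_{g''}\hat\lambda$; the $\phi_\sfG(g'')$-dependent sign is precisely the one built into $\hat L$.

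Unitarity of $\hat\alpha$ is inherited from the (anti)unitarity of $\rho_\qm(\hat g)$. The compatibility with $\hat\mu$ then amounts to the identity
\begin{equation*}
 \rho_\qm(\hat g\,\hat h)\bigl(\theta_{h''}(z)\, z'\,\psi\bigr) = \rho_\qm(\hat g)\bigl(z\cdot \rho_\qm(\hat h)(z'\psi)\bigr)\ ,
\end{equation*}
which reduces to the homomorphism property $\rho_\qm(\hat g\,\hat h) = \rho_\qm(\hat g)\,\rho_\qm(\hat h)$ once the antilinearity of $\rho_\qm(\hat h)$ when $\phi_\sfG(h'')=-1$ is absorbed into $\theta_{h''}$; this is the same bookkeeping that was used to prove associativity of $\hat\mu$ in Proposition~\ref{st:Brillouin_gerbe}.

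For part~(2), with $\sfG'_\qm \cong \Pi \times \sfU(1)$ the characters of $\Pi$ are parametrised by $X_\Pi \cong V^*/\Pi^*$, and a lift $k \in V^*$ of $\lambda \in X_\Pi$ corresponds to the character $\xi \mapsto \mathrm{e}^{-2\pi\,\iu\,\<k,\xi\>}$. Given a continuous path $\lambda_{(-)}:[0,1]\to X_\Pi$, I would fix a continuous lift $k_{(-)}:[0,1]\to V^*$ (unique up to an overall element of $\Pi^*$). The central check is that the formula~\eqref{eq:PT_on_E} sends a section of character $\lambda_0$ to a section of character $\lambda_t$, which is the one-line identity
\begin{equation*}
 \mathrm{e}^{-2\pi\,\iu\,\<k_t-k_0,\,\xi\>}\,\lambda_0(\xi) = \lambda_t(\xi)\qquad\text{for all } \xi\in\Pi\ .
\end{equation*}
Unitarity on each fibre is immediate because~\eqref{eq:PT_on_E} is pointwise multiplication by a function of unit modulus, and the parallel transport axioms (continuity in $t$, identity at $t=0$, composition along concatenated subpaths) all follow from the additivity of $k_t - k_0$ in $t$. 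The only minor subtleties are (i) changing the lift $k_{(-)}$ by a constant in $\Pi^*$ alters $P^E_{\lambda_t}$ by a multiplication operator that is identically $1$ on quasiperiodic sections of character $\lambda_0$ only when the shift is trivial, so one records explicitly that the parallel transport depends on the path but not on the lift up to a homotopy, and (ii) a change of origin $x_0 \mapsto x_0'$ modifies $P^E_{\lambda_t}$ by the global phase $\mathrm{e}^{-2\pi\,\iu\,\<k_t-k_0,\,x_0-x_0'\>}$, so~\eqref{eq:PT_on_E} represents \emph{a} parallel transport rather than a canonical one.

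The hard part is the bookkeeping in~(1) when $\phi_\sfG(g'')=-1$: tracking where complex conjugation enters the tensor product $(\hat L \otimes d_0^*E)^{[\phi_\sfG]}$, ensuring well-definedness on the quotient defining $\hat L$, and confirming that these conjugations conspire with $\theta_{h''}$ to make the $\hat\mu$-compatibility hold. By contrast, part~(2) is essentially an exponent computation once a continuous lift of the base path has been chosen.
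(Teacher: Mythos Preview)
Your approach mirrors the paper's proof almost exactly: for part~(1) you verify the correct target fibre, well-definedness on the quotient defining $\hat L$, and compatibility with $\hat\mu$ via the homomorphism property of $\rho_\qm$; for part~(2) you check that~\eqref{eq:PT_on_E} maps between the correct fibres, is unitary, and respects concatenation. This is precisely what the paper does.

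One small slip: your subtlety~(i) is a non-issue. Shifting a \emph{continuous} lift $k_{(-)}$ by a constant $\xi' \in \Pi^*$ leaves the difference $k_t - k_0$ literally unchanged, so the formula~\eqref{eq:PT_on_E} is manifestly independent of the choice of continuous lift, exactly as the paper asserts. There is no residual dependence to record. Your subtlety~(ii) about the origin $x_0$ is a valid observation (the formula does pick up a $t$-dependent phase under a change of origin), but the paper simply fixes $x_0$ once and for all and does not comment on this.
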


\begin{proof}
To prove (1) we observe that if $\psi \in E_{|R_{g''} \hat{\lambda}}$, i.e. $\rho_\qm(\hat{\xi}\,) (\psi) = \hat{\lambda}(\hat{g}\, \hat{\xi}\, \hat{g}^{-1})^{\phi_\sfG(g''\,)}\, \psi$, then
\begin{equation}
\begin{aligned}
	\big( \rho_\qm(\hat{\xi}\,)\, \rho_\qm(\hat{g}) \big) (\psi) &= \big( \rho_\qm(\hat{g})\, \rho_\qm(\hat{g}^{-1}\, \hat{\xi}\, \hat{g}) \big) (\psi)\\[4pt]
	&= \rho_\qm(\hat{g})\, \big( \hat{\lambda}(\hat{\xi}\,)^{\phi_\sfG(g''\,)}\, \psi \big)\\[4pt]
	&= \hat{\lambda}(\hat{\xi}\,)\, \rho_\qm(\hat{g})(\psi)\ .
\end{aligned}
\end{equation}
Hence $\hat\alpha$ maps between the correct fibres over $\hat X_{\sfG'_\qm} \times \sfG''$.
Moreover, it is well-defined as we see from
\begin{equation}
\begin{aligned}
	&\hat\alpha_{|(\hat{\lambda}, g''\,)} \big( \big[ (\hat{\lambda},\, \hat{g}\, \hat{\xi}\,),\, \hat{\lambda}(\hat{g}\, \hat{\xi}\, \hat{g}^{-1})^{-\phi_\sfG(g''\,)}\, z \big] \otimes \psi \big)\\
	&\hspace{6cm} = \rho_\qm(\hat{g})\, \rho_\qm(\hat{\xi}\,)\, \big( \hat{\lambda}(\hat{g}\, \hat{\xi}\, \hat{g}^{-1})^{-\phi_\sfG(g''\,)}\, z\, \psi \big)\\[4pt]
	&\hspace{6cm} = \rho_\qm(\hat{g})\, \big( \hat{\lambda}(\hat{g}\, \hat{\xi}\, \hat{g}^{-1})^{-\phi_\sfG(g''\,)}\, z\, \hat{\lambda}(\hat{g}\, \hat{\xi}\, \hat{g}^{-1})^{\phi_\sfG(g''\,)}\, \psi \big)\\[4pt]
	&\hspace{6cm}= \hat\alpha_{|(\hat{\lambda}, g''\,)} \big( \big[
        (\hat{\lambda},\, \hat{g}),\, z \big] \otimes \psi \big)\ .
\end{aligned}
\end{equation}
Since $\hat\alpha$ is invertible and unitary, it is an isomorphism.
Finally, compatibility with $\hat\mu$ is evident from
\begin{equation}
\begin{aligned}
	&\hat\alpha_{|(\hat{\lambda}, g''\,)} \circ (\One \otimes \hat\alpha_{|(R_{g''} \hat{\lambda},\, h''\,)}) \big( \big[ (\hat{\lambda},\, \hat{g}),\, z \big] \otimes \big[ (R_{g''} \hat{\lambda},\, \hat{h}),\, z'\, \big] \otimes \psi \big)\\*
	&\qquad \qquad \qquad \qquad = \rho_\qm(\hat{g}) \big( z\, \rho_\qm(\hat{h}) (z'\, \psi) \big)\\[4pt]
	&\qquad \qquad \qquad \qquad = \rho_\qm(\hat{g}\, \hat{h}) \big( z'\, \theta_{h''}(z)\, \psi \big)\\[4pt]
	&\qquad \qquad \qquad \qquad = \hat\alpha_{|(\hat{\lambda},\, g''\,h''\,)}\, \big( \big[ (\hat{\lambda},\, \hat{g}\, \hat{h}),\, z'\, \theta_{h''}(z) \big] \otimes \psi \big)\\[4pt]
	&\qquad \qquad \qquad \qquad = \hat\alpha_{|(\hat{\lambda},\, g''\,h''\,)}\,
        \circ (\hat\mu_{|(\hat{\lambda},\, g'',\, h''\,)} \otimes \One)
        \big( \big[ (\hat{\lambda},\, \hat{g}),\, z \big] \otimes
        \big[ (R_{g''} \hat{\lambda},\, \hat{h}),\, z'\, \big] \otimes
        \psi \big)\ .
\end{aligned}
\end{equation}

To prove (2), note that the map $P^E$ is linear and unitary by construction.
It is independent of the choice of lift $k_t$ of $\lambda_t$, and moreover $P^E$ is compatible with concatenation of paths, as for any two paths which can be concatenated in $X_\Pi$ there exist lifts which can also be concatenated in $V^*$.
\end{proof}

By the assumptions (A1)--(A4) of Section~\ref{sect:Band_insulators}, for $\sfG'_\qm = \Pi \times \sfU(1)$ the Hamiltonian of the topological insulator induces an endomorphism of $E$ which is compatible with the $\hat{L}$-twisted $\sfG''$-action.
Therefore the $\hat{L}$-twisted $\sfG''$-action is also compatible
with the grading that $H_{|\lambda}$ induces on $E_{|\lambda}$, making
$E^\pm \rightarrow X_\Pi$ into $\hat{L}$-twisted $\sfG''$-equivariant
Hilbert bundles as well (see assumption~(A4) from Section~\ref{sect:Band_insulators}).
Using the inclusions $\imath_{E^\pm} \colon E^\pm \hookrightarrow E$
together with the projections $\pr_{E^\pm} \colon E \rightarrow E^\pm$, the bundles $E^\pm$ even carry connections given by
\begin{equation}
\label{eq:Berry_connection}
	\nabla^{E^\pm} \coloneqq \pr_{E^\pm} \circ \nabla^E \circ
        \imath_{E^\pm}\ ,
\end{equation}
where $\nabla^E$ is the covariant derivative on $E$ corresponding to the parallel transport $P^E$.
The bundles $E^\pm \rightarrow X_\Pi$ constructed in this way are the
bundles used in the K-theory classification of gapped topological
phases in~\cite{Freed-Moore--TEM}, as we discussed in Section~\ref{sect:Brillouin_gerbe_and_KT_classification}.

\begin{definition}[Berry connection]
The Hermitean connection $\nabla^{E^-}$ on the finite-rank Bloch bundle $E^-$ is called the \emph{Berry connection}.
\end{definition}

\subsection{Sections of quantum extensions}
\label{sect:splittings_and_equivariance}

In preparation for our later treatments of time-reversal symmetry and
the Kane-Mele invariant, we need a brief technical interlude on maps $\hat s \colon \sfG'' \to \sfG_\qm$ which split the projection $\sfG_\qm \to \sfG''$.
The presence of such maps is weaker than the existence of splittings
of the short exact sequence $\sfG'_\qm \to \sfG_\qm \to \sfG''$, since the latter splittings
are additionally group homomorphisms.
However, it turns out that topological splittings $\hat s$ already have
strong implications on the structure of the $\hat{L}$-twisted
$\sfG''$-action on the bundle $E$.

Let $I=\hat{\scG}_1\times\FC$ denote the trivial Hermitean line bundle
on $\hat{\scG}_1= \hat{X}_{\sfG'_\qm}\times \sfG''$ with its $\phi_{\sfG}$-graded multiplication denoted by $m$.

\begin{lemma}
\label{st:sections 2-cocycles and trivialisation}
Let $\sfG'_\qm \cong \Pi \times \sfU(1)$. A section $\hat{s}
\colon \sfG'' \to \sfG_\qm$ of the $\sfG'_\qm$-bundle $\sfG_\qm \to \sfG''$ gives rise to:
\begin{myenumerate}
	\item A $\phi_\sfG$-graded group $2$-cocycle $\beta = (\zeta, u) \colon \sfG'' \times \sfG'' \to \sfG'_\qm$.
	That is
	\begin{align}
		\hat{s}(g''\,)\, \hat{s}(h''\,) &= \hat{s}(g'' \,
                                                  h''\,)\,
                                                  \beta(g'',h''\,)\ ,\\[4pt]
		\zeta(g'', h'' \, k''\,)\, \zeta(h'', k''\,) &=
                                                               \zeta(g'',
                                                               h''\,)\,
                                                               \zeta(g''\,
                                                               h'',
                                                               k''\,)\
                                                               ,\\[4pt]
		\label{eq:graded_cocycle_U(1)-part}
		u(g'', h'' \, k''\,)\, u(h'',k''\,) &= u(g'',
                                               h''\,)^{\phi_\sfG(k''\,)}\,
                                               u(g''\, h'', k''\,)\ ,
	\end{align}
for all $g'',h'',k'' \in \sfG''$.
	
	\item A trivialisation $\chi \colon I \arisom \hat{L}$ of $\hat{L}$ as a Hermitean line bundle, whose incompatibility with $\hat\mu$ is given by
	\begin{align}
		\chi_{|(\hat{\lambda},\, g'' \, h''\,)}^{-1} \circ \hat\mu \circ \big( \chi_{|(\hat{\lambda},\, g''\,)} \otimes \chi_{|(R_{g''} \hat{\lambda},\, h''\,)} \big)
		&= \hat{\lambda} \big( \beta(g'',h''\,)
                  \big)^{\phi_\sfG(g''\, h''\,)} \circ m\\[4pt]
		&= u(g'', h''\,)^{\phi_\sfG(g''\, h''\,)}\,
                  \lambda \big( \zeta(g'',h''\,)
                  \big)^{\phi_\sfG(g''\, h''\,)} \circ m \notag
	\end{align}
	 for all $\hat{\lambda} \in \hat{X}_{\sfG'_\qm}$ and $g'',h'' \in \sfG''$.
\end{myenumerate}
\end{lemma}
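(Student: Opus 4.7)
The plan has two parts, paralleling the two claims of the lemma, and both hinge on converting the failure of $\hat s$ to be a group homomorphism into analogous data: a $\sfG'_\qm$-valued cocycle on $\sfG''$, and a twist of the induced trivialisation of $\hat L$.

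For part~(1), the first step is to define $\beta(g'',h''\,) \in \sfG'_\qm$ as the unique element satisfying $\hat s(g''\,)\, \hat s(h''\,) = \hat s(g''\, h''\,)\, \beta(g'',h''\,)$; this is well-defined because both sides project to $g''\, h'' \in \sfG''$ and $\sfG_\qm \to \sfG''$ is a $\sfG'_\qm$-torsor. The cocycle relations are then extracted by computing $\hat s(g''\,)\, \hat s(h''\,)\, \hat s(k''\,)$ via left- and right-bracketing and comparing, using associativity in $\sfG_\qm$. The crucial step is moving a $\sfG'_\qm$-factor past $\hat s(k''\,)$: on the $\Pi$-component this uses normality of $\Pi$ in $\sfG_\qm$, while on the $\sfU(1)$-component the defining relation $g\, \iota_\sfG(\lambda) = \iota_\sfG(\lambda)^{\phi_\sfG(g)}\, g$ of a quantum extension produces the $\phi_\sfG(k''\,)$ exponent in the $u$-cocycle condition. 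Decomposing $\beta = (\zeta,u)$ under $\sfG'_\qm \cong \Pi \times \sfU(1)$ and using abelianness of $\Pi$ then yields the two displayed identities.

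For part~(2), I would use the explicit model of $\hat L$ from the proof of Proposition~\ref{st:Brillouin_gerbe} and define
\begin{equation}
	\chi_{|(\hat{\lambda},\, g''\,)}(z) = \big[ (\hat{\lambda},\, \hat s(g''\,)),\, z \big]\ .
\end{equation}
This is a continuous, fibrewise unitary trivialisation of $\hat L$ as a Hermitean line bundle, since $\hat s(g''\,)$ yields a preferred representative in the fibre of $\sfG_\qm \to \sfG''$ over $g''$. Plugging $\chi_{|(\hat{\lambda},\, g''\,)}(z)$ and $\chi_{|(R_{g''}\hat{\lambda},\, h''\,)}(z'\,)$ into the explicit formula for $\hat\mu$ from the proof of Proposition~\ref{st:Brillouin_gerbe}, substituting $\hat s(g''\,)\, \hat s(h''\,) = \hat s(g''\,h''\,)\, \beta(g'',h''\,)$, and then using the equivalence relation~\eqref{eq:Brillouin_gerbe_qm} to move the representative back to $(\hat{\lambda},\, \hat s(g''\,h''\,))$ produces the stated scalar factor $\hat{\lambda}\big(\beta(g'',h''\,)\big)^{\phi_\sfG(g''\,h''\,)}$ multiplying the $\phi_\sfG$-graded product $m(z,z'\,) = \theta_{h''}(z)\,z'$ on $I$. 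The split $\beta = (\zeta,u)$, together with $\hat{\lambda}_{|\sfU(1)} = \mathrm{id}$ for irreducible quantum representations, immediately yields the second displayed form.

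The main obstacle is purely bookkeeping rather than conceptual: one must carefully track how the $[\phi_\sfG \circ \pr_2]$ twist on $d_2^*\hat L$ entering the definition of $\hat\mu$ interacts both with the equivalence relation defining $\hat L$ and with the antilinear commutation relation of $\sfU(1)$-lifts of time-reversing symmetries (which is what produces the factors $\theta_{h''}$). Consistently projecting to the $\Pi$- and $\sfU(1)$-components under $\sfG'_\qm \cong \Pi \times \sfU(1)$ at each step keeps the signs manageable, and all $\phi_\sfG$-exponents assemble to the stated graded cocycle and graded trivialisation identities.
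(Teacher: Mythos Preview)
Your proposal is correct and follows essentially the same approach as the paper: for (1) you define $\beta$ via the defect of $\hat s$ from being a homomorphism and extract the cocycle identities from associativity in $\sfG_\qm$, noting (exactly as the paper does) that moving $u(g'',h''\,)$ past $\hat s(k''\,)$ via the quantum-extension relation produces the $\phi_\sfG(k''\,)$ exponent; for (2) you define $\chi$ by the same formula $\chi_{|(\hat{\lambda},g''\,)}(z) = [(\hat{\lambda},\hat s(g''\,)),z]$ and verify the incompatibility with $\hat\mu$ by the same explicit computation using~\eqref{eq:Brillouin_gerbe_qm}. Your write-up is slightly more detailed in places (e.g.\ the torsor argument for well-definedness of $\beta$), but the strategy and all key steps coincide with the paper's.
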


\begin{proof}
(1) is a straightforward consequence of the definition of $\beta(g'',h''\,)$.
It only has to be noted that in the equation deduced from associativity in the group $\sfG''$ we have to move $u(g'',h''\,)$ past $\hat{s}(k''\,)$ to the right, thus yielding $\phi_\sfG(k''\,)$ in the exponent in~\eqref{eq:graded_cocycle_U(1)-part}.

For (2), recall the construction~\eqref{eq:Brillouin_gerbe_qm}, and set $\chi_{|(\hat{\lambda},\,g''\,)} (\hat{\lambda},\, z) = [(\hat{\lambda},\, \hat{s}(g''\,)),\, z]$.
This defines an isomorphism $\chi:I \to \hat{L}$.
The failure of multiplicativity controlled by $\beta$ then follows from
\addtocounter{equation}{1}
\begin{align*}
	&\chi_{|(\hat{\lambda},\, g''\, h''\,)}^{-1} \circ \hat\mu \circ \big( \chi_{|(\hat{\lambda},\, g''\,)} \otimes \chi_{|(R_{g''} \hat{\lambda},\, h''\,)} \big) \big( (\hat{\lambda},z) \otimes (R_{g''} \hat{\lambda}, z'\,) \big)\\
	&\hspace{4cm}= \chi_{|(\hat{\lambda},\, g''\, h''\,)}^{-1} \circ \hat\mu \big( \big[ \big( \hat{\lambda},\, \hat{s}(g''\,) \big),\, z \big] \otimes \big[ \big( R_{g''} \hat{\lambda},\, \hat{s}(h''\,) \big),\, z'\, \big] \big)\\[4pt]
	&\hspace{4cm}= \chi_{|(\hat{\lambda},\, g'' \, h''\,)}^{-1} \big[ \big( \hat{\lambda},\, \hat{s}(g''\,)\, \hat{s}(h''\,) \big),\, \theta_{h''}(z)\, z'\, \big] \tag{\theequation}\\[4pt]
	&\hspace{4cm}= \chi_{|(\hat{\lambda},\, g''\, h''\,)}^{-1}
          \big[ \big( \hat{\lambda},\, \hat{s}(g''\, h''\,) \big),\,
          \hat{\lambda} \big( \beta(g'',h''\,) \big)^{\phi_\sfG(g''\, h''\,)}\, \theta_{h''}(z)\, z'\, \big]\\[4pt]
	&\hspace{4cm}= \big( \hat{\lambda},\, \hat{\lambda} \big(
          \beta(g'',h''\,) \big)^{\phi_\sfG(g''\, h''\,)}\, \theta_{h''}(z)\, z'\, \big)\\[4pt]
	&\hspace{4cm}= \hat{\lambda} \big( \beta(g'',h''\,)
          \big)^{\phi_\sfG(g''\, h''\,)}\, m\big( (\hat{\lambda},z)
          \otimes (R_{g''} \hat{\lambda}, z'\,) \big)\ ,
\end{align*}
as claimed.
\end{proof}

Such a trivialisation of $\hat{L}$ almost allows us to remove the
twisting line bundle from the twisted $\sfG''$-action on $E$, but the
failure of multiplicativity will be visible in a nonclosure of the
candidate maps for the lifted $\sfG''$-action.

\begin{proposition}
\label{st:Sections_of_Gqm_and_Bloch_bdl}
Let $\sfG'_\qm \cong \Pi \times \sfU(1)$ and $\hat{s} \colon \sfG'' \to \sfG_\qm$ a section of the $\sfG'_\qm$-bundle $\sfG_\qm \to \sfG''$.
Denote the $\phi_\sfG$-graded $\sfG'_\qm$-valued $2$-cocycle it defines by $\beta$, and the induced trivialisation of $\hat{L}$ by~$\chi$.
\begin{myenumerate}
	\item Composing the action morphism $\hat{\alpha}$ from~\eqref{eq:def_alpha_Bloch} with $\chi$ yields a continuous family of bundle isomorphisms
	\begin{equation}
	\label{eq:twisted action on E}
		\alpha_{\hat{s},g''} = \hat{\alpha}_{|(-,g''\,)} \circ \chi^{[\phi_\sfG(g''\,)]} \colon \big( R_{g''}^* E \big)^{[\phi_\sfG(g''\,)]} \arisom E
	\end{equation}
	parameterised by $\sfG''$, which satisfy
	\begin{equation}
		\alpha_{\hat{s},h''|\hat{\lambda}} \circ \big( R_{h''}^* \alpha_{\hat{s},g''} \big)^{[\phi_\sfG(h''\,)]}_{|\hat{\lambda}}
		=  \hat{\lambda} \big( \beta(h'',g''\,)
                \big)^{\phi_\sfG(h''\, g''\,)}\ \alpha_{\hat{s},h''\, g'' |\hat{\lambda}}
	\end{equation}
for all $g'',h'' \in \sfG''$ and $\hat\lambda\in\hat X_{\sfG_\qm'}$.
	Thus $\alpha_{\hat{s},-}$ are multiplicative up to a $\TopAb(X_\Pi, \sfU(1))$-valued $\phi_\sfG$-graded $2$-cocycle
	\begin{equation}
		\ev \circ \beta \colon \sfG'' \times \sfG''
                \longrightarrow \TopAb(X_\Pi, \sfU(1))\ , \quad
		\big(\ev \circ \beta(g'',h''\,)\big) (\hat{\lambda}) =
                \hat{\lambda}\big(\beta(g'',h''\,)\big) \ .
	\end{equation}
	
	\item If $\zeta(g'',h''\,) = 0$ for all $g'',h'' \in \sfG''$, the $2$-cocycle $\ev \circ \beta$ on $(g'',h''\,)$ is multiplication by the constant $u(g'',h''\,)$.
\end{myenumerate}
\end{proposition}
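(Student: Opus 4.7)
The plan is to reduce both claims to direct computations using (i) the compatibility of $\hat\alpha$ with $\hat\mu$ established in the course of proving Theorem~\ref{st:Equivariance_of_E_and_PT}, and (ii) the explicit failure-of-multiplicativity formula for $\chi$ from Lemma~\ref{st:sections 2-cocycles and trivialisation}.

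First I would verify that $\alpha_{\hat s, g''\,}$ defined by~\eqref{eq:twisted action on E} is a continuous family of bundle isomorphisms parameterised by $\sfG''$. Since $\chi\colon I \arisom \hat L$ is a unitary trivialisation of Hermitean line bundles, tensoring with $\One_E$ and passing to the $\phi_\sfG$-graded version yields an isomorphism $(d_0^* E)^{[\phi_\sfG]} \arisom (\hat L \otimes d_0^* E)^{[\phi_\sfG]}$, which composes with the invertible map $\hat\alpha$ from~\eqref{eq:def_alpha_Bloch} to give $\alpha_{\hat s, g''\,}$ fibrewise. Inserting the explicit formula $\chi_{|(\hat\lambda, g''\,)}(\hat\lambda, 1) = [(\hat\lambda, \hat s(g''\,)), 1]$ obtained in the proof of Lemma~\ref{st:sections 2-cocycles and trivialisation}, together with the formula $\hat\alpha([(\hat\lambda, \hat g), z] \otimes \psi) = \rho_\qm(\hat g)(z\psi)$ from~\eqref{eq:def_alpha_Bloch}, one arrives at the concrete description $\alpha_{\hat s, g''|\hat\lambda}(\psi) = \rho_\qm(\hat s(g''\,))(\psi)$, so invertibility and continuity are immediate.

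The core of the proof of~(1) is the computation of $\alpha_{\hat s, h''|\hat\lambda} \circ (R_{h''}^* \alpha_{\hat s, g''\,})^{[\phi_\sfG(h''\,)]}_{|\hat\lambda}$. Unfolding the definitions, this composition equals $\hat\alpha_{|(\hat\lambda, h''\,)} \circ (\One \otimes \hat\alpha_{|(R_{h''}\hat\lambda, g''\,)})$ precomposed with the appropriately graded tensor of two $\chi$-factors. The compatibility of $\hat\alpha$ with $\hat\mu$, which reads
\begin{equation*}
\hat\alpha_{|(\hat\lambda, h''\,)} \circ \big(\One \otimes \hat\alpha_{|(R_{h''}\hat\lambda, g''\,)}\big) = \hat\alpha_{|(\hat\lambda, h''\, g''\,)} \circ \big(\hat\mu_{|(\hat\lambda, h'', g''\,)} \otimes \One\big)\,,
\end{equation*}
collapses two applications of $\hat\alpha$ into a single one preceded by $\hat\mu$. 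Inserting Lemma~\ref{st:sections 2-cocycles and trivialisation}, which rewrites $\hat\mu \circ (\chi \otimes \chi)$ as $\chi \circ \hat\lambda(\beta(h'', g''\,))^{\phi_\sfG(h''\, g''\,)}\, m$, the composite becomes $\alpha_{\hat s, h''\, g''|\hat\lambda}$ multiplied by the scalar $\hat\lambda(\beta(h'', g''\,))^{\phi_\sfG(h''\, g''\,)}$, which is exactly the claimed identity. That $\ev \circ \beta$ is a $\phi_\sfG$-graded $2$-cocycle with values in $\TopAb(X_\Pi, \sfU(1))$ then follows by evaluating the $\phi_\sfG$-graded cocycle identity for $\beta$ from Lemma~\ref{st:sections 2-cocycles and trivialisation}(1) at an arbitrary $\hat\lambda \in \hat X_{\sfG'_\qm}$.

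The main delicacy is bookkeeping the $\phi_\sfG$-grading exponents attached to factors of $\hat L$, $E$, and scalars as $\chi$-factors are moved through the tensor decomposition; multiplicativity $\phi_\sfG(h''\,)\,\phi_\sfG(g''\,) = \phi_\sfG(h''\, g''\,)$ ensures that the two gradings $[\phi_\sfG(h''\,)]$ and $[\phi_\sfG(g''\,)]$ accumulate to the single exponent $\phi_\sfG(h''\, g''\,)$ appearing in both the Lemma and the target of $\alpha_{\hat s, h''\, g''\,}$. For part~(2), when $\zeta(g'', h''\,) = 0$ in $\Pi$ for all $g'', h'' \in \sfG''$, the cocycle value $\beta(g'', h''\,) = (0, u(g'', h''\,))$ lies entirely in the $\sfU(1)$-factor of $\sfG'_\qm \cong \Pi \times \sfU(1)$. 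By Lemma~\ref{st:QM_extensions_and_Pontryagin_duals} every quantum character satisfies $\hat\lambda((0, u)) = u$, so $\ev \circ \beta(g'', h''\,)$ is the constant function on $X_\Pi$ with value $u(g'', h''\,)$, acting on every fibre of $E$ by scalar multiplication by this constant.
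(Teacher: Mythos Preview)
Your proposal is correct, and for part~(2) it matches the paper's argument exactly. For part~(1), both you and the paper first derive the concrete formula $\alpha_{\hat s,g''|\hat\lambda}(\psi)=\rho_\qm(\hat s(g''\,))(\psi)$. The difference lies in how the composition identity is then obtained: you go back to the abstract definition and invoke the $\hat\alpha$--$\hat\mu$ compatibility from Theorem~\ref{st:Equivariance_of_E_and_PT} together with the failure-of-multiplicativity of $\chi$ from Lemma~\ref{st:sections 2-cocycles and trivialisation}, whereas the paper simply uses the concrete formula a second time and the fact that $\rho_\qm$ is a group homomorphism:
\[
\alpha_{\hat s,h''}\circ\big(R_{h''}^*\alpha_{\hat s,g''}\big)^{[\phi_\sfG(h''\,)]}(\psi)
=\rho_\qm\big(\hat s(h''\,)\,\hat s(g''\,)\big)(\psi)
=\rho_\qm\big(\hat s(h''\,g''\,)\,\beta(h'',g''\,)\big)(\psi),
\]
after which the scalar factor $\hat\lambda(\beta(h'',g''\,))^{\phi_\sfG(h''\,g''\,)}$ drops out directly. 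Your route is legitimate and more structural, but the ``main delicacy'' you flag---tracking the graded $\chi$-factors through the tensor decomposition---is precisely what the paper's shortcut sidesteps; since the $\hat\alpha$--$\hat\mu$ compatibility is itself proved in Theorem~\ref{st:Equivariance_of_E_and_PT} using the homomorphism property of $\rho_\qm$, you are in effect re-encoding that property at the line-bundle level and then decoding it again.
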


\begin{proof}
Let $\psi \in E_{|R_{g''} \hat{\lambda}}$.
Then
\begin{equation}
	\alpha_{\hat{s},g''} (\psi)
	= \hat{\alpha} \big( \big[ (\hat{\lambda},\, \hat{s}(g''\,)),\, 1 \big] \otimes \psi \big)
	= \rho_\qm\big(\hat{s}(g''\,)\big) (\psi)\ .
\end{equation}
Thus for $\psi \in E_{|R_{g''\, h''}\hat{\lambda}}$ we find
\begin{align}
	\alpha_{\hat{s},h''|\hat{\lambda}} \circ \big( R_{h''}^* \alpha_{\hat{s},g''} \big)^{[\phi_\sfG(h''\,)]}_{|\hat{\lambda}} (\psi)
	&= \rho_\qm \big( \hat{s}(h''\,)\, \hat{s}(g''\,) \big) (\psi)\nonumber\\[4pt]
	&= \rho_\qm \big( \hat{s}(h''\, g''\,)\, \beta(h'',g''\,) \big) (\psi)\\[4pt]
	&= \hat{\lambda} \big( \beta(h'',g''\,) \big)^{\phi_\sfG(h''\,
          g''\,)}\, \rho_\qm \big( \hat{s}(h''\, g''\,) \big) (\psi)\nonumber\\[4pt]
	&= \hat{\lambda} \big( \beta(h'',g''\,) \big)^{\phi_\sfG(h''\,
          g''\,)}\, \alpha_{\hat{s},h''\, g'' |\hat{\lambda}} (\psi)\
          . \nonumber
\end{align}
This proves (1).

Statement (2) is seen from
\begin{equation}
	\hat{\lambda}\big(\beta(g'',h''\,)\big) =
        \hat{\lambda}\big(\zeta(g'',h''\,),\, u(g'',h''\,)\big) =
        u(g'',h''\,)\, \lambda\big(\zeta(g'',h''\,)\big)\ .
\end{equation}
Assuming that $\zeta$ is trivial then yields the statement.
\end{proof}

\begin{corollary}
If $\hat s \colon \sfG'' \to \sfG_\qm$ is a group homomorphism, i.e. a
splitting of the short exact sequence $\sfG'_\qm \to \sfG_\qm \to
\sfG''$ of groups, then $\chi \colon I \to \hat L$ is compatible with
the graded multiplicative structures and $E$ is a graded
$\sfG''$-equivariant bundle on $X_\Pi$ with the lift of the $\sfG''$-action to $E$ given by $g'' \mapsto \alpha_{\hat{s},g''}$.
\end{corollary}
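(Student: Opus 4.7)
The plan is to show that the assumption that $\hat s$ is a group homomorphism forces the obstructing $2$-cocycle $\beta = (\zeta, u)$ from Lemma~\ref{st:sections 2-cocycles and trivialisation} to be trivial, at which point both assertions of the corollary reduce to direct substitutions into the results already at hand.

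First I would exploit the defining relation $\hat s(g''\,)\,\hat s(h''\,) = \hat s(g''\,h''\,)\,\beta(g'',h''\,)$ from Lemma~\ref{st:sections 2-cocycles and trivialisation}(1). Since $\hat s$ is now a group homomorphism, the left-hand side equals $\hat s(g''\,h''\,)$, forcing $\beta(g'',h''\,) = 1_{\sfG'_\qm}$ for all $g'',h'' \in \sfG''$, and hence both $\zeta$ and $u$ are trivial. Substituting $\beta \equiv 1$ into part~(2) of the same lemma collapses the displayed equation to
\begin{equation}
\chi_{|(\hat\lambda,\, g''\, h''\,)}^{-1} \circ \hat\mu \circ \bigl( \chi_{|(\hat\lambda,\, g''\,)} \otimes \chi_{|(R_{g''} \hat\lambda,\, h''\,)} \bigr) = m\ ,
\end{equation}
which is exactly the statement that $\chi \colon I \to \hat L$ intertwines the $\phi_\sfG$-graded multiplicative structures $m$ and $\hat\mu$.

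Next I would apply Proposition~\ref{st:Sections_of_Gqm_and_Bloch_bdl}(1) with $\beta \equiv 1$, so that the twisting cocycle $\ev \circ \beta$ becomes identically $1$. The remaining identity
\begin{equation}
\alpha_{\hat s, h''|\hat\lambda} \circ \bigl( R_{h''}^* \alpha_{\hat s, g''} \bigr)^{[\phi_\sfG(h''\,)]}_{|\hat\lambda} = \alpha_{\hat s, h''\, g''|\hat\lambda}
\end{equation}
is then precisely the cocycle condition for a $\phi_\sfG$-twisted lift of the right $\sfG''$-action on $X_\Pi$ to $E$; the normalisation $\alpha_{\hat s, 1_{\sfG''}} = \mathrm{id}_E$ follows from $\hat s(1_{\sfG''}\,) = 1_{\sfG_\qm}$ together with $\rho_\qm(1_{\sfG_\qm}\,) = \One$. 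Compatibility of this lift with the $\ZZ$-grading $E = E^+ \oplus E^-$ is automatic: each $\alpha_{\hat s, g''} = \rho_\qm(\hat s(g''\,))$ (anti)commutes with the fibrewise Bloch Hamiltonians $H_{|\lambda}$ according to $\sfc_\sfG(g''\,) = \pm 1$ by Definition~\ref{def:Aut_qm}, and therefore either preserves or interchanges the spectral subbundles $E^\pm$ in exactly the manner already built into the notion of a graded quantum representation.

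I do not anticipate any genuine obstacle here, since all of the substantive content is contained in Lemma~\ref{st:sections 2-cocycles and trivialisation} and Proposition~\ref{st:Sections_of_Gqm_and_Bloch_bdl}; the corollary merely records that the obstructions isolated there vanish precisely when $\hat s$ is a group homomorphism. The only minor point requiring care is keeping track of the two gradings $\phi_\sfG$ and $\sfc_\sfG$ simultaneously and checking that the multiplication $m$ on the trivial line bundle $I$ is indeed the $\phi_\sfG$-graded one, so that the intertwining by $\chi$ yields a genuine isomorphism in the appropriate graded simplicial category.
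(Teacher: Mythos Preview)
Your proposal is correct and follows exactly the intended route: the paper states this corollary without proof, treating it as an immediate consequence of Lemma~\ref{st:sections 2-cocycles and trivialisation} and Proposition~\ref{st:Sections_of_Gqm_and_Bloch_bdl}, and you have spelled out precisely those substitutions. The only remark is that your discussion of compatibility with the $\ZZ$-grading, while correct, goes slightly beyond what the paper requires here, since that compatibility was already established in the discussion following Theorem~\ref{st:Equivariance_of_E_and_PT}.
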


In the case where the $2$-cocycle $\ev \circ \beta$ does not depend on $\lambda$, i.e. for every $g'', h'' \in \sfG''$ the function $\ev (\beta(g'', h''\,)) \colon X_\Pi \to \sfU(1)$ is constant, we can use averaging to obtain connections on $E$ that are compatible with the family $\alpha_{\hat{s},g''}$.

\begin{proposition}
\label{st:averaged_connections}
In the setting of
Proposition~\ref{st:Sections_of_Gqm_and_Bloch_bdl}~(2) and the
assumption~(A1) from Section~\ref{sect:Band_insulators}, we can average any connection on $E$ over $\sfG''$ to obtain a connection which is invariant under all $\alpha_{\hat{s},g''}$: for any connection $\nabla^E$ on $E$, set
\begin{equation}
	\av(\nabla^E) \coloneqq \int_{\sfG''}\, \alpha_{\hat{s},g''} \circ (R_{g''}^* \nabla^E)^{[\phi_\sfG(g''\,)]} \circ \alpha_{\hat{s},g''}^{-1}\ \dd \nu_{\sfG''}(g''\,)\ ,
\end{equation}
where $\nu_{\sfG''}$ is the normalised Haar measure on $\sfG''$ and $\nabla^{E[-1]} = \nabla^{E^{[-1]}}$ is the connection induced on $\overline{E}$ by $\nabla^E$.
Then
\begin{equation}
\label{eq:2-cocycle-twisted_equivariance_of_connection}
	\alpha_{\hat{s},g''} \circ \big( R_{g''}^*\, \av(\nabla^E) \big)^{[\phi_\sfG(g''\,)]} \circ \alpha_{\hat{s},g''}^{-1} = \av(\nabla^E)
\end{equation}
for all $g'' \in \sfG''$. In particular, in this setting there exist connections on $E$ and $E^-$ satisfying~\eqref{eq:2-cocycle-twisted_equivariance_of_connection}.
\end{proposition}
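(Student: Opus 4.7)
The plan is first to verify that $\av(\nabla^E)$ is a well-defined connection on $E$, and then to derive the invariance~\eqref{eq:2-cocycle-twisted_equivariance_of_connection} by exploiting the fact that under the hypothesis $\zeta \equiv 0$ of Proposition~\ref{st:Sections_of_Gqm_and_Bloch_bdl}~(2), the failure of multiplicativity of $\alpha_{\hat s,-}$ is by a scalar $\sfU(1)$-valued $2$-cocycle, which drops out of any conjugation of connections.

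For each $g'' \in \sfG''$ the operator $\alpha_{\hat{s},g''} \circ (R_{g''}^* \nabla^E)^{[\phi_\sfG(g''\,)]} \circ \alpha_{\hat{s},g''}^{-1}$ is a connection on $E$: pullback along the diffeomorphism $R_{g''}$ of $X_\Pi$ sends connections to connections, the graded operation $(-)^{[\phi_\sfG(g''\,)]}$ sends a connection on $R_{g''}^* E$ to a connection on $(R_{g''}^*E)^{[\phi_\sfG(g''\,)]}$, and conjugation by the bundle isomorphism $\alpha_{\hat{s},g''}$ transports this to a connection on $E$. Since the space of connections on $E$ is an affine space modelled on $\Omega^1(X_\Pi, \End(E))$ and $\sfG''$ is a compact Lie group by (A1), the normalised Haar integral of this continuous family of connections yields a well-defined connection $\av(\nabla^E)$ on~$E$.

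To verify~\eqref{eq:2-cocycle-twisted_equivariance_of_connection}, fix $h'' \in \sfG''$ and commute the conjugation past the integral (which is legitimate as these are continuous linear operations on the affine space of connections) to obtain
\begin{equation}
\int_{\sfG''}\, \bigl[\alpha_{\hat{s},h''} \circ (R_{h''}^* \alpha_{\hat{s},g''})^{[\phi_\sfG(h''\,)]}\bigr] \circ \bigl(R_{h''\, g''}^* \nabla^E\bigr)^{[\phi_\sfG(h''\, g''\,)]} \circ \bigl[\alpha_{\hat{s},h''} \circ (R_{h''}^* \alpha_{\hat{s},g''})^{[\phi_\sfG(h''\,)]}\bigr]^{-1}\, \dd\nu_{\sfG''}(g''\,)\ .
\end{equation}
By Proposition~\ref{st:Sections_of_Gqm_and_Bloch_bdl}~(1) combined with (2), the bracketed composition equals $u(h'',g''\,)^{\phi_\sfG(h''\, g''\,)}\,\alpha_{\hat{s},h''\, g''}$, where $u(h'',g''\,) \in \sfU(1)$ is a constant scalar on $X_\Pi$. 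Since conjugation of any operator by a constant scalar in $\sfU(1)$ is trivial (even in the graded setting, where a complex conjugation turns $u$ into $\overline u$ on the left and $\overline{u^{-1}}$ on the right, again cancelling), the integrand simplifies to $\alpha_{\hat{s},h''\, g''} \circ (R_{h''\, g''}^* \nabla^E)^{[\phi_\sfG(h''\, g''\,)]} \circ \alpha_{\hat{s},h''\, g''}^{-1}$. The left-invariance of the Haar measure under the substitution $g'' \mapsto h''^{-1}\, k''$ then identifies the integral with $\av(\nabla^E)$, proving the invariance.

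For the Bloch subbundle $E^- \subset E$, one observes that $\alpha_{\hat{s},g''}$ arises from $\rho_\qm(\hat{s}(g''\,))$ and hence intertwines the fibrewise Bloch Hamiltonians up to the sign $\sfc_\sfG(g''\,)$, so it either preserves the $\ZZ$-grading $E = E^+\oplus E^-$ or exchanges the summands. Applying the same averaging procedure directly to any connection on $E^-$, with the restrictions (respectively compositions with the swap) of $\alpha_{\hat{s},g''}$, or equivalently projecting the averaged connection on $E$ along $\pr_{E^-} \circ - \circ \imath_{E^-}$ when the grading is preserved, then yields the required invariant connection on $E^-$. The main technical subtlety is to carefully track the graded conjugations $(-)^{[\phi_\sfG(g''\,)]}$ and the subsequent cancellation of the scalar $2$-cocycle $u$; the assumption $\zeta \equiv 0$ is exactly what makes the obstruction constant on $X_\Pi$, so that it vanishes under conjugation and permits the group-averaging argument to go through.
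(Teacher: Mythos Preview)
Your proof is correct and follows essentially the same approach as the paper's: expand the conjugated average, apply the cocycle relation from Proposition~\ref{st:Sections_of_Gqm_and_Bloch_bdl} to reduce the composed isomorphisms to $\alpha_{\hat s,\,h''g''}$ times a constant scalar in $\sfU(1)$ that cancels under conjugation, and then invoke invariance of the Haar measure. Your additional remarks on well-definedness of the average (the affine-space argument) and on the $E^-$ case are sound and slightly more explicit than the paper's own treatment.
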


\begin{proof}
We compute
\addtocounter{equation}{1}
\begin{align*}
	&\alpha_{\hat{s},g''} \circ \big( R_{g''}^*\, \av(\nabla^E) \big)^{[\phi_\sfG(g''\,)]} \circ \alpha_{\hat{s},g''}^{-1}\\
	&= \int_{\sfG''}\, \alpha_{\hat{s},g''} \circ \Big(
          R_{g''}^*\big( \alpha_{\hat{s},h''} \circ (R_{h''}^*
          \nabla^E)^{[\phi_\sfG(h''\,)]} \circ
          \alpha_{\hat{s},h''}^{-1} \big) \Big)^{[\phi_\sfG(g''\,)]} \circ \alpha_{\hat{s},g''}^{-1}\ \dd \nu_{\sfG''}(h''\,) \tag{\theequation}\\[4pt]
	&= \int_{\sfG''}\, \alpha_{\hat{s},g''} \circ \big(
          R_{g''}^*\, \alpha_{\hat{s},h''} \big)^{[\phi_\sfG(g''\,)]}
          \circ (R_{g''\, h''}^* \nabla^E)^{[\phi_\sfG(g''\, h''\,)]} \circ \big( R_{g''}^*\, \alpha_{\hat{s},h''}^{-1} \big)^{[\phi_\sfG(g''\,)]} \circ \alpha_{\hat{s},g''}^{-1}\ \dd \nu_{\sfG''}(h''\,)\\[4pt]
	&= \int_{\sfG''}\, u(g'',h''\,)^{\phi_\sfG(g''\, h'')}\
          \alpha_{\hat{s},g''\, h''}\circ (R_{g''\, h''}^*
          \nabla^E)^{[\phi_\sfG(g''\, h''\,)]}\circ
          \alpha_{\hat{s},g''\, h''}^{-1}\
          u(g'',h''\,)^{-\phi_\sfG(g''\, h''\,)}\ \dd \nu_{\sfG''}(h''\,)\\[4pt]
	&= \av(\nabla^E)\ .
\end{align*}
Here it was crucial that $u(g'',h''\,)$ does not depend on $\lambda$.
That is, we need $\zeta$ to be trivial in order to obtain proper invariance of the connection from averaging.
In the last equality we have used the invariance of the Haar measure.
As there always exists a connection on $E$ as constructed in Theorem~\ref{st:Equivariance_of_E_and_PT}, we can always use its average over $\sfG''$.
The same argument then applies to the Berry connection $\nabla^{E^-}$
on the Bloch bundle $E^-$ from~\eqref{eq:Berry_connection}, and the
result thus follows.
\end{proof}

\subsection{Time-reversal symmetric quantum phases}
\label{sect:pure_time_reversal_in_TEM}

We now turn to an explicit example of a quantum particle in a crystal with a single time-reversal symmetry.
The situation treated here will be the framework for the analysis in Part~\ref{part:KM and localisation}, and this section forms the transition from Part~\ref{part:TPMs} to Part~\ref{part:KM and localisation}.
Consider a crystal $C \subset M = \FR^{d+1}$ with space translation symmetries $\Pi = \RZ^d$.
A Hamiltonian $H$ given by a Schr\"odinger operator on $\FR^d$ is
unbounded from above.
Then, if such a system gives rise to a topological insulator as in Definition~\ref{def:Band_Insulator_Bloch_bundle}, it will be of type I.
Consequently, there can be no chiral time-reversing symmetries
$\hat{T} \colon \scH \to \scH$ by
Proposition~\ref{st:time-reversal_and_spectrum_of_H}.\footnote{Chiral
  symmetries can be obtained by combining time-reversal with
  particle-hole symmetries. These latter
  symmetries are relevant to topological superconductors, which we do
not consider in this paper.}
Every time-reversing symmetry $\hat T$ on quantum systems of this type must be implemented as antiunitary operators; that is, $\phi_{\scH}(\hat{T}) = -1$ in the notation of Section~\ref{sect:QM_symmetries}.
There is a small but important observation about such symmetries, which is part of the more extensive result of~\cite[Lemma~6.17]{Freed-Moore--TEM}.

\begin{lemma}
\label{st:Quantum_extensions_of_ZZ}
Let $T \in \Aut_\qm(\PP\scH,H)$ be a projective quantum symmetry with $\phi_\scH(T) = -1$ and $T^2 = 1_{\PP\scH}$.
Then any lift $\hat{T} \in \Aut_\qm(\scH,H)$ of $T$ to a quantum symmetry automatically satisfies $\hat{T}^2 = \pm\, \One$.
\end{lemma}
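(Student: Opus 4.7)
My plan is to exploit antilinearity of the lift $\hat T$ together with Wigner's theorem to reduce the possible values of $\hat T^2$ to the real unit circle.

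First, since $\hat T \in \Aut_\qm(\scH,H)$ is a lift of $T$ and $T^2 = 1_{\PP\scH}$, the operator $\hat T^2$ projects to the identity in $\Aut_\qm(\PP\scH,H)$. By exactness of the Wigner sequence~\eqref{diag:Wigner_extension}, this forces $\hat T^2 = \lambda\, \One$ for some $\lambda \in \sfU(1)$. The task is to show $\lambda \in \{+1,-1\}$.

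Next, I would use the key fact that $\phi_\scH(T) = -1$, so $\hat T$ is antiunitary. Explicitly, this means $\hat T \circ (z\cdot \One) = (\bar z \cdot \One) \circ \hat T$ for every $z \in \sfU(1)$, which is the twisted commutation relation underlying the non-centrality of the Wigner extension. Now I compute $\hat T^3$ in two different ways:
\begin{equation}
\hat T^3 = \hat T \cdot \hat T^2 = \hat T \cdot (\lambda\, \One) = (\bar\lambda\,\One)\cdot \hat T = \bar\lambda\, \hat T\ ,
\end{equation}
and
\begin{equation}
\hat T^3 = \hat T^2 \cdot \hat T = (\lambda\, \One)\cdot \hat T = \lambda\, \hat T\ .
\end{equation}
Comparing these and using invertibility of $\hat T$ yields $\lambda = \bar \lambda$, so $\lambda \in \FR \cap \sfU(1) = \{+1,-1\}$, proving $\hat T^2 = \pm\,\One$ as claimed.

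There is essentially no obstacle here beyond unwinding the definitions: the whole proof is a two-line calculation once one observes that antilinearity of $\hat T$ conjugates scalars. It would be worth remarking, as a sanity check, that any two lifts $\hat T$ and $\hat T' = z\, \hat T$ (with $z\in\sfU(1)$) satisfy $(\hat T'\,)^2 = z\, \bar z\, \hat T^2 = \hat T^2$, so the sign $\lambda$ is independent of the choice of lift and is a genuine invariant of $T$. This invariant distinguishes the two quantum extensions of $\ZZ$ referred to immediately before the lemma and sets the stage for the subsequent discussion of time-reversal symmetric phases.
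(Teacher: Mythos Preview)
Your proof is correct and follows essentially the same approach as the paper: both arguments first use Wigner's sequence to write $\hat T^2 = \lambda\,\One$ and then exploit antilinearity of $\hat T$ to force $\lambda = \bar\lambda$ (the paper phrases this as $\hat T = \hat T^2\,\hat T\,\hat T^{-2} = \lambda^2\,\hat T$, which is the same computation). Your closing observation that the sign is independent of the choice of lift is a useful addition not present in the paper's proof.
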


\begin{proof}
We have $\hat{T}^2 = u\, \One$ for some $u \in \sfU(1)$ since
$\hat{T}^2$ is a lift of $1_{\PP\scH}$. Using the antilinearity of
$\hat{T}$, we then have $\hat{T} = \hat{T}^2\, \hat{T}\, \hat{T}^{-2} = u\, \hat{T}\, u^{-1} = u^2\, \hat{T}$, whence $u = \pm\, 1$.
\end{proof}

We wish to consider the situation $\sfG'' = \ZZ$.
Time-reversal is supposed to have no effect on the space configuration, whence our symmetry groups fit into the short exact sequence
\begin{equation}
	\xymatrix{
		1 \ar@{->}[r] & \RZ^d \ar@{->}[r] & \RZ^d \times \ZZ
                \ar@{->}[r] & \ZZ \ar@{->}[r] & 1\ .
	}
\end{equation}
As before we assume that the quantum extension $\sfU(1) \rightarrow \sfG_\qm \rightarrow \sfG$, where now $\sfG = \RZ^d \times \ZZ$, is trivial over $\sfG' = \Pi = \RZ^d$.
Thus so far we have
\begin{equation}
	\sfG_\qm = \big( \RZ^d \times \sfU(1) \big) \times (\ZZ)_\qm\ ,
\end{equation}
where $(\ZZ)_\qm$ denotes some quantum extension of $\ZZ$.
Up to isomorphism there are only two such extensions by Lemma~\ref{st:Quantum_extensions_of_ZZ} given by $\hat{T}^2 = \pm\, 1$, where $T = -1$ is the generator of $\ZZ$ and $\hat{T}$ is any lift of $T$ to $(\ZZ)_\qm$.%
\footnote{There is an unfortunate clash of notation here: if $\hat{T}$ lifts $-1 \in \ZZ$ to $(\ZZ)_\qm$, what we actually mean by $\hat{T}^2 = -1$ is that $\hat{T}^2 = \iota_{\ZZ}(-1)$, i.e. the image of $-1 \in \sfU(1)$ under the inclusion $\iota_{\ZZ}$ of the quantum extension $((\ZZ)_\qm, \iota_{\ZZ},q_{\ZZ})$.}
Any such choice of lift is a section of the extension $p \colon (\ZZ)_\qm \rightarrow \ZZ$, i.e.~it gives rise to a map $s \colon \ZZ \to (\ZZ)_\qm$ such that $p \circ s = 1_{\ZZ}$; however, this does not induce a splitting of the extension since $\hat{T}^2 = -1$ does not agree with the lift of $T^2 = 1$.
Nevertheless, a choice of $\hat{T}$ allows us to decompose the group $(\ZZ)_\qm$ as follows: given $\hat{T}\in \Aut_\qm(\scH,H)$ and $u,v \in \sfU(1)$, we have the
operator products
\begin{equation}
\begin{aligned}
	&(\hat{T} \, u)\,(\hat{T} \, v) = \pm\, \One\, \overline{u} \,
        v\ ,\\[4pt]
	&(\hat{T} \, u)\,(\One \, v) = \hat{T}\, u \, v\ ,\\[4pt]
	&(\One \, u)\,(\hat{T} \, v) = \hat{T}\, \overline{u} \, v\ ,\\[4pt]
	&(\One \, u)\,(\One \, v) = \One\, u \, v\ .
\end{aligned}
\end{equation}
Thus $(\ZZ)_\qm \cong \sfU(1) \times \sfU(1)$ as topological spaces,
but with the product $\boldsymbol\cdot$ given as follows:
Write $(\ZZ)_\qm \cong \sfU(1)_+ \times \sfU(1)_-$.
This is a $\ZZ$-graded topological group with grading $\phi_{\ZZ}(u) = \pm\, 1$ if and only if $u \in \sfU(1)_\pm$, and
\begin{equation}
	u \boldsymbol\cdot v =%
	\begin{cases}
		\ \pm\, \overline{u} \, v \ , & u, v \in \sfU(1)_- \ , \\
		\ u \, v \ , & u \in \sfU(1)_- \ ,\ v \in \sfU(1)_+ \ , \\
		\ \overline{u} \, v \ , & u \in \sfU(1)_+ \ ,\ v \in
                \sfU(1)_- \ , \\
		\ u \, v \ , & u, v \in \sfU(1)_+\ ,
	\end{cases}
\end{equation}
where the grading of the product is given by the product of the gradings.

For the action of $\sfG'' = \ZZ$ on the $d$-dimensional Brillouin torus $X_\Pi =
X_{\RZ^d} = \TT^d$, from~\eqref{eq:Brillouin_action} we get
\begin{equation}
\label{eq:action_of_time-reversal_on_Brillouin_torus}
	\tau(\lambda) \coloneqq R_{-1} \lambda =
        \lambda^{\phi_{\ZZ}(-1)} \circ \alpha(-1) = \lambda^{-1}\ .
\end{equation}
This is precisely what we expect physically from the interpretation of $X_\Pi \cong V^*/\Pi^*$ as periodic momenta of particles in $M_\rms$:
a pure time-reversal should only reflect momenta
$k \in V^*$, $k \mapsto -k$, so that in $X_\Pi$ we have $\lambda =
\exp(2\pi\,\iu\, \<k,-\>) \mapsto \exp(2\pi\,\iu\, \<-k,-\>) =
\lambda^{-1}$, while $-1 \in \ZZ$ acts as the identity on $M_\rms$.

We can extend the section $s \colon \ZZ \to (\ZZ)_\qm$ induced by a choice of lift $\hat{T}$ of $-1 \in \ZZ$ to a section $\hat{s} \colon \ZZ \to (\RZ^d \times \sfU(1)) \times (\ZZ)_\qm$ via
\begin{equation}
	\hat{s}(1) = (0,1,1) \qquad \mbox{and} \qquad \hat{s}(-1) = (0,1,\hat{T})\ .
\end{equation}
In the notation of Section~\ref{sect:splittings_and_equivariance} its $2$-cocycle $\beta$ is trivial if $\hat T^2=+1$, whereas if $\hat
T^2=-1$, which we henceforth consider, it is given by
\begin{equation}
\label{eq:2-cocycle_for_time-reversal}
	\beta(a_0, a_1) =
	\begin{cases}
		\ (0,1,1)\ , & a_0 \neq -1\ \text{ or }\ a_1 \neq -1 \ , \\
		\ (0,1,-1)\ , & a_0 = a_1 = -1\ .
	\end{cases}
\end{equation}
Thus \eqref{eq:twisted action on E} yields a bundle isomorphism $\alpha_{\hat{s},-1} \colon \overline{\tau^*E}  \arisom E$, while the isomorphism $\alpha_{\hat{s},1}$ is the identity on $E$.
From~\eqref{eq:2-cocycle_for_time-reversal} and Proposition~\ref{st:Sections_of_Gqm_and_Bloch_bdl} we see that $\zeta$, the $\Pi$-valued part of $\beta$, is trivial while the only nontrivial composition of isomorphisms is
\begin{equation}
	\alpha_{\hat{s},-1} \circ \overline{\tau^* \alpha_{\hat{s},-1}} = - \One\ .
\end{equation}
As all bundle morphisms here are compatible with the restriction to the Bloch bundle $E^-$, we obtain an isomorphism
\begin{equation}
	\hat{\tau} := \alpha_{\hat{s},-1|E^-} \colon \overline{\tau^*E^-} \arisom E^- \qquad \text{with} \quad
	\hat{\tau} \circ \overline{\tau^* \hat{\tau}} = - \One\ .
\end{equation}

Let $(X,\tau)$ be an involutive manifold, i.e.~a manifold $X$ with an automorphism $\tau$ such that $\tau^2 = 1_X$. Involutive manifolds are objects in
a category $\Mfd_{\ZZ}$ whose morphisms $f\in\Mfd_{\ZZ}\big((X,\tau)\,,\,
(X',\tau'\,) \big)$ are
equivariant smooth maps, i.e. $f\circ \tau=\tau'\circ f$; we will simply write
$X$ for the pair $(X,\tau)$ when there is no risk of
confusion. Recall~\cite{AtiyahReal} that a Hermitean vector bundle on an
involutive manifold $(X,\tau)$ is \emph{Real} if the involution $\tau$ lifts to an antiunitary involution on the total space. Because of the minus sign occuring above, which stems entirely from the nontriviality of the quantum extension of $\ZZ$, we need a slightly different notion~\cite{Dupont}.

\begin{definition}[Quaternionic vector bundle]
\label{def:Quaternionic_vector_bundle}
Let $(X,\tau)\in\Mfd_{\ZZ}$. A \emph{Quaternionic vector bundle} on $X$ is a Hermitean vector bundle $E \in \HVBdl(X)$ together with an isomorphism $\hat{\tau} \colon \overline{\tau^* E} \arisom E$ of Hermitean vector bundles satisfying $\hat{\tau} \circ \overline{\tau^* \hat{\tau}} = -\One$.%
\footnote{Note that Quaternionic vector bundles are related to quaternionic vector bundles (whose fibres are quaternionic modules) just as Real vector bundles are related to real vector bundles, i.e. by taking $\tau$ to be the identity map.}
A \emph{Quaternionic vector bundle with connection on $(X,\tau)$} is a hermitean vector bundle $E$ with connection $\nabla^E$ and a connection-preserving isomorphism $\hat{\tau} \colon \overline{\tau^* E} \arisom E$.

A \emph{morphism of Quaternionic vector bundles} $\varphi \colon (E, \hat{\tau}_E) \to (F,\hat{\tau}_F)$ is a morphism $\varphi \colon E \to F$ of Hermitean vector bundles such that $\hat{\tau}_F \circ \overline{\tau^*\varphi} = \varphi \circ \hat{\tau}_E$.
This defines a category $\QVBdl(X,\tau)$ of Quaternionic vector bundles on $X$.
A \emph{morphism of Quaternionic vector bundles with connection} is a morphism of the underlying Quaternionic vector bundles that preserves the connections.
This defines a category $\QVBdl^\nabla(X,\tau)$.
\end{definition}

\begin{remark}
If the isomorphism $\hat\tau$ in Definition~\ref{def:Quaternionic_vector_bundle} instead satisfies $\hat{\tau} \circ \overline{\tau^* \hat{\tau}} = +\One$, then the pair $(E,\hat\tau)$
defines a Real vector bundle over $(X, \tau)$.
\qen
\end{remark}

Hence $(E^-,\hat{\tau})$ is a Quaternionic vector bundle over the involutive manifold $(X_\Pi,\tau)$.
The Berry connection $\nabla^{E^-}$ is not generally compatible with
this structure, but by Proposition~\ref{st:averaged_connections} there
always exists a Quaternionic connection on $E^-$, such as the average of $\nabla^{E^-}$.
Analogously, if $(\ZZ)_\qm$ is the trivial extension, then the same constructions yield Real structures instead of Quaternionic ones.
The results of this section specialised to the case of time-reversal symmetry can be summarised as

\begin{proposition}
\label{prop:TsymTIs}
Let $\sfG_\qm = (\RZ^d \times \sfU(1)) \times (\ZZ)_\qm$.
\begin{myenumerate}
	\item
	If $(\ZZ)_\qm$ is the trivial quantum extension of $\ZZ$, then $(E^-,\nabla^{E^-})$ has the structure of a Real vector bundle with connection over $X_\Pi$.
	
	\item
	If $(\ZZ)_\qm$ is the nontrivial quantum extension of $\ZZ$,
        then $E^-$ has the structure of a Quaternionic vector bundle
        over $X_\Pi$. There exists a connection $\nabla$ on $E^-$ such that $(E^-,\nabla)$ is a Quaternionic vector bundle with connection on $X_\Pi$.
\end{myenumerate}
\end{proposition}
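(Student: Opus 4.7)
The proof will essentially collect and specialise the constructions already prepared in Sections~\ref{sect:splittings_and_equivariance} and~\ref{sect:pure_time_reversal_in_TEM} for the two possible quantum extensions of $\ZZ$. In both cases we start from the section $\hat{s}\colon \ZZ \to \sfG_\qm$ determined by a choice of lift $\hat{T}\in (\ZZ)_\qm$ of $-1\in\ZZ$, and compute its graded $2$-cocycle $\beta = (\zeta,u)$ as in Lemma~\ref{st:sections 2-cocycles and trivialisation}. Since the $\Pi$-component of $\hat{s}(\pm 1)$ is zero, $\zeta$ is trivial, while by Lemma~\ref{st:Quantum_extensions_of_ZZ} the $\sfU(1)$-component $u$ is determined by $\hat{T}^2=\pm\One$: in the trivial extension we obtain $u\equiv 1$, and in the nontrivial extension we obtain $u(-1,-1)=-1$ with $u$ otherwise trivial, as in~\eqref{eq:2-cocycle_for_time-reversal}.

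For part~(1), the triviality of $\beta$ makes $\hat{s}$ a homomorphism of groups, so by the Corollary to Proposition~\ref{st:Sections_of_Gqm_and_Bloch_bdl} the bundle $E$ is genuinely $\ZZ$-equivariant with lift $\alpha_{\hat{s},-1}\colon \overline{\tau^*E}\arisom E$, and this isomorphism satisfies $\alpha_{\hat{s},-1}\circ\overline{\tau^*\alpha_{\hat{s},-1}}=+\One$ thanks to $\hat{T}^2=+\One$. Restricting to the Bloch sub-bundle $E^-$, which is preserved by $\alpha_{\hat{s},-1}$ because the action commutes with the Bloch Hamiltonians, this defines the Real structure $\hat{\tau}$ on $E^-$. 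To see that $\nabla^{E^-}$ itself is compatible, the plan is to argue at the level of the ambient connection $\nabla^E$: the explicit parallel transport~\eqref{eq:PT_on_E} intertwines the $\ZZ$-action along paths $\lambda_t\mapsto \tau(\lambda_t)=\lambda_t^{-1}$ (using $k\mapsto -k$ and antilinearity), so $\nabla^E$ is a Real connection on $E$; then the Berry connection $\nabla^{E^-}=\pr_{E^-}\circ \nabla^E\circ \imath_{E^-}$ inherits this property since $\imath_{E^-}$ and $\pr_{E^-}$ commute with $\hat\tau$.

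For part~(2), the nontriviality of $\beta$ gives $\alpha_{\hat{s},-1}\circ \overline{\tau^*\alpha_{\hat{s},-1}} = u(-1,-1)\cdot \One = -\One$ by Proposition~\ref{st:Sections_of_Gqm_and_Bloch_bdl}(2), so restricting to $E^-$ we obtain the Quaternionic structure $\hat{\tau}$. In general $\nabla^{E^-}$ need not satisfy $\alpha_{\hat{s},-1}\circ (\tau^*\nabla^{E^-})^{[-1]}\circ \alpha_{\hat{s},-1}^{-1}=\nabla^{E^-}$ because the Grassmann construction mixes with the explicit form of the Bloch projector. However, since $\zeta$ is trivial the hypothesis of Proposition~\ref{st:averaged_connections} is met, and the assumption~(A1) is trivially satisfied for the compact group $\sfG''=\ZZ$. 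The plan is therefore to set $\nabla \coloneqq \av(\nabla^{E^-})$, an honest connection on $E^-$, which by~\eqref{eq:2-cocycle-twisted_equivariance_of_connection} is intertwined by $\alpha_{\hat{s},-1}$ and hence by $\hat{\tau}$, so that $(E^-,\nabla)$ is a Quaternionic vector bundle with connection.

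The only step requiring care is the assertion in~(1) that $\nabla^{E^-}$ is Real without averaging; the main obstacle is to verify the invariance of the parallel transport~\eqref{eq:PT_on_E} under the joint action of $\tau$ and complex conjugation of sections, which amounts to checking that $\exp(-2\pi\,\iu\,\langle k_t-k_0,x-x_0\rangle)$ is mapped to its complex conjugate when both $k_t$ and $k_0$ are negated, and to verifying that the restriction to $E^-$ commutes with $\hat\tau$ because $\hat\tau$ preserves the spectral decomposition of $H$. With these checks in hand, the two parts of the proposition follow directly from the machinery already established.
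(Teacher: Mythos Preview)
Your proposal correctly identifies that this proposition is a summary of the preceding constructions, and for case~(2) you follow the paper exactly: the $2$-cocycle computation~\eqref{eq:2-cocycle_for_time-reversal} together with Proposition~\ref{st:Sections_of_Gqm_and_Bloch_bdl} gives the Quaternionic structure, and Proposition~\ref{st:averaged_connections} supplies a compatible connection by averaging. The paper's own justification for case~(1) is simply the sentence ``Analogously, if $(\ZZ)_\qm$ is the trivial extension, then the same constructions yield Real structures instead of Quaternionic ones,'' so your reconstruction is faithful.

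There is, however, a tension in your treatment of the connection that you should resolve. For case~(1) you go beyond the paper and argue directly that the Berry connection $\nabla^{E^-}$ is Real-compatible, by checking that the explicit parallel transport~\eqref{eq:PT_on_E} intertwines with $\hat\tau$ (the phase conjugates under $k\mapsto -k$) and that the spectral projections commute with $\hat\tau$. But none of these ingredients depends on whether $\hat T^2=+\One$ or $\hat T^2=-\One$; the sign only enters the squaring relation for $\hat\tau$, not the compatibility with parallel transport. So if your argument is valid in case~(1), it is equally valid in case~(2) and would show that the Berry connection is already Quaternionic-compatible there, contradicting your own remark that ``the Grassmann construction mixes with the explicit form of the Bloch projector'' and making the averaging step unnecessary. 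That remark is not a genuine obstruction in this concrete model. You should either (a) accept that in the specific $\rmL^2(M_\rms,V_\intern)$ setting with parallel transport~\eqref{eq:PT_on_E} the Berry connection is compatible in \emph{both} cases, so that case~(2) is stronger than stated, or (b) acknowledge that the paper's ``generally not compatible'' is a hedge for abstract band insulators satisfying only (A1)--(A4), and that case~(1) as stated then also requires the averaging argument rather than your direct check. Either way, the asymmetry in your proposal between cases~(1) and~(2) is not justified.
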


Topological insulators which fit into part~(1) of
Proposition~\ref{prop:TsymTIs} are of type AI in the Altland-Zirnbauer
classification~\cite{AltlandZirnbauer}, and correspond to particles with integer spin. Those described by part~(2) are of type
AII and correspond to half-integer spin particles. As Type AI topological insulators do not have a Kane-Mele invariant, and we are anyway interested in systems of fermions, only type AII will be considered in the remainder of this
paper. This is the starting point for many topological and geometric
investigations of time-reversal symmetric topological insulators, such
as the recent
accounts~\cite{Gawedzki--2DFKM_and_WZW,Gawedzki:BGrbs_for_TIs,Monaco-Tauber:Berry_WZW_FKM}.

\newpage

\part{The Kane-Mele Invariant}
\label{part:KM and localisation}

This part is the crux of the present paper, which is concerned with developing various new topological and
geometric perspectives on the Kane-Mele invariant for time-reversal
symmetric band insulators in three dimensions. We begin by
recalling the definition of the Kane-Mele invariant, and two
equivalent descriptions of it, in Section~\ref{sect:The KM invariant
  in Top and DG}; the lengthy proof of the main result implying the
first description is provided in Appendix~\ref{app:proof_of_1st_Chern_knows_everything}.
We then proceed to rephrase the Kane-Mele invariant as a homotopy
theoretic obstruction to a block decomposition of the sewing matrix
and propose an interpretation of the invariant based on this point of
view.

The integral representation of the invariant is investigated in detail
in Section~\ref{sect:Localisation in equivar HdR}, where we introduce
a Mayer-Vietoris long exact sequence that intertwines Real and
equivariant de Rham cohomology; we summarise the pertinent features of
de~Rham cohomology groups in the presence of $\ZZ$-actions in Appendix~\ref{app:HdR and group actions}.
This is subsequently used to localise the Kane-Mele invariant to the fixed points of the $\ZZ$-action on the Brillouin zone, as in the original discrete definition of the invariant as a Pfaffian formula~\cite{Kane-Mele:Z2_top_order_and_QSHE,FKM}.

The homotopy theoretic obstruction in Section~\ref{sect:The KM invariant
  in Top and DG} and the cohomological formulas in Section~\ref{sect:Localisation in equivar HdR} have a geometric refinement which lends a different point of view on the Kane-Mele invariant.
A language suitable to formalise this refinement is through bundle gerbes, which
are hence the central objects in Section~\ref{sect:holonomies_and_localisation}.
Bundle gerbes were introduced in~\cite{Murray:Bundle_gerbes}.
Their morphisms were defined in~\cite{Murray-Stevenson:BGrbs--Stable_iso_and_local_theory} and developed further in~\cite{Waldorf--More_morphisms,Waldorf--Thesis}.
The relevance of bundle gerbes to time-reversal symmetric topological
phases was first pointed out by Gawedzki~\cite{Gawedzki--2DFKM_and_WZW,Gawedzki:BGrbs_for_TIs,Gawedzki:2017whj}.
We investigate the holonomy of Jandl gerbes on surfaces with an orientation-preserving involution.
We show that, in certain situations such as the ones of interest for the Kane-Mele invariant, these holonomies again allow for a localisation technique which follows from the $2$-category theory of bundle gerbes.

\section{The Kane-Mele invariant in three dimensions}
\label{sect:The KM invariant in Top and DG}

\subsection{Definition of the Kane-Mele invariant}
\label{sect:KM invar:Def}

In Section~\ref{sect:Bloch_and_Berry_explicit} we gave the
construction of the Bloch bundle with the Berry connection $(E^-,\nabla^{E^-})$ over the Brillouin torus $X_\Pi$ for general band insulators.
Here we consider a situation similar to that in Section~\ref{sect:pure_time_reversal_in_TEM}.
We specialise to a crystal in three space dimensions, with space translation lattice $\Pi \cong \RZ^3$ and the only additional symmetry being time-reversal $\tau$.
Thus $X_\Pi \cong \TT^3$ where $\tau \colon \TT^3 \to \TT^3$ is
the inversion, which has an antiunitary lift $\hat{\tau}$ to $(E^-,\nabla^{E^-})$ satisfying $\hat{\tau} \circ \overline{\tau^*\hat{\tau}} = -\One$.
The material in Sections~\ref{sect:KM invar:Def} and~\ref{sect:KM as CS} is not new (see, for instance,~\cite{DeNittis-Gomi:DGInvariants_Real_case,DeNittis-Gomi:DGInvariants_ld-Quat_case,Monaco-Tauber:Berry_WZW_FKM,Panati:Triviality_of_Bloch_and_Dirac}), but it serves as a motivation for the following sections.
On the $d$-dimensional torus $\TT^d$, the inversion $\tau^d$ preserves
orientation if $d$ is even and reverses orientation if $d$ is odd.
Recall that for a generic diffeomorphism $\sigma \colon X \to X$ on an
oriented $d$-dimensional manifold $X$, one has
\begin{equation}
\label{eq:orientation_and_integrals}
	\int_X\, \sigma^* \omega = \ori(\sigma) \, \int_X\, \omega 
\end{equation}
for all $\omega \in \Omega^d(X)$, where $\ori(\sigma) = 1$ if $\sigma$ preserves the orientation of $X$ and $\ori(\sigma) = -1$ if $\sigma$ reverses the orientation of $X$.

\begin{proposition}
\label{st:Blochs_rank_and_1st_Chern_class}
Let $(E^-, \hat{\tau}) \in \QVBdl(\TT^3, \tau)$ be a Quaternionic vector bundle on $\TT^3$.
Then:
\begin{myenumerate}
	\item
	The rank of $E^-$ is even.
	
	\item
	The first Chern class of $E^-$ is trivial: $c_1(E^-) = 0$ in $\rmH^2(\TT^3,\RZ)$.
\end{myenumerate}
\end{proposition}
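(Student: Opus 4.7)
The plan is to establish parts (1) and (2) independently, using the pointwise and cohomological consequences of the defining identity $\hat{\tau} \circ \overline{\tau^* \hat{\tau}} = -\One$ of a Quaternionic structure.

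For part (1), I would first observe that the inversion $\tau \colon \TT^3 \to \TT^3$, $\lambda \mapsto \lambda^{-1}$, has exactly eight fixed points, namely the half-period points $\lambda_0$ satisfying $\lambda_0^2 = 1$. Over such a fixed point, the bundle map $\hat{\tau}$ restricts to an antilinear endomorphism $J := \hat{\tau}_{|\lambda_0}$ of the complex fibre $E^-_{|\lambda_0}$ with $J^2 = -\One$. Together with the ambient complex structure $I = \iu$, this equips $E^-_{|\lambda_0}$ with a left module structure over the quaternions $\mathbb{H}$, so its complex dimension is necessarily even. Since $\TT^3$ is connected and the rank of a vector bundle is locally constant, $\rank(E^-)$ must then be even everywhere.

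For part (2), I would exploit the functoriality of the first Chern class together with its behaviour under complex conjugation. The isomorphism $\hat{\tau} \colon \overline{\tau^*E^-} \arisom E^-$ of Hermitean vector bundles yields
\begin{equation*}
    c_1(E^-) \,=\, c_1\bigl(\overline{\tau^*E^-}\bigr) \,=\, -c_1(\tau^*E^-) \,=\, -\tau^*c_1(E^-)
\end{equation*}
in $\rmH^2(\TT^3, \RZ)$, using that $c_1(\overline{E}) = -c_1(E)$ and that Chern classes commute with pullback. Thus $(1 + \tau^*)\,c_1(E^-) = 0$, and the argument reduces to determining the induced action of $\tau^*$ on $\rmH^2(\TT^3, \RZ)$. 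Since the integer cohomology ring $\rmH^\bullet(\TT^3,\RZ)$ is the exterior algebra on $\rmH^1(\TT^3,\RZ) \cong \RZ^3$ and $\tau$ acts by $-1$ on the degree-one generators, $\tau^*$ acts on $\rmH^k(\TT^3,\RZ)$ as multiplication by $(-1)^k$, which is the identity in degree two. The above relation therefore degenerates to $2\,c_1(E^-) = 0$, and since $\rmH^2(\TT^3,\RZ) \cong \RZ^3$ is torsion-free, this forces $c_1(E^-) = 0$.

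Neither step presents a serious obstacle; the core content is the identification of the correct quaternionic structure on fibres over fixed points in (1), and the interplay of complex conjugation, pullback and the involutive action on cohomology in (2). It is worth pointing out where the Quaternionic (as opposed to Real) hypothesis genuinely enters: in (1) the sign $J^2 = -\One$ is precisely what forces the rank to be even, whereas in (2) one only uses the identity $c_1(\overline{\tau^*E^-}) = c_1(E^-)$, which holds equally well for Real vector bundles, so that part of the statement in fact applies to both flavours.
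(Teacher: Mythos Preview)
Your proof is correct. Part~(1) is essentially the same as the paper's argument: both identify the fibre over a fixed point as a quaternionic vector space and conclude even complex dimension, though the paper spells out the orthogonality $h_{E^-}(\psi,\hat\tau(\psi))=0$ explicitly rather than invoking the $\mathbb{H}$-module structure.

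For part~(2), however, you take a genuinely different and more elementary route. The paper argues via the determinant line bundle: since $\rank(E^-)$ is even, the induced map $\det(\hat\tau)$ satisfies $\det(\hat\tau)\circ\tau^*\det(\hat\tau)=+\One$, so $\det(E^-)$ is a \emph{Real} line bundle on $(\TT^3,\tau)$; the paper then invokes an external result (Lemma~11.29 of Freed--Moore) to conclude that such Real line bundles are trivialisable, whence $c_1(E^-)=c_1(\det(E^-))=0$. Your argument bypasses the determinant construction entirely and works directly in cohomology, using only that $\tau^*$ acts trivially on $\rmH^2(\TT^3,\RZ)$ and that this group is torsion-free. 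This is shorter, self-contained, and---as you correctly observe---does not use the Quaternionic sign at all, so it applies verbatim to Real bundles. What the paper's approach buys in exchange is a slightly stronger intermediate statement (the determinant line bundle is trivialisable \emph{as an equivariant Hermitean line bundle}, not merely of vanishing $c_1$); but for the proposition as stated your argument is cleaner, and indeed the Remark following the paper's proof notes that the cited external lemma ultimately also rests on exactly the two cohomological facts about $\TT^3$ that you use directly.
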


\begin{proof}
(1)
Here we follow the arguments of~\cite{Gawedzki--2DFKM_and_WZW,Freed-Moore--TEM}.
First, note that $\hat{\tau}_{|\lambda}$ defines an antiunitary map
\begin{equation}
	\hat{\tau}_{|\lambda} \colon \overline{(\tau^*E^-)_{|\lambda}}
        \cong \overline{E^-_{|\tau(\lambda)}} \arisom E^-_{|\lambda}\ .
\end{equation}
In particular, if $\lambda_0 \in \big(\TT^3\big)^\tau$ is a fixed point of the $\ZZ$-action then $\hat{\tau}_{|\lambda_0}$ is an antiunitary map $\hat{\tau}_{|\lambda_0} \colon \overline{E^-_{|\lambda_0}} \arisom E^-_{|\lambda_0}$ which squares to $-\One$.
That is, the Hermitean vector space $E^-_{|\lambda_0}$ carries a quaternionic structure.
In this situation, to every $\psi \in E^-_{|\lambda_0}$ we obtain $\hat{\tau}(\psi) \in E^-_{|\lambda_0}$ with
\begin{equation}
	h_{E^-} \big( \psi, \hat{\tau}(\psi) \big)
	= h_{E^-} \big( \hat{\tau}^2(\psi), \hat{\tau}(\psi) \big)
	= - h_{E^-} \big( \psi, \hat{\tau}(\psi) \big)\ ,
\end{equation}
where $h_{E^-}$ denotes the Hermitean metric on $E^-$ and in the first equality we have made use of the antiunitarity of $\hat{\tau}$.
Hence $\psi$ and $\hat{\tau}(\psi)$ are orthogonal.
In particular, for every basis vector of $E^-_{|\lambda_0}$ we obtain
another orthogonal basis vector by applying $\hat{\tau}$, and applying
$\hat\tau$ again yields minus the original vector.
Consequently $E^-_{|\lambda_0}$ is an even-dimensional complex vector
space, and since the rank $\rank(E^-)$ is constant along the fibres,
the bundle $E^-$ has even rank.

(2)
Since $\rank(E^-)$ is even, the isomorphism $\det(\hat{\tau}) \colon \overline{\tau^*\det(E^-)} \arisom \det(E^-)$ satisfies $\det(\hat{\tau}) \circ \tau^* \det(\hat{\tau}) = \One$, where $\det(E^-) \to \TT^3$ is the determinant line bundle of $E^-$.
Then $\det(E^-)$ is trivialisable (even as an equivariant Hermitean line bundle) by~\cite[Lemma 11.29]{Freed-Moore--TEM}.
Now use $c_1(E^-) = c_1(\det(E^-))$.%
\footnote{Since $\rmH^2(\TT^3,\RZ)$ is torsion-free, here this
  identity can be seen immediately from the Chern-Weil representatives
  $F_{\det(E^-)} = \tr(F_{E^-})$.}
\end{proof}

\begin{remark}
The proof of part~(1) of
Proposition~\ref{st:Blochs_rank_and_1st_Chern_class} immediately
generalises to any involutive manifold $(X,\tau)\in\Mfd_{\ZZ}$ for which
the action of $\ZZ$ on $X$
has at least one fixed point. On the other hand, the statement
of~\cite[Lemma 11.29]{Freed-Moore--TEM}, and hence statement~(2) of
Proposition~\ref{st:Blochs_rank_and_1st_Chern_class},  is proven only
for the case $X = \TT^3$ and does not generalise to arbitrary
involutive $3$-manifolds $(X,\tau)$: it strongly relies on the fact that the cohomology group $\rmH^2(\TT^3, \RZ)$ can be generated from $\tau$-invariant cocycles and that $\rmH^1(\TT^3, \RZ)$ is free.
\qen
\end{remark}

Proposition~\ref{st:Blochs_rank_and_1st_Chern_class} has strong implications: it implies that $E^-$ is in fact trivialisable itself~\cite{Gawedzki--2DFKM_and_WZW,Gawedzki:BGrbs_for_TIs}.
This follows from a statement which can be found as~\cite[Proposition 4]{Panati:Triviality_of_Bloch_and_Dirac}.
The proof relies on showing that there exists a `$4$-equivalence' $\sfB\sfU(m) \to \sfB\sfU(1) \simeq K(\RZ,2)$, where $m=\rank(E^-)$.
Here we have denoted the classifying space of a topological group $\sfG$ by $\sfB\sfG$, and the Eilenberg-MacLane space at level $n \in \NN_0$ for an abelian group $\sfA$ by $K(\sfA,n)$.
A continuous map $f \colon X \to Y$ of topological spaces is an \emph{$n$-equivalence} if the induced homomorphisms on homotopy groups $\pi_k(f):\pi_k(X)\to\pi_k(Y)$ are isomorphisms for $k < n$ and $\pi_n(f)$ is a surjection (see~\cite[Definition~3.17]{Switzer:AT}).
The proof of
Proposition~\ref{st:1st_Chern_class_knows_everything_in_dleq3} below can be made a little more explicit by realising that it is essentially the determinant map which establishes the $4$-equivalence, and so we give a full proof in Appendix~\ref{app:proof_of_1st_Chern_knows_everything}.

\begin{proposition}
\label{st:1st_Chern_class_knows_everything_in_dleq3}
Let $X$ be a connected manifold of dimension $d \leq 3$, and let $m \in 2\,\NN$ be an even positive integer.
Then the first Chern class induces an isomorphism of pointed sets
\begin{equation}
	c_1 \colon \pi_0\,\HVBdl_{m\, \sim}(X) \arisom \rmH^2(X,\RZ)
\end{equation}
between isomorphism classes of Hermitean vector bundles of rank $m$ on $X$ and the second integer cohomology of $X$.
\end{proposition}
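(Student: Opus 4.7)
The plan is to reduce the statement to a standard result in obstruction theory by identifying both sides with homotopy classes of maps into classifying spaces, and realising $c_1$ as the map induced by a concrete $4$-equivalence. By classifying space theory for principal $\sfU(m)$-bundles one has a natural bijection $\pi_0\,\HVBdl_{m\,\sim}(X) \cong [X, \sfB\sfU(m)]$, and Brown representability together with the homotopy equivalence $\sfB\sfU(1) \simeq K(\RZ,2)$ (coming from $\sfU(1) \simeq K(\RZ,1)$) gives $\rmH^2(X,\RZ) \cong [X, \sfB\sfU(1)]$. The problem then becomes exhibiting a $4$-equivalence $f \colon \sfB\sfU(m) \to \sfB\sfU(1)$ such that $f_*$ on $[X,-]$ coincides with $c_1$ under these identifications.

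Such an $f$ is provided by $\sfB\det$, the map induced on classifying spaces by the determinant homomorphism $\det \colon \sfU(m) \to \sfU(1)$. Naturality of the classifying space construction implies that if $\varphi \colon X \to \sfB\sfU(m)$ classifies a bundle $E$, then $(\sfB\det) \circ \varphi$ classifies the determinant line bundle $\det(E)$; under the identifications above this represents $c_1(\det E) = c_1(E) \in \rmH^2(X,\RZ)$, so $(\sfB\det)_*$ is exactly $c_1$.

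The technical heart of the argument is verifying that $\sfB\det$ is a $4$-equivalence. Since $\pi_k(\sfB G) \cong \pi_{k-1}(G)$ for connected $G$, this reduces to comparing $\pi_{k-1}(\sfU(m))$ with $\pi_{k-1}(\sfU(1))$ for $k = 0,1,2,3,4$. Standard Lie group computations yield $\pi_0(\sfU(m)) = 0$ and $\pi_2(\sfU(m)) = 0$, while $\pi_3(\sfU(m)) \cong \RZ$ for $m \geq 2$ (obtained inductively from the fibration $\sfU(m{-}1) \to \sfU(m) \to S^{2m-1}$ with base case $\sfS\sfU(2) \cong S^3$), and $\pi_1(\sfU(m)) \cong \RZ$ is generated by the loop $t \mapsto \mathrm{diag}(\mathrm{e}^{2\pi\iu t}, 1, \ldots, 1)$, whose determinant is a generator of $\pi_1(\sfU(1))$. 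Hence $\det_*$ is an isomorphism on $\pi_1$; the comparison on $\pi_3$ is automatically a surjection since $\pi_3(\sfU(1)) = 0$; and all remaining comparisons are between trivial groups. Thus $\sfB\det$ induces isomorphisms on $\pi_k$ for $k = 0,1,2,3$ and a surjection on $\pi_4$, which is precisely a $4$-equivalence.

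It then remains to invoke the consequence of Switzer's Theorem~6.31 quoted in the excerpt: every connected manifold $X$ of dimension $d \leq 3$ is homotopy equivalent to a CW-complex of dimension $d < 4$, so composition with the $4$-equivalence $\sfB\det$ induces a bijection $[X, \sfB\sfU(m)] \arisom [X, \sfB\sfU(1)]$. The trivial rank-$m$ bundle is classified by a constant map and sent to $0 \in \rmH^2(X,\RZ)$, so this is an isomorphism of pointed sets. The main obstacle is making the $4$-equivalence explicit enough to identify it with $c_1$ on the nose, which forces the homotopy-group analysis to be carried out geometrically through the determinant, rather than invoking abstract Postnikov or obstruction-theoretic arguments; the explicit loop representative of the generator of $\pi_1(\sfU(m))$ above is the key ingredient that allows this, and the appendix can be expected to flesh out precisely these geometric identifications.
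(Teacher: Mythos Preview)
Your proposal is correct and follows essentially the same strategy as the paper's proof in Appendix~\ref{app:proof_of_1st_Chern_knows_everything}: both identify the two sides with $[X,\sfB\sfU(m)]$ and $[X,\sfB\sfU(1)]$, take $\sfB\det$ as the candidate $4$-equivalence, verify the homotopy-group conditions, and conclude via~\cite[Theorem~6.31]{Switzer:AT}. The only notable difference is in how the key isomorphism $\pi_1(\det)\colon \pi_1(\sfU(m)) \to \pi_1(\sfU(1))$ is established: you argue directly by exhibiting the explicit generator $t\mapsto\mathrm{diag}(\mathrm{e}^{2\pi\iu t},1,\ldots,1)$ and computing its determinant, whereas the paper routes this through the clutching construction on $S^2$, assembling a commutative diagram that relates $\pi_2(\sfB\det)$ to $c_1$ on $\pi_0\,\HVBdl_{m\,\sim}(S^2)$ and ultimately citing~\cite[Example~4D.7]{Hatcher:AT} for the same fact. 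Your argument is slightly more direct; the paper's is more elaborate but makes the identification of $(\sfB\det)_*$ with $c_1$ visible already at the level of $S^2$.
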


This can be loosely summarised as saying ``in dimensions less than
four, the first Chern class classifies all Hermitean vector bundles of any fixed even rank''.
The combination of
Propositions~\ref{st:Blochs_rank_and_1st_Chern_class}
and~\ref{st:1st_Chern_class_knows_everything_in_dleq3} then
immediately yields

\begin{corollary}
\label{st:Bloch bdl Hermitean trivialisable}
Any Quaternionic vector bundle $(E^-, \hat{\tau}) \in \QVBdl(\TT^3, \tau)$ has an underlying Hermitean vector bundle $E^-$ which is trivialisable.
\end{corollary}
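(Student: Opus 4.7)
The plan is to combine the two preceding propositions in the most direct fashion possible. First I would invoke Proposition~\ref{st:Blochs_rank_and_1st_Chern_class} with the involutive manifold $(\TT^3, \tau)$: part~(1) yields that $m \coloneqq \rank(E^-)$ is an even positive integer (which crucially uses the fact that $\tau$ has fixed points, namely the time-reversal invariant momenta), while part~(2) yields that the first Chern class satisfies $c_1(E^-) = 0$ in $\rmH^2(\TT^3, \RZ)$.

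Next I would apply Proposition~\ref{st:1st_Chern_class_knows_everything_in_dleq3} to the connected $3$-dimensional manifold $X = \TT^3$, with the even integer $m$ from the previous step. This proposition asserts that the first Chern class induces a bijection of pointed sets
\begin{equation}
	c_1 \colon \pi_0\,\HVBdl_{m\, \sim}(\TT^3) \arisom \rmH^2(\TT^3,\RZ)\ ,
\end{equation}
where the basepoint on the left-hand side is the isomorphism class of the trivial rank-$m$ Hermitean vector bundle and the basepoint on the right-hand side is $0 \in \rmH^2(\TT^3, \RZ)$.

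Since $E^-$ is a Hermitean vector bundle of even rank $m$ on $\TT^3$ with $c_1(E^-) = 0$, its class under the above bijection lies at the basepoint, and hence $E^-$ is isomorphic to the trivial Hermitean vector bundle of rank $m$. In other words, $E^-$ is trivialisable as a Hermitean vector bundle, which is the desired statement.

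There is no genuine obstacle to the proof here: all of the real work has been done already, namely in the even-rank/trivial-first-Chern-class argument of Proposition~\ref{st:Blochs_rank_and_1st_Chern_class} (which relies on the Kramers pairing at fixed points and the triviality of the determinant line bundle of a Quaternionic vector bundle), and in the $4$-equivalence $\sfB\sfU(m) \to \sfB\sfU(1) \simeq K(\RZ,2)$ underlying Proposition~\ref{st:1st_Chern_class_knows_everything_in_dleq3}. The corollary is essentially the syllogism that combines these two ingredients, so the only thing to check is that the hypotheses of the classification result (connectedness, dimension $\leq 3$, even rank) are met by the present setup, which they evidently are. Note that the Quaternionic structure $\hat{\tau}$ is discarded in the conclusion: only the underlying \emph{nonequivariant} Hermitean vector bundle is trivialised, a subtlety that will later play a central role in the obstruction-theoretic interpretation of the Kane-Mele invariant.
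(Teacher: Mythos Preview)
Your proposal is correct and matches the paper's approach exactly: the paper states the corollary as an immediate consequence of combining Propositions~\ref{st:Blochs_rank_and_1st_Chern_class} and~\ref{st:1st_Chern_class_knows_everything_in_dleq3}, which is precisely the syllogism you spell out. Your additional remarks on why the hypotheses are met and on the nonequivariant nature of the trivialisation are accurate and helpful context.
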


\begin{example}
Our fundamental example is the Bloch bundle $(E^-, \hat{\tau}) \in \QVBdl(\TT^3, \tau)$ on the 3-dimensional Brillouin torus of the space translation lattice $\Pi$.
Corollary~\ref{st:Bloch bdl Hermitean trivialisable} states that $E^- \to \TT^3$ is trivialisable as a Hermitean vector bundle on $\TT^3$.
However, a generic trivialisation of $E^-$ as a Hermitean vector
bundle is generally incompatible with the Quaternionic structure
$\hat{\tau}$ on $E^-$, i.e. the time-reversal symmetry of the
topological insulator.
Measuring how far this incompatibility can be circumvented by choosing
a different trivialisation of $E^-$ is the crux of the definition of
the Kane-Mele invariant that we shall give below.
\qen
\end{example}

Consider a Quaternionic vector bundle $(E^-, \hat{\tau})$ on a $3$-dimensional involutive manifold $(X, \tau)$.
Assume that there exists a trivialisation $\varphi \colon E^- \to X \times \FC^m$ of $E^-$ as a Hermitean vector bundle.
We then obtain a commutative diagram
\begin{equation}
\label{eq:trivialisations_of_E}
	\xymatrixrowsep{1.2cm}
	\xymatrixcolsep{1.2cm}
	\xymatrix{
		E^- \ar@{->}[r]^-{\hat{\tau}^{-1}} \ar@{->}[d]_-{\varphi}	&	\overline{\tau^* E^-} \ar@{->}[d]^-{\overline{\tau^*\varphi}}\\
		X \times \FC^m \ar@{-->}[r]_-{w_\varphi \circ \theta}	&	\overline{X \times \FC^m}
	}
\end{equation}
of Hermitean vector bundles on $X$, where $\theta \colon \FC^m \to
\overline{\FC^m}$ is complex conjugation with respect to a fixed
orthogonal basis $\{e_i\}$, i.e. $\theta(v^i\, e_i) = \overline{v^i}\, e_i$, and $w_\varphi \colon X \to \sfU(m)$ is the uniquely defined $\sfU(m)$-valued function making the diagram commute.
The map $w_\varphi$ is sometimes referred to as the \emph{sewing matrix}~\cite{Gawedzki--2DFKM_and_WZW} and it contains the obstruction to trivialising $E^-$ in a $\ZZ$-equivariant manner, i.e. in a way that is compatible with the time-reversal symmetry.
The condition $\hat{\tau} \circ \overline{\tau^*\hat{\tau}} = -\One$ translates to $\tau^*(w_\varphi \circ \theta) \circ (w_\varphi \circ \theta) = -\One$.
Since $w_\varphi$ is unitary this is equivalent to
\begin{equation}
\label{eq:trafo_of_w_under_involution}
	\tau^*w_\varphi = -w_\varphi^\trans \ ,
\end{equation}
with $w_\varphi^\trans(\lambda)$ denoting the transpose of $w_\varphi(\lambda) \in \sfU(m)$.
The map $w_\varphi$ therefore corresponds to a Quaternionic structure on the trivial Hermitean vector bundle $X \times \FC^m$.
We say that $E^-$ is \emph{Quaternionic trivial} if there exists a
trivialisation $\varphi$ such that $w_\varphi = w_0$ with
\begin{equation}
	w_0 := 
	\begin{pmatrix}
		0 & \One
		\\
		- \One & 0
	\end{pmatrix}\ .
\end{equation}

Any other trivialisation is of the form $\varphi' = \varphi \circ a$
for some function $a \colon X \to \sfU(m)$.
Hence the new transition function $w_{\varphi'} \circ \theta = \overline{\tau^*\varphi'}{}^{-1} \circ \hat{\tau}^{-1} \circ \varphi'$ reads as $w_{\varphi'} = \tau^*\,\overline{a}{\,}^{-1}\, w_\varphi\, \overline{a}$. Let
$H_{\sfU(m)}$ be the canonical $3$-form on the unitary group $\sfU(m)$. Its pullback under the sewing matrix
\begin{equation}
\label{eq:WZW3form}
w_\varphi^*H_{\sfU(m)} = \frac{1}{24 \pi^2}\, \tr \big((w_\varphi^{-1}\, \dd w_\varphi)^{\wedge 3}\big)
\end{equation}
is the Wess-Zumino-Witten $3$-form associated to the map $w_\varphi:X\to\sfU(m)$, and for any pair of maps $g, h \colon X \to \sfU(m)$ it satisfies the Polyakov-Wiegmann formula~\cite{Gawedzki-Waldorf:PW_formula_and_multiplicative_gerbes}
\begin{equation}
\label{eq:mulitplication_and_canonical_3-form}
	(g\,h)^*H_{\sfU(m)} = g^*H_{\sfU(m)} + h^*H_{\sfU(m)} +
        \frac{1}{8 \pi}\, \dd\, \tr \big( g^{-1}\, \dd g \wedge h\,
        \dd h^{-1} \big)\ .
\end{equation}
Assuming that $X$ is a closed and oriented manifold,\footnote{We say
  that a manifold $X$ is \emph{closed} if it is compact and $\partial X = \varnothing$.} and that $\tau$ reverses the orientation of $X$, we thus have
\begin{equation}
\label{eq:KM_independent_of_trivialisation}
\begin{aligned}
	\int_{X}\, w_{\varphi'}^*H_{\sfU(m)} &= \int_{X}\, \big( - \tau^*\,\overline{a}{\,}^*H_{\sfU(m)} + w_\varphi^*H_{\sfU(m)} + \overline{a}{\,}^*H_{\sfU(m)} \big)\\[4pt]
	&= \int_{X}\, w_\varphi^*H_{\sfU(m)} + 2\, \int_{X}\, \overline{a}{\,}^*H_{\sfU(m)}
\end{aligned}
\end{equation}
where we used \eqref{eq:orientation_and_integrals}.
The second term only contributes even integer values, so that the quantity
\begin{equation}
	\exp \bigg( \pi\, \iu\,
        \int_{X}\, w_\varphi^*H_{\sfU(m)} \bigg) \ \in \ \ZZ
\end{equation}
is well-defined independently of the choice of trivialisation $\varphi$.
It is therefore associated to the data of the Quaternionic vector bundle $(E^-, \hat{\tau})$ alone.

There is a symmetric monoidal functor
\begin{equation}
	U \colon \QVBdl(X, \tau) 
        \longrightarrow \HVBdl(X) \ ,
	\quad
	(E^-, \hat{\tau}) \longmapsto E^-
\end{equation}
that forgets the $\ZZ$-structures.
We define $\QVBdl_U(X, \tau)$ to be the full symmetric monoidal subcategory of $\QVBdl(X, \tau)$ on the objects $(E^-, \hat{\tau})$ such that $E^- = U(E^-, \hat{\tau})$ admits a trivialisation.

\begin{definition}[Kane-Mele invariant]
Let $(X, \tau)$ be a closed and oriented involutive manifold of
dimension $d = 3$ with orientation-reversing involution $\tau$ and
nonempty fixed point set $X^\tau \neq \varnothing$.
For $(E^-, \hat{\tau}) \in \QVBdl_U(X, \tau)$ and any trivialisation $\varphi$ of $E^-$, the \emph{Kane-Mele invariant} of $(E^-, \hat{\tau})$ is 
\begin{equation}
\label{eq:KM_invariant}
	\rmK\rmM(E^-, \hat{\tau}) \coloneqq \exp \bigg( \pi\, \iu\,
        \int_{X}\, w_\varphi^*H_{\sfU(m)} \bigg)\ \in\ \ZZ\ .
\end{equation}
Given a time-reversal symmetric band insulator on a $3$-dimensional
Bravais lattice with Bloch bundle $(E^-, \hat{\tau}) \in \QVBdl(\TT^3, \tau)$,
its \emph{Kane-Mele invariant} is the mod~$2$ index $\rmK\rmM(E^-, \hat{\tau}) \in \ZZ$.
\end{definition}

\begin{remark}
The category $\QVBdl_U(X, \tau)$ does not generally cover all possible Quaternionic vector bundles, i.e. it is a proper subcategory of $\QVBdl(X, \tau)$.
Therefore, from a mathematical point of view, the definition above of the Kane-Mele invariant is somewhat unsatisfactory, as it only applies to a very restricted class of Quaternionic vector bundles.
An invariant simultaneously refining and generalising the Kane-Mele invariant is introduced and investigated in~\cite{DeNittis-Gomi:Topological_nature_of_FKMM}.
In the case $(X, \tau) = (\TT^3, \tau)$ of interest from the physical point of view, however, Corollary~\ref{st:Bloch bdl Hermitean trivialisable} implies the equality (not just equivalence) of categories
\begin{equation}
	\QVBdl_U(\TT^3, \tau) = \QVBdl(\TT^3, \tau)\ .
\end{equation}
That is, we do not lose any information by restricting to $\QVBdl_U(\TT^3, \tau)$ for $3$-dimensional time-reversal symmetric topological insulators: the Kane-Mele invariant is in fact defined for all possible time-reversal symmetric band insulators in three space dimensions as constructed in Section~\ref{sect:pure_time_reversal_in_TEM}.
\qen
\end{remark}

The Kane-Mele invariant was first introduced in~\cite{Kane-Mele:Z2_top_order_and_QSHE,Fu-Kane:Time-reversal_polarisation,FKM}.
Its form derived here can also be found in~\cite{Monaco-Tauber:Berry_WZW_FKM,Gawedzki--2DFKM_and_WZW,Freed-Moore--TEM} for the case $X = \TT^3$.
The language of Quaternionic vector bundles is particularly emphasised by~\cite{DeNittis-Gomi:Classification_of_quaternionic_Bloch_bdls,DeNittis-Gomi:FKMM_in_low_dimension}.

Let $\rmK\rmQ(X, \tau) := \mathrm{Gr}( \pi_0\, \QVBdl_\sim(X, \tau))$ denote the Quaternionic K-theory group of $(X, \tau)$, and let $\rmK\rmQ_U(X, \tau) := \mathrm{Gr}( \pi_0\, \QVBdl_{U\,\sim}(X, \tau))$ denote the subgroup of $\rmK\rmQ(X, \tau)$ generated by the isomorphism classes of objects from $\QVBdl_U(X, \tau)$.

\begin{proposition}
\phantomsection\label{st:properties of KM}
\begin{myenumerate}
	\item The Kane-Mele invariant $\rmK\rmM(E^-, \hat{\tau})$ is
          independent of the choice of trivialisation $\varphi$ of $E^-$, i.e. it is a topological invariant.
	
	\item Isomorphic Quaternionic vector bundles $(E,
          \hat{\tau}_E)\cong(F, \hat{\tau}_F)$ in $\QVBdl_U(X, \tau)$ have
          the same Kane-Mele invariant: $\rmK\rmM(E,\hat\tau_E)=\rmK\rmM(F,\hat\tau_F)$.
	
	\item The Kane-Mele invariant defines a homomorphism of commutative monoids
	\begin{equation}
		\rmK\rmM \colon \pi_0\, \QVBdl_{U\,\sim}(X,\tau) \longrightarrow \ZZ\ .
	\end{equation}
	
	\item If $(E, \hat{\tau}_E), (F, \hat{\tau}_F) \in \QVBdl_U(X, \tau)$, then setting
	\begin{equation}
		\rmK\rmM \big( [(E, \hat{\tau}_E)\,,\,(F, \hat{\tau}_F)] \big)
		\coloneqq \rmK\rmM(E, \hat{\tau}_E) \cdot \rmK\rmM(F, \hat{\tau}_F)
	\end{equation}
	induces a homomorphism of abelian groups
	\begin{equation}
		\rmK\rmM \colon \rmK\rmQ_U(X, \tau) \longrightarrow \ZZ\ .
	\end{equation}
\end{myenumerate}
\end{proposition}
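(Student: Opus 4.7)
The plan is to prove parts (1)--(4) sequentially, relying on two basic properties of the Wess--Zumino--Witten $3$-form \eqref{eq:WZW3form}: the Polyakov--Wiegmann identity \eqref{eq:mulitplication_and_canonical_3-form} and the additivity of the trace under block-diagonal embeddings. Part (1) has in fact been carried out already in \eqref{eq:KM_independent_of_trivialisation}: under a change of trivialisation $\varphi' = \varphi \circ a$ with $a \colon X \to \sfU(m)$, \eqref{eq:mulitplication_and_canonical_3-form} combined with the orientation-reversing property of $\tau$ and \eqref{eq:orientation_and_integrals} yields $\int_X w_{\varphi'}^* H_{\sfU(m)} = \int_X w_\varphi^* H_{\sfU(m)} + 2\int_X \overline{a}{}^* H_{\sfU(m)}$. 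What remains is the integrality statement $\int_X \overline{a}{}^* H_{\sfU(m)} \in \RZ$, which follows from the fact that the normalisation \eqref{eq:WZW3form} makes $H_{\sfU(m)}$ represent an integral cohomology class on $\sfU(m)$ and $X$ is closed and oriented. The correction is thus an even integer and is killed by $\exp(\pi\,\iu\, \cdot)$.

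For part (2), I would transport trivialisations across a Quaternionic isomorphism. Given $\Psi \colon (E, \hat{\tau}_E) \arisom (F, \hat{\tau}_F)$ and a trivialisation $\varphi_F$ of $F$, set $\varphi_E = \varphi_F \circ \Psi$, which is a trivialisation of $E$. Reading off \eqref{eq:trivialisations_of_E} for $\varphi_E$ gives $w_{\varphi_E} \circ \theta = \overline{\tau^* \varphi_F} \circ \overline{\tau^* \Psi} \circ \hat{\tau}_E^{-1} \circ \Psi^{-1} \circ \varphi_F^{-1}$, and the intertwining relation $\hat{\tau}_F \circ \overline{\tau^* \Psi} = \Psi \circ \hat{\tau}_E$ forces the inner three factors to collapse to $\hat{\tau}_F^{-1}$, whence $w_{\varphi_E} = w_{\varphi_F}$. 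Combined with part (1), this yields $\rmK\rmM(E, \hat{\tau}_E) = \rmK\rmM(F, \hat{\tau}_F)$.

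Part (3) reduces to the behaviour of the integrand under block-diagonal sums. If $\varphi \colon E \to X \times \FC^m$ and $\varphi' \colon F \to X \times \FC^n$ are trivialisations, then $\varphi \oplus \varphi'$ trivialises $E \oplus F$ with sewing matrix the block-diagonal sum $w_\varphi \oplus w_{\varphi'} \colon X \to \sfU(m+n)$. Additivity of the trace on block-diagonal matrices gives $(w_\varphi \oplus w_{\varphi'})^* H_{\sfU(m+n)} = w_\varphi^* H_{\sfU(m)} + w_{\varphi'}^* H_{\sfU(n)}$, so the integrals add and the Kane-Mele invariants multiply after applying $\exp(\pi\, \iu\, \cdot)$; the zero bundle visibly has $\rmK\rmM = 1$. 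Part (4) is then immediate from the universal property of the Grothendieck completion, which extends any monoid homomorphism into an abelian group uniquely to a group homomorphism from the completion; the stated formula is consistent because $\ZZ = \{\pm 1\}$ is $2$-torsion, so $\rmK\rmM(F)^{-1} = \rmK\rmM(F)$.

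The only real subtlety is the integrality assertion in part (1), which rests on the classical fact that $\frac{1}{24\pi^2}\, \tr\bigl((w^{-1}\, \dd w)^{\wedge 3}\bigr)$ represents an integral class on $\sfU(m)$; the factor of $2$ produced by orientation-reversal is what makes the weaker integrality (rather than the stronger mod-$2$ integrality needed for a $2\pi\,\iu$ exponential) suffice here. The remaining parts are straightforward manipulations of the sewing-matrix construction.
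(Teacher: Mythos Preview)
Your proof is correct and follows essentially the same approach as the paper: part~(1) is the computation~\eqref{eq:KM_independent_of_trivialisation}, part~(2) transports a trivialisation across the Quaternionic isomorphism to make the sewing matrices agree, part~(3) uses additivity of the trace on block-diagonal matrices, and part~(4) descends to the Grothendieck group. The only cosmetic difference is that for part~(4) the paper verifies well-definedness directly on representatives of the equivalence relation (showing $\rmK\rmM(E \oplus R) \cdot \rmK\rmM(F \oplus R) = \rmK\rmM(E)\cdot\rmK\rmM(F)$ via $\rmK\rmM(R)^2 = 1$), whereas you invoke the universal property of the Grothendieck completion; these are of course equivalent, and your observation that $\ZZ$ is $2$-torsion is exactly what makes the stated product formula agree with the canonical extension.
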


\begin{proof}
(1)
This has been checked already in~\eqref{eq:KM_independent_of_trivialisation}.

(2)
An isomorphism of Quaternionic vector bundles $\phi \colon (E, \hat{\tau}_E) \to (F, \hat{\tau}_F)$ allows us to induce a trivialisation of $E$ from any trivialisation of $F$ by precomposing with $\phi$.
This choice of trivialisation on $E$ makes the two $\sfU(m)$-valued transition functions on $X$ of $(E, \hat{\tau}_E)$ and $(F, \hat{\tau}_F)$ agree.

(3)
The compatibility with direct sum follows from the explicit form \eqref{eq:WZW3form} and the properties of the trace.

(4)
To see that $\rmK\rmM$ is well-defined on $\rmK\rmQ_U(X, \tau)$, for
any $(R,\hat\tau_R)\in\QVBdl_U(X,\tau)$ we observe that
\addtocounter{equation}{1}
\begin{align*}
	\rmK\rmM\big((E, \hat{\tau}_E) \oplus (R, \hat{\tau}_R)\big) \cdot \rmK\rmM\big((F, \hat{\tau}_F) \oplus (R, \hat{\tau}_R)\big)
	&= \rmK\rmM(E, \hat{\tau}_E) \cdot \rmK\rmM(F, \hat{\tau}_F) \cdot \rmK\rmM(R, \hat{\tau}_R)^2
	\\[4pt]
	&= \rmK\rmM(E,\hat{\tau}_E) \cdot \rmK\rmM(F,\hat{\tau}_F) \tag{\theequation}
\end{align*}
since $\rmK\rmM$ is $\ZZ$-valued.
\end{proof}

\begin{remark}
There is equality $\rmK\rmQ_U(\TT^3, \tau) = \rmK\rmQ(\TT^3, \tau)$ by Corollary~\ref{st:Bloch bdl Hermitean trivialisable}, and in particular explicit computations give $\rmK\rmQ(\TT^3, \tau)\cong\RZ\oplus\ZZ^4$ (see e.g.~\cite{Kitaev:Periodic_table,Freed-Moore--TEM,Thiang2014}).
In the notations of
Sections~\ref{sect:Brillouin_gerbe_and_KT_classification}
and~\ref{sect:pure_time_reversal_in_TEM}, there is also an isomorphism
of K-theory groups $\rmK\rmQ(\TT^3, \tau) \cong
\rmK^{\hat{L},\ZZ}(\TT^3)$ given by sending an $\hat{L}$-twisted
$\ZZ$-equivariant $\ZZ$-graded Hermitean vector bundle
$(E,\hat{\alpha})$ to the Quaternionic vector bundle $(E^-,\alpha_{\hat{s},-1|E^-})$ for a section $\hat{s}$ given by a fixed lift $\hat{T}$ of $-1 \in \ZZ$.
Using this isomorphism, one can define the Kane-Mele invariant on
$\hat L$-twisted $\ZZ$-equivariant K-theory groups as well, i.e. on (reduced)
topological phases realised as time-reversal symmetric band insulators
in three dimensions.
\qen
\end{remark}

\subsection{The Kane-Mele invariant and Chern-Simons theory}
\label{sect:KM as CS}

We shall now proceed to more geometric formulations and perspectives on the Kane-Mele invariant.
It follows from Proposition~\ref{st:averaged_connections} that the
Bloch bundle $(E^-, \hat{\tau}) \in \QVBdl(\TT^3, \tau)$ can always be
endowed with a $\hat\tau$-invariant connection $\nabla$.
We thus consider a Quaternionic vector bundle with compatible
connection $(E^-, \nabla, \hat{\tau}) $ on a closed and oriented
involutive manifold $(X, \tau)$ of dimension $d=3$ with $\ori(\tau)=-1$, whose underlying Hermitean vector bundle $E^-$ is trivialisable.
Since a trivialisation $\varphi:E^-\to X\times\FC^m$ of $E^-$ as a Hermitean vector bundle does not
necessarily trivialise the connection $\nabla$ on $E^-$, we obtain an induced connection $\dd + A$ on $X \times \FC^m$ for a connection $1$-form $A \in \Omega^1(X, \fru(m))$, where $\fru(m)$ is the Lie algebra of $\sfU(m)$.
Similarly, since the values of $A$ are skewadjoint, the bundle $\overline{X \times \FC^m}$ carries an induced connection $\dd + \tau^*\,\overline{A} = \dd - \tau^*A^\trans$.
These choices of connections make the
diagram~\eqref{eq:trivialisations_of_E} into a commutative square of
Hermitean vector bundles with connections so that
\begin{equation}
\begin{aligned}
	-\tau^*A^\trans &= (w_\varphi \circ \theta)\, A\, (w_\varphi \circ \theta)^{-1} + (w_\varphi \circ \theta)\, \dd (w_\varphi \circ \theta)^{-1}\\[4pt]
	&= w_\varphi\, \overline{A}\, w_\varphi^{-1} + w_\varphi\, \dd w_\varphi^{-1}\\[4pt]
	&= w_\varphi\, (- A^\trans)\, w_\varphi^{-1} + w_\varphi\, \dd
        w_\varphi^{-1}\ ,
\end{aligned}
\end{equation}
or equivalently after transposing, applying $\tau^*$, and using
$\tau^*w_\varphi = -w_\varphi^\trans$ we have
\begin{equation}
	A = w_\varphi^{-1}\, (\tau^*A)\, w_\varphi - w_\varphi^{-1}\,
        \dd w_\varphi\ .
\end{equation}

The Chern-Simons $3$-form associated to $A$ is defined as~\cite{Chern-Simons:Char_Forms_and_Geom_invars,Freed:Classical_CS_I}
\begin{equation}
	\rmC\rmS(A) = \frac{1}{4 \pi}\, \tr(A \wedge F_A) -
        \frac{1}{24 \pi}\, \tr \big( A \wedge [A \wedge A] \big)\ , 
\end{equation}
where $F_A=\dd A+\frac12\, [A\wedge A]$ is the curvature of the
connection $\dd+A$.
The Chern-Simons $3$-forms of $A$ and $\tau^*A$ are related as
\begin{equation}
\label{eq:CS_trafo}
\begin{aligned}
	\rmC\rmS(A)
&= \tau^*\, \rmC\rmS(A) + \dd \Big(\, \frac{1}{4 \pi}\, \tr \big(
w_\varphi^{-1}\, (\tau^*A)\, w_\varphi \wedge w_\varphi^{-1}\, \dd
w_\varphi \big) \Big) - w_\varphi^*H_{\sfU(m)}\ .
\end{aligned}
\end{equation}
Consequently, since $X$ is closed we have
\begin{equation}
	\int_{X}\, \rmC\rmS(A) = \int_{X}\, \tau^*\, \rmC\rmS(A) -
        \int_{X}\, w_\varphi^*H_{\sfU(m)} \ ,
\end{equation}
and making use of~\eqref{eq:orientation_and_integrals} we conclude that
\begin{equation}
\label{eq:Quaternionic_CS-invariant}
	\int_{X}\, \rmC\rmS(A) = - \frac{1}{2}\, \int_{X}\,
        w_\varphi^*H_{\sfU(m)} \ \in \ \frac{1}{2}\, \RZ\ .
\end{equation}
Thus we obtain the generalised version of~\cite[Proposition
2]{Gawedzki--2DFKM_and_WZW} developed in~\cite{Freed-Moore--TEM} as

\begin{proposition}
If $(E^-, \hat{\tau}) \in \QVBdl_U(X, \tau)$, the Kane-Mele invariant $\rmK\rmM(E^-, \hat{\tau})$
coincides with the Chern-Simons invariant of any Quaternionic
connection on $E^-$:
\begin{equation}
\rmK\rmM(E^-, \hat{\tau}) = \exp\bigg(2\pi\, \iu\, \int_X\,
\rmC\rmS(A) \bigg) \ .
\end{equation}
\end{proposition}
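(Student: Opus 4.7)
The plan is to exploit the fact that a trivialisation $\varphi$ of $E^-$ simultaneously relates the Quaternionic structure to the sewing matrix $w_\varphi$ and converts a Quaternionic connection $\nabla$ into an explicit $\fru(m)$-valued connection $1$-form $A$ on $X$; Quaternionic compatibility then translates into a gauge-equivalence between $A$ and $\tau^*A$ with gauge transformation $w_\varphi$, and integrating the Chern-Simons transgression formula over the orientation-reversing involution produces exactly the halving which converts $2\pi$ into $\pi$ in the exponent. First I would invoke Proposition~\ref{st:averaged_connections} to guarantee that a Quaternionic connection on $E^-$ actually exists. Choosing any trivialisation $\varphi$, available because $(E^-,\hat\tau)\in \QVBdl_U(X,\tau)$, and writing $\nabla = \dd+A$ in this gauge, the enhancement of the commutative diagram~\eqref{eq:trivialisations_of_E} with connections yields the identity
\begin{equation}
A = w_\varphi^{-1}\,(\tau^*A)\,w_\varphi - w_\varphi^{-1}\,\dd w_\varphi \ ,
\end{equation}
so that $\tau^*A$ is the gauge transform of $A$ by $w_\varphi^{-1}$.

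The core of the argument is then the standard Polyakov-Wiegmann-type transgression formula for the Chern-Simons $3$-form under a gauge transformation, which in our situation takes the form of equation~\eqref{eq:CS_trafo},
\begin{equation}
\rmC\rmS(A) \;=\; \tau^*\rmC\rmS(A) \,+\, \dd\eta \,-\, w_\varphi^*H_{\sfU(m)} \ ,
\end{equation}
for the explicit $2$-form $\eta = \frac{1}{4\pi}\,\tr\bigl(w_\varphi^{-1}(\tau^*A)w_\varphi \wedge w_\varphi^{-1}\dd w_\varphi\bigr)$ built from $A$ and $w_\varphi$; this in turn is a direct consequence of the Polyakov-Wiegmann relation~\eqref{eq:mulitplication_and_canonical_3-form}. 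Integrating over the closed manifold $X$ kills $\dd\eta$, while the orientation-reversing property of $\tau$ together with~\eqref{eq:orientation_and_integrals} converts $\int_X \tau^*\rmC\rmS(A) = -\int_X \rmC\rmS(A)$. Combining these yields the halving identity
\begin{equation}
\int_X \rmC\rmS(A) \;=\; -\tfrac{1}{2}\int_X w_\varphi^*H_{\sfU(m)} \ ,
\end{equation}
which is \eqref{eq:Quaternionic_CS-invariant}.

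Exponentiating gives $\exp\bigl(2\pi\,\iu\,\int_X \rmC\rmS(A)\bigr) = \exp\bigl(-\pi\,\iu\,\int_X w_\varphi^*H_{\sfU(m)}\bigr)$. The final step is to observe that $\int_X w_\varphi^*H_{\sfU(m)} \in \RZ$, which is built into the consistency of the definition of $\rmK\rmM(E^-,\hat\tau)\in\ZZ$ since $[H_{\sfU(m)}]$ generates $\rmH^3(\sfU(m),\RZ)$; consequently $\exp(-\pi\,\iu\,n) = \exp(\pi\,\iu\,n)$ for $n\in\RZ$, and the right-hand side coincides with $\rmK\rmM(E^-,\hat\tau)$. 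To close the argument I would remark that independence from the choice of trivialisation $\varphi$ follows from Proposition~\ref{st:properties of KM}(1), while independence from the Quaternionic connection $\nabla$ follows from the affine structure on the space of Quaternionic connections combined with continuity of the integral and discreteness of $\ZZ$.

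The main delicate point, though well-known, is the transgression identity~\eqref{eq:CS_trafo}: keeping the signs consistent with the chosen normalisations of $H_{\sfU(m)}$ and $\rmC\rmS(A)$, and correctly tracking the $-w_\varphi^\trans$ on the transposed side of the Quaternionic condition~\eqref{eq:trafo_of_w_under_involution}, requires some care. Once this identity is in hand, the result is essentially a one-line integration argument powered by the orientation-reversing character of $\tau$.
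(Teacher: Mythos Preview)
Your proposal is correct and follows essentially the same route as the paper: trivialise, derive the gauge relation $A = w_\varphi^{-1}(\tau^*A)w_\varphi - w_\varphi^{-1}\dd w_\varphi$, invoke the Chern--Simons transgression identity~\eqref{eq:CS_trafo}, and integrate using the orientation-reversing property of $\tau$ to obtain the halving~\eqref{eq:Quaternionic_CS-invariant}. Your additional remarks on the sign (via $\int_X w_\varphi^*H_{\sfU(m)}\in\RZ$) and on independence from the choice of Quaternionic connection are useful clarifications that the paper leaves implicit.
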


This applies, in particular, to any Quaternionic vector bundle arising as the Bloch bundle of a $3$-dimensional time-reversal symmetric topological insulator.

\subsection{The Kane-Mele invariant as a split Quaternionic obstruction}
\label{sect:KM as obstruction to factorisation}

We shall now derive our first characterisation of Quaternionic vector bundles with trivial Kane-Mele invariant.
Consider again the sewing matrix $w_\varphi \colon X \to \sfU(m)$ from~\eqref{eq:trivialisations_of_E}.
Since $m \in 2\,\NN$ is even, the map $\det \circ w_\varphi \colon X \to \sfU(1)$ satisfies
\begin{equation}
	\det \circ w_\varphi \circ \tau
	= \det \circ (- w_\varphi^\trans)
	= \det \circ w_\varphi\ ,
\end{equation}
i.e. it is a $\ZZ$-invariant function on $X$.
Therefore this function does not wind around $\sfU(1)$, and consequently we may choose a function $f \colon X \to \FR$ such that
\begin{equation}
\label{eq:split off U1 factor}
	\exp \big( 2\pi\, \iu\, m\, f \big) = \det \circ w_\varphi\ .
\end{equation}
We then obtain a function
\begin{equation}
	\tilde{w}_\varphi \coloneqq \exp( - 2\pi\, \iu \, f)\,
        w_\varphi \colon X \longrightarrow \sfS\sfU(m)\ .
\end{equation}
This has essentially also been observed by~\cite{Gawedzki--2DFKM_and_WZW}.
A small but important new observation is that the maps $w_\varphi$ and $\tilde{w}_\varphi$ are homotopic under the inclusion $\sfS\sfU(m) \hookrightarrow \sfU(m)$.
This can be established by any deformation retract of $\FR$ to $0 \in
\FR$, for any such retraction will induce a homotopy from $f$ to the
constant map $0$. In particular, this implies that the Kane-Mele
invariant can be equivalently written
\begin{equation}
\rmK\rmM(E^-, \hat{\tau}) = \exp \bigg( \pi\, \iu\,
        \int_{X}\, \tilde w_\varphi^*H_{\sfS\sfU(m)} \bigg)
\end{equation}
using the canonical $3$-form $H_{\sfS\sfU(m)}$ on $\sfS\sfU(m)$; abusing notation
slightly, we will usually drop
the tilde and simply write $w_\varphi$ for $\tilde w_\varphi$.

The first few homotopy groups of $\sfS\sfU(m)$ for $m\in\NN$ with $m \geq 3$ read as~\cite[p.~151]{Nakahara}
\begin{equation}
\label{eq:homotopy groups of SU(n)}
	\pi_i(\sfS\sfU(m)) =
	\begin{cases}
		\ 0 \ , & i = 1,2,4\ ,
		\\
		\ \RZ \ , & i = 3\ ,
	\end{cases}
\end{equation}
as well as $\pi_{1}(\sfS\sfU(2)) = \pi_2(\sfS\sfU(2))=0$, $\pi_3(\sfS\sfU(2)) = \RZ$, and $\pi_4(\sfS\sfU(2)) = \ZZ$.
Moreover, since $\sfS\sfU(m)$ is a compact simple and simply-connected
Lie group, or alternatively by Hurewicz' Theorem, we infer that its third integer homology group reads $\rmH_3(\sfS\sfU(m),\RZ) \cong \RZ$.
Using~\eqref{eq:homotopy groups of SU(n)} and the Universal Coefficient Theorem, we infer that $\rmH^3(\sfS\sfU(m),\RZ) \cong 
\RZ$ as well.

A generator of $\rmH^3(\sfS\sfU(m),\RZ)$ corresponds to a homotopy class of maps $\sfu_3 \colon \sfS\sfU(m) \to K(\RZ,3)$ that induce an isomorphism
\begin{equation}
	\sfu_3^* \colon \rmH^3(K(\RZ,3),\RZ) \arisom
        \rmH^3(\sfS\sfU(m),\RZ)\ .
\end{equation}
Recall that there is a diffeomorphism $h \colon S^3 \to \sfS\sfU(2)$.
Consider the inclusion $\iota_{2,m} \colon \sfS\sfU(2) \to
\sfS\sfU(m)$, $u \mapsto u \oplus \One$, and observe that
\begin{equation}
	\iota_{2,m}^* H_{\sfS\sfU(m)} = H_{\sfS\sfU(2)}\ .
\end{equation}
Define homomorphisms of abelian groups
\begin{equation}
\Xi = (-)^*[H_{\sfS\sfU(m)}] \colon \pi_3(\sfS\sfU(m)) = [S^3,
\sfS\sfU(m)] \longrightarrow \rmH^3(S^3,\RZ)\ ,
		\quad [a] \longmapsto a^*[H_{\sfS\sfU(m)}] 
\end{equation}
and
\begin{equation}
\Psi \colon \rmH^3(S^3,\RZ) \longrightarrow \pi_3(\sfS\sfU(m))\ ,
	\quad k\, h^*[H_{\sfS\sfU(2)}] \longmapsto k\, [\iota_{2,m}
        \circ h]\ ,
\end{equation}
where $k \in \RZ$ and $[H_{\sfS\sfU(m)}]$ is the canonical generator of $\rmH^3(\sfS\sfU(m), \RZ) \cong \RZ$ represented by the canonical $3$-form $H_{\sfS\sfU(m)} \in \Omega^3(\sfS\sfU(m))$.
We compute
\begin{equation}
\begin{aligned}
	\Xi \circ \Psi \big( k\, h^*[H_{\sfS\sfU(2)}] \big)
	&= \Xi \big( k\, [\iota_{2,m} \circ h] \big)
	\\[4pt]
	&= k\, (\iota_{2,m} \circ h)^* [H_{\sfS\sfU(m)}]
	\\[4pt]
	&= k\, h^*[H_{\sfS\sfU(2)}]\ .
\end{aligned}
\end{equation}
Thus $\Psi$ is right inverse to $\Xi$, implying that $\Psi$ is injective and $\Xi$ is surjective.
In particular, $\Xi$ is a surjective homomorphism of abelian groups
whose source and target are both canonically isomorphic to $\RZ$, and consequently $\Xi$ is a group isomorphism.

The canonical isomorphism $\Phi_{\sfS\sfU(m)} \colon [\sfS\sfU(m),
K(\RZ,3)] \to \rmH^3(\sfS\sfU(m), \RZ)$ is given by $[f] \mapsto
f^*K_3$, where $K_3 \in \rmH^3(K(\RZ,3), \RZ)$ is the canonical
generator of the third integer cohomology group of $K(\RZ,3)$.
By construction we have $\Phi_{\sfS\sfU(m)}([\sfu_3])= \sfu_3^* K_3 = [H_{\sfS\sfU(m)}]$.
Using the canonical isomorphism $\Phi_{S^3} \colon \pi_3(K(\RZ,3)) \to
\rmH^3(S^3,\RZ)$ given by $\Phi_{S^3}([g]) = g^*K_3$, for $[f] \in \pi_3(\sfS\sfU(m))$ we thus have
\begin{equation}
	\Phi_{S^3} \circ \pi_3(\sfu_3) [f] = \Phi_{S^3} \big( [\sfu_3 \circ f] \big)
	= (\sfu_3 \circ f)^*K_3
	= f^*[H_{\sfS\sfU(m)}] \ .
\end{equation}
That is, there is a commutative diagram of abelian groups
\begin{equation}
\begin{tikzcd}[row sep=1.25cm, column sep=3cm]
	\pi_3(\sfS\sfU(m)) \ar[r, "{(-)^*[H_{\sfS\sfU(m)}]}", "\cong"'] \ar[dr, "{\pi_3(\sfu_3) }"'] & \rmH^3(S^3, \RZ)
	\\
	& \pi_3(K(\RZ, 3)) \ar[u, "\cong", "{\Phi_{S^3}}"']
\end{tikzcd}
\end{equation}
This shows that $\sfu_3$ induces an isomorphism of third
homotopy groups, and we have thus established

\begin{proposition}
\label{st:SUm to KZ3 4-equiv}
For $m \geq 2$, any continuous (and hence any smooth) map $\sfu_3
\colon \sfS\sfU(m) \to K(\RZ,3)$ such that $\Phi_{\sfS\sfU(m)}[\sfu_3]$ is a generator of $\rmH^3(\sfS\sfU(m),\RZ) \cong \RZ$ induces a $4$-equivalence from $\sfS\sfU(m)$ to $K(\RZ,3)$.
\end{proposition}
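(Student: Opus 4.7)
The statement to prove is essentially already assembled in the preceding discussion, so the plan is mostly one of collecting the pieces and handling the remaining low-degree homotopy groups. Recall that a map is a $4$-equivalence if it induces isomorphisms on $\pi_k$ for $k = 0, 1, 2, 3$ and a surjection on $\pi_4$.

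First I would observe that the isomorphism on $\pi_3$ is already established by the commutative diagram displayed just before the statement: the hypothesis that $\Phi_{\sfS\sfU(m)}[\sfu_3] = [H_{\sfS\sfU(m)}]$ generates $\rmH^3(\sfS\sfU(m),\RZ)$ makes $\pi_3(\sfu_3)$ equal to the composite of two canonical isomorphisms (the Hurewicz-type map $\Xi = (-)^*[H_{\sfS\sfU(m)}]$ and the inverse of $\Phi_{S^3}$), up to sign. More generally, a different generator would differ by $\pm 1$, and in either case yields an isomorphism on $\pi_3$.

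The remaining homotopy groups are straightforward. For $K(\RZ,3)$, by definition $\pi_k(K(\RZ,3)) = 0$ for all $k \neq 3$. For $\sfS\sfU(m)$ with $m \geq 2$, the group is path-connected and simply-connected, giving $\pi_0 = \pi_1 = 0$; moreover, $\pi_2 = 0$ for any compact Lie group (which can be shown, for instance, via the long exact sequence of the principal bundle $\sfS\sfU(m-1) \to \sfS\sfU(m) \to S^{2m-1}$ combined with induction starting from $\pi_2(\sfS\sfU(2)) = \pi_2(S^3) = 0$). Hence $\pi_k(\sfu_3)$ is the zero map between trivial groups for $k = 0, 1, 2$, which is vacuously an isomorphism.

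The main point then is $k = 4$. Here the target $\pi_4(K(\RZ,3)) = 0$ is trivial, so $\pi_4(\sfu_3)$ is automatically surjective, independently of whether $\pi_4(\sfS\sfU(m))$ is zero (as happens for $m \geq 3$ by the cited values) or $\ZZ$ (as happens for $m = 2$). There is no genuine obstacle in any degree: the only substantive computation has already been carried out in the discussion leading up to the statement, and the proof is essentially a bookkeeping exercise assembling $\pi_k$ for $k \leq 4$.
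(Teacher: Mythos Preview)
Your proposal is correct and follows essentially the same approach as the paper: the paper assembles the proof in the discussion immediately preceding the proposition, citing the relevant homotopy groups of $\sfS\sfU(m)$ and $K(\RZ,3)$ and establishing the $\pi_3$ isomorphism via the same commutative diagram you invoke. Your additional remark that a different choice of generator only changes the $\pi_3$ map by a sign, and your alternative fibre-sequence argument for $\pi_2(\sfS\sfU(m)) = 0$, are minor elaborations but do not deviate from the paper's route.
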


Combining this with~\cite[Theorem 6.31]{Switzer:AT} we obtain a
heuristic picture: from the point of view of any manifold of
dimension less than four, the group $\sfS\sfU(m)$ looks like the
Eilenberg-MacLane space $K(\RZ,3)$. More precisely, we have

\begin{corollary}
\label{st:3D CW see SUm as KZ3}
If $X$ is a CW-complex of dimension $d \leq 3$ and $m \geq 2$ then
\begin{equation}
	\rmH^3(X,\RZ) \cong [X, K(\RZ,3)] \cong [X, \sfS\sfU(m)]\ .
\end{equation}
\end{corollary}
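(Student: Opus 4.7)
The plan is to assemble the corollary directly from Proposition~\ref{st:SUm to KZ3 4-equiv} together with the standard representability of ordinary cohomology by Eilenberg-MacLane spaces, using the quoted property of $p$-equivalences from~\cite[Theorem~6.31]{Switzer:AT} which was already invoked in the proof of Proposition~\ref{st:1st_Chern_class_knows_everything_in_dleq3}. No new geometric input is required; only two applications of general facts from homotopy theory.

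First I would establish the left-hand isomorphism $\rmH^3(X,\RZ) \cong [X, K(\RZ,3)]$. This is the Brown representability isomorphism $\Phi_X \colon [X, K(\RZ,3)] \to \rmH^3(X,\RZ)$, $[f] \mapsto f^*K_3$, where $K_3 \in \rmH^3(K(\RZ,3),\RZ)$ is the fundamental class. This isomorphism holds for any CW-complex $X$ without any dimension restriction, and is part of the defining characterisation of the Eilenberg-MacLane space $K(\RZ,3)$.

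Next I would establish the right-hand isomorphism $[X, K(\RZ,3)] \cong [X, \sfS\sfU(m)]$. By Proposition~\ref{st:SUm to KZ3 4-equiv}, for any $m \geq 2$ there exists a $4$-equivalence $\sfu_3 \colon \sfS\sfU(m) \to K(\RZ,3)$. Since $X$ is a CW-complex of dimension $d \leq 3 < 4$, the cited property~\cite[Theorem~6.31]{Switzer:AT} of $p$-equivalences asserts that postcomposition with $\sfu_3$ induces a bijection
\begin{equation}
    (\sfu_3)_* \colon [X, \sfS\sfU(m)] \arisom [X, K(\RZ,3)]\ ,
\end{equation}
which is precisely the desired isomorphism. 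Composing with $\Phi_X$ gives an explicit identification sending the homotopy class of a map $f \colon X \to \sfS\sfU(m)$ to the cohomology class $f^*[H_{\sfS\sfU(m)}] = f^*\sfu_3^*K_3 \in \rmH^3(X,\RZ)$.

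The main (and only) potential obstacle is ensuring that the hypothesis of~\cite[Theorem~6.31]{Switzer:AT} is met, namely that $\dim X < p$ with $p = 4$; this is exactly the role played by the assumption $d \leq 3$, so nothing needs to be checked beyond recording it. The remaining work is notational: smoothness of representatives is not required, since all spaces involved are CW-complexes (or admit CW structures) and the identifications live purely in the homotopy category.
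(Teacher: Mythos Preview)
Your proposal is correct and follows exactly the same approach as the paper: the corollary is stated immediately after Proposition~\ref{st:SUm to KZ3 4-equiv} with the remark that combining that proposition with \cite[Theorem~6.31]{Switzer:AT} yields the result, and your argument spells out precisely this combination together with the standard Eilenberg--MacLane representability of $\rmH^3(-,\RZ)$.
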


The analysis in this section so far was previously known, but the thorough investigation of the action of the map $\tau$ in this context allows us to give a novel homotopy-theoretic perspective on the Kane-Mele invariant:
if $(X, \tau)$ is a $3$-dimensional closed connected and orientable involutive
manifold with orienta{-}tion-reversing involution $\tau$, pairing a
cocycle with the fundamental class of $X$ yields an isomorphism
\begin{equation}
	\mbox{$\int_X$} \, \colon \rmH^3(X,\RZ) \longrightarrow
        \RZ\ .
\end{equation}
From this it follows that $\int_X\, \omega \in 2\, \RZ$ if and only if
there exists $[\eta] \in \rmH^3(X,\RZ)$ such that $[\omega] = 2\,
[\eta]$. Combining this with the orientation-reversing involution
we deduce that $\int_X\, \omega \in 2\, \RZ$ if and only if there exists
$[\eta]$ such that $[\omega] = [\eta] - \tau^*[\eta]$.

Writing $m = 2n$, and using the fact that Corollary~\ref{st:3D CW see SUm as KZ3} holds for any $n \geq 2$, we can express the cohomology class $[\eta]$ as the homotopy class of a map $g \colon X \to \sfS\sfU(n)$.
We set
\begin{equation}
	w \coloneqq
	\begin{pmatrix}
		0 & g
		\\
		- (g \circ \tau)^\trans & 0
	\end{pmatrix}\ .
\end{equation}
Let $\kappa \colon \sfU(m) \to \sfU(m)$ be the
involution which sends $w \mapsto -w^\trans$.
Since $m$ is even, the map $\kappa$ preserves the subgroup $\sfS\sfU(m)$.

\begin{lemma}
\phantomsection\label{st:properties of w}
\begin{myenumerate}
	\item The map $w$ is $\sfS\sfU(m)$-valued and satisfies $w \circ \tau = \kappa \circ w$.
	
	\item $w^*[H_{\sfS\sfU(m)}]
		= g^*[H_{\sfS\sfU(n)}] - \tau^* \big( g^*[H_{\sfS\sfU(n)}] \big)
		= [\eta] - \tau^* [\eta]\ $.
\end{myenumerate}
\end{lemma}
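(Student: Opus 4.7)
For part~(1) I would verify the three claims by direct computation. Unitarity of $w$ reduces to the block identity $g\, g^\dagger = \One$ (from $g \in \sfS\sfU(n)$) and $h^\trans\, (h^\trans)^{-1} = \One$ applied with $h = g \circ \tau$ in the two diagonal blocks of $w\, w^\dagger$. For the determinant I would use the block anti-diagonal identity $\det \bigl(\begin{smallmatrix} 0 & A \\ B & 0 \end{smallmatrix}\bigr) = (-1)^n\, \det(A)\, \det(B)$ for $n\times n$ blocks, so that $\det(w) = (-1)^n \cdot 1 \cdot (-1)^n \cdot 1 = 1$ since $m = 2n$ and $-\One$ contributes an extra $(-1)^n$. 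Equivariance $w \circ \tau = \kappa \circ w$ is then a one-line check using $\tau^2 = 1_X$ and the definition of $\kappa$.

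For part~(2) the strategy is to apply the Polyakov-Wiegmann formula~\eqref{eq:mulitplication_and_canonical_3-form} twice, so as to reduce $w^*H_{\sfS\sfU(m)}$ to pullbacks of the canonical $3$-form on $\sfS\sfU(n)$ along $g$ and $g \circ \tau$. First I would factor $w$ pointwise as $w = A \cdot w_0$, where $A \coloneqq \mathrm{diag}\bigl(g,\, (g \circ \tau)^\trans \bigr)$ and $w_0$ is the constant $\sfS\sfU(m)$-valued sewing matrix from Section~\ref{sect:KM invar:Def}. Constancy of $w_0$ kills both $w_0^*H_{\sfS\sfU(m)}$ and the Polyakov-Wiegmann boundary term, so $w^*H_{\sfS\sfU(m)} = A^*H_{\sfS\sfU(m)}$ at the form level. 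A second application to $A = \mathrm{diag}(g, \One) \cdot \mathrm{diag}(\One, (g \circ \tau)^\trans)$ produces a boundary term whose integrand vanishes identically, since the relevant Maurer-Cartan forms are supported on disjoint diagonal blocks. The two surviving terms are $g^*H_{\sfS\sfU(n)}$ and $((g \circ \tau)^\trans)^*H_{\sfS\sfU(n)}$, using that the block inclusions $\sfS\sfU(n) \hookrightarrow \sfS\sfU(m)$ pull $H_{\sfS\sfU(m)}$ back to $H_{\sfS\sfU(n)}$ -- the lower-block case differs from the upper-block one by conjugation with a constant permutation matrix, which leaves $H_{\sfS\sfU(m)}$ invariant by cyclicity of $\tr$.

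The heart of the proof is then the sign identity $t^*H_{\sfS\sfU(n)} = -H_{\sfS\sfU(n)}$ for the transpose map $t \colon \sfS\sfU(n) \to \sfS\sfU(n)$, $u \mapsto u^\trans$, which I expect to be the main obstacle. Starting from $t^*(u^{-1}\, \dd u) = (\dd u \cdot u^{-1})^\trans$, the identity follows by combining cyclicity of $\tr$ with the sign rule $\tr \bigl((M^\trans)^{\wedge 3} \bigr) = -\tr (M^{\wedge 3})$ for matrix-valued $1$-forms $M$, which in turn stems from the interplay between matrix transposition and the graded commutativity of odd forms; the bulk of the technical work is confined to bookkeeping of these signs. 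Granting this identity gives $((g \circ \tau)^\trans)^*H_{\sfS\sfU(n)} = -\tau^*\, g^*H_{\sfS\sfU(n)}$, and assembling the three steps yields the first equality in~(2) at the level of forms, hence a fortiori in cohomology. The second equality $g^*[H_{\sfS\sfU(n)}] = [\eta]$ is immediate from the defining property of $g$: it is precisely the homotopy class representing $[\eta]$ under the isomorphism $[X, \sfS\sfU(n)] \cong \rmH^3(X, \RZ)$ provided by Corollary~\ref{st:3D CW see SUm as KZ3}.
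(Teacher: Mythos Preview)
Your proof is correct and carries out in full what the paper leaves as a ``direct computation''. The paper's own proof is extremely terse: for part~(1) it simply records $\det(w) = \det\big(g\,(g\circ\tau)^\trans\big) = 1$, and for part~(2) it asserts the form-level identity $w^*H_{\sfS\sfU(m)} = g^*H_{\sfS\sfU(n)} - \tau^*(g^*H_{\sfS\sfU(n)})$ without further detail. Your two-step Polyakov--Wiegmann factorisation $w = A\cdot w_0$ followed by $A = \mathrm{diag}(g,\One)\cdot\mathrm{diag}(\One,(g\circ\tau)^\trans)$, together with the transpose identity $t^*H_{\sfS\sfU(n)} = -H_{\sfS\sfU(n)}$, is a clean and systematic way to realise that direct computation; an alternative would be to expand $w^{-1}\,\dd w$ in blocks and cube it by hand, which reaches the same form-level identity with slightly more bookkeeping.
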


\begin{proof}
It is straightforward to see that $w \colon X \to \sfU(m)$. The necessary restriction on the determinant $\det(w)= \det \big( g\, (g \circ \tau)^\trans \big)
	= 1$ follows from the fact that $m=2n$ is even and since $\det(g) = 1$.

The second claim is a consequence of the fact that
\begin{equation}
	w^*H_{\sfS\sfU(m)} = g^*H_{\sfS\sfU(n)} - \tau^* ( g^*H_{\sfS\sfU(n)})
\end{equation}
which can be seen by a direct computation.
\end{proof}

\begin{definition}[Split Quaternionic vector bundle]
A Quaternionic vector bundle $(E^-, \hat{\tau})$ of rank $m = 2n$ on an involutive manifold $(X, \tau)$ is \emph{split Quaternionic} if there exists a Hermitean vector bundle $F \to X$ of rank $n$ with an automorphism $\sigma$ and a unitary isomorphism $\phi \colon E^- \to \tau^*F \oplus F^*$ such that
\begin{equation}
\label{eq:split QVBdl condition}
	(\tau^* \flat_F \oplus \sharp_F) \circ \phi \circ \hat{\tau} \circ \tau^*\phi^{-1} =
	\begin{pmatrix}
		0 & \sigma
		\\
		- \tau^*\sigma^\trans & 0
	\end{pmatrix}\ ,
\end{equation}
where the superscript ${}^\trans$ denotes the fibrewise transpose of a morphism of vector bundles, and where $\flat_F \colon \overline{F} \to F^*$ and $\flat_F \colon \overline{F}^* \to F$ denote the musical isomorphisms of the hermitean vector bundle $F$.
\end{definition}

\begin{theorem}
\label{st:splitting up to homotopy of w_varphi}
Let $(X, \tau)$ be an orientable connected and closed 
$3$-dimensional CW-complex with orientation-reversing involution $\tau$,
and $X^\tau \neq \varnothing$. Let $(E^-, \hat{\tau}) \in \QVBdl_U(X, \tau)$ with $\rank(E^-) = m = 2n$.
The following statements are equivalent:
\begin{myenumerate}
	\item The Kane-Mele invariant of $(E^-, \hat{\tau})$ is trivial: $\rmK\rmM(E^-, \hat{\tau}) = 1$.
	
	\item There exists a trivialisation $\varphi$ of $E^-$ such
          that the induced Quaternionic structure on $X \times \FC^m$
          is homotopic to a split Quaternionic structure on $X \times (\FC^n \oplus
          \FC^n)$: there is a homotopy of $\sfS\sfU(m)$-valued maps
	\begin{equation}
		w_\varphi \simeq w=
		\begin{pmatrix}
			0 & g
			\\
			- (g \circ \tau)^\trans & 0
		\end{pmatrix}
	\end{equation}
	for some map $g \colon X \to \sfS\sfU(n)$.
	
	\item There exists a trivialisation $\varphi'$ of $E^-$ such
          that the induced Quaternionic structure on $X \times \FC^m$
          is homotopic to the trivial Quaternionic structure on $X
          \times (\FC^n \oplus \FC^n)$: there is a homotopy of $\sfS\sfU(m)$-valued maps
	\begin{equation}\label{eq:Quaternionic triv up to ho}
		w_{\varphi'} \simeq w_0=
		\begin{pmatrix}
			0 & \One
			\\
			- \One & 0
		\end{pmatrix}\ .
	\end{equation}
\end{myenumerate}
\end{theorem}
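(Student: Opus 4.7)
The plan is to prove the cyclic implications $(3) \Rightarrow (2) \Rightarrow (1) \Rightarrow (3)$. The principal tool is the combination of Proposition~\ref{st:SUm to KZ3 4-equiv} and Corollary~\ref{st:3D CW see SUm as KZ3}: for our closed $3$-dimensional CW-complex $X$, pullback of $[H_{\sfS\sfU(m)}]$ induces a bijection $[X, \sfS\sfU(m)] \arisom \rmH^3(X, \RZ)$, and integration over $X$ gives a further identification $\rmH^3(X, \RZ) \arisom \RZ$ since $X$ is connected, closed and oriented. Moreover, the orientation-reversing involution $\tau$ acts on $\rmH^3(X, \RZ) \cong \RZ$ as multiplication by $-1$ via~\eqref{eq:orientation_and_integrals}. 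Throughout, the deformation around~\eqref{eq:split off U1 factor} lets me treat every sewing matrix as an $\sfS\sfU(m)$-valued map.

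The implication $(3) \Rightarrow (2)$ is immediate: setting $g \equiv \One_n$ and $\varphi = \varphi'$ reduces the split form $w$ to $w_0$. For $(2) \Rightarrow (1)$, Lemma~\ref{st:properties of w}(2) supplies $w^*[H_{\sfS\sfU(m)}] = g^*[H_{\sfS\sfU(n)}] - \tau^*\big( g^*[H_{\sfS\sfU(n)}] \big)$, and the hypothesis $w_\varphi \simeq w$ transports this identity to $w_\varphi$. Integrating over $X$ and applying~\eqref{eq:orientation_and_integrals} then yields $\int_X w_\varphi^*H_{\sfS\sfU(m)} = 2\int_X g^*H_{\sfS\sfU(n)} \in 2\,\RZ$, so that $\rmK\rmM(E^-, \hat{\tau}) = 1$ by Proposition~\ref{st:properties of KM}(1).

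The substantive step is $(1) \Rightarrow (3)$. Assuming $\rmK\rmM(E^-, \hat{\tau}) = 1$, I have $\int_X w_\varphi^*H_{\sfS\sfU(m)} = 2N$ for some $N \in \RZ$. Using the bijection $[X, \sfS\sfU(m)] \arisom \RZ$, I select $a \colon X \to \sfS\sfU(m) \hookrightarrow \sfU(m)$ with $\int_X \overline{a}^* H_{\sfS\sfU(m)} = -N$ and set $\varphi' \coloneqq \varphi \circ a$. Formula~\eqref{eq:KM_independent_of_trivialisation} then gives $\int_X w_{\varphi'}^*H_{\sfS\sfU(m)} = 0$, so the class $w_{\varphi'}^*[H_{\sfS\sfU(m)}]$ vanishes in $\rmH^3(X, \RZ)$. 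Since the constant map $w_0$ satisfies $w_0^*[H_{\sfS\sfU(m)}] = 0$ as well, the bijection identifies $w_{\varphi'}$ with $w_0$ up to homotopy, which is precisely~\eqref{eq:Quaternionic triv up to ho}.

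The main subtlety is maintaining the $\sfS\sfU(m)$-valued framework under the change of trivialisation: a priori $w_{\varphi'}$ is only $\sfU(m)$-valued, but $\det(w_{\varphi'})$ remains $\tau$-invariant (using that $m$ is even together with~\eqref{eq:trafo_of_w_under_involution}), so the reduction leading to~\eqref{eq:split off U1 factor} again provides a homotopic $\sfS\sfU(m)$-valued representative to which the preceding cohomological argument applies.
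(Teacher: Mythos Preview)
Your proof is correct and follows essentially the same approach as the paper, relying on the same key inputs: Corollary~\ref{st:3D CW see SUm as KZ3}, Lemma~\ref{st:properties of w}, and the change-of-trivialisation formula~\eqref{eq:KM_independent_of_trivialisation}. The only difference is organisational: you run the cyclic implications $(3)\Rightarrow(2)\Rightarrow(1)\Rightarrow(3)$, whereas the paper establishes $(1)\Leftrightarrow(2)$ directly (via the observation preceding Lemma~\ref{st:properties of w} that $\int_X\omega\in 2\,\RZ$ iff $[\omega]=[\eta]-\tau^*[\eta]$ for some $[\eta]$) and then proves $(3)\Rightarrow(1)\Rightarrow(3)$; your final paragraph on maintaining the $\sfS\sfU(m)$-framework after changing trivialisation is a point the paper leaves implicit.
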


\begin{proof}
The equivalence of (1) and (2) has been established in Corollary~\ref{st:3D CW see SUm as KZ3} and Lemma~\ref{st:properties of w}.
The implication (3)$\Rightarrow$(1) is immediate.
If $\rmK\rmM(E^-, \hat{\tau}) = 1$ then there exists a trivialisation $\varphi$ of $E^-$ such that $\int_X\, w_\varphi^* H_{\sfS\sfU(m)} \in 2\, \RZ$ with $w_\varphi \colon X \to \sfS\sfU(m)$.
From the computation in~\eqref{eq:KM_independent_of_trivialisation}
together with Corollary~\ref{st:3D CW see SUm as KZ3}, we infer that
there exists $a \colon X \to \sfS\sfU(m)$ such that $\int_X\,
w_{\varphi'}^*H_{\sfS\sfU(m)} = 0$ for $\varphi' = a \circ \varphi$.
Thus again by Corollary~\ref{st:3D CW see SUm as KZ3} the map
$w_{\varphi'}$ is nullhomotopic, i.e. it is homotopic to any constant map $X \to \sfS\sfU(m)$, and we may choose this map to be $w_0$.
\end{proof}

The homotopies appearing here are not related to the involution $\tau$.
This reflects the fact that the $\ZZ$-valued Kane-Mele invariant does not detect the complete information about Quaternionic vector bundles on $3$-dimensional CW-complexes.
There is a more refined version of the Kane-Mele invariant, called the
FKMM invariant in~\cite{DeNittis-Gomi:Topological_nature_of_FKMM},
which is sufficient to distinguish Quaternionic vector bundles on involutive CW-complexes of dimension $d \leq 3$.

\begin{remark}
\label{rmk:homotopy interpretation of KM}
From the perspective of topological phases of quantum matter, we see from
Theorem~\ref{st:splitting up to homotopy of w_varphi} that the
Kane-Mele invariant neither detects fully whether a given Quaternionic
vector bundle with trivial underlying Hermitean vector bundle defines a nontrivial Quaternionic K-theory class nor whether it is trivialisable as a Quaternionic vector bundle.
Nevertheless, Theorem~\ref{st:splitting up to homotopy of w_varphi}
gives a precise geometric meaning to the invariant $\rmK\rmM(E^-,\hat\tau)$:
it measures whether there exists a global frame of $E^-$ which, in a
way made precise by~\eqref{eq:Quaternionic triv up to ho}, is a trivialisation of $(E^-, \hat{\tau})$ up to nonequivariant homotopy.

Thus the $3$-dimensional Kane-Mele invariant considered here, as well
as in~\cite{Freed-Moore--TEM,Gawedzki--2DFKM_and_WZW}, is a
considerably weaker invariant of time-reversal symmetric topological
phases than the invariants constructed
in~\cite{DeNittis-Gomi:Topological_nature_of_FKMM,FMP:Z2_invariants_as_geometric_obstructions,FKM}
which in three dimensions conclusively answer the question of whether Quaternionic vector bundles of the above type are Quaternionic trivialisable.
In both cases the invariants are valued in $\ZZ^4$ when the base space
coincides with the $3$-dimensional Brillouin torus $X=X_\Pi=\TT^3$ (containing both the strong $3$-dimensional and all three weak $2$-dimensional Kane-Mele invariants).
The invariant in~\cite{DeNittis-Gomi:Topological_nature_of_FKMM} can even be understood as the Quaternionic first Chern class and, similarly to Proposition~\ref{st:1st_Chern_class_knows_everything_in_dleq3}, this classifies Quaternionic vector bundles in low dimensions.
\qen
\end{remark}

\begin{remark}
The works of De Nittis and Gomi~\cite{DeNittis-Gomi:Classification_of_quaternionic_Bloch_bdls,DeNittis-Gomi:FKMM_in_low_dimension,DeNittis-Gomi:Topological_nature_of_FKMM,DeNittis-Gomi:DGInvariants_Real_case,DeNittis-Gomi:DGInvariants_ld-Quat_case,Gomi:KT_and_TD_for_Real_bundles} encompass a great understanding of the Kane-Mele invariant from several different perspectives, especially from the point of view of equivariant homotopy theory and cohomology.
In particular, in~\cite{DeNittis-Gomi:Classification_of_quaternionic_Bloch_bdls}, they introduce a refinement of the Kane-Mele invariant, the so-called FKMM invariant, that takes values in a $\ZZ$-equivariant cohomology theory rather than in $\ZZ$, but which reduces to the Kane-Mele invariant in certain cases.
It is similar to, but generalises, the approach to the Kane-Mele invariant presented in~\cite{Freed-Moore--TEM}.

In~\cite{DeNittis-Gomi:DGInvariants_ld-Quat_case}, building on their previous articles, De Nittis and Gomi use techniques to investigate Quaternionic Bloch bundles that are very similar to the techniques used in the present paper.
In particular, they use $\ZZ$-equivariant cohomology theory and the Wess-Zumino term to analyse the Kane-Mele invariant.
However, while the tools are similar, the actual methods employed are still different: the homotopy-theoretic approach we present in this section does not use equivariance, but identifies the Kane-Mele invariant as an obstruction in ordinary homotopy theory.
The long-exact sequences in equivariant cohomology from~\cite{DeNittis-Gomi:DGInvariants_ld-Quat_case} (see also~\cite{DeNittis-Gomi:DGInvariants_Real_case,Gomi:KT_and_TD_for_Real_bundles}) are similar to the long exact sequences we develop in Sections~\ref{sect:Localisation in equivar HdR} in that they interrelate different equivariant cohomology theories.
However, our use of the Mayer-Vietoris sequence complements their study of equivariant cohomology in the context of the Kane-Mele invariant, and leads to a previously unknown, geometric explanation for the localisation of the Kane-Mele invariant.
Our Section~\ref{sect:KM from BGrb hol} is the merging of the differential geometric formalism from Section~\ref{sect:Localisation in equivar HdR} with the Wess-Zumino approach to the Kane-Mele invariant; here our use of the theory of gerbe holonomy and 2-category theory to explain the localisation also complements the analysis in the works of De Nittis and Gomi.
\qen
\end{remark}

\section{Localisation of the Kane-Mele invariant}
\label{sect:Localisation in equivar HdR}

\subsection{The Kane-Mele invariant as a localisation formula}
\label{sect:KMfixedpoint}

Next we investigate the interplay between the geometry of the
canonical $3$-form $H_{\sfU(m)}$ on $\sfU(m)$, and the involutions on
the unitary group $\sfU(m)$ and on the Brillouin torus $X_\Pi=\TT^3$.
As before, let $\kappa \colon \sfU(m) \to \sfU(m)$ be the involution given by $\kappa(g) = -g^\trans$.
For differential forms $\omega$ and $\eta$ of
degrees $p$ and $q$, respectively, valued in $m\times m$ matrices with complex coefficients we have 
\begin{equation}
	(\omega \wedge \eta)^\trans = (-1)^{p\,q}\, \eta^\trans \wedge
        \omega^\trans\ .
\end{equation}
Consequently the canonical $3$-form $H_{\sfU(m)}$ on $\sfU(m)$ satisfies
\begin{equation}
	\kappa^*H_{\sfU(m)} = - H_{\sfU(m)}\ ,
\end{equation}
i.e. it is a \emph{Real} 3-form~\cite{Hekmati:2016ikh}, and we have $\kappa \circ w_\varphi = w_\varphi \circ \tau$.

We can now rewrite the Kane-Mele invariant as follows.
Let $X^3 \coloneqq \TT^3$ and let $F^3 \subset \TT^3$ denote one of the halves of the $3$-torus which provides a fundamental domain for the $\ZZ$-action $\tau$.
That is, for $\TT^3 = [-\pi, \pi]^3/{\sim}$ and $\tau$ acting as
inversion through the origin, we can take $F^3 = [-\pi,
\pi]^2/{\sim}\times [0,\pi]$.
Then $\partial F^3 = \overline{\partial (X^3 \backslash F^3)}$, where
overline denotes orientation-reversal, and $\tau \colon F^3 \to X^3 \backslash F^3$ is an orientation-reversing diffeomorphism; that is $\tau(F^3) = \overline{X^3 \backslash F^3}$.
Therefore $(\partial F^3, \tau_{|\partial F^3})$ is an involutive manifold with orientation-preserving involution.
As $F^3 \simeq \TT^2$, there exists $\eta^2 \in \Omega^2(F^3)$ such that $\dd \eta^2 = (w_\varphi^* H_{\sfU(m)})_{|F^3}$.
Over $X^3 \backslash F^3$, we have $(w_\varphi^* H_{\sfU(m)})_{|X^3 \backslash F^3} = - \tau^* \big( (w_\varphi^* H_{\sfU(m)})_{|F^3} \big) = - \tau^* \dd \eta^2$.
Putting everything together we have
\begin{equation}
\begin{aligned}
	\int_{X^3}\, w_\varphi^* H_{\sfU(m)}
	&= \int_{F^3}\, \dd \eta^2 + \int_{X^3 \backslash F^3}\, \big(- \dd\, \tau^*\eta^2\big)
	\\[4pt]
	&= \int_{\partial F^3}\, \eta^2 - \int_{\partial (X^3 \backslash F^3)}\, \tau^*\eta^2
	\\[4pt]
	&= \int_{X^2}\, \big(\eta^2 + \tau^* \eta^2\big)\ ,
\end{aligned}
\end{equation}
where we have set $X^2 \coloneqq \partial F^3$.
The computation of the Kane-Mele invariant has thus been reduced from an integral of a $3$-form over the $3$-dimensional manifold $X^3 = \TT^3 = X_\Pi$ to an integral of a $2$-form $\rho^2 \coloneqq (\eta^2 + \tau^* \eta^2)_{|X^2}$ over the $2$-dimensional involutive manifold $(X^2, \tau_{|X^2})$.
Whereas $w_\varphi^*H_{\sfU(m)}$ was Real for the
orientation-reversing involution $\tau$, the new $2$-form $\rho^2$ is invariant with respect to the orientation-preserving involution $\tau_{|X^2}$.

For a different choice $\eta^{2 \,\prime} = \eta^2 + \beta^2$, we must have $\dd \beta^2 = 0$ in order to satisfy $\dd \eta^{2 \,\prime} = (w_\varphi^* H_{\sfU(m)})_{|F^3}$.
Consequently
\begin{equation}
	\int_{\partial F^3}\, \eta^{2 \,\prime} = \int_{\partial
          F^3}\, \big( \eta^2 + \beta^2\big)
	= \int_{F^3}\, \big( \dd \eta^2 + \dd \beta^2\big)
	= \int_{\partial F^3}\, \eta^2\ ,
\end{equation}
so that
\begin{equation}
	\int_{X^2}\, \big(\eta^2 + \tau^* \eta^2 \big)
	= \int_{X^3}\, w_\varphi^* H_{\sfU(m)}
	= \int_{X^2}\, \big(\eta^{2 \,\prime} + \tau^* \eta^{2
          \,\prime}\, \big)\ .
\end{equation}

We also have $\dd \rho^2 = 0$ by dimensional reasons, so that we are
in a very similar situation to that we started with, only in one dimension lower, and with Reality swapped for invariance and orientation-reversal for orientation-preservation.
We can now iterate the above argument to obtain
\begin{equation}
	\int_{X^3}\, w_\varphi^*H_{\sfU(m)}
	= \int_{X^2}\, \rho^2
	= \int_{X^1}\, \rho^1
	= \int_{X^0}\, \rho^0\ ,
\end{equation}
where $X^d = \partial F^{d+1}$ is the boundary of a fundamental domain of $X^{d+1}$ and $F^{d+1}$ consists of a disjoint union of half tori, one in each connected component of $X^{d+1}$.
Each pair $(X^d,\tau_{|X^d})$ is an involutive manifold, where $\tau^d:=\tau_{|X^d}$ reverses the orientation if $d$ is odd and preserves the orientation if $d$ is even.
Similarly $\tau^{d\,*}\rho^d = (-1)^d\, \rho^d$.
By construction, $X^0 = (X^d)^{\tau^d} = \{ k \in X^d\, |\,
\tau^d(k) = k \}$ is the set of fixed points of the
$\ZZ$-action in every dimension, which has cardinality $2^3=8$.
Hence the Wess-Zumino-Witten action functional
\begin{equation}
	\int_{X^3}\, w_\varphi^*H_{\sfU(m)} = \sum_{k \in (X^3)^{\tau}}\, \rho^0(k)
\end{equation}
is given simply by a sum of the values of $\rho^0$ over the set of
fixed points in the Brillouin zone $X^3$, which are the time-reversal invariant momenta.
We have not yet commented on the possible dependence on the choice of
$\eta^d$ in each dimension, but we will do so in the more careful analysis below.

\subsection{A Mayer-Vietoris Theorem for involutive manifolds}
\label{sect:MV for involutive spaces}

We will now provide a more general topological explanation
for the fixed point formula expressing the Kane-Mele invariant above.
Let $(X^d,\tau^d)$ be an involutive $d$-dimensional manifold and for
$p\in\NN_0$ set
\begin{equation}
\begin{aligned}
	\Omega^p_\pm (X^d) &\coloneqq \big\{ \omega \in \Omega^p(X^d)\, \big|\, \tau^{d\,*}\omega = \pm\, \omega \big\}\ ,
	\\[4pt]
\Omega^p_{\pm\, \cl} (X^d) &\coloneqq \ker \big( \dd \colon
\Omega^p_\pm(X^d) \to \Omega^{p+1}_\pm(X^d) \big) \ , \\[4pt]
	\rmH^p_{\dR\, \pm}(X^d) &\coloneqq \frac{\Omega^p_{\pm\, \cl}
          (X^d) }{\dd \Omega^{p-1}_\pm(X^d)}\ .
\end{aligned}
\end{equation}
We refer to $\rmH^p_{\dR\, +}(X^d)$ as the \emph{$p$-th equivariant de Rham cohomology group} of $X^d$, and to $\rmH^p_{\dR\, -}(X^d)$ as the \emph{$p$-th Real de Rham cohomology group} of $X^d$.

\begin{lemma}
\label{st:HdR_pm_and_equivar_homotopy}
Let $(X^d,\tau^d),\, (Y^{d'},\tau^{\prime\,d'}\,)$ be involutive
manifolds. Let $f,g \in\Mfd_{\ZZ}\big( X^d,Y^{d'}\big)$ be smooth maps
which intertwine the involutions, and let $h \colon [0,1] \times X^d \to Y^{d'}$ be a smooth homotopy from $f$ to $g$ such that $h(t, \tau^d(x)) = \tau^{\prime\,d'} \circ h(t,x)$ for all $t \in [0,1]$ and $x \in X^d$.
Then the induced pullbacks in equivariant and Real de Rham cohomology agree: $f^* = g^* \colon \rmH^p_{\dR\, \pm}(Y^{d'}) \to \rmH^p_{\dR\, \pm}(X^d)$.
\end{lemma}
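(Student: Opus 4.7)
The plan is to adapt the standard chain-homotopy proof of de~Rham homotopy invariance, checking at each step that the construction respects the $\ZZ$-action. So I will produce a chain homotopy $K$ on the de~Rham complex and verify that it restricts to a chain homotopy on the (anti)invariant subcomplexes.

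First, I would introduce the fibre-integration operator $K \colon \Omega^p([0,1] \times X^d) \to \Omega^{p-1}(X^d)$ obtained by integrating along the $[0,1]$ factor. A classical computation gives the homotopy formula
\begin{equation}
  \jmath_1^* - \jmath_0^* \;=\; \dd \circ K + K \circ \dd
\end{equation}
on $\Omega^\bullet([0,1] \times X^d)$, where $\jmath_t \colon X^d \hookrightarrow [0,1] \times X^d$ is the inclusion at level $t$. Since $f = h \circ \jmath_0$ and $g = h \circ \jmath_1$, applying this to $h^*\omega$ yields $g^*\omega - f^*\omega = \dd K(h^*\omega) + K(h^*\dd\omega)$, which is the usual proof that $f^* = g^*$ on $\rmH^p_\dR$.

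The only nontrivial point is showing that the chain homotopy $K \circ h^*$ preserves the $\pm$ grading. For this I would consider the involution $H = \mathrm{id}_{[0,1]} \times \tau^d$ on $[0,1] \times X^d$. The equivariance assumption on $h$ is precisely the statement that $h \circ H = \tau^{\prime\,d'} \circ h$, so for $\omega \in \Omega^p_\pm(Y^{d'})$ we have
\begin{equation}
  H^*(h^*\omega) \;=\; h^*\bigl((\tau^{\prime\,d'})^*\omega\bigr) \;=\; \pm\, h^*\omega\ .
\end{equation}
Because $H$ acts trivially on the $[0,1]$-factor, Fubini (together with the fact that $\tau^d$ preserves the orientation of $[0,1] \times X^d$ along the $X^d$-factor) gives the intertwining identity $K \circ H^* = \tau^{d\,*} \circ K$ on all of $\Omega^\bullet([0,1] \times X^d)$. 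Combining these two identities,
\begin{equation}
  \tau^{d\,*}\bigl(K(h^*\omega)\bigr) \;=\; K(H^*h^*\omega) \;=\; \pm\, K(h^*\omega)\ ,
\end{equation}
so $K(h^*\omega) \in \Omega^{p-1}_\pm(X^d)$, and likewise $K(h^*\dd\omega) \in \Omega^p_\pm(X^d)$. Therefore the homotopy formula shows $g^*\omega - f^*\omega$ is exact in $\Omega^\bullet_\pm(X^d)$, yielding $f^* = g^*$ on $\rmH^p_{\dR\,\pm}(X^d)$.

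I do not expect any real obstacle here: the argument is formally identical to the classical one, and the only bookkeeping is the verification that $K$ commutes with the pullback along the product involution $H$, which is immediate from the definition of fibre integration. The one subtle point worth stating carefully is that the equivariance hypothesis on $h$ is needed for the \emph{entire homotopy}, not just at the endpoints, since otherwise $H^*(h^*\omega)$ would not simplify to $\pm\, h^*\omega$ on $[0,1] \times X^d$.
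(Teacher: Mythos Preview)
Your proof is correct and takes essentially the same approach as the paper: both construct the standard de~Rham chain homotopy via integration along $[0,1]$ and then verify it preserves the $\pm$-grading using the equivariance of $h$. The paper writes the homotopy operator explicitly as $\int_0^1 \jmath_t^*(\iota_{\partial_t}\, h^*\omega')\, \dd t$ and applies it directly to a closed $\omega'$, whereas you phrase it as a fibre-integration $K$ satisfying the full chain-homotopy identity; this is a purely notational difference.
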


\begin{proof}
The compatibility of $f$ and $g$ with the involutions implies that
both $f^*$ and $g^*$ define homomorphisms from
$\rmH^p_{\dR\, \pm}(Y^{d'})$ to $\rmH^p_{\dR\,
  \pm}(X^d)$.
Denote by $\jmath_t \colon X^d \hookrightarrow [0,1] \times X^d$, $x \mapsto (t,x)$ the inclusion at $t \in [0,1]$, and let $\omega' \in \Omega^p_\pm(Y^{d'})$.
Let $\iota_Z \omega'$ be the contraction of a vector field $Z$ into the first slot of the differential form $\omega'$.
Then~\cite{Lee--Smooth_manifolds}
\begin{equation}
	g^*\omega' - f^*\omega'
	= \jmath_1^*\, h^* \omega' - \jmath_0^*\, h^* \omega'
	= \dd \bigg( \int_0^1\, \jmath_t^* \big( \iota_{\partial_t}\,
        h^*\omega'\, \big)\ \dd t \bigg)\ .
\end{equation}
Now
\begin{equation}
\begin{aligned}
	\tau^{d\,*} \bigg( \int_0^1\, \jmath_t^* \big( \iota_{\partial_t}\, h^*\omega'\, \big)\ \dd t \bigg)
	&= \int_0^1\, \jmath_t^* (1 \times \tau^{d\,*}) \big( \iota_{\partial_t}\, h^*\omega'\, \big)\ \dd t
	\\[4pt]
	&= \int_0^1\, \jmath_t^* \big( \iota_{\partial_t}\, (1 \times \tau^{d\,*}) h^*\omega'\, \big)\ \dd t
	\\[4pt]
	&= \int_0^1\, \jmath_t^* \big( \iota_{\partial_t}\, h^* \tau^{\prime\,d'\, *} \omega'\, \big)\ \dd t
	\\[4pt]
	&= \pm \, \int_0^1\, \jmath_t^* \big( \iota_{\partial_t}\,
        h^*\omega'\, \big)\ \dd t \ ,
\end{aligned}
\end{equation}
so that $f^*\omega'$ and $g^*\omega'$ differ by an element of $\dd \Omega^{p-1}_\pm (X^d)$.
\end{proof}

Let $F^d \subset X^d$ be a fundamental domain for a nontrivial
$\ZZ$-action $\tau^d$, i.e. $\tau^d$ restricts to a diffeomorphism
$ F^d \to \overline{X^d \backslash F^d}$.

\begin{proposition}
\label{st:HdR SES for fundamental domain}
Set $X^{d-1} = \partial F^d$ and assume that $X^{d-1} \subset X^d$ is an embedded submanifold.
\begin{myenumerate}
	\item There exists an open neighbourhood $U^d$ of $X^{d-1}$ in $X^d$ which is invariant under $\tau^d$ and which retracts $\ZZ$-equivariantly onto $X^{d-1}$.
	
	\item Let $V^d = F^d \cup U^d$.
	There exists a partition of unity on $X^d$ subordinate to the
        cover $X^d = V^d \cup \tau^d(V^d)$ of the form $(f,
        \tau^{d\,*}f)$ for some smooth function $f:V^d\to\FR$.	
	
	\item For any $p \in \NN_0$ there is a short exact sequence of $\FR$-vector spaces
	\begin{equation}
	\label{eq:SES of spaces of forms}
	\begin{tikzcd}[column sep=2cm]
		0 \ar[r] & \Omega^p_\mp(X^d) \ar[r, "\imath_d^*"] &
                \Omega^p(V^d) \ar[r, "(1 \pm \tau^{d\,*}) \,
                \jmath_d^*"] & \Omega^p_\pm(U^d) \ar[r] & 0\ ,
	\end{tikzcd}
	\end{equation}
where $\imath_d \colon V^d \hookrightarrow X^d$ and $\jmath_d \colon U^d \hookrightarrow V^d$ are inclusions of open subsets.
	
	\item There is a short exact sequence of complexes of vector spaces
	\begin{equation}
	\label{eq:SES of complexes of forms}
	\begin{tikzcd}[column sep=2cm]
		0 \ar[r] & \Omega^\bullet_\mp(X^d) \ar[r, "\imath_d^*"]
                & \Omega^\bullet(V^d) \ar[r, "(1 \pm \tau^{d\,*})
                \, \jmath_d^*"] & \Omega^\bullet_\pm(U^d) \ar[r] & 0\ .
	\end{tikzcd}
	\end{equation}
\end{myenumerate}
\end{proposition}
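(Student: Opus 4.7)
The plan is to treat the four statements in order, with parts~(1) and~(2) providing the geometric tools needed for the algebraic statements~(3) and~(4).

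For (1), I would construct $U^d$ via an equivariant tubular neighborhood of $X^{d-1} \subset X^d$. Choose any Riemannian metric on $X^d$ and average it under the $\ZZ$-action to obtain a $\tau^d$-invariant metric. The submanifold $X^{d-1}$ is $\tau^d$-invariant (since $\tau^d$ swaps $F^d$ and $\overline{X^d \setminus F^d}$, it preserves their common boundary), so its normal bundle inherits a fibrewise linear involution. The normal exponential map with respect to the invariant metric provides a $\tau^d$-equivariant diffeomorphism of an open neighbourhood of the zero section with a tubular neighborhood $U^d$ of $X^{d-1}$, and the bundle projection gives the required equivariant deformation retraction onto $X^{d-1}$.

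For (2), note that $\{V^d, \tau^d(V^d)\}$ is an open cover of $X^d$ (since $F^d \cup \tau^d(F^d) = X^d$ and we have thickened to $V^d = F^d \cup U^d$). Pick any smooth bump function $\chi \colon X^d \to [0,1]$ with $\supp(\chi) \subset V^d$ and $\chi \equiv 1$ on $F^d \setminus U^d$; then $\chi + \tau^{d\,*}\chi > 0$ everywhere, and $f \coloneqq \chi/(\chi + \tau^{d\,*}\chi)$ has the required properties, namely $\supp(f) \subset V^d$ and $f + \tau^{d\,*}f = 1$.

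For (3), I would verify the three properties of a short exact sequence separately. Injectivity of $\imath_d^*$: if $\omega \in \Omega^p_\mp(X^d)$ vanishes on $V^d$, then $\tau^{d\,*}\omega = \mp\, \omega$ vanishes on $\tau^d(V^d)$, and since $X^d = V^d \cup \tau^d(V^d)$, $\omega = 0$. Exactness at the middle: the composition is clearly zero because $\imath_d^*\omega$ restricted to $U^d$ equals $\tau^{d\,*}\imath_d^*\omega$ restricted to $U^d$ up to the sign $\mp$; conversely, given $\omega \in \Omega^p(V^d)$ with $\jmath_d^*\omega \pm \tau^{d\,*}\jmath_d^*\omega = 0$ on $U^d$, glue $\omega$ on $V^d$ to $\mp\, \tau^{d\,*}\omega$ on $\tau^d(V^d)$; the two restrictions agree on the overlap $U^d$ by hypothesis, producing a smooth form on $X^d$ lying in $\Omega^p_\mp(X^d)$. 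Surjectivity of $(1 \pm \tau^{d\,*})\, \jmath_d^*$: given $\eta \in \Omega^p_\pm(U^d)$, define $\omega \in \Omega^p(V^d)$ by setting $\omega = \tau^{d\,*}f \cdot \eta$ on $U^d$ and extending by zero outside $U^d$; this is smooth because $\tau^{d\,*}f$ vanishes on $V^d \setminus U^d$ (which lies outside $\tau^d(V^d)$). Then on $U^d$ we compute
\begin{equation}
	\jmath_d^*\omega \pm \tau^{d\,*}\jmath_d^*\omega = \tau^{d\,*}f \cdot \eta \pm f \cdot \tau^{d\,*}\eta = (\tau^{d\,*}f + f)\, \eta = \eta \ ,
\end{equation}
using $\tau^{d\,*}\eta = \pm\,\eta$ in the middle step.

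Finally, part~(4) follows immediately from (3) because both $\imath_d^*$ and $\jmath_d^*$ commute with the exterior derivative, and $\tau^{d\,*}$ does so as well, so the maps in~\eqref{eq:SES of spaces of forms} promote to chain maps.

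I expect the principal subtlety to lie in the surjectivity argument in~(3): the choice of extension by zero is smooth only because of the precise support property of the partition of unity constructed in~(2), which in turn relies on the fact that the thickening $U^d$ comes from an equivariant tubular neighborhood. The proof is therefore largely a careful exercise in organising the geometric input from~(1) and~(2) to drive the algebraic conclusion, rather than introducing any genuinely new estimates.
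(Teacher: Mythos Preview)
Your proposal is correct and follows essentially the same strategy as the paper: parts~(3) and~(4) are proved identically (same gluing argument for middle exactness, same use of $\tau^{d\,*}f\cdot\eta$ extended by zero for surjectivity), while for (1) you use an invariant metric and the normal exponential map where the paper instead averages an outward normal vector field and uses its flow---both are standard routes to an equivariant tubular neighbourhood.

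One small slip in (2): with $\chi\equiv 1$ only on $F^d\setminus U^d$ you have not guaranteed $\chi+\tau^{d\,*}\chi>0$ on $U^d$, since both $\chi(x)$ and $\chi(\tau^d(x))$ could vanish there. Requiring $\chi\equiv 1$ on all of $F^d$ (possible since $F^d$ is closed in the open set $V^d$) fixes this, as every point of $X^d$ lies in $F^d$ or $\tau^d(F^d)$. The paper sidesteps the issue by simply averaging an arbitrary partition of unity $(g_0,g_1)$ to $f=\tfrac12(g_0+\tau^{d\,*}g_1)$.
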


\begin{proof}
(1)
Consider an outward-pointing normal vector field $\check{N}^{d-1}_+$ on $X^{d-1} \subset F^d$.
Since $X^{d-1}$ is a closed submanifold of $X^d$, we can find an extension $N^{d-1}_+$ of $\check{N}^{d-1}_+$ to a smooth vector field on $X^d$.
Because it is smooth, $N^{d-1}_+$ is nowhere-vanishing on an open neighbourhood of $X^{d-1}$ in $X^d$.
We can make it $\ZZ$-invariant by setting $N^{d-1} = \frac{1}{2}\, (N^{d-1}_+ + \tau^d_*N^{d-1}_+)$.
Now there exists $\epsilon > 0$ such that the flow $\Phi^{N^{d-1}}$ of $N^{d-1}$ is defined on $(-\epsilon, \epsilon) \times X^{d-1}$ and invertible, i.e. a diffeomorphism onto its image
\begin{equation}
	U^{d} \coloneqq \Phi^{N^{d-1}} \big( (-\epsilon, \epsilon)
        \times X^{d-1} \big)\ .
\end{equation}
Since $N^{d-1}$ is an invariant vector field, we have $\Phi^{N^{d-1}}_t \circ \tau^d = (1_{(-\epsilon, \epsilon)} \times \tau^d) \circ \Phi^{N^{d-1}}_{-t}$, thus making $\Phi^{N^{d-1}} \colon (-\epsilon, \epsilon) \times X^{d-1} \to U^d$ a diffeomorphism of involutive manifolds if $(-\epsilon, \epsilon) \times X^{d-1}$ is endowed with the involution $(\inv \times \tau^{d-1}) (t,k) = (-t,\tau^d(k))$.
The claimed retraction is given by composing the canonical retraction of $(-\epsilon, \epsilon) \times X^{d-1}$ onto $X^{d-1}$ with $\Phi^{N^{d-1}}$.

(2)
There exists a partition of unity $(g_0, g_1)$ subordinate to $X^d = V^d \cup \tau^d(V^d)$, as this is locally finite, with $g_0$ defined on $V^d$ and $g_1$ on $\tau^d(V^d)$.
We set $f = \frac{1}{2}\, (g_0 + \tau^{d\,*}g_1)$.

(3)
For exactness at $\Omega^p_\mp(X^d)$, if $\omega \in \Omega^p_\mp(X^d)$ with $\imath_d^*\,\omega = \omega_{|V^d} = 0$, then $\omega_{|\tau^d(V^d)} = \mp\, \tau^{d\,*}(\omega_{|V^d}) = 0$ and hence $\omega = 0$.

Next we check exactness at $\Omega^p(V^d)$.
For $\eta \in \Omega^p(V^d)$ such that $\eta_{|U^d} \pm (\tau^{d\,*}
\eta)_{|U^d} = 0$, consider the form $\omega$ given by $\omega_{|V^d} = \eta$ and $\omega_{|\tau^d(V^d)} = \mp\, \tau^{d\,*}\eta$.
This is well-defined since $\eta_{|U^d} - (\mp\, \tau^{d\,*} \eta)_{|U^d} = 0$, and moreover $\omega \in \Omega^p_\mp(X^d)$ by construction with $\imath_d^*\,\omega = \eta$.

For exactness at $\Omega^p_\pm(U^d)$,
if $\rho \in \Omega^p_\pm(U^d)$ and $(f, \tau^{d\,*}f)$ is a partition of unity as in statement~(2), set
\begin{equation}
\label{eq:def Delta^p_0}
	(\Delta_0^p\, \rho)_{|k} =
	\begin{cases}
		\ (\tau^{d\,*}f)(k)\, \rho_{|k} \ , & k \in U^d\ ,\\
		\ 0 \ , & k \in V^d \backslash \supp(\tau^{d\,*}f)\ .
	\end{cases}
\end{equation}
This defines a smooth form $\Delta^p_0\, \rho \in \Omega^p(V^d)$.
Then
\begin{equation}
	(\Delta_0^p\, \rho)_{|U^d} \pm (\tau^{d\,*} \Delta_0^p\, \rho)_{|U^d}
	= (\tau^{d\,*}f)\, \rho \pm f\, \tau^{d\,*}\rho
	= (\tau^{d\,*}f + f)\, \rho
	= \rho\ ,
\end{equation}
thus proving surjectivity of $(1 \pm \tau^{d\,*}) \, \jmath_d^*$.

(4) 
The maps in the short exact sequence~\eqref{eq:SES of
            spaces of forms} are combinations of
pullbacks and thus compatible with the exterior derivative $\dd$.
\end{proof}

\begin{remark}
\label{rmk:HdR SES for open subset}
The exact same line of argument applies to any open subset $V^d \subset X^d$ and $U^d = V^d \cap \tau^d(V^d)$ such that $X^d = V^d \cup \tau^d(V^d)$.
That is, there exists a partition of unity subordinate to the cover
$(V^d,\tau^d(V^d))$ of the form $(f, \tau^{d\,*}f)$, and there is a
short exact sequence of cochain complexes of $\FR$-vector spaces of
the same form as \eqref{eq:SES of complexes of forms}.
\qen
\end{remark}

Applying the Zig-Zag Lemma to the short exact sequence of cochain complexes~\eqref{eq:SES of complexes of forms}, we immediately obtain

\begin{theorem}
\label{st:LES for open subset}
If $(X^d,\tau^d)$ is an involutive manifold, with $V^d \subset X^d$ an open subset such that $X^d = V^d \cup \tau^d(V^d)$ and $U^d = V^d \cap \tau^d(V^d)$,
then there is a long exact sequence of cohomology groups
\begin{equation}
\label{eq:LES for open subset}
\begin{tikzcd}[column sep=2cm, row sep=1cm]
	\cdots \ar[r] & \rmH^{p-1}_{\dR\, \mp}(X^d) \ar[r, "\imath_d^*"] & \rmH_\dR^{p-1}(V^d)\ar[r, "(1 \pm \tau^{d\,*})\,\jmath_d^*"] \ar[d, phantom, ""{coordinate, name=Z}] & \rmH^{p-1}_{\dR\, \pm}(U^d)
	\ar[dll, "\Delta^{p-1}"' pos=1, rounded corners, to path={-- ([xshift=2ex]\tikztostart.east) |- (Z) \tikztonodes -| ([xshift=-2ex]\tikztotarget.west) -- (\tikztotarget)}]
	\\
	 & \rmH^p_{\dR\, \mp}(X^d) \ar[r, "\imath_d^*"] & \rmH_\dR^p(V^d)\ar[r, "(1 \pm \tau^{d\,*})\,\jmath_d^*"] & \rmH^p_{\dR\, \pm}(U^d) \ar[r] & \cdots
\end{tikzcd}
\end{equation}
\end{theorem}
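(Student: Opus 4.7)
The plan is to obtain the long exact sequence \eqref{eq:LES for open subset} as a direct application of the Zig-Zag Lemma (Snake Lemma for cochain complexes) to the short exact sequence of cochain complexes of $\FR$-vector spaces
\begin{equation*}
\begin{tikzcd}[column sep=1.5cm]
        0 \ar[r] & \Omega^\bullet_\mp(X^d) \ar[r, "\imath_d^*"] & \Omega^\bullet(V^d) \ar[r, "(1 \pm \tau^{d\,*})\,\jmath_d^*"] & \Omega^\bullet_\pm(U^d) \ar[r] & 0
\end{tikzcd}
\end{equation*}
which is precisely the content of Proposition~\ref{st:HdR SES for fundamental domain}(4), applied in the general open-subset version guaranteed by Remark~\ref{rmk:HdR SES for open subset}. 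Once exactness of this sequence of cochain complexes is in hand, the long exact sequence in cohomology is formal and the only task left is to identify the arrows and verify that the connecting map has the correct behaviour alternating Real and equivariant cohomology in consecutive degrees.

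First I would recall why the three maps $\imath_d^*$, $(1\pm\tau^{d\,*})\,\jmath_d^*$ and $\dd$ commute appropriately and preserve the sign conventions that label the two cohomology theories. The pullback $\imath_d^*$ clearly commutes with $\dd$; the operator $(1\pm\tau^{d\,*})\,\jmath_d^*$ sends a generic form $\eta$ on $V^d$ to a form on $U^d$ whose $\tau^d$-pullback is $\pm$ itself, hence lands in $\Omega^\bullet_\pm(U^d)$; and naturality of $\dd$ under both pullbacks ensures that the sequence is a morphism of cochain complexes.

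Next I would unwind the connecting homomorphism $\Delta^{p-1}$ explicitly. Given $[\rho] \in \rmH^{p-1}_{\dR\,\pm}(U^d)$, use a partition of unity $(f,\tau^{d\,*}f)$ subordinate to $X^d = V^d \cup \tau^d(V^d)$ (which exists by Proposition~\ref{st:HdR SES for fundamental domain}(2) or by the argument of Remark~\ref{rmk:HdR SES for open subset}) to lift $\rho$ to $\Delta_0^{p-1}\rho \in \Omega^{p-1}(V^d)$ as in \eqref{eq:def Delta^p_0}. Its exterior derivative $\dd\Delta_0^{p-1}\rho$ is annihilated by $(1\pm\tau^{d\,*})\,\jmath_d^*$ on $U^d$ because $\rho$ is closed there, so by exactness of \eqref{eq:SES of spaces of forms} in degree $p$ it comes from a unique global form on $X^d$, which by a short sign check lies in $\Omega^p_{\mp\,\cl}(X^d)$ — the crucial sign flip being induced by the $(1\pm\tau^{d\,*})$ combination, which explains the alternation between the two cohomology theories in consecutive terms of \eqref{eq:LES for open subset}. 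Standard arguments show that $\Delta^{p-1}[\rho]$ is independent of the choice of partition of unity and of representative of $[\rho]$.

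The main potential obstacle is the bookkeeping of signs that alternates between $+$ and $-$ in the sequence: one must carefully verify that if one starts in $\rmH^{p-1}_{\dR\,\pm}(U^d)$ then the output $\Delta^{p-1}[\rho]$ genuinely lies in $\rmH^p_{\dR\,\mp}(X^d)$ rather than in $\rmH^p_{\dR\,\pm}(X^d)$. This follows because $\tau^{d\,*}f = 1 - f$ on the overlap, so $\tau^{d\,*}(\Delta_0^{p-1}\rho)$ differs from $\Delta_0^{p-1}\rho$ by the term $\pm\rho$ (up to the partition-of-unity convention), and taking $\dd$ swaps the sign of invariance under $\tau^{d\,*}$. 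With this verification in place, exactness of \eqref{eq:LES for open subset} at each of its three types of terms is then the usual Zig-Zag assertion, and no further work is required.
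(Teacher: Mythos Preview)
Your proposal is correct and follows exactly the paper's approach: the paper simply states that the theorem is obtained by applying the Zig-Zag Lemma to the short exact sequence of cochain complexes~\eqref{eq:SES of complexes of forms} (together with Remark~\ref{rmk:HdR SES for open subset} for the general open-subset case). Your additional unwinding of the connecting homomorphism and verification of the sign alternation is more detail than the paper provides, but it is consistent with and elaborates on the same argument.
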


\begin{remark}
Theorem~\ref{st:LES for open subset} could as well be derived starting from the short exact sequence of complexes of sheaves of abelian groups
\begin{equation}
	\xymatrix{
		0 \ar@{->}[r] & \CA^\bullet_{\cup\, \mp} \ar@{->}[r] &
                \CA^\bullet \ar@{->}[r] & \CA^\bullet_{\cap\, \pm}
                \ar@{->}[r] & 0\ ,
	}
\end{equation}
where for open subsets $U \subset X^d$ we define $\CA^\bullet_{\cup\, \mp}(U) =
\Omega^\bullet_\mp(U \cup \tau^d(U))$, together with $\CA^\bullet(U) = \Omega^\bullet(U)$ and $\CA^\bullet_{\cap\, \pm}(U) = \Omega^\bullet_\pm(U \cap \tau^d(U))$, with the morphisms as in~\eqref{eq:SES of complexes of forms} promoted accordingly to morphisms of sheaves.
\qen
\end{remark}

\begin{corollary}
\label{st:LES for fdom}
If $(X^d,\tau^d)$ is an involutive manifold with a fundamental domain
$F^d$, if we set $(X^{d-1},\tau^{d-1}) := (\partial F^d,\tau^d_{|\partial
  F^d})$, and if $U^d,V^d\subset X^d$ are chosen as in
Proposition~\ref{st:HdR SES for fundamental domain}, then there is a long exact sequence of cohomology groups
\begin{equation}
\label{eq:LES for fundamental domain}
\begin{tikzcd}[column sep=1.25cm, row sep=1cm]
	\cdots \ar[r] & \rmH^{p-1}_{\dR\, \mp}(X^d) \ar[r, "\imath_d^*"] & \rmH_\dR^{p-1}(V^d)\ar[r, "\psi_d^{p-1}"] \ar[d, phantom, ""{coordinate, name=Z}] & \rmH^{p-1}_{\dR\, \pm}(X^{d-1})
	\ar[dll, "\Delta^{d,p-1}"' pos=1, rounded corners, to path={-- ([xshift=2ex]\tikztostart.east) |- (Z) \tikztonodes -| ([xshift=-2ex]\tikztotarget.west) -- (\tikztotarget)}]
	\\
	 & \rmH^p_{\dR\, \mp}(X^d) \ar[r, "\imath_d^*"] & \rmH_\dR^p(V^d)\ar[r, "\psi_d^p"] & \rmH^p_{\dR\, \pm}(X^{d-1}) \ar[r] & \cdots
\end{tikzcd}
\end{equation}
\end{corollary}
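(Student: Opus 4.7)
The plan is to deduce the corollary from Theorem~\ref{st:LES for open subset} by replacing the equivariant/Real de~Rham cohomology of the open thickening $U^d$ with that of its retract $X^{d-1}$. With $V^d = F^d \cup U^d$ and $U^d = V^d \cap \tau^d(V^d)$ as in Proposition~\ref{st:HdR SES for fundamental domain}, Theorem~\ref{st:LES for open subset} immediately supplies a long exact sequence of the desired shape but with $\rmH^p_{\dR\,\pm}(U^d)$ in place of $\rmH^p_{\dR\,\pm}(X^{d-1})$.

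First I would invoke part~(1) of Proposition~\ref{st:HdR SES for fundamental domain}, which provides a $\ZZ$-equivariant deformation retraction of $U^d$ onto $X^{d-1}$ in the form of a smooth equivariant homotopy between $\imath_{X^{d-1}}\circ r$ and $1_{U^d}$, where $r \colon U^d \to X^{d-1}$ is the retraction and $\imath_{X^{d-1}} \colon X^{d-1} \hookrightarrow U^d$ the inclusion. Applying Lemma~\ref{st:HdR_pm_and_equivar_homotopy} then yields that $r^*$ and $\imath_{X^{d-1}}^*$ are mutually inverse isomorphisms
\begin{equation}
	r^* \colon \rmH^p_{\dR\, \pm}(X^{d-1}) \arisom \rmH^p_{\dR\, \pm}(U^d)\ , \qquad
	\imath_{X^{d-1}}^* \colon \rmH^p_{\dR\, \pm}(U^d) \arisom \rmH^p_{\dR\, \pm}(X^{d-1})
\end{equation}
in both signs of parity. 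Post-composing with $\imath_{X^{d-1}}^*$ at each occurrence of $\rmH^p_{\dR\,\pm}(U^d)$ in the sequence~\eqref{eq:LES for open subset}, and pre-composing the connecting homomorphism with $r^*$, transports the six-term fragments of the Mayer--Vietoris sequence into a long exact sequence whose intermediate term is $\rmH^p_{\dR\,\pm}(X^{d-1})$. Concretely, the middle arrow becomes
\begin{equation}
	\psi_d^p = \imath_{X^{d-1}}^* \circ (1 \pm \tau^{d\,*})\circ \jmath_d^*\ ,
\end{equation}
and the connecting map $\Delta^{d,p}$ is defined as the composition $\Delta^p \circ r^*$ of the original connecting homomorphism with $r^*$. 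Exactness is preserved throughout because isomorphisms kill no information.

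There is essentially no obstacle beyond verifying that the equivariant homotopy supplied by Proposition~\ref{st:HdR SES for fundamental domain}~(1) is smooth enough and genuinely equivariant, so that Lemma~\ref{st:HdR_pm_and_equivar_homotopy} applies in both the Real and the equivariant incarnations of de~Rham cohomology; this is immediate from the explicit construction via the flow of the averaged vector field $N^{d-1}$ in the proof of that proposition. The main subtlety worth recording is that the isomorphisms $r^*$ and $\imath_{X^{d-1}}^*$ exchange parities consistently with the alternation $\mp \leftrightarrow \pm$ between consecutive terms of~\eqref{eq:LES for open subset}, so the substitution can be done coherently at every degree without disturbing the sign conventions. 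This yields the sequence~\eqref{eq:LES for fundamental domain} as claimed.
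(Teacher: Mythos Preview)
Your proposal is correct and follows essentially the same approach as the paper: apply Theorem~\ref{st:LES for open subset} to obtain the long exact sequence with $\rmH^\bullet_{\dR\,\pm}(U^d)$, then use the $\ZZ$-equivariant retraction from Proposition~\ref{st:HdR SES for fundamental domain}(1) together with Lemma~\ref{st:HdR_pm_and_equivar_homotopy} to replace $U^d$ by $X^{d-1}$ via the mutually inverse isomorphisms induced by the inclusion and the collapsing map. One minor quibble: in your final paragraph the phrase ``exchange parities'' is misleading, since $r^*$ and $\imath_{X^{d-1}}^*$ \emph{preserve} the $\pm$ parity (as your own displayed equation shows); the alternation $\mp\leftrightarrow\pm$ in the sequence concerns the passage between the $X^d$ and $U^d$ terms, not the retraction isomorphism itself.
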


\begin{proof}
Lemma~\ref{st:HdR_pm_and_equivar_homotopy} shows that the pullback along the collapsing map $p_d \colon U^d \to X^{d-1}$ induces an isomorphism $p_d^* \colon \rmH^\bullet_{\dR\, \pm}(X^{d-1}) \to \rmH^\bullet_{\dR\, \pm}(U^d)$.
Its inverse is the pullback along the inclusion $\imath_{X^{d-1}} \colon X^{d-1} \hookrightarrow U^d$.
The homomorphisms in the long exact sequence~\eqref{eq:LES for fundamental domain} are defined via the commutative diagram
\begin{equation}
\label{eq:def of structure maps in fdom LES}
\begin{tikzcd}[column sep=2cm]
	\cdots\ \rmH^{p-1}_{\dR\, \mp}(X^d) \ar[r, "\imath_d^*"] \ar[d, shift left=0.2cm, equal] & \rmH_\dR^{p-1}(V^d)\ar[r, "(1 \pm \tau^{d\,*})\,\jmath_d^*"] \ar[d, equal] & \rmH^{p-1}_{\dR\, \pm}(U^d) \ar[r, "\Delta^{p-1}"] \ar[d, shift left=0.1cm, "\imath_{X^{d-1}}^*"] & \rmH^p_{\dR\, \mp}(X^d) \ar[d, shift right=0.2cm, equal] \ \cdots
	\\
	\cdots\ \rmH^{p-1}_{\dR\, \mp}(X^d) \ar[r, "\imath_d^*"] & \rmH_\dR^{p-1}(V^d)\ar[r, "\psi_d^{p-1}"] & \rmH^{p-1}_{\dR\, \pm}(X^{d-1}) \ar[r, "\Delta^{d,p-1}"] \ar[u, shift left=0.1cm, "p_d^*"] & \rmH^p_{\dR\, \mp}(X^d) \ \cdots
\end{tikzcd}
\end{equation}
which yields the sequence.
\end{proof}

As an application of this Mayer-Vietoris Theorem that we will use below, let us assume that $X^d$ is a closed oriented manifold, i.e. it is compact, oriented and $\partial X^d = \varnothing$.
Suppose that $\omega\in\Omega_\mp^d(X^d)$ is a top form with $\imath_d^*\,[\omega]=0$.
Then by Theorem~\ref{st:LES for open subset} there exists $\rho \in \Omega^{d-1}_{\pm\, \cl}(U^d)$ such that $[\omega] = \Delta^{d-1}[\rho]$.
We will now determine a choice of such closed form $\rho$ explicitly.

\begin{proposition}
\label{st:integral and connhom for V}
Let $(X^d, \tau^d)$ be an involutive manifold with a fundamental
domain $F^d$ such that $X^d$ is closed and
oriented, and let $V^d \subset X^d$ be open such that $X^d = V^d \cup \tau^d(V^d)$.
Set $U^d = V^d \cap \tau^d(V^d)$.
Suppose that $U^d$ is a $\ZZ$-equivariant deformation retract onto an invariant $(d-1)$-dimensional submanifold $\partial F^d \eqqcolon X^{d-1} \subset U^d$ with orientation induced from $\partial V^d$.
Then any class $[\omega] \in \ker(\imath_d^*)\subset\rmH^d_{\dR\, \mp}(X^d)$ has a preimage $[\rho] \in \rmH^{d-1}_{\dR\, \pm}(X^{d-1})$ under the connecting homomorphism $\Delta^{d,d-1}$ satisfying
\begin{equation}
	\int_{X^d}\, \omega = \int_{X^{d-1}}\, \rho\ .
\end{equation}
\end{proposition}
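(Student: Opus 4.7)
The plan is to trace through the definition of the connecting homomorphism $\Delta^{d,d-1}$ to write down a canonical preimage, verify the integral identity directly by Stokes' theorem combined with the $\mp$-symmetry of $\omega$, and finally use exactness of the long exact sequence \eqref{eq:LES for fundamental domain} to show that the integral is independent of the choice of preimage.

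First I would unwind $\Delta^{d,d-1}=\Delta^{d-1}\circ p_d^*$ from the diagram \eqref{eq:def of structure maps in fdom LES}. Since $\imath_d^*[\omega]=0$, there is $\eta\in\Omega^{d-1}(V^d)$ with $\omega|_{V^d}=d\eta$. The Zig-Zag construction of $\Delta^{d-1}$ applied in reverse then shows that a canonical preimage of $[\omega]$ in $\rmH^{d-1}_{\dR\,\pm}(U^d)$ is $[(1\pm\tau^{d*})\jmath_d^*\eta]$, and composition with $\imath_{X^{d-1}}^*=(p_d^*)^{-1}$ yields the preimage
\begin{equation*}
[\rho_0]=\bigl[(1\pm\tau^{d-1\,*})(\eta|_{X^{d-1}})\bigr]\ \in\ \rmH^{d-1}_{\dR\,\pm}(X^{d-1})\ .
\end{equation*}

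Next I would compute $\int_{X^d}\omega=\int_{F^d}\omega+\int_{X^d\setminus F^d}\omega$. Since $F^d\subset V^d$, Stokes' theorem gives $\int_{F^d}\omega=\int_{F^d}d\eta=\int_{X^{d-1}}\eta|_{X^{d-1}}$. For the complementary piece I would use $\tau^{d*}\omega=\mp\omega$ to deduce that $\omega|_{\tau^d(V^d)}=\mp d(\tau^{d*}\eta)$, and then apply Stokes' theorem on $X^d\setminus F^d$, whose oriented boundary is $\overline{X^{d-1}}$. The two sign flips (one from $\mp$ and one from the orientation reversal of $\partial(X^d\setminus F^d)$) combine and, using $(\tau^{d*}\eta)|_{X^{d-1}}=\tau^{d-1\,*}(\eta|_{X^{d-1}})$, yield $\int_{X^d\setminus F^d}\omega=\pm\int_{X^{d-1}}\tau^{d-1\,*}(\eta|_{X^{d-1}})$. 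Summing the two contributions gives exactly $\int_{X^{d-1}}\rho_0$.

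Finally I would verify independence of the choice of preimage. If $[\rho']$ is any other preimage, exactness of \eqref{eq:LES for fundamental domain} at $\rmH^{d-1}_{\dR\,\pm}(X^{d-1})$ implies $[\rho']-[\rho_0]=\psi_d^{d-1}[\alpha]$ for some closed $\alpha\in\Omega^{d-1}_{\cl}(V^d)$, and tracing through the diagram \eqref{eq:def of structure maps in fdom LES} gives the representative $(1\pm\tau^{d-1\,*})(\alpha|_{X^{d-1}})$. Exact pieces integrate to zero over the closed manifold $X^{d-1}$, so it suffices to show that $\int_{X^{d-1}}(1\pm\tau^{d-1\,*})(\alpha|_{X^{d-1}})=0$. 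But since $\alpha$ is closed on $V^d\supset F^d$, a further application of Stokes' theorem on $F^d$ yields $\int_{X^{d-1}}\alpha|_{X^{d-1}}=\int_{F^d}d\alpha=0$, and the $\tau^{d-1\,*}$-term has the same integral up to the orientation sign of $\tau^{d-1}$, hence also vanishes.

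The main obstacle is the careful orientation bookkeeping in the second paragraph: one needs to check that, in every combination of $\pm$ (i.e.\ both $[\omega]\in\rmH^d_{\dR\,-}$ with $\tau^d$ orientation-reversing and $[\omega]\in\rmH^d_{\dR\,+}$ with $\tau^d$ orientation-preserving, as dictated by nonvanishing of the integral), the sign from $\tau^{d*}\omega=\mp\omega$, the sign from $\partial(X^d\setminus F^d)=\overline{X^{d-1}}$, and the grading swap $\mp\mapsto\pm$ between $\omega$ and $\rho$ conspire consistently to produce the symmetrised form $(1\pm\tau^{d-1\,*})(\eta|_{X^{d-1}})$ rather than the antisymmetrised combination.
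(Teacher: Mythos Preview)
Your proposal is correct and follows essentially the same approach as the paper: construct the explicit preimage $(1\pm\tau^{d-1\,*})(\eta|_{X^{d-1}})$ via the Zig-Zag Lemma and verify the integral identity by Stokes' Theorem on $F^d$ and $X^d\setminus F^d$, tracking the sign from $\tau^{d*}\omega=\mp\omega$ against the orientation reversal on $\partial(X^d\setminus F^d)$. Your third step, independence of the choice of preimage, is actually more than the proposition asks for (it only claims existence of \emph{a} preimage with the integral property); in the paper this independence is separated out as the subsequent Corollary~\ref{st:preimages under Delta and integrals}, with the same Stokes-on-$F^d$ argument you outline.
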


\begin{proof}
Let $\omega\in\Omega^d_\mp(X^d)$ with $\omega_{|V^d} = \dd \eta$ for
some $\eta\in\Omega^{d-1}(V^d)$, and set $\rho' = \eta_{|U^d} \pm
\tau^{d\,*} \eta_{|U^d}$. Then $\rho'\in\Omega^{d-1}_{\pm\, \cl}(U^d)$,
and the explicit form of the connecting homomorphism $\Delta^{d-1}$ in
the long exact sequence \eqref{eq:LES for open subset} obtained from
the Zig-Zag Lemma shows that $\Delta^{d-1} [\rho'\,] = [\omega]$.
Using $\omega_{|V^d}=\dd\eta$ and
$\omega_{|\tau^d(V^d)}=\mp\,\dd(\tau^{d\,*}\eta)$, and keeping
careful track of induced orientations, by an application of Stokes' Theorem  we then have
\begin{equation}
\begin{aligned}
	\int_{X^{d}} \, \omega &= \int_{F^d}\, \omega + \int_{X^d \backslash F^d}\, \omega
	\\[4pt]
	&= \int_{F^d}\, \dd \eta + \int_{X^d \backslash F^d}\, \mp\, \dd \tau^{d\,*} \eta
	\\[4pt]
	&= \int_{X^{d-1}}\, \big(\eta \pm \tau^{d\,*} \eta\big)
	\\[4pt]
	&= \int_{X^{d-1}}\, \rho'\ .
\end{aligned}
\end{equation}
Using the long exact sequence~\eqref{eq:LES for fundamental domain} and the definition of its structure maps from~\eqref{eq:def of structure maps in fdom LES}, we infer that $\imath_{X^{d-1}}^*[\rho'\,] = [\imath_{X^{d-1}}^* \rho'\,]$ satisfies
\begin{equation}
	\Delta^{d,d-1} \big( \imath_{X^{d-1}}^*[\rho'\,] \big) = \Delta^{d-1} [\rho'\,] = [\omega] \ .
\end{equation}
The result then follows by setting $\rho=\imath_{X^{d-1}}^*\rho'$.
\end{proof}

\begin{corollary}
\label{st:preimages under Delta and integrals}
Let $(X^d,\tau^d)$ be a closed and oriented involutive manifold with a fundamental domain $F^d$ and define $(X^{d-1},\tau^{d-1}) = (\partial F^d, \tau^d_{|\partial F^d})$.
Let $U^d$ be a small thickening of $X^{d-1}$ as in Proposition~\ref{st:HdR SES for fundamental domain}.
If $[\omega] \in \rmH^d_{\dR\, \mp}(X^d)$ is in $\ker(\imath_d^*)$
for $V^d = F^d \cup U^d$, then there exists a class $[\rho] \in \rmH^{d-1}_{\dR\, \pm}(X^{d-1})$ with $\Delta^{d, d-1} [\rho] = [\omega]$.
Any such preimage $[\rho]$ of $[\omega]$ under $\Delta^{d,d-1}$ satisfies
\begin{equation}
	\int_{X^{d-1}}\, \rho = \int_{X^d}\, \omega\ .
\end{equation}
\end{corollary}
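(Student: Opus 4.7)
The plan is to reduce the claim to Proposition~\ref{st:integral and connhom for V} plus an additional independence-of-choice argument. First I would note that by the choice of $U^d$ from Proposition~\ref{st:HdR SES for fundamental domain}, the hypotheses of Proposition~\ref{st:integral and connhom for V} are met, since $U^d$ is a $\ZZ$-equivariant tubular neighbourhood of $X^{d-1} = \partial F^d$ and thus deformation retracts equivariantly onto it. The existence of a preimage $[\rho] \in \rmH^{d-1}_{\dR\, \pm}(X^{d-1})$ under $\Delta^{d,d-1}$ is then immediate from exactness of the long exact sequence in Corollary~\ref{st:LES for fdom}, since $\ker(\imath_d^*) = \mathrm{im}(\Delta^{d,d-1})$, and Proposition~\ref{st:integral and connhom for V} further supplies one specific preimage $[\rho_0]$ satisfying $\int_{X^{d-1}}\rho_0 = \int_{X^d}\omega$.

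The bulk of the work is to show that any other preimage $[\rho']$ of $[\omega]$ under $\Delta^{d,d-1}$ yields the same integral. Because $[\rho'] - [\rho_0] \in \ker(\Delta^{d,d-1})$, exactness of the long exact sequence~\eqref{eq:LES for fundamental domain} identifies this difference as an element of $\mathrm{im}(\psi_d^{d-1})$. Unpacking the definition~\eqref{eq:def of structure maps in fdom LES} of $\psi_d^{d-1}$ as $\imath_{X^{d-1}}^* \circ (1 \pm \tau^{d\,*}) \circ \jmath_d^*$, I would represent this difference by a class $[\sigma]$ with
\begin{equation*}
	\sigma \ = \ \big( \eta \pm \tau^{d\,*}\eta \big)\big|_{X^{d-1}}\ \in \ \Omega^{d-1}_\pm(X^{d-1})
\end{equation*}
for some closed form $\eta \in \Omega^{d-1}(V^d)$. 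It then suffices to prove $\int_{X^{d-1}}\sigma = 0$.

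For the first summand, I would apply Stokes' Theorem on $F^d \subset V^d$, which is a compact $d$-manifold with oriented boundary $\partial F^d = X^{d-1}$ on which $\eta$ is a smooth closed form; this gives $\int_{X^{d-1}}\eta|_{X^{d-1}} = \int_{F^d}\dd\eta = 0$. For the second summand, a change of variables along the involution $\tau^{d-1} = \tau^d|_{X^{d-1}}$ of $X^{d-1}$ yields $\int_{X^{d-1}}\tau^{d\,*}\eta|_{X^{d-1}} = \ori(\tau^{d-1}) \int_{X^{d-1}}\eta|_{X^{d-1}} = 0$ by the same Stokes argument. Combining this vanishing with the reference identity for $[\rho_0]$ establishes $\int_{X^{d-1}}\rho = \int_{X^d}\omega$ for every preimage $[\rho]$.

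The most delicate step will be ensuring consistent orientation conventions: the orientation on $X^{d-1}$ inherited as $\partial F^d$ must match the one implicitly used by Proposition~\ref{st:integral and connhom for V} (induced from $\partial V^d$), and the sign $\ori(\tau^{d-1})$ arising in the change of variables must be handled in both the orientation-preserving and orientation-reversing cases of $\tau^d$. Once these orientation conventions are fixed uniformly with those of Proposition~\ref{st:HdR SES for fundamental domain} and Corollary~\ref{st:LES for fdom}, the remainder is a direct application of Stokes' Theorem and exactness, and no further input is needed.
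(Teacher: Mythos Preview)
Your proposal is correct and follows essentially the same route as the paper: invoke Proposition~\ref{st:integral and connhom for V} for one specific preimage, then use exactness of~\eqref{eq:LES for fundamental domain} to write any other preimage as differing by $\psi_d^{d-1}[\eta]$ for a closed $\eta \in \Omega^{d-1}(V^d)$, and finally kill the resulting integral via Stokes' Theorem on $F^d$ using $\dd\eta = 0$. The only cosmetic difference is that the paper handles the $\tau^{d\,*}\eta$ summand by rewriting it as a Stokes integral over $\tau^d(F^d)$ (with the sign $\ori(\tau^d)$), whereas you apply the change of variables~\eqref{eq:orientation_and_integrals} on $X^{d-1}$ itself (with the sign $\ori(\tau^{d-1})$) before invoking Stokes; both reduce to the closedness of $\eta$, so there is no substantive distinction.
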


\begin{proof}
The first assertion follows from Proposition~\ref{st:integral and connhom for V} together with Corollary~\ref{st:LES for fdom}.
For the second assertion, it follows from the long exact sequence~\eqref{eq:LES for fundamental domain} that there exists a $(d{-}1)$-form $\alpha_\pm \in \Omega_\pm^{d-1}(X^{d-1})$ and a form $\eta \in \Omega^{d-1}_\cl(V^d)$ such that
\begin{equation}
	\rho' - \rho = \eta_{|X^{d-1}} \pm \tau^{d\,*} \eta_{|X^{d-1}} + \dd \alpha_\pm \ .
\end{equation}
Consequently
\begin{equation}
\begin{aligned}
	\int_{X^{d-1}}\, \rho' &= \int_{X^{d-1}}\, \big(\rho + (\eta
        \pm \tau^{d\,*} \eta) + \dd \alpha_\pm \big)
	\\[4pt]
	&= \int_{X^{d-1}}\, \rho + \int_{F^d}\, \dd \eta \pm \ori(\tau^d)\, \int_{\tau^d(F^d)}\, \dd \eta
	\\[4pt]
	&= \int_{X^{d-1}}\, \rho
\end{aligned}
\end{equation}
since $X^{d-1} = \partial F^d$ and $\eta$ is closed.
\end{proof}

\subsection{A Localisation Theorem for filtered involutive manifolds}
\label{sect:Localisation via MV}

For a continuous map $f \colon X \to Y$ of topological spaces, we write $\pi_0 (f) \colon \pi_0(X) \to \pi_0(Y)$ for the homomorphism induced on connected components.

\begin{definition}[Componentwise involutive triple]
A \emph{componentwise involutive triple $(X^d, \tau^d, F^d)$} consists of a $d$-dimensional involutive manifold $(X^d, \tau^d)$ and a choice $F^d \subset X^d$ of a fundamental domain for $\tau^d$ such that:
\begin{myenumerate}
	\item $\pi_0 (\tau^d) = 1_{\pi_0 (X^d)}$, i.e. $\tau^d$ preserves connected components of $X^d$.
	
	\item $(X^{d-1}, \tau^{d-1}) \coloneqq (\partial F^d, \tau^d_{|\partial F^d})$ is a nonempty involutive manifold.
	
	\item $\pi_0 (\tau^{d-1}) = 1_{\pi_0(X^{d-1})}$,
          i.e. $\tau^d_{|\partial F^d}$ preserves connected components
          of $\partial F^d$.
\end{myenumerate}
\end{definition}

\begin{definition}[Filtered involutive manifold]
\label{def:filtered involutive manifold}
An \emph{$n$-dimensional filtered involutive manifold} is a sequence
$\{(X^d, \tau^d, F^d)\}_{d = 0,1, \ldots, n}$ of componentwise involutive
triples such that $X^n$ is a closed manifold, $X^{d-1} = \partial F^d$
and $\tau^{d-1} = \tau^d_{|\partial F^d}$ for all $d =2, \ldots, n$,
and $X^0 = \partial F^1$ with $\tau^0 = \tau_{|X^0} = 1_{X^0}$.
\\
An $n$-dimensional filtered involutive manifold is \emph{oriented} if
$X^n$ is oriented, all fundamental domains $F^d$ are endowed with the induced orientations from their ambient spaces, $X^d$ are endowed with the orientations inherited from $F^d$ and outward-pointing normal vector fields, and $\ori(\tau^d) = (-1)^d$ for all $d =0,1, \ldots, n$.
\end{definition}

By definition $X^0 = (X^d)^{\tau^d} \neq \varnothing$, for all $d =0,1,\ldots,n$, is the set of fixed points of all involutions $\tau^d$, which we require to be nonempty.

\begin{example}
\label{eg:T3 as filtered involutive manifold}
Consider the componentwise involutive triple $(\TT^n, \tau^n, F^n)$:
we take the $n$-torus $\TT^n = [-\pi,\pi]^n/{\sim}$ with $\tau^n (k^0,k^1, \ldots, k^{n-1}) = (-k^0,-k^1, \ldots, -k^{n-1})$ to obtain an involutive manifold of dimension $n$.
Its involution $\tau^n$ preserves the orientation if $n$ is even and reverses the orientation if $n$ is odd.
A fundamental domain is given by $F^n = [-\pi,\pi]^{n-1}/{\sim}\times [0,\pi]$, which yields $X^{n-1} = \partial F^n \cong \TT^{n-1} \sqcup \TT^{n-1}$.
Then $\tau^{n-1}$ acts on each of the two copies of
$\TT^{n-1}$ via $\tau^{n-1}(k^0, k^1, \ldots, k^{n-2}) = (-k^0,
-k^1, \ldots, -k^{n-2})$.
In this way we obtain the $n$-dimensional oriented filtered involutive manifold with
\begin{equation}
	(X^d, \tau^d, F^d) = \bigsqcup_{i = 1}^{2^{n-d}}\, \big(
        \TT^d,\, (k^0,k^1, \ldots, k^{d-1}) \mapsto (-k^0,-k^1,
        \ldots, -k^{d-1}),\, \TT^{d-1} \times[0,\pi] \big)
\end{equation}
for $1 \leq d \leq n$, and with $X^0$ consisting of the $2^n$ distinct
fixed points of $\tau^n$ on $\TT^n$ where $k^i=0,\pi$ for $i=0,1,\dots,n-1$.
\qen
\end{example}

\begin{example}
\label{eg:S3 as filtered involutive manifold}
Consider the standard round unit $n$-sphere $S^n \subset \FR^{n+1}$ with the involution $\tau^n$ induced by the restriction of
\begin{equation}
	\tau^{1,n} \colon \FR^{n+1} \longrightarrow \FR^{n+1}\ ,
	\quad
	\tau^{1,n}(k^0, k^1, \dots,k^n) = (k^0,-k^1,\dots,-k^n) \ .
\end{equation}
That is, $\tau^n$ is the inversion through the $k^0$-axis in $\FR^{n+1}$.
If $N \coloneqq (1, 0,\dots,0)$ is the north pole and $S:=(-1,0,\dots,0)$ the south pole of $S^n$, the derivative $\dd\tau^n$ induces the inversion through the origin in the tangent space $T_N S^n$ to $S^n$ at the north pole.
Hence $\tau^n$ preverses the orientation of $S^n$ if $n$ is even and reverses the orientation if $n$ is odd. A fundamental domain of the involutive manifold $(S^n, \tau^n)$ is given by the hemisphere
\begin{equation}
	F^n = \big\{ k \in S^n\ \big| \ k^1 \geq 0 \big\}\ .
\end{equation}
This manifold is diffeomorphic to an $n$-disk, so its boundary
\begin{equation}
	X^{n-1} = \partial F^n = \big\{ k \in S^n\ \big| \ k^1 = 0 \big\} \cong S^{n-1}
\end{equation}
is an $(n-1)$-sphere in $\FR^{n+1}$.
The involution $\tau^n$ restricts to an involution $\tau^{n-1}$ on $S^{n-1}$, which is again given by the inversion through the $k^0$-axis. Iterating these choices gives an $n$-dimensional oriented filtered involutive manifold $\{(X^d,\tau^d,F^d)\}_{d=0,1,\dots,n}$ with 
\begin{equation}
	X^d = \partial F^{d+1}= \big\{ k \in S^n\ \big| \ k^1 = \cdots= k^{n-d} = 0 \big\} \cong S^d \ ,
\end{equation}
together with $\tau^d$ the inversion through the $k^0$-axis and $F^d$
the hemisphere as described, for $1\leq d\leq n-1$, and
\begin{equation}
	X^0 = \big\{ S=(-1,0,\dots,0)\ ,\ N=(1,0,\dots,0) \big\}\ ,
\end{equation}
which is precisely the set of fixed points of the $\ZZ$-action $\tau^n$ on $S^n$.
\qen
\end{example}

Given an oriented filtered involutive manifold $\{(X^d, \tau^d, F^d)\}_{d =0, 1, \ldots, n}$, Corollary~\ref{st:preimages under Delta and integrals} yields a commutative diagram
\begin{equation}
\begin{tikzcd}
	\ker(\imath_n^*) \ar[rr, "\cong"', "\varphi_n"] \ar[dr,
        "\int_{X^n}"'] & & {\displaystyle \frac{\rmH^{n-1}_{\dR\, (-1)^{n-1}}(X^{n-1})}{\psi_{n}^{n-1}\, \rmH^{n-1}_\dR(V^n)} \ar[dl, "\int_{X^{n-1}}"]}
	\\
	& \FR &
\end{tikzcd}
\end{equation}
where $\psi_{n}^{n-1}$ is defined in~\eqref{eq:def of structure maps in fdom LES}.
If $\rmH^{n-1}_\dR(V^n) = 0$ we can apply the same reasoning a second
time, thus giving an $(n-2)$-form $\rho^{n-2} \in \Omega_{(-1)^n\, \cl}^{n-2}(X^{n-2})$ such that 
\begin{equation}
	\int_{X^{n-2}}\, \rho^{n-2} = \int_{X^n}\, \omega\ ,
\end{equation}
if $\omega \in \Omega^n_{(-1)^n\, \cl}(X^n)$ represents the class that we started with in dimension $n$.
Hence we deduce 

\begin{theorem}
\label{st:localisation in Z2-HdR}
Let $\{(X^d, \tau^d, F^d)\}_{d =0, 1, \ldots, n}$ be an oriented filtered
involutive manifold such that $\rmH_\dR^{d-1}(V^d) = 0$ with $V^d$
chosen as in Proposition~\ref{st:HdR SES for fundamental domain} for all $d = 1, \ldots, n$.
Then
\begin{equation}
	\rmH^n_{\dR\, (-1)^n}(X^n) \cong \rmH^n_\dR(X^n)\ ,
\end{equation}
and the integral of any $n$-form over $X^n$ can be
localised to a sum over the fixed point set $X^0 = (X^n)^{\tau^n}$.
\end{theorem}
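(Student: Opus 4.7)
The strategy splits into the two claims, both driven by the Mayer--Vietoris long exact sequence of Corollary~\ref{st:LES for fdom} together with Corollary~\ref{st:preimages under Delta and integrals}.

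For the isomorphism $\rmH^n_{\dR\, (-1)^n}(X^n) \cong \rmH^n_\dR(X^n)$, I would argue by averaging. The map $\omega \mapsto \tfrac{1}{2}\bigl(\omega + (-1)^n\, \tau^{n\,*}\omega\bigr)$ lands in $\Omega^n_{(-1)^n}(X^n)$ and commutes with $\dd$. Since $\ori(\tau^n) = (-1)^n$ and $\tau^n$ preserves connected components of $X^n$ by hypothesis, this averaging preserves integrals on each component, and therefore preserves the de~Rham class (using that $\rmH^n_\dR$ of a closed connected oriented $n$-manifold is detected by integration). The same averaging applied to primitives $\eta \in \Omega^{n-1}(X^n)$ with $\dd\eta \in \Omega^n_{(-1)^n}$ produces a primitive in $\Omega^{n-1}_{(-1)^n}$. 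Together these show that the forgetful map $\rmH^n_{\dR\, (-1)^n}(X^n) \to \rmH^n_\dR(X^n)$ is both surjective and injective.

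For the localisation statement, I iterate Corollary~\ref{st:preimages under Delta and integrals} $n$ times, descending through the levels of the filtered involutive manifold. Starting from $\omega \in \Omega^n_{(-1)^n,\cl}(X^n)$ (achievable by the first part), this produces a sequence of closed forms $\rho^d \in \Omega^d_{(-1)^d,\cl}(X^d)$ for $d = n-1, n-2, \dots, 0$ with
\begin{equation*}
	\int_{X^n}\, \omega \;=\; \int_{X^{n-1}}\, \rho^{n-1} \;=\; \cdots \;=\; \int_{X^0}\, \rho^0 \;=\; \sum_{\lambda \in X^0}\, \rho^0(\lambda) \ ,
\end{equation*}
where the final equality reflects that $\tau^0 = 1_{X^0}$, that $X^0$ is a finite set of fixed points, and that forms on $X^0$ collapse to functions with trivial equivariance condition. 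At each inductive step from dimension $d$ down to $d-1$, the long exact sequence requires two inputs: (i) the class $[\rho^d]$ must lie in $\ker(\iota_d^*)$ so that a preimage under the connecting homomorphism $\Delta^{d,d-1}$ exists; and (ii) the resulting class in $\rmH^{d-1}_{\dR\, (-1)^{d-1}}(X^{d-1})$ must be well-defined, so that the iteration produces an unambiguous input to the next stage. Condition~(ii) is precisely the hypothesis $\rmH^{d-1}_\dR(V^d) = 0$, which by exactness forces $\Delta^{d,d-1}$ to be injective. Condition~(i) holds automatically: since $V^d$ deformation retracts onto the compact oriented $d$-manifold with nonempty boundary $F^d$, its top de~Rham group vanishes, whence $\iota_d^* = 0$ on $\rmH^d_{\dR\, (-1)^d}(X^d)$.

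The main obstacle will be bookkeeping the alternating Real versus equivariant symmetry types and the induced orientations $\ori(\tau^d) = (-1)^d$ at each descent, to ensure that Corollary~\ref{st:preimages under Delta and integrals} applies with matching signs throughout — note in particular that the role of the $\pm$ on $X^d$ and on $X^{d-1}$ swap at every level, which is the differential-geometric reflection of the alternation between orientation-preserving and orientation-reversing involutions. Once this is in place, the equality of integrals at each stage follows immediately from that corollary, and the iteration terminates at the zero-dimensional set $X^0 = (X^n)^{\tau^n}$, giving the asserted localisation formula.
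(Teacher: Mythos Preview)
Your proposal is correct and takes essentially the same approach as the paper: the isomorphism via averaging is precisely the content of Example~\ref{eg:top-degree coho and Z2 actions} (which the paper invokes componentwise), and both arguments localise by iterating Corollary~\ref{st:preimages under Delta and integrals} down the filtration. Your explicit verification that condition~(i) holds automatically, via $\rmH^d_\dR(V^d)\cong\rmH^d_\dR(F^d)=0$ for a compact $d$-manifold with nonempty boundary, adds useful detail the paper omits, but the overall route is the same.
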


\begin{proof}
Let $X^n = \bigsqcup_{c \in \pi_0(X^n)}\, X^n_c$ denote the decomposition of $X^n$ into its connected components.
Since $X^n$ is closed and oriented, its top degree de Rham cohomology reads
\begin{equation}
\label{eq:HdR for closed oriented mfd}
	\rmH^n_\dR(X^n) \cong \bigoplus_{c \in \pi_0(X^n)}\, \rmH^n_\dR(X^n_c)
	\cong \FR^{|\pi_0(X^n)|} \ .
\end{equation}
If we denote the inclusion of $X^n_c$ into $X^n$ by $\jmath_{X^n_c} \colon X^n_c \hookrightarrow X^n$, and let $(e_c)_{c \in \pi_0(X^n)}$ be the standard basis of $\FR^{|\pi_0(X^n)|}$, the isomorphism~\eqref{eq:HdR for closed oriented mfd} is given by
\begin{equation}
	[\omega] \longmapsto \sum_{c \in \pi_0(X^n)}\ \int_{X^n_c}\,
        \jmath_{X^n_c}^*\,\omega\ e_c\ .
\end{equation}
Since $\tau^n$ preserves connected components,
Example~\ref{eg:top-degree coho and Z2 actions} from
Appendix~\ref{app:HdR and group actions} below implies $\FR \cong \rmH^n_\dR(X^n_c) \cong \rmH^n_{\dR\, (-1)^n}(X^n_c)$ for all $c \in \pi_0(X^n)$.
The first statement then follows from the fact that $\rmH^\bullet_{\dR\, \pm}$ maps disjoint unions of invariant manifolds to direct sums of cohomology groups.

The map $\rmH^n_{\dR\, (-1)^n}(X^n) \rightarrow \rmH^n_\dR(X^n)$ is therefore a bijection in this case.
Thus given $[\omega] \in \rmH^n_\dR(X^n)$ we find $[\omega'\,] \in
\rmH^n_{\dR\, (-1)^n}(X^n)$ with the same de~Rham class, and hence the
same integral over $X^n$.
To this we can apply the localisation technique described above, i.e. by successively choosing preimages under the connecting homomorphisms
$\Delta^{d,d-1}$ in the long exact sequence~\eqref{eq:LES for
  fundamental domain}.
\end{proof}

\begin{remark}
Given an $n$-form on $X^n$ with the appropriate transformation
behaviour under $\tau^n$, any successive choices of fundamental
domains $F^{n-k}$ and preimage $(n-k)$-forms under the connecting
homomorphisms down to the level of fixed points $X^0 \subset X^n$ will
ultimately yield a set of real numbers $\rho^0(k)\in\FR$, $k \in X^0$, whose sum is by Theorem~\ref{st:localisation in Z2-HdR} the desired integral of the original $n$-form. For this, we
have to choose a sequence of forms $(\rho^n, \ldots,\rho^1, \rho^0)$,
where $\rho^d \in \Omega^d_{(-1)^d\, \cl}(X^d)$.
These forms are related to each other in the sense that $\Delta^{d,d-1}[\rho^{d-1}]= [\rho^d]$.
The localisation formula then reads
\begin{equation}
	\int_{X^n}\, \rho^n = \sum_{k\in X^0}\, \rho^0(k) \ .
\end{equation}
This is in some sense similar to the more familiar case of an
equivariantly closed differential form in the Cartan model for the
$\sfU(1)$-equivariant cohomology of a manifold $M$ with a smooth
$\sfU(1)$-action (see for instance \cite{Szabo:2000fs}). In this case,
unlike the situation in
Appendix~\ref{app:HdR and group actions} below, the equivariant exterior
derivative is $\dd + \iota_\xi$ where $\iota_\xi$ is the contraction with the fundamental
vector field $\xi$ generating the induced $\sfU(1)$-action on $\Omega^\bullet(M)$.
That is, a $\sfU(1)$-equivariantly closed differential form is a sequence $(\alpha^n, \ldots,\alpha^1, \alpha^0)$, $\alpha^d \in \Omega^d(M)$, satisfying $\dd \alpha^{d-1} + \iota_\xi \alpha^{d+1} = 0$ for all $d = 1, \ldots, n-1$.
As in our localisation formalism, if a given form $\alpha^n$ can be
completed to a $\sfU(1)$-equivariantly closed form then this can usually
be done in many ways, i.e. there are choices involved in picking the
forms $\alpha^d$.
\qen
\end{remark}

Setting $n=3$, we can now apply this localisation formalism to the Kane-Mele invariant in
certain cases. For a Bloch bundle $(E^-,\hat\tau^3)\in\QVBdl_U(X^3,\tau^3)$ over the Brillouin zone $X^3$ of a $3$-dimensional time-reversal symmetric topological
phase of type~AII band insulators, under suitable conditions we thereby
derive
\begin{equation}
\label{eq:Localised HdR form of KM}
	\rmK\rmM\big(E^-, \hat{\tau}^3\big)
	= \exp \bigg( \pi\, \iu\, \int_{X^3}\, w_\varphi^*H_{\sfU(m)} \bigg)
	= \prod_{k \in (X^3)^\tau}\, \exp \big( \pi\, \iu\, \rho^0(k) \big)\ .
\end{equation}

\begin{example}
Electrons interacting with a crystalline structure are described
in terms of Bloch theory with crystal momenta defined in the periodic
Brillouin zone $X^3=\TT^3$, as in
Section~\ref{sect:pure_time_reversal_in_TEM}. In this case we are in
the situation of Section~\ref{sect:KMfixedpoint} and
Example~\ref{eg:T3 as filtered involutive manifold} with $n=3$. Write
the class of the Wess-Zumino-Witten $3$-form in Real de~Rham cohomology as
$[w_\varphi^*H_{\sfU(m)}] = \frac{\upsilon}{(2\pi)^3}\, [\dd k^0\wedge\dd
k^1\wedge\dd k^2]$ where $\upsilon\in\RZ$, so that
$[w_\varphi^*H_{\sfU(m)}]=\upsilon$ in $\rmH_{\dR}^3(\TT^3)\cong\FR$. With
the effective Brillouin zone $F^3=\TT^2\times[0,\pi]\simeq\TT^2$ and $V^3 = \TT^2 \times (-\epsilon, \pi+\epsilon)$ for some $0< \epsilon <\frac{\pi}{2}$, we can
choose $\eta^2=\frac \upsilon{(2\pi)^3}\, k^2\, \dd k^0\wedge\dd k^1$ with
$\dd\eta^2=\big(w_\varphi^*H_{\sfU(m)}\big)_{|V^3}$. With $X^2=\partial
F^3= \TT^2_0\sqcup\TT^2_\pi$, where $\TT^2_\theta:=\TT^2\times\{\theta\}$ for $\theta=0,\pi$, we then find that
$\rho^2=(\eta^2+\tau^{3\,*}\eta^2)_{|X^2}$ is given by
$\rho^2_{|{\TT^2_0}}=0$ and $\rho^2_{|{\TT^2_\pi}} = \frac{\upsilon}{4\pi^2}\, \dd k^0\wedge\dd
k^1$; the integrals $\int_{\TT^2_\theta}\,
\rho^2=\frac{\theta\,\upsilon}\pi$ for $\theta=0,\pi$
exponentiate to the weak Kane-Mele invariants on $(\TT^3,\tau^3)$
which multiply to give the strong invariant $\rmK\rmM(E^-,\hat\tau^3)$. Iterating this construction twice more, we
arrive at the real numbers $\rho^0(k)$ for each of the eight
time-reversal invariant periodic momenta $k\in(\TT^3)^{\tau^3}$, where
$\rho^0(k) = 0$ whenever $k^i = 0$ for at least one $i \in \{0,1,2\}$
and $\rho^0(\pi,\pi,\pi) = \upsilon$; this calculation therefore
explicitly mimicks the interpretation of the parity of the integer
$\upsilon$ as the number of unpaired Majorana zero modes.
Then \eqref{eq:Localised HdR form of KM} yields $\rmK\rmM\big(E^-,
\hat{\tau}^3\big)=(-1)^\upsilon$ as required.
\qen
\end{example}

\begin{example}
We can also apply our general formalism to the continuum theory of
a system of gapped free fermions (invariant under continuous
translations by $\FR^3$) whose wavefunctions (after Fourier transformation)
are parameterised by momenta in the Brillouin zone $X^3=S^3$. 
In this case we are in the situation of Example~\ref{eg:S3 as filtered
  involutive manifold} with $n=3$.
First, we observe that any Hermitean vector bundle $E^- \to S^3$ has trivial first Chern class since $\rmH^2(S^3, \RZ) = 0$.
Hence Proposition~\ref{st:1st_Chern_class_knows_everything_in_dleq3} implies that $E^-$ is trivialisable.
This implies, in particular, that $\QVBdl_U(S^3, \tau^3) = \QVBdl(S^3,
\tau^3)$ just as in the case of Bloch electrons, and the corresponding Quaternionic K-theory is~\cite{Kitaev:Periodic_table,Freed-Moore--TEM,Thiang2014} $\rmK\rmQ(S^3,\tau^3)\cong\RZ\oplus\ZZ$.
Consequently, if $(E^-, \hat{\tau}^3)$ is any Quaternionic vector bundle
on $S^3$, arising for instance as a Bloch bundle, we can define the
Kane-Mele invariant in this situation in complete analogy with the
periodic case via \eqref{eq:KM_invariant}. 
As $\pi_1(S^3) = 0$, the determinant $\det \circ w_\varphi$ does not wind around the $1$-cycle of $\sfU(m)$ and we can in fact consider $w_\varphi$ to be $\sfS\sfU(m)$-valued after some (not necessarily equivariant) homotopy.

In Example~\ref{eg:S3 as filtered involutive manifold} we constructed
fundamental domains $F^d$ for the $3$-dimensional oriented filtered
involutive manifold $\{(S^d,\tau^d,F^d)\}_{d=0,1,2,3}$ with $F^d \simeq \FR^d$ for $d = 1,2,3$.
Therefore Theorem~\ref{st:localisation in Z2-HdR} yields a localisation technique for the Kane-Mele invariant $\rmK\rmM(E^-, \hat{\tau}^3)$ on $(S^3, \tau^3)$ as well.
The localisation can be carried out in a particularly easy fashion here:
for $d = 3,2,1$ the long exact sequence~\eqref{eq:LES for fundamental domain} decomposes into exact sequences
\begin{equation}
\begin{tikzcd}[row sep = 1cm, column sep = 1.5cm]
	0 \ar[r] & \rmH^{d-1}_{\dR\, (-1)^{d-1}}(S^{d-1}) \ar[r, "\Delta^{d, d-1}"] \ar[d, "\cong"] & \rmH^d_{\dR\, (-1)^d}(S^d) \ar[r] \ar[d, "\cong"] & 0
	\\
	& \rmH^{d-1}_\dR(S^{d-1}) & \rmH^d_\dR(S^d) &
\end{tikzcd}
\end{equation}
The vertical isomorphisms follow from Example~\ref{eg:top-degree coho
  and Z2 actions} below.
Hence in this case the relevant connecting homomorphism is an isomorphism in every degree which by Corollary~\ref{st:preimages under Delta and integrals} commutes with integration.
Again we write the class of the Wess-Zumino-Witten $3$-form as $[w_\varphi^*
H_{\sfU(m)}] = \upsilon\, [\vol_{S^3}] $, where $\upsilon\in\RZ$ and
$\vol_{S^3}$ is a unit volume form on $S^3$, so that $[w_\varphi^*
H_{\sfU(m)}] = \upsilon$ in $\rmH^3_\dR(S^3) \cong \FR$. Then choosing
$\rho^d = \upsilon \, \vol_{S^d}$ for $d = 1,2$, and the angle
function $\theta$ on $F^1 \subset S^1$ such that $\dd \theta =
\rho^1_{|F^1}$, yields $\rho^0(N) = \upsilon$ and $\rho^0(S) = 0$. Thus
we again obtain $\rmK\rmM(E^-, \hat{\tau}^3)=(-1)^\upsilon$ as required.
\qen
\end{example}

\section{The Kane-Mele invariant as a bundle gerbe holonomy}
\label{sect:holonomies_and_localisation}

\subsection{Bundle gerbes and their holonomy}
\label{sect:BGrbs}

Here we recall the basic notions of bundle gerbes with connections and
their morphisms, refering
to~\cite{Waldorf--More_morphisms,Waldorf--Thesis,Murray-Stevenson:BGrbs--Stable_iso_and_local_theory}
for further details. Consider a surjective submersion $\pi \colon Y \to X$ of manifolds.
We set $Y^{[p]} \coloneqq Y\times_X \cdots\times_X Y:= \{ (y_0,y_1, \ldots, y_{p-1}) \in Y^p\, |\,
\pi(y_0) = \pi(y_1)= \cdots= \pi(y_{p-1}) \}$, and we write $d_i \colon Y^{[p]} \to Y^{[p-1]}$ for the map that forgets the $i$-th entry of a $p$-tuple.
Defining $s_i \colon Y^{[p-1]} \to Y^{[p]}$ to be the map that duplicates the $i$-th entry, we turn the collection of $Y^{[p]}$ into a simplicial manifold $\check{N}_\bullet Y = Y^{[\bullet + 1]}$.%
\footnote{As the notation suggests, this is often called the \emph{\v{C}ech nerve} of the surjective submersion $\pi \colon Y \to X$.}

\begin{definition}[Bundle gerbe]
\label{def:bundle gerbes}
Let $X$ be a manifold.
A \emph{Hermitean bundle gerbe} on $X$ is a quadruple $(L, \mu, Y, \pi)$ of the following data:
the map $\pi \colon Y \to X$ is a surjective submersion, $L \to Y^{[2]}$ is a Hermitean line bundle, and $\mu \colon d_2^*L \otimes d_0^*L \to d_1^*L$ is an isomorphism of Hermitean line bundles over $Y^{[3]}$ satisfying an associativity condition over $Y^{[4]}$.
\\
A \emph{connection on a Hermitean bundle gerbe} is a pair $(\nabla^L,
B)$ consisting of a Hermitean connection $\nabla^L$ on the line bundle
$L$ together with a $2$-form $B \in \Omega^2(Y)$, called a
\emph{curving}, such that $\mu$ is parallel and $F_{\nabla^L} = d_0^*B
- d_1^*B$, where $F_{\nabla^L}$ is the curvature $2$-form of
$\nabla^L$. There exists a unique $3$-form $H_{\CG}\in\Omega^3_{\cl}(X)$ such that $\dd B=\pi^*H_{\CG}$~\cite{Murray:Bundle_gerbes}, and we call
$H_{\CG}$ the \emph{curvature} of $\CG=(L,\nabla^L,\mu,B,Y,\pi)$.
\end{definition}

\begin{definition}[1-isomorphism]
\label{def:1-morphism of BGrbs}
Let $\CG_i = (L_i, \nabla^{L_i}, \mu_i, B_i, Y_i, \pi_i)$ be Hermitean bundle gerbes with connection on $X$ for $i = 0,1$.
A \emph{$1$-isomorphism} $\CG_0 \to \CG_1$ is a quintuple $(J, \nabla^J, \alpha, Z, \zeta)$ of the following data:
the map $\zeta \colon Z \to Y_0 \times_X Y_1$ is a surjective submersion.
Write $\zeta_{Y_i} \coloneqq \pr_{Y_i} \circ \zeta \colon Z \to Y_i$
and $\zeta_{Y_i}^{[2]}: Z^{[2]}\to Y_i^{[2]}$.
Then $J \to Z$ is a Hermitean line bundle with Hermitean connection $\nabla^J$ such that $F_{\nabla^J} = \zeta_{Y_1}^*B_1 - \zeta_{Y_0}^*B_0$,
and $\alpha \colon \zeta_{Y_0}^{[2]\,*}L_0 \otimes d_0^*J \to d_1^*J
\otimes \zeta_{Y_1}^{[2]\,*}L_1$ is a parallel unitary isomorphism of
line bundles over $Z^{[2]}$ that intertwines $\mu_0$ and $\mu_1$ in
this sense.
\end{definition}

We will sometimes abbreviate $1$-isomorphisms $(J, \nabla^J, \alpha, Z,
\zeta)$ as $(J, \alpha)$, and simply refer to them as isomorphisms of bundle gerbes when no confusion is possible.

\begin{definition}[2-isomorphism]
\label{def:2-morphism of BGrbs}
Let $\CG_i = (L_i, \nabla^{L_i}, \mu_i, B_i, Y_i, \pi_i)$ be Hermitean bundle gerbes with connection on $X$ for $i = 0,1$.
Let $(J, \nabla^J, \alpha, Z, \zeta)$, $(J', \nabla^{J'}, \alpha', Z', \zeta'\,)$ be $1$-isomorphisms of bundle gerbes.
Set $Y_{01} = Y_0 \times_X Y_1$.
A \emph{$2$-isomorphism} $(J, \alpha) \Rightarrow (J',
\alpha'\,)$ is an equivalence class of triples $(W, \gamma, \phi)$,
where $\gamma \colon W \to Z \times_{Y_{01}} Z'$ is a surjective
submersion and $\phi \colon \gamma_Z^*J \to \gamma_{Z'}^*J'$ is a
parallel unitary isomorphism of line bundles intertwining $\alpha$ and
$\alpha'$; we have set $\gamma_Z = \pr_Z \circ \gamma$ and $\gamma_{Z'} = \pr_{Z'} \circ \gamma$.
Two such triples $(W, \gamma, \phi)$ and $(W', \gamma', \phi'\,)$ are \emph{equivalent} if there exist surjective submersions $\widehat{W} \to W$ and $\widehat{W} \to W'$ such that the pullbacks of $\phi$ and $\phi'$ agree on $\widehat{W}$.
\end{definition}

We will sometimes abbreviate $2$-isomorphisms $[W,\gamma,\phi]$ as $\phi$.

It is shown in~\cite{Waldorf--More_morphisms} that bundle gerbes with
connection, together with these notions of morphisms, assemble into
the structure of a $2$-category~\cite{Leinster--Basic_bicategories}; we denote this $2$-category by $\BGrb^\nabla(X)$.
With the conventions we have adopted here, all $1$-morphisms are invertible up to $2$-isomorphism, and all $2$-morphisms are isomorphisms.%
\footnote{There also exist noninvertible morphisms of bundle gerbes, and these are important in other contexts such as, to name just a few, the description of $B$-fields on D-branes in string theory~\cite{CJM--Holonomy_on_D-branes,Kapustin--D-branes_in_top_nontriv_B-field}, twisted K-theory~\cite{BCMMS}, and higher geometric quantisation~\cite{BSS--HGeoQuan,Bunk-Szabo--Fluxes_brbs_2Hspaces}, but we will not need them here.}
In other words, $\BGrb^\nabla(X)$ is in fact a $2$-groupoid. The $2$-groupoid $\BGrb^\nabla(X)$ is furthermore a Picard $2$-groupoid; that is, it carries a symmetric monoidal structure $\otimes$ with respect to which every object has an inverse.
The corresponding notions of morphisms of bundle gerbes without connection are obtained by forgetting all connections in Definitions~\ref{def:1-morphism of BGrbs} and~\ref{def:2-morphism of BGrbs}; we denote the corresponding $2$-groupoid by $\BGrb(X)$.

\begin{example}
\label{ex:basicbundlegerbe}
There exists a canonical bundle gerbe with connection $\CG_{\rm basic}$ on $\sfS\sfU(m)$ for any $m \geq 2$ called the \emph{basic bundle gerbe}.
Several constructions of this bundle gerbe and their generalisations to
the unitary group $\sfU(m)$ are known, see for
example~\cite{Gawedzki--2DFKM_and_WZW,Waldorf--Thesis,Murray-Stevenson:Basi_BGrb_on_unitary_groups}. It
will be important for us to know that its curvature $3$-form $H_{\CG_{\rm basic}} =
H_{\sfS\sfU(m)}$ coincides with the canonical $3$-form on $\sfS\sfU(m)$.

Recall from Section~\ref{sect:KM as obstruction to factorisation} that we are interested in maps $w_\varphi \colon X\to \sfS\sfU(m)$ which arise from trivialisations of Quaternionic vector bundles $(E^-, \hat{\tau})$ on a closed and orientable involutive $3$-manifold $(X, \tau)$ with orientation-reversing involution.
These Quaternionic vector bundles could, for example, be Bloch bundles of time-reversal symmetric topological insulators as considered in Section~\ref{sect:pure_time_reversal_in_TEM}.
As we can pull back surjective submersions and bundles, there exists a pullback operation on bundle gerbes with connection.
Thus the map $w_\varphi \colon X \to \sfS\sfU(m)$ from~\eqref{eq:trivialisations_of_E} canonically induces a bundle gerbe $w_\varphi^*\CG_{\rm basic} \in \BGrb^\nabla(X)$ on $X$.
\qen
\end{example}

For a $2$-groupoid $\scC$, we let $\pi_0 \scC$ denote the collection of $1$-isomorphism classes of objects.
For a manifold $X$ we let $\hat{\rmH}^p(X, \RZ)$ denote its $p$-th differential cohomology group as modelled, for instance, by Deligne cohomology; for details see~\cite{Brylinski--Loop_spaces_and_geometric_quantisation,Waldorf--More_morphisms,Murray-Stevenson:BGrbs--Stable_iso_and_local_theory}.
We then have the classification results~\cite{Waldorf--More_morphisms,Murray-Stevenson:BGrbs--Stable_iso_and_local_theory}

\begin{theorem}
\label{st:Classification of BGrbs}
There are natural isomorphisms of abelian groups
\begin{equation}
\hat\rmD \colon \pi_0 \, \BGrb^\nabla(X) \longrightarrow \hat{\rmH}^3(X,\RZ) \qquad \mbox{and} \qquad \rmD \colon \pi_0 \, \BGrb(X) \longrightarrow \rmH^3(X,\RZ)\ ,
\end{equation}
called the \emph{Deligne class} and the \emph{Dixmier-Douady class}, respectively.
\end{theorem}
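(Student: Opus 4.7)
The plan is to construct the Dixmier-Douady and Deligne class homomorphisms by extracting \v{C}ech cocycles from the local data of a bundle gerbe, and then to show the resulting maps are group isomorphisms. I will treat the case with connection in detail, since the case without connection follows by forgetting all connection and curving data and replacing the Deligne complex by the truncated complex whose hypercohomology computes $\rmH^3(X,\RZ)$.

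Given $\CG = (L, \nabla^L, \mu, B, Y, \pi) \in \BGrb^\nabla(X)$, I would first choose an open cover $\{U_i\}_{i \in I}$ of $X$ that is good (all nonempty finite intersections contractible) and fine enough that the surjective submersion $\pi \colon Y \to X$ admits local sections $\sigma_i \colon U_i \to Y$. These assemble into maps $\sigma_{ij} = (\sigma_i, \sigma_j) \colon U_i \cap U_j \to Y^{[2]}$, and similarly $\sigma_{ijk}$ on triple overlaps. Pulling back $L$ along $\sigma_{ij}$ gives Hermitean line bundles $L_{ij} \to U_i \cap U_j$ with connections $\sigma_{ij}^* \nabla^L$, and the multiplication $\mu$ gives parallel unitary isomorphisms $\mu_{ijk} \colon L_{ij} \otimes L_{jk} \to L_{ik}$ on triple overlaps satisfying a cocycle condition on quadruple overlaps. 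Trivialising each $L_{ij}$ (possible because $U_i \cap U_j$ is contractible) by a section $s_{ij}$ then expresses $\sigma_{ij}^* \nabla^L = \dd + A_{ij}$ for $A_{ij} \in \Omega^1(U_i \cap U_j, \iu\, \FR)$, and $\mu_{ijk}$ becomes a $\sfU(1)$-valued \v{C}ech $2$-cochain $g_{ijk}$. Combined with the local curvings $B_i = \sigma_i^* B \in \Omega^2(U_i)$, the tuple $(g_{ijk}, A_{ij}, B_i)$ is a Deligne $2$-cocycle whose class in $\hat\rmH^3(X,\RZ)$ I declare to be $\hat\rmD(\CG)$.

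Next I would verify that $\hat\rmD$ is well-defined on $1$-isomorphism classes: different choices of local sections $\sigma_i$ differ over $U_i$ by a map into the fibres of $\pi$, and one checks that this changes the cocycle by a Deligne coboundary; the same is true for a change of trivialisations $s_{ij}$. A $1$-isomorphism $(J, \nabla^J, \alpha, Z, \zeta) \colon \CG_0 \to \CG_1$ produces, after further refinement of the cover and local sections of $\zeta$, a \v{C}ech $1$-cochain with values in $\sfU(1)$ together with local $1$-forms, and the \emph{defining} equation $F_{\nabla^J} = \zeta_{Y_1}^*B_1 - \zeta_{Y_0}^*B_0$ together with the compatibility of $\alpha$ with $\mu_0, \mu_1$ is exactly the statement that this cochain is a Deligne $1$-cochain whose coboundary relates the two cocycles. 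A $2$-isomorphism in turn identifies two such $1$-cochains up to a further coboundary, so passing to equivalence classes kills everything. Compatibility with the monoidal structure is immediate from the fact that tensor product of bundle gerbes corresponds to sum of local data, so $\hat\rmD$ is a homomorphism of abelian groups.

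For surjectivity I would, given any Deligne $2$-cocycle $(g_{ijk}, A_{ij}, B_i)$ on a good cover $\{U_i\}$, build a bundle gerbe directly: take $Y = \bigsqcup_i\, U_i$ with $\pi$ the obvious map; then $Y^{[2]} = \bigsqcup_{i,j}\, U_i \cap U_j$, and define $L$ to be the trivial line bundle over each $U_i \cap U_j$ with connection $\dd + A_{ij}$. The cocycle $g_{ijk}$ provides the multiplication $\mu$, and $B_i$ is the curving. The Deligne cocycle conditions are exactly the bundle gerbe axioms and the condition $F_{\nabla^L} = d_0^*B - d_1^*B$. For injectivity, suppose $\hat\rmD(\CG)$ is trivial, so after refining there exists a Deligne $1$-cochain $(h_{ij}, K_i)$ whose coboundary is the cocycle extracted from $\CG$. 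The local line bundles with connections $(U_i \cap U_j, h_{ij}, K_i - K_j)$ can then be glued, using the relations furnished by the coboundary equation, into a global trivialising $1$-isomorphism from $\CG$ to the trivial bundle gerbe $\CI_0$ (with $Y = X$, $L = X \times \FC$, $B = 0$). The main technical obstacle here is consistently refining covers and local sections so that the \v{C}ech comparison actually produces a genuine $1$-isomorphism in the sense of Definition~\ref{def:1-morphism of BGrbs} rather than only a local one — this requires combining the chosen local line bundles into a line bundle on a single surjective submersion $Z \to Y \times_X Y$, which one takes to be the fibre product of the chosen sections over a common refinement. The case without connection is entirely analogous, with Deligne cohomology replaced by ordinary \v{C}ech $\sfU(1)$-cohomology, which agrees with $\rmH^3(X,\RZ)$ via the exponential sequence; naturality in $X$ is automatic from the functoriality of pullback of surjective submersions and of bundles.
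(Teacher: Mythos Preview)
The paper does not give its own proof of this theorem; it is stated as a known classification result and attributed to the references~\cite{Waldorf--More_morphisms,Murray-Stevenson:BGrbs--Stable_iso_and_local_theory}. Your sketch follows precisely the standard argument found there: extracting \v{C}ech--Deligne cocycle data from local sections of the surjective submersion and local trivialisations of the line bundle, checking independence of choices via coboundaries, and inverting the construction by building a bundle gerbe over $Y = \bigsqcup_i\, U_i$ from a given cocycle. The technical point you flag about injectivity --- assembling local trivialising data into a genuine $1$-isomorphism over a single surjective submersion --- is exactly what is handled by the stable isomorphism framework of Murray--Stevenson and the refinement arguments in Waldorf's treatment.
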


\begin{remark}
Analogously to what happens in Chern-Weil theory for line bundles, the image of $\rmD(\CG)$ under the homomorphism $\rmH^3(X,\RZ)\to\rmH_{\dR}^3(X)$ coincides with the class of the curvature $3$-form $[H_{\CG}]$. In particular, if $\rmH^3(X,\RZ)$ is torsion-free then $\rmD(\CG)=[H_{\CG}]$.
\qen
\end{remark}

\begin{definition}[Trivialisation, dual bundle gerbe]
Let $X$ be a manifold and let $\CG = (L, \nabla^L, \mu, B, Y, \pi)$ be
a bundle gerbe with connection on $X$.
\begin{myenumerate}
	\item For $\rho \in \Omega^2(X)$, the \emph{trivial bundle
            gerbe with connection $(\dd,\rho)$} is $\CI_\rho = (X
          \times \FC, \dd, \, \cdot \, , \rho, X, 1_X)$.
	That is, it is the bundle gerbe whose surjective submersion to $X$ is the identity on $X$, whose line bundle is the trivial line bundle $X \times \FC$ with its canonical Hermitean metric and connection, whose bundle gerbe multiplication $\mu\big((x,z)\otimes(x,z'\,)\big)=(x,z\,z'\,)$ is given by multiplication in $\FC$, and whose curving is $\rho$.
	
	\item A \emph{trivialisation} of $\CG$ is a $1$-isomorphism $(T,\alpha) \in \BGrb^\nabla(X)(\CG, \CI_\rho)$ for some $\rho \in \Omega^2(X)$.
	
	\item The \emph{dual bundle gerbe with connection} of $\CG$ is given by
	\begin{equation}
		\CG^* = \big( (L, \nabla^L)^*, \mu^{-\trans}, -B, Y, \pi \big)\ ,
	\end{equation}
	where $(L, \nabla^L)^*$ is the dual Hermitean line bundle with
        connection.
\end{myenumerate}
\end{definition}

The trivial bundle gerbe $\CI_\rho$ has trivial Dixmier-Douady class,
$\rmD(\CI_\rho)=0$, while the dual bundle gerbe $\CG^*$ has the inverse Deligne class of $\CG$, $\hat\rmD(\CG^*) = - \hat\rmD(\CG)$.
The operation of taking the dual extends to an involutive functor on the $2$-category of bundle gerbes as introduced in Definition~\ref{def:bundle gerbes} which acts contravariantly on $1$-isomorphisms and covariantly on $2$-isomorphisms~\cite{Waldorf--Thesis}.
As for Hermitean line bundles with connection, this functor acts as a weak inverse with respect to the tensor product of bundle gerbes.
It can be shown~\cite{Waldorf--More_morphisms} that for any two trivialisations $(T,\alpha) \colon \CG \to \CI_\rho$ and $(T', \alpha'\,) \colon \CG \to \CI_{\rho'}$ the difference $2$-form $\rho' - \rho$ is closed and has integer periods.
In fact, we have~\cite{Waldorf--More_morphisms}

\begin{proposition}
\label{st:Isomps of trivial gerbes and HLBdls}
For any $\rho, \rho' \in \Omega^2(X)$, there is an equivalence of categories
\begin{equation}
	\sfR \colon \BGrb^\nabla(X)(\CI_\rho, \CI_{\rho'}) \arisom \HLBdl^\nabla_{\rho'-\rho}(X)\ ,
\end{equation}
where $\HLBdl^\nabla_{\rho'-\rho}(X)$ is the category of Hermitean line bundles with connection on $X$ whose curvature equals $\rho' - \rho$.
Under this equivalence, the composition of morphisms of trivial bundle gerbes is mapped to the tensor product of Hermitean line bundles with connection on $X$.
\end{proposition}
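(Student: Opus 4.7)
The plan is to view a $1$-isomorphism $(J,\nabla^J,\alpha,Z,\zeta)$ between trivial bundle gerbes as descent data for a Hermitean line bundle with connection on $X$ of the required curvature, and to use descent to go back. First I would unpack the definitions: since the surjective submersions of $\CI_\rho$ and $\CI_{\rho'}$ are both $1_X$, we have $Y_0\times_X Y_1=X$ and the datum $\zeta$ is just a surjective submersion $\zeta\colon Z\to X$; the line bundles $L_0,L_1$ of $\CI_\rho,\CI_{\rho'}$ are canonically trivial with flat connections, so the isomorphism $\alpha\colon d_0^*J\to d_1^*J$ over $Z^{[2]}$ satisfies the cocycle relation inherited from the triviality of $\mu_0,\mu_1$ on $Z^{[3]}$. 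The curvature condition $F_{\nabla^J}=\zeta^*\rho'-\zeta^*\rho$ ensures that the descent object has curvature $\rho'-\rho$ on $X$.

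Next I would define $\sfR$ on objects by applying descent for Hermitean line bundles with connection along the surjective submersion $\zeta$: the pair $(J,\alpha)$ is a descent datum, and parallelness of $\alpha$ together with $F_{\nabla^J}=\zeta^*(\rho'-\rho)$ guarantees that the descended bundle $K$ on $X$ carries a Hermitean connection of curvature $\rho'-\rho$, i.e. $K\in\HLBdl^\nabla_{\rho'-\rho}(X)$. On morphisms, a $2$-isomorphism $[W,\gamma,\phi]$ in $\BGrb^\nabla(X)(\CI_\rho,\CI_{\rho'})$ from $(J,\alpha,Z,\zeta)$ to $(J',\alpha',Z',\zeta'\,)$ pulls back both bundles to a common refinement $W$ and provides a parallel unitary descent morphism $\phi$; by the effectiveness of descent for Hermitean line bundles with connection this defines a morphism $\sfR[W,\gamma,\phi]\colon \sfR(J,\alpha)\to \sfR(J',\alpha'\,)$ in $\HLBdl^\nabla_{\rho'-\rho}(X)$, and the equivalence relation in Definition~\ref{def:2-morphism of BGrbs} is precisely what is needed for this assignment to be well-defined.

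Then I would construct a quasi-inverse $\sfR^{-1}$ by sending $K\in\HLBdl^\nabla_{\rho'-\rho}(X)$ to the $1$-isomorphism $(K,\nabla^K,\mathrm{id},X,1_X)$, using the identity on $X^{[2]}=X$ as the descent isomorphism, and sending a morphism $\psi\colon K\to K'$ of Hermitean line bundles with connection to the $2$-isomorphism represented by $(X,1_X,\psi)$. The composite $\sfR\circ\sfR^{-1}$ is the identity on the nose, while $\sfR^{-1}\circ\sfR$ produces from $(J,\alpha,Z,\zeta)$ the trivial-submersion presentation of the descended bundle $K$; the tautological refinement $Z\to Z\times_X X\cong Z$ together with the descent isomorphism $\zeta^*K\cong J$ exhibits a natural $2$-isomorphism between the original datum and $\sfR^{-1}\sfR(J,\alpha)$. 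Naturality of both composites is a direct check.

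Finally I would verify compatibility with the two monoidal structures. The composition of $1$-isomorphisms $\CI_\rho\to\CI_{\rho'}\to\CI_{\rho''}$ is given on underlying data by the fibre product surjective submersion $Z\times_X Z'\to X$ and the tensor product $J\boxtimes J'$ of pulled-back line bundles, with the $\alpha$'s combining diagonally; by the multiplicativity of descent, this descends exactly to the tensor product $\sfR(J,\alpha)\otimes \sfR(J',\alpha'\,)$ of the corresponding Hermitean line bundles with connection on $X$, and a similar check handles $2$-morphisms. The main obstacle I anticipate is the careful bookkeeping of the descent step in the presence of the curving forms and the equivalence relation in Definition~\ref{def:2-morphism of BGrbs}: one needs to show simultaneously that $\phi$ descends, that its descent is independent of the choice of representative $(W,\gamma,\phi)$, and that the construction is compatible with composition of both $1$- and $2$-morphisms as well as with the symmetric monoidal structures. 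All of this ultimately reduces to the fact that Hermitean line bundles with connection form a stack in the smooth topology, which makes the proof conceptually straightforward once the setup is in place.
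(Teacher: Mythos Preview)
Your proposal is correct and follows the standard descent-theoretic approach. Note, however, that the paper does not give its own proof of this proposition: it is stated with a citation to Waldorf, \emph{More morphisms between bundle gerbes}, where the result is established. Your sketch is essentially the argument given there (descent of $(J,\alpha)$ along $\zeta\colon Z\to X$ to a line bundle with connection of the required curvature, with the quasi-inverse given by the trivial-submersion presentation), so there is nothing to compare against in the present paper.
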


\begin{remark}
\label{rmk:star on 1-isos does not map to dual of LBdls}
Care has to be taken regarding duals of 1-isomorphisms:
we have $\sfR ((T,\beta)^*) \cong \sfR (T,\beta)$ and $\sfR ((T,\beta)^{-1}) \cong (\sfR (T,\beta))^*$.
A different involution on $\BGrb^\nabla(X)$ which reproduces the dual of line bundles under the reduction functor $\sfR$ is introduced in~\cite{BSS--HGeoQuan}.
\qen
\end{remark}

\begin{definition}[Bundle gerbe holonomy]
Let $\Sigma$ be a closed oriented surface and let $\CG$ be a bundle
gerbe with connection on $\Sigma$.
By Theorem~\ref{st:Classification of BGrbs} there exists a trivialisation $(T,\alpha) \colon \CG \to \CI_\rho$ on $\Sigma$.
The complex number
\begin{equation}
	\hol(\CG, \Sigma) \coloneqq \exp \Big( 2\pi\, \iu\, \int_\Sigma\, \rho
        \Big) \ \in \ \sfU(1)
\end{equation}
is called the \emph{holonomy} of $\CG$ around $\Sigma$; it is
independent of the choice of trivialisation.
\end{definition}

One can show~\cite{Waldorf--Thesis,Gawedzki:Topological_actions} that
for any compact oriented $3$-dimensional manifold $X^3$ and any bundle
gerbe $\CG$ on $X^3$ one has
\begin{equation}
\label{eq:BGrb hol and curvature}
	\hol(\CG, \partial X^3) = \exp \Big( 2\pi\, \iu\, \int_{X^3}\,
        H_{\CG} \Big)\ .
\end{equation}

\subsection{Jandl gerbes}
\label{sect:Real BGrbs}

Next we briefly recall the notion of Jandl structures on bundle
gerbes over involutive manifolds
from~\cite{Waldorf--Thesis,GSW:BGrbs_for_orientifold_Sigma_models}.\footnote{A
related but slightly simpler notion of Real structures on bundle gerbes
is introduced in~\cite{Hekmati:2016ikh}, however the theory of
connections in that setting has not yet been developed so we stick to
the original notion of Jandl structure.}
Similarly to the case of Real vector bundles, a Jandl structure on a bundle gerbe $\CG$ over an involutive manifold $(X,\tau)$ is given by a $1$-isomorphism $(A,\alpha) \colon \tau^* \CG \to \CG^*$.
In the case of bundle gerbes, however, we can now compare the $1$-isomorphisms $\tau^*(A,\alpha)$ and $(A,\alpha)^*$, and their composition is related to the identity $1$-isomorphism only weakly, i.e. by the additional data of a $2$-isomorphism.

We denote vertical composition in a $2$-category $\scC$ by
$\bullet$. We borrow the following definition from~\cite{Waldorf--Thesis,SSW:Unoriented_WZW_and_hol_of_BGrbs}.

\begin{definition}[Jandl gerbe]
\label{def:Real BGrb}
Let $(X, \tau)$ be an involutive manifold and let $\CG \in \BGrb^\nabla(X)$ be a bundle gerbe with connection on $X$.
A \emph{Jandl structure} on $\CG$ consists of a pair $((A,\alpha),\psi)$ of a $1$-isomorphism $(A,\alpha) \colon \tau^*\CG
\to \CG^*$ and a $2$-isomorphism $ \psi \colon
\tau^*(A,\alpha) \Rightarrow (A,\alpha)^*$ filling the diagram
\begin{equation}
\begin{tikzcd}[row sep=1cm, column sep=1.5cm]
	|[alias=A]| \tau^*(\tau^*\CG) \ar[r, "{\tau^*(A,\alpha)}"] \ar[d, equal] & \tau^*\CG^*
	\\
	\CG \ar[ur, "{(A,\alpha)^*}"', ""{name=U,inner sep=1pt,above}] & 
	\arrow[Rightarrow, from=A, to=U, "\psi"]
\end{tikzcd}
\end{equation}
and satisfying $\psi^* \bullet \tau^*\psi = 1_{(A,\alpha)}$.
A bundle gerbe with connection together with a choice of a Jandl structure is called a \emph{Jandl gerbe} on $(X,\tau)$.
\end{definition}

\begin{example}
\label{eg:Jandl structure on basic gerbe}
By~\cite[Proposition 4.2.4]{Waldorf--Thesis}, any bundle gerbe with
connection on $\sfS\sfU(m)$ admits exactly two nonisomorphic Jandl
structures. This applies, in particular, to the basic bundle gerbe $\CG_{\rm basic} \in \BGrb^\nabla(\sfS\sfU(m))$ introduced in Example~\ref{ex:basicbundlegerbe}.
See~\cite{Waldorf--Thesis,SSW:Unoriented_WZW_and_hol_of_BGrbs} for details.
\qen
\end{example}

Let us now make the following important observation:
let $(\CG, (A,\alpha), \psi)$ be a Jandl gerbe on $(X,\tau)$, and let $(T,\beta) \colon \CG \to \CI_\rho$ be a trivialisation of $\CG$.
This gives rise to a $1$-isomorphism $K(A,T) \colon \CI_{-\rho} \to \CI_{\tau^*\rho}$ defined by the diagram
\begin{equation}
\label{diag:LBdl from triv of Real gerbe}
\begin{tikzcd}[row sep=1.25cm, column sep=2cm]
	\CG^* \ar[r, "{(A,\alpha)^{-1}}"] & \tau^*\CG \ar[d, "{\tau^*(T, \beta)}"]
	\\
	\CI_{-\rho} \ar[u, "{(T,\beta)^*}"] \ar[r, dashed, "{K(A,T)}"'] & \CI_{\tau^*\rho}
\end{tikzcd}
\end{equation}
Using the equivalence from Proposition~\ref{st:Isomps of trivial gerbes and HLBdls}, this induces a Hermitean line bundle with connection $\sfR K(A,T) \in \HLBdl^\nabla_{\tau^*\rho + \rho}(X)$.

The involution $\tau$ acts on $K(A,T)$ as
\begin{equation}
\begin{aligned}
	\tau^*K(A,T)
	&= \tau^* \big( \tau^*(T, \beta) \circ (A,\alpha)^{-1} \circ (T,\beta)^* \big)
	\\[4pt]
	&= (T,\beta) \circ \tau^*(A,\alpha)^{-1} \circ \tau^*(T,
        \beta)^*\ .
\end{aligned}
\end{equation}
On the other hand
\begin{equation}
\begin{aligned}
	K(A,T)^*
	&= \big( \tau^*(T, \beta) \circ (A,\alpha)^{-1} \circ (T,\beta)^* \big)^*
	\\[4pt]
	&= (T,\beta) \circ (A,\alpha)^{-*} \circ \tau^*(T, \beta)^*\ .
\end{aligned}
\end{equation}
The image of the $2$-isomorphism $\psi$ under inversion of $1$-isomorphisms, composed horizontally by the respective identity $2$-isomorphisms, induces a $2$-isomorphism $K(\psi) \colon \tau^*K(A,T) \Rightarrow K(A,T)^*$, and the additional condition on $\psi$ implies that $K(\psi) \bullet \tau^*K(\psi) = 1_{K(A,T)}$.
Thus under the equivalence from Proposition~\ref{st:Isomps of trivial
  gerbes and HLBdls}, and with Remark~\ref{rmk:star on 1-isos does not
  map to dual of LBdls} in mind, we obtain an equivariant line bundle
$(\sfR K(A,T), \sfR K(\psi))$ on $X$ with curvature $F_{\sfR K(A,T)} = \tau^*\rho + \rho$.

In order to describe the Kane-Mele invariant, we consider a Jandl gerbe $(\CG, (A,\alpha), \psi)$ on a $2$-dimensional closed and oriented involutive manifold $(X^2, \tau^2)$ with orientation-preserving involution $\tau^2$.
Let $F^2$ be a fundamental domain for the involution, and set $X^1 \coloneqq \partial F^2$.
Then the restriction $\tau^2_{|X^1}$ is orientation-reversing.
We observe that
\addtocounter{equation}{1}
\begin{align*}
\label{eq:From gerbe hol to LBdl hol}
	\hol \big( \sfR K(A,T), X^1 \big)
	&= \exp \Big( 2\pi\, \iu\, \int_{F^2}\, F_{\sfR K(A,T)} \Big)
	\\[0.2cm]
	&= \exp \Big( 2\pi\, \iu\, \int_{F^2}\, \big( \tau^{2\,*}\rho + \rho\big) \Big) \tag{\theequation}
	\\[0.2cm]
	&= \exp \Big(2\pi\, \iu\, \int_{X^2}\, \rho \Big)
	\\*[0.2cm]
	&= \hol \big(\CG, X^2\big)\ .
\end{align*}
This shows, in particular, that any Jandl gerbe $(\CG, (A,\alpha), \psi)$ on an involutive surface $(X^2,\tau^2)$ as above gives rise to a holonomy $\hol(\sfR K(A,T), X^1)$ which is defined independently of the choice of trivialisation $(T,\beta)$ of $\CG$, of the choice of fundamental domain $F^2$ of $X^2$, and even of the explicit choice of Jandl structure on $\CG$.

Furthermore, the Jandl structure on $\CG$ together with the orientation-preserving action of $\tau^2$ on $X^2$ implies
\begin{equation}
	\hol(\CG, X^2)
	= \hol (\tau^{2\,*}\CG, X^2)
	= \hol(\CG^*, X^2)
	= \hol(\CG, X^2)^{-1} \ ,
\end{equation}
and hence $\hol(\CG, X^2) \in \ZZ$.

Note how closely the result of the calculation~\eqref{eq:From gerbe hol to LBdl hol} resembles the statement of Corollary~\ref{st:preimages under Delta and integrals}.
Here we have derived a geometric refinement of this statement:
while in Corollary~\ref{st:preimages under Delta and integrals} we reduced integrals of differential forms to integrals over the boundary of a fundamental domain, here instead we reduce the holonomy of a bundle gerbe with additional data to the holonomy of a line bundle constructed from this data over the boundary of a fundamental domain.
We summarise our findings here as

\begin{proposition}
\label{st:Jandl hol to LBdl hol}
Let $(X^2, \tau^2)$ be a closed and oriented involutive manifold of dimension $d=2$ with orientation-preserving involution $\tau^2$.
Let $(\CG, (A, \alpha), \psi)$ be a Jandl gerbe on $(X^2, \tau^2)$.
Then:
\begin{myenumerate}
	\item The holonomy of $\CG$ over $X^2$ squares to $1$, i.e. $\hol(\CG, X^2) \in \ZZ$.
	
	\item For any choice of trivialisation $(T,\beta)$ of $\CG$
          over $X^2$ and any choice of fundamental domain $F^2$ for
          $\tau^2$ with boundary $X^1 \coloneqq \partial F^2$, the
          holonomy of the $\ZZ$-equivariant Hermitean line bundle $\sfR K(A,T)$ on $X^2$ is $\hol(\sfR K(A,T), X^1) = \hol(\CG, X^2)$.
	That is, the holonomy of $\sfR K(A,T)$ around $X^1$ computes the holonomy of $\CG$ around $X^2$.
\end{myenumerate}
\end{proposition}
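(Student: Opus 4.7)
The proof proposal will essentially assemble the calculations already appearing in the discussion preceding the statement, organising them cleanly into two steps.

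For part~(1), the plan is to exploit the three structural ingredients available: the orientation-preserving nature of $\tau^2$, the Jandl $1$-isomorphism $(A,\alpha)\colon \tau^{2\,*}\CG \arisom \CG^*$, and the behaviour of bundle gerbe holonomy under pullback and duality. First I would note that since $\tau^2 \colon X^2 \to X^2$ is an orientation-preserving diffeomorphism of a closed oriented surface, pullback of any bundle gerbe along $\tau^2$ preserves holonomy, so $\hol(\tau^{2\,*}\CG, X^2) = \hol(\CG, X^2)$. Next, $1$-isomorphic bundle gerbes have the same holonomy (both are computed from any trivialisation, and trivialisations of $1$-isomorphic gerbes induce trivialisations of the tensor product into $\CI_0$), so the Jandl structure $(A,\alpha)$ gives $\hol(\tau^{2\,*}\CG, X^2) = \hol(\CG^*, X^2)$. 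Finally, $\hol(\CG^*, X^2) = \hol(\CG, X^2)^{-1}$ follows from $\hat\rmD(\CG^*) = -\hat\rmD(\CG)$, or explicitly from the fact that a trivialisation $(T,\beta)\colon \CG \to \CI_\rho$ dualises to a trivialisation $\CG^* \to \CI_{-\rho}$. Combining these three identities yields $\hol(\CG,X^2)^2 = 1$.

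For part~(2), the plan is to construct the $\ZZ$-equivariant line bundle $\sfR K(A,T)$ explicitly from a trivialisation and then reduce its holonomy to that of $\CG$ via Stokes' theorem on the fundamental domain. A trivialisation $(T,\beta)\colon \CG \to \CI_\rho$ exists by Theorem~\ref{st:Classification of BGrbs}, and diagram~\eqref{diag:LBdl from triv of Real gerbe} defines the $1$-isomorphism $K(A,T) \colon \CI_{-\rho} \to \CI_{\tau^{2\,*}\rho}$. Proposition~\ref{st:Isomps of trivial gerbes and HLBdls} then identifies $K(A,T)$ with a Hermitean line bundle with connection $\sfR K(A,T)$ on $X^2$ whose curvature is $F_{\sfR K(A,T)} = \tau^{2\,*}\rho + \rho$, and the $2$-isomorphism $\psi$ squares-to-identity promotes $\sfR K(A,T)$ to a $\ZZ$-equivariant line bundle. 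Then by Stokes' theorem and \eqref{eq:BGrb hol and curvature} applied to the trivial gerbe underlying $\sfR K(A,T)$, I would compute
\begin{equation}
\hol\big(\sfR K(A,T), X^1\big)
= \exp \Big( 2\pi\, \iu\, \int_{F^2}\, \big(\tau^{2\,*}\rho + \rho\big) \Big)
= \exp \Big( 2\pi\, \iu\, \int_{X^2}\, \rho \Big)
= \hol(\CG, X^2)\ ,
\end{equation}
where the middle equality uses the change of variables $\int_{F^2}\, \tau^{2\,*}\rho = \int_{\tau^2(F^2)}\, \rho$ (valid because $\tau^2$ is orientation-preserving) and $X^2 = F^2 \cup \tau^2(F^2)$ with overlap of measure zero.

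The only subtle point, which I expect to be the main technical obstacle, is checking that the outer square in diagram~\eqref{diag:LBdl from triv of Real gerbe} really corresponds under $\sfR$ to a $\ZZ$-equivariant line bundle with the correct curvature and not merely to a line bundle carrying an arbitrary lift of the involution. This requires tracking how the $2$-isomorphism $\psi$ descends through the inversion of $1$-isomorphisms and the equivalence $\sfR$ of Proposition~\ref{st:Isomps of trivial gerbes and HLBdls}, keeping in mind the subtlety flagged in Remark~\ref{rmk:star on 1-isos does not map to dual of LBdls} that $\sfR$ sends $(-)^{-1}$ rather than $(-)^*$ of $1$-isomorphisms to the dual line bundle. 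Once the condition $\psi^* \bullet \tau^{2\,*}\psi = 1_{(A,\alpha)}$ is translated correctly, the resulting equivariant structure $\sfR K(\psi)$ squares to the identity, and the holonomy computation then goes through as indicated. Independence of the choices of trivialisation, of fundamental domain and of Jandl structure is then automatic from~\eqref{eq:From gerbe hol to LBdl hol} since the final expression depends only on $\CG$ and $X^2$.
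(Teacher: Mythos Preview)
Your proposal is correct and follows essentially the same approach as the paper: part~(1) is proved by the chain $\hol(\CG,X^2) = \hol(\tau^{2\,*}\CG,X^2) = \hol(\CG^*,X^2) = \hol(\CG,X^2)^{-1}$, and part~(2) by the Stokes-type computation of~\eqref{eq:From gerbe hol to LBdl hol}. The paper presents the proposition explicitly as a summary of these preceding calculations, so your organisation matches it closely, including the observation that independence from the choices of trivialisation, fundamental domain and Jandl structure follows because the final expression depends only on $\CG$ and $X^2$.
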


It is natural to ask at this stage if there exists a localisation
technique akin to that in Section~\ref{sect:Localisation via MV} but
based on a hierarchy of higher bundles with connection
(i.e. functions, line bundles, bundle gerbes, $\dots$) rather than a hierarchy of differential forms.
In fact, the Mayer-Vietoris sequence from Theorem~\ref{st:LES for open subset} almost generalises to this refined setting:
consider an open cover $X = V \cup V'$ and set $U = V \cap V'$.
Given a $\sfU(1)$-valued function $g_U$ on $U$ we obtain a Hermitean line bundle $L$ on $X$ by defining its transition function with respect to the cover $X = V \cup V'$ to be $g_U$.
A connection on $L$ is given by $A_{|V} \coloneqq f_{V'}\, g_U^{-1}\, \dd g_U$ and $A_{|V'} \coloneqq -f_V\, g_U^{-1}\, \dd g_U$, where $(f_V,f_{V'})$ is a partition of unity subordinate to the cover $X = V \cup V'$.
We would like to think of this assignment from $\sfU(1)$-valued
functions to Hermitean line bundles with connection as the analogue of
the connecting homomorphism in degree zero in a long exact sequence
similar to~\eqref{eq:LES for open subset}, but now involving Deligne
cohomology rather than de~Rham cohomology.
However, if $g_V$ and $g_{V'}$ are $\sfU(1)$-valued functions on $V$
and $V'$, respectively, and we consider $g_U = g_V\, g_{V'}^{-1}$,
then we obtain a trivialisable line bundle $L$, but because of the
partition of unity used in the construction of its connection, it is not trivialisable as a line bundle with connection.
That is, while the long exact sequence~\eqref{eq:LES for open subset} might lift to a categorical sequence of higher bundles, it does not lift to higher bundles with connections.
Despite the absence of a long exact sequence of geometric objects in
this sense, we will show in Section~\ref{sect:KM from BGrb hol} below
that an analogue of the localisation technique for the Kane-Mele
invariant can nevertheless be obtained in the higher bundle setting as well.

\subsection{Localisation of bundle gerbe holonomy}
\label{sect:KM from BGrb hol}

Putting together the discussion from Section~\ref{sect:Real BGrbs} with the definition of the Kane-Mele invariant from~\eqref{eq:KM_invariant}, we obtain a geometric perspective on its localisation.
Let $\{(X^d, \tau^d,F^d)\}_{d=0,1,2,3}$ be an oriented filtered
involutive $3$-manifold; recall that filtered involutive manifolds
have nonempty fixed point sets $X^0$ by definition. Let $(E^-, \hat{\tau}^3) \in \QVBdl_U(X^3, \tau^3)$.
Recall from Section~\ref{sect:KM as obstruction to factorisation} that the Kane-Mele invariant of $(E^-, \hat{\tau}^3)$ can be written as
\begin{equation}
\label{eq:KM_invariantS}
	\rmK\rmM(E^-, \hat{\tau}^3)
	= \exp \bigg( \pi\, \iu\, \int_{X^3}\, w_\varphi^*H_{\sfS\sfU(m)} \bigg)\ ,
\end{equation}
where we may assume that $w_\varphi \colon X^3 \to \sfS\sfU(m)$ after some smooth homotopy.
As before, using the involution $\tau^3$ this can be rewritten as
\begin{equation}
	\rmK\rmM(E^-, \hat{\tau}^3)
	= \exp \bigg( 2 \pi\, \iu\, \int_{F^3}\, w_\varphi^*H_{\sfS\sfU(m)} \bigg)
\end{equation}
for any choice of fundamental domain $F^3 \subset X^3$ for $\tau^3$.
From~\eqref{eq:BGrb hol and curvature} we recognise this as the
holonomy of a pullback of the basic bundle gerbe $\CG_{\rm basic}$ from Example~\ref{ex:basicbundlegerbe} around the boundary $X^2 \coloneqq \partial F^3$:
\begin{equation}
	\rmK\rmM\big(E^-, \hat{\tau}^3\big)
	= \hol\big( w_\varphi^* \CG_{\rm basic}, X^2 \big)\ .
\end{equation}
With a Jandl structure $((A,\alpha), \psi)$ on $\CG_{\rm basic}$ chosen (see Example~\ref{eg:Jandl structure on basic gerbe}), Proposition~\ref{st:Jandl hol to LBdl hol} shows that the data $w_\varphi^*(\CG_{\rm basic}, (A,\alpha), \psi)$ contains information about equivariant Hermitean line bundles on $X^2$ that compute the same holonomy on $X^1 \coloneqq \partial F^2$ for any choice of fundamental domain $F^2$ for $\tau^2 = \tau^3_{|X^2}$:
\begin{equation}
	\rmK\rmM\big(E^-, \hat{\tau}^3\big)
	= \hol\big( \sfR K(A,T), X^1 \big)\ .
\end{equation}

We can now repeat similar steps that we used for Jandl gerbes in Section~\ref{sect:Real BGrbs} to further localise the holonomy of the line bundles $\sfR K(A,T)$.
Let $F^1$ be a fundamental domain for $\tau^1 \coloneqq \tau^2_{|X^1}$.
For any such choice, $X^0 \coloneqq \partial F^1$ is the set of fixed points of the original $\ZZ$-action on $X^3$.
By dimensional reasons, there exist trivialisations of $\sfR K(A,T)$ as Hermitean line bundles on $X^1$.
Let $j \colon \sfR K(A,T) \to I$ be such a trivialisation with $I=X^1\times\FC$ denoting the trivial Hermitean line bundle.
The isomorphism $j$ does not generally trivialise the connection, but rather induces an isomorphism of Hermitean line bundles with connection of the form $j \colon \sfR K(A,T) \to I_\nu$ for some $\nu \in \Omega^1(X^1)$, where $I_\nu$ is the trivial line bundle with connection $\dd+\nu$.
Similarly to~\eqref{diag:LBdl from triv of Real gerbe}, we obtain an isomorphism of Hermitean line bundles with connection defined by the diagram
\begin{equation}
\begin{tikzcd}[row sep=1.25cm, column sep=1.5cm]
	\tau^{1\,*} \sfR K(A,T) \ar[r, "{\sfR K(\psi)}"] & \sfR K(A,T) \ar[d,"j"]
	\\
	I_{\tau^{1\,*}\nu} \ar[u, "{\tau^{1\,*}j^{-1}}"] \ar[r, dashed, "f_{(A,T,j)}"'] & I_\nu
\end{tikzcd}
\end{equation}
Here $f_{(A,T,j)}$ is an isomorphism of trivial Hermitean line bundles with connection, or equivalently a function $f_{(A,T,j)} \colon X^1 \to \sfU(1)$.
As a consequence of the equivariance of $\sfR K(A,T)$, it satisfies
\begin{equation}
\begin{aligned}
	\tau^{1\,*} f_{(A,T,j)}
	&= \tau^{1\,*} \big( j \circ \sfR K(\psi) \circ \tau^{1\,*}j^{-1} \big)
	\\[0.1cm]
	&= \tau^{1\,*} j \circ \tau^{1\,*} \sfR K(\psi) \circ j^{-1}
	\\[0.1cm]
	&= f_{(A,T,j)}^{-1}\ .
\end{aligned}
\end{equation}
Moreover, since $f_{(A,T,j)}$ is an isomorphism of line bundles with connection we have
\begin{equation}
	f_{(A,T,j)}^{-1}\, \dd f_{(A,T,j)}
	= \nu - \tau^{1\,*}\nu\ .
\end{equation}
Consequently, making use of the fact that $\tau^1$ is orientation-reversing, we derive once again that the Kane-Mele invariant is completely localised onto the fixed point set:
\addtocounter{equation}{1}
\begin{align}
\label{eq:KM localisation from holonomies}
	\rmK\rmM\big(E^-, \hat{\tau}^3\big)
	&=\hol \big( \sfR K(A,T), X^1 \big)
	\nonumber\\[0.2cm]
	&= \exp \Big(2\pi\, \iu\, \int_{X^1}\, \nu \Big)
	\nonumber\\[0.2cm]
	&= \exp \Big( 2\pi\, \iu\, \int_{F^1}\, \big(\nu - \tau^{1\,*} \nu\big) \Big) \tag{\theequation}
	\\[0.2cm]
	&= \exp \Big( 2\pi\, \iu\, \int_{F^1}\, f_{(A,T,j)}^{-1}\, \dd f_{(A,T,j)} \Big)
	\nonumber\\*[0.2cm]
	&= \prod_{k \in X^0}\, f_{(A,T,j)}(k)\ . \nonumber
\end{align}
This geometric localisation property strongly resembles the expression~\eqref{eq:Localised HdR form of KM}, but the procedure itself is richer: it uses geometric objects which can be understood as realising the forms appearing in Section~\ref{sect:Localisation in equivar HdR} as their curvatures.

From the analysis thus far, we can distill the following statements.
Firstly, there is an assertion purely about bundle gerbes on involutive manifolds given by

\begin{theorem}
Let $(X^2, \tau^2)$ be a closed and oriented $2$-dimensional
involutive manifold that can be made the top componentwise involutive triple of an oriented $2$-dimensional filtered involutive manifold $\{(X^d, \tau^d, F^d)\}_{d=0,1,2}$.
Let $(\CG, (A,\alpha), \psi)$ be a Jandl gerbe on $X^2$.
Then:
\begin{myenumerate}
	\item The holonomy of $\CG$ around $X^2$ squares to $1$:
	\begin{equation}
		\hol(\CG, X^2) \ \in \ \ZZ\ .
	\end{equation}
	
	\item The holonomy localises completely as
	\begin{equation}
	\label{eq:Real BGrb holonomy reduction}
		\hol\big(\CG, X^2\big)
		= \hol \big( \sfR K(A,T), X^1 \big)
		= \prod_{k \in X^0}\, f_{(A,T,j)}(k)\ ,
	\end{equation}
	where $X^0 \subset X^2$ is the set of fixed points of the involution $\tau^2$, which is independent of the extension of $(X^2, \tau^2)$ to an oriented filtered involutive manifold.
	The expression~\eqref{eq:Real BGrb holonomy reduction} is independent of the choice of Jandl structure and all trivialisations.
\end{myenumerate}
\end{theorem}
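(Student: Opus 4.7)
My plan is to assemble the theorem from pieces that have already been established in the preceding discussion, so the proof largely amounts to packaging the calculations of equations~\eqref{eq:From gerbe hol to LBdl hol} and~\eqref{eq:KM localisation from holonomies} together with a careful bookkeeping of the independence claims.

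For part (1), I would begin by invoking the Jandl $1$-isomorphism $(A,\alpha) \colon \tau^{2\,*}\CG \to \CG^*$, which implies that the bundle gerbes $\tau^{2\,*}\CG$ and $\CG^*$ have the same Deligne class. Since $\tau^2$ is orientation-preserving on $X^2$, we have $\hol(\tau^{2\,*}\CG, X^2) = \hol(\CG, X^2)$, while the dualisation flips curvature and hence gives $\hol(\CG^*, X^2) = \hol(\CG, X^2)^{-1}$. Comparing these two yields $\hol(\CG, X^2)^2 = 1$.

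For the first equality in part (2), I would appeal directly to Proposition~\ref{st:Jandl hol to LBdl hol}, which has already packaged the reduction from Jandl gerbe holonomy on $X^2$ to equivariant line bundle holonomy on $X^1 = \partial F^2$ via the auxiliary line bundle $\sfR K(A,T)$ constructed from the Jandl structure and a chosen trivialisation $(T,\beta)$ of $\CG$. For the second equality, I would run the construction of the function $f_{(A,T,j)}$ built from a trivialisation $j \colon \sfR K(A,T) \to I_\nu$ of the underlying Hermitean line bundle on $X^1$, and verify that $\tau^{1\,*} f_{(A,T,j)} = f_{(A,T,j)}^{-1}$ together with $f_{(A,T,j)}^{-1}\, \dd f_{(A,T,j)} = \nu - \tau^{1\,*}\nu$. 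Localisation then proceeds exactly as in the calculation~\eqref{eq:KM localisation from holonomies}: splitting the integral $\int_{X^1}\, \nu$ over the fundamental domain $F^1$ and its image under $\tau^1$, using orientation-reversal of $\tau^1$ on $X^1$, and applying Stokes' Theorem on the interval components of $F^1$ with boundary $X^0 = \partial F^1$.

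The subtle part is the chain of independence assertions, which I would address by traversing the sequence of equalities from left to right. The holonomy $\hol(\CG, X^2)$ is manifestly intrinsic to $(\CG, X^2)$: it depends neither on the Jandl structure, nor on any choice of trivialisation, nor on the auxiliary filtration data $F^2, F^1$. Consequently the other two expressions, by virtue of equalling this intrinsic quantity, inherit the same independence. The set $X^0$ admits two a priori descriptions, namely as $\partial F^1$ and as the fixed locus $(X^2)^{\tau^2}$, and I would verify they coincide by noting that any fundamental domain $F^1$ for the orientation-reversing involution $\tau^1$ on $X^1$ must have boundary exactly $(X^1)^{\tau^1}$, and that $(X^1)^{\tau^1} = (X^2)^{\tau^2}$ by construction of the filtered involutive manifold (since $X^1 = \partial F^2$ is transverse to the fixed locus and the involutions restrict compatibly).

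The main technical obstacle I anticipate is checking that the individual factors $f_{(A,T,j)}(k)$, not merely their product, transform consistently under changes of trivialisation; the independence of the \emph{product} over $X^0$ is what the chain of equalities guarantees, but individual factors may genuinely change. I would resolve this by emphasising that the theorem only asserts independence of the product, and that this follows automatically from the leftmost term of the chain being intrinsic, rather than attempting to prove pointwise invariance, which would fail. The key conceptual point, which I would highlight at the end, is that the localisation is fundamentally a consequence of the $2$-categorical structure of Jandl gerbes combined with Stokes' Theorem at two successive levels, providing the geometric refinement of the cohomological localisation of Section~\ref{sect:Localisation via MV}.
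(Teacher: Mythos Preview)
Your proposal is correct and follows essentially the same approach as the paper, which simply states that the theorem follows immediately from the analysis preceding it---namely the computation before Proposition~\ref{st:Jandl hol to LBdl hol} for part~(1), that proposition itself for the first equality in part~(2), and the calculation~\eqref{eq:KM localisation from holonomies} for the second equality. Your additional remarks on the independence claims and on the identification of $X^0$ as both $\partial F^1$ and the fixed locus are helpful elaborations, but they do not depart from the paper's route.
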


\begin{proof}
This follows immediately from our analysis above.
\end{proof}

Secondly, denoting the canonical pairing between cohomology and homology classes by
$\<-,-\>$, we can use our findings here to obtain a geometric version of our results from Section~\ref{sect:KM as obstruction to factorisation} as

\begin{theorem}
\label{st:KM via BGrbs}
Let $(X^3, \tau^3)$ be a closed connected and oriented involutive $3$-manifold
with orientation-reversing involution $\tau^3$, let $(E^-, \hat{\tau}^3)
\in \QVBdl_U(X^3, \tau^3)$ be a trivialisable Quaternionic vector
bundle on $X^3$, and let $\CG \in
\BGrb(X^3)$ be a bundle gerbe on $X^3$. Then:
\begin{myenumerate}
	\item There exists a bundle gerbe $\CH \in \BGrb(X^3)$ and an
          isomorphism $\CG \cong \tau^{3\,*}\CH \otimes \CH^*$ if and
          only if $\exp \big( \pi\, \iu\, \< \rmD(\CG), [X^3]\> \big) = 1$.
	
	\item $\rmK\rmM(E^-, \hat{\tau}^3) = 1$ if and only if for any trivialisation $\varphi$ of $E^-$ such that $w_\varphi$ is $\sfS\sfU(m)$-valued there exists a bundle gerbe $\CH \in \BGrb(X^3)$ with $w_\varphi^*\CG_{\rm basic} \cong \tau^{3\,*}\CH \otimes \CH^*$.
\end{myenumerate}
\end{theorem}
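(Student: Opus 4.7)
The plan is to reduce both statements to cohomological computations via the Dixmier--Douady classification of Theorem~\ref{st:Classification of BGrbs}. For part~(1), I would use the fact that the Dixmier--Douady class is a group homomorphism compatible with tensor products, pullbacks and duals, so that
\begin{equation*}
	\rmD(\tau^{3\,*}\CH \otimes \CH^*) = \tau^{3\,*}\rmD(\CH) - \rmD(\CH)\ ,
\end{equation*}
and the existence of $\CH$ with $\CG \cong \tau^{3\,*}\CH \otimes \CH^*$ is equivalent, by Theorem~\ref{st:Classification of BGrbs}, to the existence of a class $\eta \in \rmH^3(X^3,\RZ)$ with $\rmD(\CG) = \tau^{3\,*}\eta - \eta$.

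The key computation is then the action of $\tau^{3\,*}$ on $\rmH^3(X^3, \RZ)$. Since $X^3$ is closed connected and orientable, Poincar\'e duality gives $\rmH^3(X^3,\RZ) \cong \rmH_0(X^3,\RZ) \cong \RZ$, which is torsion-free, and pairing with the fundamental class $[X^3]$ provides a canonical isomorphism to $\RZ$. Because $\tau^3$ reverses orientation, $\tau^3_*[X^3] = -[X^3]$ in $\rmH_3(X^3,\RZ)$, so by adjointness $\tau^{3\,*}$ acts as $-1$ on $\rmH^3(X^3,\RZ)$. Consequently $\tau^{3\,*}\eta - \eta = -2\eta$, and thus the condition on $\rmD(\CG)$ becomes $\langle \rmD(\CG),[X^3]\rangle \in 2\RZ$, which is exactly $\exp\bigl(\pi\,\iu\, \langle \rmD(\CG), [X^3]\rangle\bigr) = 1$.

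For part~(2) I would apply part~(1) to the pulled-back bundle gerbe $\CG \coloneqq w_\varphi^*\CG_{\rm basic} \in \BGrb(X^3)$. Since $\rmH^3(\sfS\sfU(m),\RZ) \cong \RZ$ is torsion-free and is generated by the class represented by $H_{\sfS\sfU(m)}$ (which is both the Dixmier--Douady class and the de~Rham class of $\CG_{\rm basic}$ under the natural map $\rmH^3(-,\RZ) \to \rmH^3_\dR(-)$, cf.~Example~\ref{ex:basicbundlegerbe}), naturality of the Dixmier--Douady class under pullback gives
\begin{equation*}
	\bigl\langle \rmD(w_\varphi^*\CG_{\rm basic}), [X^3] \bigr\rangle
	= \bigl\langle w_\varphi^*[H_{\sfS\sfU(m)}],[X^3] \bigr\rangle
	= \int_{X^3}\, w_\varphi^* H_{\sfS\sfU(m)}\ .
\end{equation*}
Substituting into the criterion from~(1) and comparing with~\eqref{eq:KM_invariantS}, one obtains $\exp\bigl(\pi\,\iu\, \langle \rmD(w_\varphi^*\CG_{\rm basic}),[X^3]\rangle\bigr) = \rmK\rmM(E^-,\hat\tau^3)$, which yields the equivalence in~(2).

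The main obstacle I anticipate is ensuring that the identifications are clean at the level of integral (rather than real) cohomology: one must invoke torsion-freeness of $\rmH^3(X^3,\RZ)$ and $\rmH^3(\sfS\sfU(m),\RZ)$ to pass between integral Dixmier--Douady classes and their de~Rham representatives without loss of information, and one must verify that $\rmD$ is indeed $\tau^{3\,*}$-equivariant and compatible with tensor products and the $\CG \mapsto \CG^*$ involution, for which the identity $\rmD(\CG^*) = -\rmD(\CG)$ (which follows from the corresponding statement for Deligne classes recalled after Theorem~\ref{st:Classification of BGrbs}) is the crucial input.
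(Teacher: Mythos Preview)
Your proposal is correct and follows essentially the same route as the paper's proof: reduce to the Dixmier--Douady classification, use that $\tau^{3\,*}$ acts as $-1$ on $\rmH^3(X^3,\RZ) \cong \RZ$ because $\tau^3$ reverses orientation, so $\rmD(\tau^{3\,*}\CH \otimes \CH^*) = -2\,\rmD(\CH)$, and then for part~(2) identify $\rmK\rmM(E^-,\hat\tau^3)$ with $\exp\bigl(\pi\,\iu\,\langle \rmD(w_\varphi^*\CG_{\rm basic}),[X^3]\rangle\bigr)$ and apply~(1). Your additional care about torsion-freeness when passing between integral and de~Rham classes is a welcome refinement but not a substantive departure.
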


\begin{proof}
Statement (1) is seen as follows:
first, since $X^3$ is closed, connected and oriented we have $\rmH^3(X^3, \RZ) \cong \RZ$.
The fact that $\tau^3$ reverses the orientation of $X^3$ is then
equivalent to $\tau^{3\,*}$ acting as multiplication by $-1$ on $\rmH^3(X^3, \RZ)$.
Assuming that $\CG \cong \tau^{3\,*}\CH \otimes \CH^*$, we thus
compute the Dixmier-Douady class from Theorem~\ref{st:Classification
  of BGrbs} to get
\begin{equation}
\begin{aligned}
	\rmD(\CG)
	&= \rmD(\tau^{3\,*}\CH \otimes \CH^*)
	\\[4pt]
	&= \tau^{3\,*}\,\rmD(\CH) - \rmD(\CH)
	\\[4pt]
	&= -2\, \rmD(\CH)\ .
\end{aligned}
\end{equation}
Consequently $\exp \big( \pi\, \iu\, \< \rmD(\CG), [X^3] \>\big) = 1$.

Now assume that $\exp \big( \pi\, \iu\, \< \rmD(\CG), [X^3] \>\big) = 1$, i.e. that $\rmD(\CG) \in 2\,\RZ$.
By Theorem~\ref{st:Classification of BGrbs} we can then find a bundle gerbe $\CH \in \BGrb(X^3)$ such that $\rmD(\CH) = \frac{1}{2}\, \rmD(\CG)$ together with an isomorphism as required.

For (2), recall from~\eqref{eq:KM_invariantS} that
\begin{equation}
	\rmK\rmM\big(E^-,\hat\tau^3\big)
	= \exp \bigg( \pi\, \iu\, \int_{X^3}\, w_\varphi^* H_{\sfS\sfU(m)} \bigg)
	= \exp \big( \pi\, \iu\, \big\< \rmD(w_\varphi^*\CG_{\rm basic}),
        [X^3]\big\>\big) \ ,
\end{equation}
since $\rmD(\CG_{\rm basic})=[H_{\sfS\sfU(m)}]$.
This together with (1) proves the statement.
\end{proof}

\begin{remark}
By Theorem~\ref{st:Classification of BGrbs} and Corollary~\ref{st:3D
  CW see SUm as KZ3}, the group $\sfS\sfU(m)$ classifies bundle gerbes (without connection) on $X^3$ up to $1$-isomorphism.
Therefore for any $\CG \in \BGrb(X^3)$ there exists a map $\kappa \colon X^3 \to \sfS\sfU(m)$ and an isomorphism $\kappa^*\CG_{\rm basic} \cong \CG$.
The splitting of the bundle gerbe up to isomorphism (disregarding the
Jandl structure) in Theorem~\ref{st:KM via BGrbs} is thus equivalent
to the splitting of the sewing matrix $w_\varphi$ up to homotopy (disregarding
the $\ZZ$-action) in Theorem~\ref{st:splitting up to homotopy of w_varphi}.

In fact, statement (3) of Theorem~\ref{st:splitting up to homotopy of
  w_varphi}, which implies statement (2) of Theorem~\ref{st:KM via BGrbs}, can alternatively be derived from the classification in Theorem~\ref{st:Classification of BGrbs} together with the fact that the basic bundle gerbe $\CG_{\rm basic}$ on $\sfS\sfU(m)$ is \emph{multiplicative}~\cite{Waldorf:Multiplicative_BGrbs,Gawedzki-Waldorf:PW_formula_and_multiplicative_gerbes}.
This means that, if $\sfm \colon \sfS\sfU(m) \times \sfS\sfU(m) \to
\sfS\sfU(m)$ denotes the multiplication map on the group and $\pr_i
\colon \sfS\sfU(m) \times \sfS\sfU(m) \to \sfS\sfU(m)$, $i = 0,1$,
denote the projections onto the two factors, then there exists a
$1$-isomorphism of bundle gerbes (without connection)%
\footnote{This is reflected already in~\eqref{eq:mulitplication_and_canonical_3-form} at the level of Dixmier-Douady classes.}
\begin{equation}
	\sfm^*\CG_{\rm basic} \cong \pr_0^*\,\CG_{\rm basic} \otimes
        \pr_1^*\,\CG_{\rm basic}\ ,
\end{equation}
which satisfies coherence relations lifting the associativity and
unity in the group.
\qen
\end{remark}

We also have an involutive version of the relation between the
holonomy of a bundle gerbe and the integral of its curvature given by

\begin{proposition}
If $(X^2,\tau^2)$ can be embedded as the boundary of a fundamental
domain in a $3$-dimensional closed and oriented involutive manifold
$(X^3, \tau^3)$ with $\ori(\tau^3)=-1$, and $(\CG, (A,\alpha), \psi)$ is a Jandl gerbe
on $X^3$, then
\begin{equation}
	\hol\big(\CG, X^2\big)
	= \exp \Big( \pi\, \iu\, \int_{X^3}\, H_{\CG} \Big)
	= \exp \big( \pi\, \iu\, \big\< \rmD(\CG), [X^3] \big\> \big)
	\ \in \ \ZZ\ .
\end{equation}
In particular, the bundle gerbe holonomy $\hol(\CG, X^2)$ is independent of the choice of connection and Jandl structure on $\CG$, and if $X^3$ can be chosen connected then it is trivial if and only if there exists a bundle gerbe $\CH \in \BGrb(X^3)$ and an isomorphism of bundle gerbes without connection
\begin{equation}
	\CG \cong \tau^{3\,*}\CH \otimes \CH^* \ .
\end{equation}
\end{proposition}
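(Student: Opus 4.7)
The plan is to combine the curvature-integral formula for bundle gerbe holonomy in~\eqref{eq:BGrb hol and curvature} with the Reality of the curvature $H_\CG$ induced by the Jandl structure, and then invoke Theorem~\ref{st:KM via BGrbs}~(1) for the final equivalence. The argument essentially geometrises the elementary observation, used repeatedly in Section~\ref{sect:MV for involutive spaces}, that a Real top-degree form on a closed oriented involutive $3$-manifold with orientation-reversing involution integrates to twice its integral over a fundamental domain.

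First I would extract from the Jandl $1$-isomorphism $(A,\alpha)\colon\tau^{3\,*}\CG\to\CG^*$ the identity $\tau^{3\,*}H_\CG = -H_\CG$ of curvature $3$-forms, which is immediate from the fact that the dual bundle gerbe $\CG^*$ has curvature $-H_\CG$ and that any $1$-isomorphism of bundle gerbes identifies the underlying curvatures. Combining this Reality with~\eqref{eq:orientation_and_integrals} and $\ori(\tau^3)=-1$, writing $X^3=F^3\cup\tau^3(F^3)$ with overlap of measure zero, I would derive
\begin{equation}
	\int_{X^3}\,H_\CG = \int_{F^3}\,H_\CG + \int_{\tau^3(F^3)}\,H_\CG = \int_{F^3}\,H_\CG - \int_{F^3}\,\tau^{3\,*}H_\CG = 2\,\int_{F^3}\,H_\CG\ .
\end{equation}
Applying~\eqref{eq:BGrb hol and curvature} to the compact oriented $3$-manifold $F^3$ with boundary $X^2=\partial F^3$ then gives
\begin{equation}
	\hol\big(\CG,X^2\big) = \exp\Big(2\pi\,\iu\,\int_{F^3}\,H_\CG\Big) = \exp\Big(\pi\,\iu\,\int_{X^3}\,H_\CG\Big)\ .
\end{equation}

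Next I would translate this integral into the Kronecker pairing with the fundamental class. Since $X^3$ is closed and oriented and $[H_\CG]\in\rmH^3_\dR(X^3)$ is the real image of $\rmD(\CG)\in\rmH^3(X^3,\RZ)$ (by the remark following Theorem~\ref{st:Classification of BGrbs}), one has $\int_{X^3}\,H_\CG = \langle\rmD(\CG),[X^3]\rangle\in\RZ$. Hence $\exp\big(\pi\,\iu\,\langle\rmD(\CG),[X^3]\rangle\big)\in\ZZ$, and, because both $\rmD(\CG)$ and the fundamental class are purely topological data, this shows that the holonomy is independent of the choice of connection and of the choice of Jandl structure on~$\CG$.

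For the final equivalence, if $X^3$ can be chosen connected then Theorem~\ref{st:KM via BGrbs}~(1) applied to $\CG$ (viewed through the forgetful functor to bundle gerbes without connection) asserts precisely that $\exp\big(\pi\,\iu\,\langle\rmD(\CG),[X^3]\rangle\big)=1$ if and only if there exists $\CH\in\BGrb(X^3)$ with $\CG\cong\tau^{3\,*}\CH\otimes\CH^*$, so the result follows directly. The main obstacle I anticipate is careful bookkeeping of signs and induced orientations in the identity $\int_{X^3}\,H_\CG=2\int_{F^3}\,H_\CG$, together with confirming that the curvature sign of the dual gerbe matches the Reality property extracted from $(A,\alpha)$; all remaining ingredients are already packaged in the classification statement of Theorem~\ref{st:Classification of BGrbs} and in Theorem~\ref{st:KM via BGrbs}~(1).
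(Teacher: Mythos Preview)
Your proposal is correct and follows essentially the same approach as the paper: the paper's proof simply reads ``This follows immediately from the computations and statements above,'' and those computations are precisely the Reality of $H_\CG$ coming from the Jandl $1$-isomorphism, the factor-of-two argument for $\int_{X^3}H_\CG=2\int_{F^3}H_\CG$ under the orientation-reversing involution, the holonomy--curvature formula~\eqref{eq:BGrb hol and curvature}, and Theorem~\ref{st:KM via BGrbs}~(1). You have faithfully reconstructed the intended argument.
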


\begin{proof}
This follows immediately from the computations and statements above.
\end{proof}

\subsection{Discrete versus local formulas}
\label{sect:unifiedKM}

The results obtained here apply straightforwardly to the examples from Section~\ref{sect:Localisation via MV}.
A localisation formula for the Kane-Mele invariant of the same
qualitative form as~\eqref{eq:Localised HdR form of KM}
and~\eqref{eq:KM localisation from holonomies} is already known in
terms of the Pfaffian of the $\sfS\sfU(m)$-valued sewing matrix (see for example~\cite{Fu-Kane:Time-reversal_polarisation,WQZ,Gawedzki:BGrbs_for_TIs,Freed-Moore--TEM}):
\begin{equation}
	\rmK\rmM(E^-, \hat{\tau}^3) = \prod_{k \in X^0}\, \pfaff \big( w_\varphi(k) \big)\ .
\end{equation}
When $X^3=\TT^3$, derivations of this discrete formula directly from the $3$-dimensional integral definition \eqref{eq:KM_invariant} can be found in~\cite{WQZ,KLWKKtheory}.
From our findings in this section and in Section~\ref{sect:Localisation in equivar HdR} it is evident that localisation formulas of this kind for the Kane-Mele invariant are not unique, but always allow for certain choices.
This is analogous to the nonuniqueness of localisation formulas in
standard equivariant localisation (see e.g.~\cite{Szabo:2000fs}).
To make contact with the Pfaffian formula one should take $\rho^0=\frac1{2\pi\,\iu}\, \log\circ \det\circ w_\varphi=\frac1{\pi\,\iu}\, \log\circ \pfaff\circ w_\varphi$ and $f_{(A,T,j)}=\exp\circ (\pi\,\iu\, \rho^0)$, but we have not found any canonical way to derive these choices from our localisation techniques. Our new explicit expressions for the localisation of the Kane-Mele
invariant may shed further light on the mathematical and physical
content of the invariant.

It is also interesting to compare our intermediate localisation formulas for the Kane-Mele invariant in terms of local quantities with previous approaches; see also~\cite{KLWKStiefel} for a similar dimensional hierarchy of decompositions
of the $3$-dimensional Kane-Mele invariant in terms of lower
dimensional indices. For the Brillouin torus $X^3=\TT^3$ with $X^2=\TT^2_0\sqcup\TT^2_\pi$, the $2$-dimensional localisation formulas
\begin{equation}
	\rmK\rmM\big(E^-, \hat{\tau}^3\big) = \exp\Big(\pi\,\iu\, \int_{X^2}\, \rho^2\Big) = \hol\big( w_\varphi^* \CG_{\rm basic}, X^2 \big)
\end{equation}
express the definition of the $3$-dimensional strong Kane-Mele invariant on $\TT^3$ as the difference of the $2$-dimensional weak Kane-Mele invariants over $\TT^2_0$ and $\TT^2_\pi$~\cite{FKM}. This is also pointed out by~\cite{Gawedzki:2017whj} in the setting of bundle gerbe holonomy, where by using the complicated expression for the holonomy in terms of a triangulation of the surface $X^2$ the explicit Pfaffian formula was derived by direct, though cumbersome, calculation. Here we have applied the $2$-category theory
of bundle gerbes and found a remarkably simple computation of the localisation formula for the Kane-Mele invariant over the time-reversal invariant crystal momenta that is independent of all choices involved, at the price of obtaining somewhat less explicit, but more general, formulas.

Finally, the $1$-dimensional localisation formulas
\begin{equation}
	\rmK\rmM\big(E^-, \hat{\tau}^3\big) = \exp\Big(\pi\,\iu\, \int_{X^1}\, \rho^1\Big) = \hol\big( \sfR K(A,T), X^1 \big)
\end{equation}
are analogous to the geometric formulas for the weak Kane-Mele invariants in terms of Berry phases~\cite{Fu-Kane:Time-reversal_polarisation}, though again we have not found any canonical choices equating the $1$-forms $\rho^1$ and $\nu$ with the trace of the Berry connection on $E^-$. The equivalence between the Berry phase formula of~\cite{Fu-Kane:Time-reversal_polarisation} and the bundle gerbe holonomy of~\cite{Gawedzki:BGrbs_for_TIs} is demonstrated explicitly by~\cite{Monaco-Tauber:Berry_WZW_FKM}.

We can elaborate further on these integral formulas in lower dimensions, which also yields a simple physical picture for the Kane-Mele invariant recasted as a winding number. Starting from the $3$-dimensional integral formula we write
\begin{equation}
\label{eq:KMwindingderiv}
	\rmK\rmM(E^-, \hat{\tau}^3)
	= \exp \bigg( \pi\, \iu\, \int_{\TT^3}\, w_\varphi^*H_{\sfS\sfU(m)} \bigg) = \exp\bigg(\, \int_{S^1}\, \Big(\pi\,\iu\, \int_{\TT^2_\theta}\, \iota_{\partial_\theta}\big(w_\varphi^*H_{\sfS\sfU(m)}\big)\Big)\ \dd\theta\, \bigg)
	\ .
\end{equation}
For each $\theta\in S^1$, let $\jmath_\theta:\TT^2_\theta\hookrightarrow\TT^3$ be the inclusion. 
Using transgression as in e.g.~\cite{BSS--HGeoQuan}, we can show that the holonomy around $\TT^2$ of an $S^1$-family of bundle gerbes
\begin{equation}
u(\theta):=\hol\big(\jmath_\theta^* w_\varphi^*\CG_{\rm basic}, \TT_\theta^2 \big)
\end{equation}
obeys
\begin{equation}
u(\theta)^{-1}\, \frac{\dd u(\theta)}{\dd\theta} = \int_{\TT^2_\theta}\, \iota_{\partial_\theta}\big(w_\varphi^*H_{\sfS\sfU(m)}\big) \ .
\end{equation}
Consequently the argument of the last exponential in
\eqref{eq:KMwindingderiv} is $\pi\,\iu$ times the winding number (or
degree) of the function $u:S^1\to S^1$, giving
\begin{equation}
\rmK\rmM(E^-, \hat{\tau}^3)
	= \exp \big( \pi\, \iu\ {\rm ind}(u)\big)
\end{equation}
and matching with the interpretation of the Kane-Mele invariant as a mod~$2$ index (see e.g.~\cite{KLWK}).
It would be interesting to physically interpret the 3-dimensional
Kane-Mele invariant in this form as arising from vortex lines extended
along $S^1$ and terminating in the bulk of the Brillouin zone, i.e.~in
the valence bands making up the Kramers pairs, along the lines of e.g.~\cite{Kruthoff:2017rjk}.
A rigorous vortex line/monopole interpretation of the Kane– Mele
invariant is given in~\cite{ThiangSatoGomi2017}.

\begin{appendix}

\section{Proof of Proposition~\ref{st:1st_Chern_class_knows_everything_in_dleq3}}
\label{app:proof_of_1st_Chern_knows_everything}

The idea of the proof of Proposition~\ref{st:1st_Chern_class_knows_everything_in_dleq3} is to find a $4$-equivalence $f \colon \sfB\sfU(m) \rightarrow \sfB\sfU(1)$.
The homotopy groups of $\sfU(m)$ read as~\cite[p.~151]{Nakahara}\footnote{The homotopy groups of $\sfU(m)$ are readily deduced from those of $\sfS\sfU(m)$ via the fibre sequence $\sfS\sfU(m) \to \sfU(m) \to \sfU(1)$.}
\begin{equation}
	\pi_k(\sfU(m)) =
	\begin{cases}
		\ \RZ\ , & k\ \text{odd}\ ,\\
		\ 0\ , & k\ \text{even}\ ,
	\end{cases}
\end{equation}
for $1<k \leq 2m - 1$.
The only nontrivial homotopy group of $\sfU(1) \simeq K(\RZ,1)$ is $\pi_1(\sfU(1)) = \RZ$, and both $\sfU(m)$ and $\sfU(1)$ are connected.
For any connected topological group $\sfG$ one generally has the
`delooping' property
\begin{equation}
	\pi_k(\sfB\sfG) \cong \pi_{k-1}(\sfG) 
\end{equation}
for all $k\geq1$. Thus for $m \geq 2$ we have $\pi_0(\sfB\sfU(m))
= 0=\pi_1(\sfB\sfU(m))$, $\pi_2(\sfB\sfU(m)) = \RZ$, $\pi_3(\sfB\sfU(m)) = 0$, and $\pi_4(\sfB\sfU(m)) = \RZ$, while the only nontrivial homotopy group of $\sfB\sfU(1)$ is $\pi_2(\sfB\sfU(1)) = \RZ$.
Consequently any continuous map $f \colon \sfB\sfU(m) \to \sfB\sfU(1)$ which induces an isomorphism $\pi_2(f) \colon \RZ \to \RZ$ is already a $4$-equivalence.

Set $\eta_m \coloneqq c_1(\sfE\sfU(m)) \in \rmH^2(\sfB\sfU(m),\RZ)$,
where $\sfE\sfG\to\sfB\sfG$ denotes the universal bundle over the
classifying space $\sfB\sfG$ of a topological group $\sfG$.
This is in fact a generator of $\rmH^2(\sfB\sfU(m),\RZ)$, as the cohomology ring $\rmH^\bullet(\sfB\sfU(m),\RZ)$ is a polynomial ring over the Chern classes of $\sfE\sfU(m)$~\cite[Theorem 16.10]{Switzer:AT}.
Equivalently $\eta_m = c_1(\det\,\sfE\sfU(m))$, which defines a map $\sfB\det \colon \sfB\sfU(m) \to \sfB\sfU(1)$ induced by the determinant on the unitary group $\sfU(m)$.
Then $\eta_m = \sfB\det^*\eta_1$ with $\eta_1 = c_1(\sfE\sfU(1))$.

Now we have a commutative diagram
\begin{equation}
\label{eq:pi_2_for_Bloch_trivialisation}
	\xymatrixrowsep{1.2cm}
	\xymatrixcolsep{1.2cm}
	\xymatrix{
		\pi_2(\sfB\sfU(m)) \ar@{->}[r]_-{\cong} \ar@{->}[d]_-{\pi_2(\sfB\det)} & \pi_0\,\HVBdl_{m\, \sim}(S^2) \ar@{->}[d]^-{c_1}\\
		\pi_2(\sfB\sfU(1)) \ar@{->}[r]^-{\cong} & \rmH^2(S^2,\RZ)
	}
\end{equation}
where the top row is an isomorphism of pointed sets by the defining properties of $\sfB\sfU(m)$, and the bottom row is an isomorphism by $\sfB\sfU(1) \simeq K(\RZ,2)$ and since $K(\RZ,2)$ classifies the second integer cohomology.
Our goal is to show that the left vertical map $\pi_2(\sfB\det)$ is an isomorphism.
This is equivalent to showing that the right vertical map is an isomorphism.

For this, first note that any Hermitean vector bundle on $S^2$ can be
expressed as a clutching construction with respect to the closed
locally finite cover $S^2 = S^2_+ \cup S^2_-$ with $S^2_\pm$
denoting the closed upper and lower hemispheres of $S^2$.
We denote their intersection, which is the equatorial circle of $S^2$, by $S^1_{+-} \hookrightarrow S^2$.
Then any Hermitean vector bundle $E \to S^2$ of rank~$m$ can be trivialised over
the contractible sets $S^2_\pm$, and all nontrivial data is contained
in a transition function $g \colon S^1_{+-} \to \sfU(m)$.
This induces a bijection $\pi_0\,\HVBdl_{m\, \sim}(S^2) \cong \pi_1(\sfU(m))$ (see~\cite[Statements 3.9, 3.14, 7.6]{Karoubi:KT}).
Here we can either compose with the determinant map $\det$, or instead
first form the determinant line bundle $\det(E)\to S^2$ and then trivialise it, to proceed further along $\pi_1(\det) \colon \pi_1(\sfU(m)) \to \pi_1(\sfU(1))$.
This map is in fact an isomorphism~\cite[Example~4D.7]{Hatcher:AT}.
Thus we have an isomorphism $\pi_0\,\HVBdl_{m\, \sim}(S^2) \cong [S^1_{+-},
\sfU(1)] \cong \pi_1(\sfU(1))$ given by assigning to a vector bundle
of rank $m$ 
on $S^2$ the homotopy class of the clutching function of its determinant line bundle.

By similar arguments we also obtain an isomorphism $\rmH^2(S^2,\RZ) \cong \pi_1(\sfU(1))$ by choosing a representative Hermitean line bundle for a class in $\rmH^2(S^2,\RZ)$, and taking an admissible clutching function.
Hence we can augment the commutative diagram~\eqref{eq:pi_2_for_Bloch_trivialisation} to
\begin{equation}
	\xymatrixrowsep{1.2cm}
	\xymatrixcolsep{1.7cm}
	\xymatrix{
		\pi_2(\sfB\sfU(m)) \ar@{->}[r]_-{\cong} \ar@{->}[d]_-{\pi_2(\sfB\det)} & \pi_0\,\HVBdl_{m\, \sim}(S^2) \ar@{->}[d]^-{c_1} \ar@{->}[dr]^-{\det} \ar@{->}[r]^-{\text{clutching}}_-{\cong} & \pi_1(\sfU(m)) \ar@{->}[r]_{\cong}^{\pi_1(\det)} & \pi_1(\sfU(1)) \ar@{=}[d]\\
		\pi_2(\sfB\sfU(1)) \ar@{->}[r]^-{\cong} & \rmH^2(S^2,\RZ) \ar@{->}[r]^-{\cong} & \pi_0\,\HLBdl_\sim(S^2) \ar@{->}[r]_-{\text{clutching}}^-{\cong} & \pi_1(\sfU(1))
	}
\end{equation}
That $\pi_1 (\det) \colon \pi_1(\sfU(m)) \to \pi_1(\sfU(1))$ is an
isomorphism is the crucial fact which implies that the diagonal map of
taking the determinant line bundle is bijective on isomorphism classes
of rank~$m$ vector bundles over $S^2$.
It follows that $c_1 \colon \pi_0\,\HVBdl_{m\, \sim}(S^2) \to \rmH^2(S^2,\RZ)$ is an isomorphism as required, so that $\pi_2(\sfB \det) \colon \pi_2(\sfB\sfU(m)) \to \pi_2(\sfB\sfU(1))$ is an isomorphism as well.
Consequently $\sfB \det \colon \sfB\sfU(m) \to \sfB\sfU(1)$ is a $4$-equivalence.

The final step is then provided by~\cite[Theorem 6.31]{Switzer:AT}, which implies that for any CW-complex $X$, the map $\sfB \det \circ (-) \colon [X,\sfB\sfU(m)] \to [X,\sfB\sfU(1)]$ is surjective if $\dim(X) = 4$ and bijective if $\dim(X) < 4$.
Using the horizontal isomorphisms
in~\eqref{eq:pi_2_for_Bloch_trivialisation}, this is equivalent to
$c_1 \colon \pi_0\,\HVBdl_{m\, \sim}(X) \to \rmH^2(X,\RZ)$ being
surjective for $\dim(X) = 4$ and an isomorphism of pointed sets for
$\dim(X) < 4$. These homotopy theoretic statements make sense in the
smooth setting as well, since continuous maps between smooth manifolds
can always be approximated by smooth
maps~\cite[Theorem~2.6]{Hirsch:Differential_Topology}. This completes the proof of
Proposition~\ref{st:1st_Chern_class_knows_everything_in_dleq3}.

\section{Equivariant de Rham cohomology}
\label{app:HdR and group actions}

Let $V$ be a $\bbK$-vector space, where $\bbK$ is either $\FR$ or $\FC$.

\begin{definition}
Let $\sfG$ be a Lie group acting smoothly from the right on a manifold $X$ via $(x,g) \mapsto R_g\, x \in X$ for $x \in X$ and $g \in \sfG$.
Let $\varrho \colon \sfG \to \End(V)$ be a representation of $\sfG$.
\begin{myenumerate}
	\item The space of \emph{$(\sfG, \varrho)$-equivariant
            $p$-forms} on $X$ is
	\begin{equation}
		\Omega^p(X,V)^{(\sfG, \varrho)} \coloneqq \big\{
                \omega \in \Omega^p(X,V)\, \big|\, R_g^*\,\omega =
                \varrho(g)^{-1} \, \omega \ \ \text{for all} \ g \in
                \sfG \big\}\ .
	\end{equation}
	
	\item The space of \emph{closed $(\sfG, \varrho)$-equivariant
            $p$-forms} on $X$ is
	\begin{equation}
		\Omega^p_\cl(X,V)^{(\sfG, \varrho)} \coloneqq
                \Omega^p(X,V)^{(\sfG, \varrho)} \cap
                \Omega^p_\cl(X,V)\ .
	\end{equation}
	
	\item The \emph{$p$-th $(\sfG, \varrho)$-equivariant de Rham
            cohomology group} of $X$ is
	\begin{equation}
		\rmH^p_{\dR\, (\sfG, \varrho)}(X, V) \coloneqq
                \frac{\Omega^p_\cl(X,V)^{(\sfG, \varrho)}}{\dd
                  \Omega^{p-1}(X,V)^{(\sfG, \varrho)}}\ .
	\end{equation}
\end{myenumerate}
\end{definition}

\begin{example}
\label{eg:Z_2 equivar and Real HdR}
The most common case encountered in this paper is $(\sfG, \varrho, V)
= (\ZZ, \varrho_\pm, \FR)$, where $\varrho_+$ is the trivial
representation of $\ZZ$ on $\FR$ while $\varrho_-$ is the
representation by multiplication; that is, $\varrho_-(1) = 1$ and $\varrho_-(-1) = -1$.
We introduce the short-hand notation
\begin{equation}
\begin{aligned}
	\Omega^p(X,\FR)^{(\ZZ, \varrho_\pm)} &\eqqcolon \Omega^p_\pm(X)\ ,
	\\[4pt]
	\Omega^p_\cl(X,\FR)^{(\ZZ, \varrho_\pm)} &\eqqcolon
        \Omega^p_{\cl\, \pm}(X)\ ,
	\\[4pt]
	\rmH^p_{\dR\, (\ZZ, \varrho_\pm)}(X, \FR) &\eqqcolon
        \rmH^p_{\dR\, \pm}(X)\ ,
\end{aligned}
\end{equation}
which is used in the main text of the paper.
\qen
\end{example}

If $\sfG$ is compact, we can establish a stronger version
of~\cite[Theorem 2.1]{CE--Cohomology_of_Lie_groups_and_Lie_algebras}
given by

\begin{proposition}
\label{st:averaging and splitting equivariant HdR}
Let $\sfG$ be a compact Lie group and $\varrho \colon \sfG \to \End(V)$ a real representation of $\sfG$.
There is a short exact sequence of complexes of $\FR$-vector spaces
\begin{equation}
\label{eq:SES for averaging morphism}
\begin{tikzcd}[column sep=1.25cm]
	0 \ar[r] & \ker\big(\av_{(\sfG, \varrho)}^\bullet\big) \ar[r]
        & \Omega^\bullet(X,V) \ar[r, "\av_{(\sfG, \varrho)}^\bullet"]
        & \Omega^\bullet(X,V)^{(G, \varrho)} \ar[r] & 0\ ,
\end{tikzcd}
\end{equation}
which splits and thus induces a canonical isomorphism
\begin{equation}
\label{eq:HdRXViso}
	\rmH^\bullet_\dR(X,V) \cong \rmH^\bullet_{\dR\,
          \av_{(\sfG,\varrho)}}(X,V) \oplus \rmH^\bullet_{\dR\,
          (\sfG,\varrho)}(X,V)\ ,
\end{equation}
where we have set
\begin{equation}
	\rmH^\bullet_{\dR\, \av_{(\sfG,\varrho)}}(X,V) \coloneqq
        \frac{\ker\big(\av^\bullet_{(\sfG, \varrho)}\big) \cap
          \Omega^\bullet_\cl(X,V)}{\dd \ker\big(\av^\bullet_{(\sfG,
            \varrho)}\big)}\ .
\end{equation}
\end{proposition}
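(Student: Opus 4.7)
The plan is to construct the averaging morphism explicitly via integration against the normalised Haar measure, use compactness of $\sfG$ to show it is a retraction onto the equivariant subcomplex, and then extract the splitting of the short exact sequence and the cohomology decomposition as formal consequences.

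First, I would define
\begin{equation*}
	\av^p_{(\sfG,\varrho)}(\omega) = \int_\sfG\, \varrho(g)\,R_g^*\omega\ \dd\nu_\sfG(g)\ ,
\end{equation*}
where $\nu_\sfG$ is the normalised Haar measure on the compact Lie group $\sfG$; this makes sense because $\sfG$ is compact and $\varrho$ is finite-dimensional, so the integrand is continuous and bounded on the finite-dimensional space of values at each point. Next I would verify in turn: (i) that $\av^p_{(\sfG,\varrho)}(\omega)\in \Omega^p(X,V)^{(\sfG,\varrho)}$, using left-invariance of $\nu_\sfG$ and the calculation $R_h^*\bigl(\av^p_{(\sfG,\varrho)}(\omega)\bigr)= \int_\sfG \varrho(g)\,R_{gh}^*\omega\ \dd\nu_\sfG(g)= \varrho(h)^{-1}\,\av^p_{(\sfG,\varrho)}(\omega)$; (ii) that $\av^p_{(\sfG,\varrho)}$ restricts to the identity on $\Omega^p(X,V)^{(\sfG,\varrho)}$, so that it is a retraction of the inclusion $\Omega^\bullet(X,V)^{(\sfG,\varrho)}\hookrightarrow \Omega^\bullet(X,V)$; and (iii) that $\av^{p+1}_{(\sfG,\varrho)}\circ\dd=\dd\circ\av^p_{(\sfG,\varrho)}$, which follows because pullback along the diffeomorphisms $R_g$ commutes with $\dd$ and $\dd$ commutes with the integral over the parameter $g\in\sfG$ (compactness of $\sfG$ plus smoothness of the action allows one to exchange $\dd$ and $\int_\sfG$).

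Step~(ii) immediately yields the required short exact sequence \eqref{eq:SES for averaging morphism}: the inclusion $\iota\colon\Omega^\bullet(X,V)^{(\sfG,\varrho)}\hookrightarrow \Omega^\bullet(X,V)$ is a section of $\av^\bullet_{(\sfG,\varrho)}$, hence $\av^\bullet_{(\sfG,\varrho)}$ is surjective and
\begin{equation*}
	\Omega^\bullet(X,V) \;=\; \ker\bigl(\av^\bullet_{(\sfG,\varrho)}\bigr)\ \oplus\ \iota\bigl(\Omega^\bullet(X,V)^{(\sfG,\varrho)}\bigr)\ ,
\end{equation*}
so the short exact sequence splits in the category of complexes. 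Both summands are preserved by $\dd$: the equivariant subcomplex tautologically, and the kernel of $\av^\bullet_{(\sfG,\varrho)}$ because $\dd$ commutes with the averaging operator by step~(iii).

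Finally, the decomposition \eqref{eq:HdRXViso} follows formally by taking cohomology of a split short exact sequence of cochain complexes: one has $\rmH^\bullet(\Omega^\bullet(X,V))\cong \rmH^\bullet\bigl(\ker(\av^\bullet_{(\sfG,\varrho)})\bigr)\oplus \rmH^\bullet\bigl(\Omega^\bullet(X,V)^{(\sfG,\varrho)}\bigr)$, which is precisely the claimed direct sum with the left-hand side being $\rmH^\bullet_\dR(X,V)$ and the summands being $\rmH^\bullet_{\dR\,\av_{(\sfG,\varrho)}}(X,V)$ and $\rmH^\bullet_{\dR\,(\sfG,\varrho)}(X,V)$ by definition. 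The only delicate point in the argument is step~(iii), i.e.\ justifying the exchange of $\dd$ with $\int_\sfG$; this is routine but worth being explicit about, since it is the step that requires compactness of $\sfG$ in an essential way in order to guarantee that the integrand is uniformly controlled on compact neighbourhoods in $X$.
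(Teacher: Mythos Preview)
Your proof is correct and follows essentially the same approach as the paper: both define the averaging morphism via integration against the normalised Haar measure, observe that it commutes with $\dd$ because pullbacks do, and split the resulting short exact sequence using the inclusion of the equivariant subcomplex. Your write-up is in fact slightly more explicit than the paper's in checking that averaging lands in the equivariant forms and acts as the identity there, but the argument is the same.
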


\begin{proof}
Let $\nu_{\sfG}$ denote the normalised Haar measure on $\sfG$.
Define the averaging morphism on $p$-forms by
\begin{equation}
	\av^p_{(\sfG, \varrho)} \colon \Omega^p(X,V) \longrightarrow
        \Omega^p(X,V)^{(G, \varrho)}\ ,
	\quad
	\omega \longmapsto \int_\sfG\, \varrho(g) \big(R_g^*\, \omega\big)\,
        \dd \nu_{\sfG}\ .
\end{equation}
Since pullbacks commute with the exterior derivative, the collection $\av^\bullet_{(\sfG, \varrho)}$ assembles into a morphism of cochain complexes $\av^\bullet_{(\sfG, \varrho)} \colon \Omega^\bullet(X,V) \to \Omega^\bullet(X,V)^{(G, \varrho)}$.
This yields the short exact sequence~\eqref{eq:SES for averaging morphism}.

The short exact sequence~\eqref{eq:SES for averaging morphism} is split by the inclusion morphism
$\jmath^\bullet_{(\sfG, \varrho)} \colon \Omega^\bullet(X,V)^{(G, \varrho)}
\hookrightarrow \Omega^\bullet(X,V)$ which forgets the
$(G,\varrho)$-equivariant structure.
Consequently there are canonical isomorphisms
\begin{equation}
	\Omega^\bullet(X,V) \cong \ker\big(\av^\bullet_{(\sfG,
          \varrho)}\big) \oplus \Omega^\bullet(X,V)^{(G, \varrho)} \ .
\end{equation}
Applying the Zig-Zag Lemma to~\eqref{eq:SES for averaging morphism},
or by simply observing that this decomposition is compatible with the
exterior derivative, then yields canonical decompositions \eqref{eq:HdRXViso}.
\end{proof}

\begin{example}
\label{eg:top-degree coho and Z2 actions}
In the situation of Example~\ref{eg:Z_2 equivar and Real HdR}, Proposition~\ref{st:averaging and splitting equivariant HdR} yields canonical decompositions
\begin{equation}
	\rmH^\bullet_\dR(X) \cong \rmH^\bullet_{\dR\, +}(X) \oplus
        \rmH^\bullet_{\dR\, -}(X)\ .
\end{equation}
Consider an involutive manifold $(X, \tau)$ with $\dim(X) = d$ and $X$ connected, closed and orientable.
In this case, integration over $X$ yields an isomorphism
\begin{equation}
	\textint_X \colon \rmH^d_\dR(X) \longrightarrow \FR\ .
\end{equation}
Combining this with $\int_X \circ \tau^* = \ori(\tau) \int_X$, we derive
\begin{equation}
	\rmH^d_{\dR\, \ori(\tau)}(X) \cong \FR
	\qquad \text{and} \qquad
	\rmH^d_{\dR\, -\ori(\tau)}(X) \cong 0\ .
\end{equation}
As a concrete example, this applies to the $d$-dimensional torus $X = \TT^d$ with $\tau$ the inversion as considered in Section~\ref{sect:pure_time_reversal_in_TEM}.
Here we have $\ori(\tau) = (-1)^d$, and therefore we derive
\begin{equation}
\begin{aligned}
	d\ \text{even}\ &: \quad
	\rmH^d_{\dR\, +}(\TT^d) \cong \FR
	\qquad \text{and} \qquad
	\rmH^d_{\dR\, -}(\TT^d) \cong 0 \ ,
	\\[4pt]
	d\ \text{odd}\ &: \quad
	\rmH^d_{\dR\, +}(\TT^d) \cong 0
	\qquad \text{and} \qquad
	\rmH^d_{\dR\, -}(\TT^d) \cong \FR\ .
\end{aligned}
\end{equation}
Thus in these cases we can compute Real and equivariant de Rham cohomology groups quite easily.
\qen
\end{example}

\end{appendix}

\newpage

\bibliographystyle{amsplainurl}

\providecommand{\bysame}{\leavevmode\hbox to3em{\hrulefill}\thinspace}
\providecommand{\MR}{\relax\ifhmode\unskip\space\fi MR }
\providecommand{\MRhref}[2]{%
  \href{http://www.ams.org/mathscinet-getitem?mr=#1}{#2}
}
\providecommand{\href}[2]{#2}

\end{document}